\patchcmd{\ttlh@hang}{\parindent\z@}{\parindent\z@\leavevmode}{}{}
\patchcmd{\ttlh@hang}{\noindent}{}{}{}
\titlespacing\section{0pt}{12pt plus 4pt minus 2pt}{3pt plus 2pt minus 2pt}
\titlespacing\subsection{0pt}{12pt plus 4pt minus 2pt}{2pt plus 1pt minus 2pt}
\titlespacing\subsubsection{0pt}{12pt plus 4pt minus 2pt}{2pt plus 2pt minus 2pt}
\theoremstyle{plain}
\newtheorem{theorem}{Theorem}
\newtheorem{lemma}[theorem]{Lemma}
\theoremstyle{definition}
\theoremstyle{remark}
\definecolor{ACMGreen}{rgb}{0,255,0}
\newenvironment{itemize*}%
{\begin{itemize}[leftmargin=1.5em]%
   \setlength{\itemsep}{0pt}%
   \setlength{\parskip}{0pt}%
   \setlength{\topsep}{1pt}%
   \setlength{\partopsep}{1pt}%
}{\end{itemize}}
\newenvironment{enumerate*}%
{\begin{enumerate}[leftmargin=2.5em]%
		\setlength{\itemsep}{0pt}%
		\setlength{\parskip}{0pt}%
		\setlength{\topsep}{1pt}%
		\setlength{\partopsep}{1pt}%
}{\end{enumerate}}
\newcommand{\Oh}{\ensuremath{\mathcal O}}
\newcommand{\pld}[3]{\ensuremath{\textsc{Pld}\,([#1, #2), #3)}}
\newcommand*{\diff}{\mathop{}\!\mathrm{d}}
\newcommand{\emhh}{\textsl{EM-HH}}
\newcommand{\emes}{\textsl{EM-ES}}
\newcommand{\vles}{\textsl{VL-ES}}
\newcommand{\emca}{\textsl{EM-CA}}
\newcommand{\emcmes}{\textsl{EM-CM/ES}}
\newcommand{\emlfr}{\textsl{EM-LFR}}
\newcommand{\invar}[1]{(\texttt{I#1})}
\newcommand{\degs}{\ensuremath{\mathcal D}}
\newcommand{\defrel}{:=}
\newcommand{\eswapi}[4]{\ensuremath{\sigma_{#1}(\langle #2, #3\rangle,\: #4)}}
\newcommand{\eswap}[3]{\eswapi{}{#1}{#2}{#3}}
\newcommand{\mult}[1]{\#(#1)}
\newcommand{\moment}[1]{\langle #1 \rangle}
\DeclareMathOperator{\scan}{scan}
\DeclareMathOperator{\sort}{sort}
\DeclareMathOperator{\polylog}{polylog}
\title{%
	I/O-Efficient Generation of Massive Graphs\\Following the LFR Benchmark%
	\footnote{%
		This work was partially supported 
		by the DFG under grants ME 2088/3-2, %
		WA 654/22-2. %
		Parts of this paper were published as \cite{hmpw-iogmg-17}.
	}
}
\author[$\dag$]{Michael Hamann}
\author[$\ddag$]{Ulrich Meyer}
\author[$\ddag$]{Manuel Penschuck}
\author[$\ddag$]{Hung Tran}
\author[$\dag$]{\\Dorothea Wagner}
\affil[$\dag$]{%
	Institute of Theoretical Informatics,
	Karlsruhe Institute of Technology,\par
	Am Fasanengarten~5, 76131 Karlsruhe, Germany\par
	\vspace{0.5em}
	\{michael.hamann, dorothea.wagner\}@kit.edu\par
	\vspace{0.5em}
	\ 
}
\affil[$\ddag$]{%
	Institute for Computer Science,
	Goethe-University Frankfurt,\par
	Robert-Mayer-Stra{\ss}e 11-15,
	60325 Frankfurt am Main, Germany\par
	\vspace{0.5em}
	\{umeyer, htran, mpenschuck\}@ae.cs.uni-frankfurt.de\par
}
\newcommand{\variantScale}[1]{\scalebox{0.9}{#1}}
\newcommand{\clearpageA}{\clearpage}
\newcommand{\clearpageB}{}
\newcommand{\inA}[1]{#1}
\newcommand{\inB}[1]{}
\begin{document}
\maketitle
\begin{abstract}%
LFR is a popular benchmark graph generator used to evaluate community detection algorithms.
We present \emlfr{}, the first external memory algorithm able to generate massive complex networks following the LFR benchmark.
Its most expensive component is the generation of random graphs with prescribed degree sequences which can be divided into two steps:
the graphs are first materialized deterministically using the Havel-Hakimi algorithm, and then randomized.
Our main contributions are \emhh{} and \emes{}, two I/O-efficient external memory algorithms for these two steps.
We also propose \emcmes{}, an alternative sampling scheme using the Configuration Model and rewiring steps to obtain a random simple graph.
In an experimental evaluation we demonstrate their performance;
our implementation is able to handle graphs with more than 37 billion edges on a single machine, is competitive with a massive parallel distributed algorithm, and is faster than a state-of-the-art internal memory implementation even on instances fitting in main memory.
\emlfr's implementation is capable of generating large graph instances orders of magnitude faster than the original implementation.
We give evidence that both implementations yield graphs with matching properties by applying clustering algorithms to generated instances.
Similarly, we analyse the evolution of graph properties as \emes{} is executed on networks obtained with \emcmes{} and find that the alternative approach can accelerate the sampling process.
\end{abstract}
 \clearpage
\section{Introduction}
Complex networks, such as  web graphs or social networks, are usually composed of communities, also called clusters, that are internally dense but externally sparsely connected.
Finding these clusters, which can be disjoint or overlapping, is a common task in network analysis.
A large number of algorithms trying to find meaningful clusters have been proposed (see~\cite{Fortunato201075,Harenberg2014,fh-ca-16} for an overview).
Commonly synthetic benchmarks are used to evaluate and compare these clustering algorithms, since for most real-world networks it is unknown which communities they contain and which of them are actually detectable through structure~\cite{Bader2014,fh-ca-16}.
In the last years, the LFR benchmark~\cite{Lancichinetti2008, Lancichinetti2009} has become a standard benchmark for such experimental studies, both for disjoint and for overlapping communities~\cite{ekgb-ancac-16}.

With the emergence of massive networks that cannot be handled in the main memory of a single computer, new clustering schemes have been proposed for advanced models of computation~\cite{Buzun2014,zy-a-16}.
Since such algorithms typically use hierarchical input representations, quality results of small benchmarks may not be generalizable to larger instances.
To produce such large instances exceeding main memory, we propose a generator in the external memory (EM) model of computation that follows the LFR benchmark.

The distributed CKB benchmark~\cite{Chykhradze2014} is a step in a similar direction, however, it considers only overlapping clusters and uses a different model of communities. %
In contrast, our approach is a direct realization of the established LFR benchmark and supports both disjoint and overlapping clusters.

\subsection{Random Graphs from a prescribed Degree Sequence}\label{subsec:empiricalconv}
In preliminary experiments, we identified the generation of random graphs with prescribed degree sequence as the main issue when transferring the LFR benchmark into an EM setting --- both in terms of algorithmic complexity and runtime.
To do so, the LFR benchmark uses the \emph{fixed degree sequence model} (FDSM), also known as edge-switching Markov-chain algorithm (e.g.\ \cite{Milo2003}).
It consists of a) generating a deterministic graph from a prescribed degree sequence and b) randomizing this graph using random edge switches.
For each edge switch, two edges are chosen uniformly at random and two of the endpoints are swapped if the resulting graph is still simple (for details, see section~\ref{sec:io-efficient-edge-swaps}).
Each edge switch can be seen as a transition in a Markov chain.
This Markov chain is irreducible~\cite{eh-s-80}, symmetric and aperiodic~\cite{gmz-tmcsm-03} and therefore converges to the uniform distribution.
It also has been shown to converge in polynomial time if the maximum degree is not too large compared to the number of edges~\cite{Greenhill2017}.
However, the analytical bound of the mixing time is impractically high even for comparably small graphs as it contains the sum of all degrees to the power of nine.

Experimental results on the occurrence of certain motifs in networks~\cite{Milo2003} suggest that $100m$ steps should be more than enough where $m$ is the number of edges.
Further results for random connected graphs~\cite{gmz-tmcsm-03} suggest that the average and maximum path length and link load converge between $2m$ and $8m$ swaps.
More recently, further theoretical arguments and experiments showed that $10m$ to $30m$ steps are enough~\cite{rps-awtyw-12}.

A faster way to realize a given degree sequence is the \emph{Configuration Model}.
The problem here is that multi-edges and loops may be generated.
In the \emph{Erased Configuration Model} these illegal edges are deleted.
However, doing so alters the graph properties since skewed degree distributions, necessary for the LFR benchmark, are not properly realized~\cite{Schlauch2015}.
In this context the question arises whether edge switches starting from the Configuration Model can be used to uniformly sample simple graphs at random.

\clearpageA

\subsection{Our Contribution}
Our main contributions are the first external memory versions of the LFR benchmark and the FDSM.
After defining our notation, we introduce the LFR benchmark in more detail and then focus on the FDSM.
We describe the realization of the two steps of the classic FDSM, namely a) generating a deterministic graph from a prescribed degree sequence [\emhh, section~\ref{sec:mat-degree-sequence}] and b) randomizing this graph using random edge switches [\emes, section~\ref{sec:io-efficient-edge-swaps}].
These steps form a pipeline moving data from one algorithm to the next.
In section~\ref{sec:cmes}, we describe the alternative approach \emcmes{} generating uniform random simple graphs using the Configuration Model and edge rewiring.

Sections~\ref{sec:community-assignment}, \ref{sec:lfr-rewiring} and \ref{sec:implementation} describe algorithms for the remaining steps of the external memory LFR benchmark, \emlfr.
We conclude with an experimental evaluation of our algorithms and demonstrate that our EM version of the FDSM is faster than an existing internal memory implementation, scales well to large instances, and can compete with a  distributed parallel algorithm. %
Further, we compare \emlfr{} to the original LFR implementation and show that \emlfr{} is significantly faster while producing equivalent networks in terms of community detection algorithm performance and graph properties.
We also investigate the mixing time of \emes{} and \emcmes{}, give evidence that our alternative sampling scheme quickly yields uniform samples and that the number of swaps suggested by the original LFR implementation can be kept for \emlfr.

\section{Preliminaries and Notation}\label{sec:notation}
Define $[k] \defrel \{1, \ldots, k\}$ for $k \in \mathbb N_{>0}$.
A graph $G = (V, E)$ has $n=|V|$ sequentially numbered nodes $V = \{v_1, \ldots, v_n\}$ and $m=|E|$ edges.
Unless stated differently, graphs are assumed to be undirected and unweighted.
It is called \emph{simple} if it contains neither multi-edges nor self-loops.
To obtain a unique representation of an \emph{undirected} edge $\{u, v\} \in E$, we write $[u, v] \in E$ where $u \le v$; in contrast to a directed edge, the ordering shall be used algorithmically but does not carry any meaning for the application.
$\mathcal D = (d_1, \ldots, d_n)$ is a degree sequence of graph $G$ iff $\forall v_i \{\in\} V: \deg(v_i) = d_i$.

We denote an integer powerlaw distribution with exponent $-\gamma \in \mathbb R$ for $\gamma \ge 1$ and values between the limits $a, b \in \mathbb N_{>0}$ with $a < b$ as $\pld ab\gamma$.
Let $X$ be an integer random variable drawn from $\pld ab\gamma$ then $\mathbb P[X{=}k] \propto k^{-\gamma}$ (proportional to) if $a \le k < b$ and $\mathbb P[X{=}k] = 0$ otherwise.
A statement depending on some number $x > 0$ is said to hold \emph{with high probability} if it is satisfied with probability at least $1 - 1/x^c$ for some constant $c \ge 1$.

\textsl{Also refer to Table~\ref{table:def-summary} (Appendix) which contains a summary of commonly used definitions}.

\subsection{External-Memory Model}\label{ssec:emm}
We use the commonly accepted external memory model by Aggarwal and Vitter~\cite{AggarwalVitter88}.
It features a two-level memory hierarchy with fast internal memory (IM) which may hold up to $M$ data items, and a slow disk of unbounded size.
The input and output of an algorithm are stored in EM while computation is only possible on values in IM.
The measure of an algorithm's performance is the number of I/Os required.
Each I/O transfers a block of $B$ consecutive items between memory levels.
Reading or writing $n$ contiguous items from or to disk requires $\scan(n) = \Theta(n/B)$~I/Os.
Sorting $n$ consecutive items triggers $\sort(n)=\Theta((n/B) \cdot \log_{M/B}(n/B))$~I/Os.
For all realistic values of $n$, $B$ and $M$, $\scan(n)<\sort(n)\ll n$.
Sorting complexity constitutes a lower bound for most intuitively non-trivial EM tasks~\cite{AggarwalVitter88, DBLP:conf/dagstuhl/2002amh}.

\clearpageA

\subsection{Time Forward Processing}
Let $\mathcal{A}$ be an algorithm performing discrete events over time (e.g., iterations of a loop) that produce values which are reused by following events.
The data dependencies of $\mathcal{A}$ can be modeled using a directed acyclic graph $G{=}(V,E)$ where every node $v\in V$ corresponds to an event~\cite{MahZeh-survey}.
The edge $(u,v)\in E$ indicates that the value produced by $u$ will be required by $v$.
When computing a solution, the algorithm traverses $G$ in some topological order.
For simplicity, we assume $G$ to be already ordered, i.e. $\forall\, (u,v) \in E\colon\ u<v$.
Then, the Time Forward Processing (TFP) technique uses a minimum priority queue (PQ) to provide the means to transport data as implied by $G$:
iterate over the events in increasing order and receive for each $u$ the messages sent to it by claiming and removing all items with priority $u$ from the PQ.
Inductively, these messages have minimal priority amongst all items stored in the PQ.
The event then computes its result $x_u$ and sends it to every successor $v$ by inserting $x_u$ into the PQ with priority $v$.
Using a suited EM PQ~\cite{DBLP:conf/wads/Arge95, Sanders00}, TFP incurs $\Oh(\sort(k))$ I/Os, where $k$ is the number of messages sent.
\section{The LFR Benchmark}\label{sec:lfr}
\begin{figure*}
	\begin{center}
	\variantScale{%
		\scalebox{0.75}{%
{\setlength{\baselineskip}{0.9em}
\begin{tikzpicture}[
	node distance=12.5em and 1.5em,
	description/.style={anchor=center, font=\small},
	headline/.style={anchor=center, font=\small, align=left},
	gnode/.style={draw, circle, inner sep=0, minimum width=.45em, fill=black},
	legend/.style={rectangle, inner sep=0, anchor=north west, align=left},
	intra/.style={},
	inter/.style={color=red, thick, dotted}
]
	\newcommand\sz{0.5}
	\begin{scope}[scale = \sz]
		\node[gnode] (s2c1) {};
		\node[gnode] at ($(s2c1) + (3em, 0)$) (s2c2) {};
		\node[gnode] at ($(s2c1) + (1em, -2.5em)$) (s2c3) {};
		\node[gnode] at ($(s2c3) + (2.5em, -0.25em)$) (s2c4) {};
		\node[gnode] at ($(s2c4) + (2.5em, -2em)$) (s2c5) {};
		\node[gnode] at ($(s2c5) + (1em, 3.5em)$) (s2c6) {};
		\node[gnode] at ($(s2c6) + (2.5em, 1em)$) (s2c7) {};
		\draw[intra] (s2c1) -- ($(s2c1) + (1em, -1em)$);
		\draw[intra] (s2c1) -- ($(s2c1) + (1.5em, -0.5em)$);
		\draw[intra] (s2c1) -- ($(s2c1) + (0.5em, -1.5em)$);
		\draw[intra] (s2c1) -- ($(s2c1) + (-0.5em, -1.5em)$);
		\draw[intra] (s2c2) -- ($(s2c2) + (0.5em, -1.5em)$);
		\draw[intra] (s2c2) -- ($(s2c2) + (1.5em, -0.5em)$);
		\draw[intra] (s2c3) -- ($(s2c3) + (0.5em, -1.5em)$);
		\draw[intra] (s2c3) -- ($(s2c3) + (-0.5em, -1.5em)$);
		\draw[intra] (s2c3) -- ($(s2c3) + (-1.5em, -0.5em)$);
		\draw[intra] (s2c4) -- ($(s2c4) + (0.7em, 1.2em)$);
		\draw[intra] (s2c4) -- ($(s2c4) + (1.5em, -0.5em)$);
		\draw[intra] (s2c5) -- ($(s2c5) + (1em, 1em)$);
		\draw[intra] (s2c5) -- ($(s2c5) + (-1.4em, 0.2em)$);
		\draw[intra] (s2c6) -- ($(s2c6) + (-0.5em, 1.5em)$);
		\draw[intra] (s2c6) -- ($(s2c6) + (-0.5em, -1.5em)$);
		\draw[intra] (s2c6) -- ($(s2c6) + (-1.5em, -0.5em)$);
		\draw[intra] (s2c7) -- ($(s2c7) + (0.5em, -1.5em)$);
		\draw[intra] (s2c7) -- ($(s2c7) + (-0.5em, -1.5em)$);
		\draw[intra] (s2c7) -- ($(s2c7) + (1.5em, -0.5em)$);
		\draw[intra] (s2c7) -- ($(s2c7) + (-1em, -1em)$);
		
		\draw[dashed] ($0.5*(s2c4) + 0.5*(s2c6) + (-0.5em, 0)$) ellipse[x radius=7.5em, y radius=4em];

		\node[gnode] at ($(s2c5) + (-9em, -3em)$) (s2d1) {};
		\node[gnode] at ($(s2d1) + (-2em, -1.5em)$) (s2d2) {};
		\node[gnode] at ($(s2d1) + (-0.75em, -2.5em)$) (s2d3) {};
		\node[gnode] at ($(s2d1) + (2em, -2em)$) (s2d4) {};
		
		\draw[intra] (s2d1) -- ($(s2d1) + (1em, -1em)$);
		\draw[intra] (s2d1) -- ($(s2d1) + (-1em, -1em)$);
		\draw[intra] (s2d2) -- ($(s2d2) + (0.7em, 1.2em)$);
		\draw[intra] (s2d2) -- ($(s2d2) + (0.1em, -1.5em)$);
		\draw[intra] (s2d3) -- ($(s2d3) + (0.7em, 1.2em)$);
		\draw[intra] (s2d3) -- ($(s2d3) + (1em, -1em)$);
		\draw[intra] (s2d4) -- ($(s2d4) + (-1.2em, -0.5em)$);
		\draw[intra] (s2d4) -- ($(s2d4) + (-0.7em, 1.2em)$);
		
		\draw[dashed] ($0.25*(s2d1) + 0.25*(s2d2) + 0.25*(s2d3) + 0.25*(s2d4)$) circle[y radius=2.5em, x radius=3em];

		\node[gnode] at ($(s2d1) + (18em, -0.5em)$) (s2e1) {};
		\node[gnode] at ($(s2e1) + (-2em, -1.5em)$) (s2e2) {};
		\node[gnode] at ($(s2e1) + (0.5em, -2em)$) (s2e3) {};
		
		\draw[intra] (s2e1) -- ($(s2e1) + (-1em, 1em)$);
		\draw[intra] (s2e1) -- ($(s2e1) + (-1em, -1em)$);
		\draw[intra] (s2e2) -- ($(s2e2) + (0.7em, 1.2em)$);
		\draw[intra] (s2e2) -- ($(s2e2) + (1em, -1em)$);
		\draw[intra] (s2e3) -- ($(s2e3) + (-1.2em, 0.5em)$);
		\draw[intra] (s2e3) -- ($(s2e3) + (0.7em, 1.2em)$);
		
		\draw[dashed] ($0.33*(s2e1) + 0.33*(s2e2) + 0.33*(s2e3)$) circle[radius=2.5em];

		\begin{scope}[overlay]
			\draw[inter] (s2e3) -- ($(s2e3) + (3em, -1em)$);
			\draw[inter] (s2e2) -- ($(s2e2) + (-3em, -1em)$);
			\draw[inter] (s2d3) -- ($(s2d3) + (-1em, -3em)$);
			\draw[inter] (s2d2) -- ($(s2d2) + (-3em, 1em)$);
			\draw[inter] (s2d1) -- ($(s2d1) + (-1em, 3em)$);
			\draw[inter] (s2d4) -- ($(s2d4) + (3em, -1em)$);
			\draw[inter] (s2e1) -- ($(s2e1) + (1em, 3em)$);
			\draw[inter] (s2c7) -- ($(s2c7) + (3em, 1em)$);
			\draw[inter] (s2c6) -- ($(s2c6) + (1em, 3.5em)$);
			\draw[inter] (s2c1) -- ($(s2c1) + (-3em, 1em)$);
			\draw[inter] (s2c1) -- ($(s2c1) + (-4.5em, -1em)$);
			\draw[inter] (s2c5) -- ($(s2c5) + (0.3em, -3.3em)$);
			\draw[inter] (s2c3) -- ($(s2c3) + (-3em, -3em)$);
			\draw[inter] (s2c4) -- ($(s2c4) + (-1.5em, -4.2em)$);
		\end{scope}
	\end{scope}

	\newcommand\dz{20em} %
	\begin{scope}[scale = \sz]
		\node[gnode] at ($(s2c1) + (1/\sz*\dz, 0em)$) (c1) {};
		\node[gnode] at ($(c1) + (3em, 0)$) (c2) {};
		\node[gnode] at ($(c1) + (1em, -2.5em)$) (c3) {};
		\node[gnode] at ($(c3) + (2.5em, -0.25em)$) (c4) {};
		\node[gnode] at ($(c4) + (2.5em, -2em)$) (c5) {};
		\node[gnode] at ($(c5) + (1em, 3.5em)$) (c6) {};
		\node[gnode] at ($(c6) + (2.5em, 1em)$) (c7) {};
		
		\draw[intra] (c1) -- (c3);
		\draw[intra] (c1) -- (c4);
		\draw[intra] (c2) -- (c3);
		\draw[intra] (c1) -- (c6);
		\draw[intra] (c2) -- (c7);
		\draw[intra] (c6) -- (c7);
		\draw[intra] (c5) -- (c7);
		\draw[intra] (c3) -- (c6);
		\draw[intra] (c4) to[out=0,in=-135] (c7);
		\draw[intra] (c1) to[out=-30,in=120] (c5);
		
		\draw[dashed] ($0.5*(c4) + 0.5*(c6) + (-0.5em, 0)$) ellipse[x radius=7.5em, y radius=4em];

		\node[gnode] at ($(c5) + (-9em, -3em)$) (d1) {};
		\node[gnode] at ($(d1) + (-2em, -1.5em)$) (d2) {};
		\node[gnode] at ($(d1) + (-0.75em, -2.5em)$) (d3) {};
		\node[gnode] at ($(d1) + (2em, -2em)$) (d4) {};
		
		\draw[intra] (d1) -- (d2) -- (d3) -- (d4) -- (d1);
		
		\draw[dashed] ($0.25*(d1) + 0.25*(d2) + 0.25*(d3) + 0.25*(d4)$) circle[y radius=2.5em, x radius=3em];

		\node[gnode] at ($(d1) + (18em, -0.5em)$) (e1) {};
		\node[gnode] at ($(e1) + (-2em, -1.5em)$) (e2) {};
		\node[gnode] at ($(e1) + (0.5em, -2em)$) (e3) {};
		
		\draw[intra] (e1) -- (e2) -- (e3) -- (e1);
		
		\draw[dashed] ($0.33*(e1) + 0.33*(e2) + 0.33*(e3)$) circle[radius=2.5em];

		\begin{scope}[overlay]
			\draw[inter] (e3) to[out=0,in=-45, looseness=1.75] (c7);
			\path[inter] (e3) to[out=0,in=-45, looseness=1.75] (c7);
			\draw[inter] (e2) to[out=180,in=-80, looseness=1.75] (c6);
			\draw[inter] (d3) to[out=-135,in=170, looseness=2.5] (c1);
			\draw[inter] (c3) to[out=-135,in=180, looseness=5] (c1);
			\draw[inter] (d2) to[out=120,in=120, looseness=4] (d1);
			\draw[inter] (d4) to[out=70,in=-90, looseness=1.5] (c5);
			\draw[inter] (e1) to[out=180,in=-90, looseness=1.5] (c4);
		\end{scope}
	\end{scope}

	\begin{scope}[scale = \sz]
		\node[gnode] at ($(c1) + (1/\sz*\dz, 0em)$) (s3c1) {};
		\node[gnode] at ($(s3c1) + (3em, 0)$) (s3c2) {};
		\node[gnode] at ($(s3c1) + (1em, -2.5em)$) (s3c3) {};
		\node[gnode] at ($(s3c3) + (2.5em, -0.25em)$) (s3c4) {};
		\node[gnode] at ($(s3c4) + (2.5em, -2em)$) (s3c5) {};
		\node[gnode] at ($(s3c5) + (1em, 3.5em)$) (s3c6) {};
		\node[gnode] at ($(s3c6) + (2.5em, 1em)$) (s3c7) {};
		
		\draw[intra] (s3c1) -- (s3c3);
		\draw[intra] (s3c1) -- (s3c4);
		\draw[intra] (s3c2) -- (s3c3);
		\draw[intra] (s3c1) -- (s3c6);
		\draw[intra] (s3c2) -- (s3c7);
		\draw[intra] (s3c6) -- (s3c7);
		\draw[intra] (s3c5) -- (s3c7);
		\draw[intra] (s3c3) -- (s3c6);
		\draw[intra] (s3c4) to[out=0,in=-135] (s3c7);
		\draw[intra] (s3c1) to[out=-30,in=120] (s3c5);
		
		\draw[dashed] ($0.5*(s3c4) + 0.5*(s3c6) + (-0.5em, 0)$) ellipse[x radius=7.5em, y radius=4em];

		\node[gnode] at ($(s3c5) + (-9em, -3em)$) (s3d1) {};
		\node[gnode] at ($(s3d1) + (-2em, -1.5em)$) (s3d2) {};
		\node[gnode] at ($(s3d1) + (-0.75em, -2.5em)$) (s3d3) {};
		\node[gnode] at ($(s3d1) + (2em, -2em)$) (s3d4) {};
		
		\draw[intra] (s3d1) -- (s3d2) -- (s3d3) -- (s3d4) -- (s3d1);
		
		\draw[dashed] ($0.25*(s3d1) + 0.25*(s3d2) + 0.25*(s3d3) + 0.25*(s3d4)$) circle[y radius=2.5em, x radius=3em];

		\node[gnode] at ($(s3d1) + (18em, -0.5em)$) (s3e1) {};
		\node[gnode] at ($(s3e1) + (-2em, -1.5em)$) (s3e2) {};
		\node[gnode] at ($(s3e1) + (0.5em, -2em)$) (s3e3) {};
		
		\draw[intra] (s3e1) -- (s3e2) -- (s3e3) -- (s3e1);
		
		\draw[dashed] ($0.33*(s3e1) + 0.33*(s3e2) + 0.33*(s3e3)$) circle[radius=2.5em];

		\begin{scope}[overlay]
			\draw[inter] (s3e3) to[out=0,in=-45, looseness=1.75] (s3c7);
			\draw[inter] (s3e2) to[out=180,in=-80, looseness=1.75] (s3c6);
			\draw[inter] (s3d3) to[out=-135,in=170, looseness=2.5] (s3c1);
			\draw[inter] (s3d1) to[out=90,in=180, looseness=2] (s3c3);
			\draw[inter] (s3d2) to[out=120,in=150, looseness=2] (s3c1);
			\draw[inter] (s3d4) to[out=70,in=-90, looseness=1.5] (s3c5);
			\draw[inter] (s3e1) to[out=180,in=-90, looseness=1.5] (s3c4);
		\end{scope}
	\end{scope}

	\node[headline] at ($(s2c1) + (\dz + \sz*5em, 2em)$) (s3head) {\textbf{Sample intra- and inter-community edges}};
	\node[headline] at ($(s2c1) + (\sz*5em, 2em)$) (s2head) {\textbf{Degrees, community sizes and memberships}};
	\node[headline] at ($(c1) + (\dz + \sz*5em, 2em)$) (s4head) {\textbf{Remove (rewire) illegal edges}};

	\begin{scope}
		\node[legend] at ($(c1) + (-15em, -10em)$) (leg_intra) {Intra-community edge};
		\path[draw, intra] ($(leg_intra) + (-1.5em, 1em)$) to ($(leg_intra) + (1.5em, 1em)$);
		\node[legend, right of=leg_intra] (leg_inter) {Inter-community edge};
		\path[draw, inter] ($(leg_inter) + (-1.5em, 1em)$) to ($(leg_inter) + (1.5em, 1em)$);
		\node[legend, right of=leg_inter] (comm) {Community};
		\path[draw, dashed] ($(comm) + (-1.5em, 1em)$) to ($(comm) + (1.5em, 1em)$);
	\end{scope}
	
\end{tikzpicture}} %
}%
	}
	\end{center}
	\caption{
		\textbf{Left:} Sample node degrees and community sizes from two powerlaw distributions.
		The mixing parameter $\mu$ determines the fraction of the inter-community edges.
		Then, assign each node to sufficiently large communities. 
		\textbf{Center:} Sample intra-community graphs and inter-community edges.
		\textbf{Right:} Lastly, remove illegal inter-community edges respective to the global graph.
	}
	\label{fig:lfr_overview}
\end{figure*}
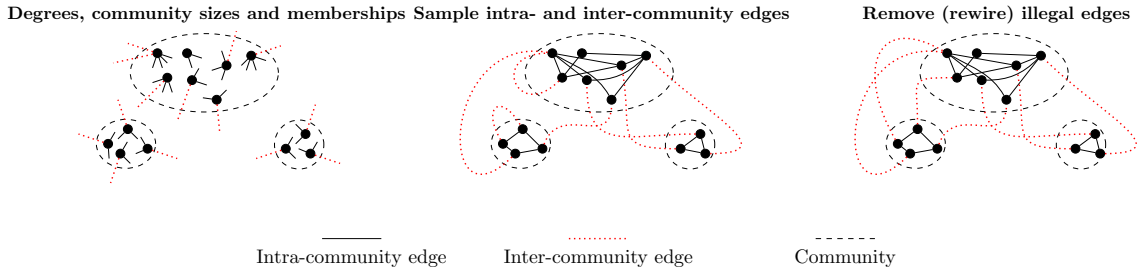
The LFR benchmark~\cite{Lancichinetti2008} describes a generator for random graphs featuring a planted community structure, a powerlaw degree distribution, and a powerlaw community size distribution.
A revised version~\cite{Lancichinetti2009} also introduces weighted and directed graphs with overlapping communities.
We consider the most commonly used versions with unweighted, undirected graphs and possibly overlapping communities.
All its parameters are listed in Table~\ref{tab:lfr-param} and are fully supported by \emlfr.

The revised generator \cite{Lancichinetti2009} changes the original algorithm \cite{Lancichinetti2008} even for the initial scenario of unweighted, undirected graphs and non-overlapping communities.
Here, we describe the more recent approach which is also used in the author's implementation:
initially, the degrees $\degs{=}(d_1, \ldots, d_n)$, the number of memberships $(\nu_1, \ldots, \nu_n)$ of each node, and community sizes $S=(s_1, \ldots, s_C)$ with $\sum_{\xi=1}^C s_\xi = \sum_{i=1}^n \nu_i$ are randomly sampled according to the supplied parameters.
Observe that the number of communities $C$ follows endogenously.
For our analysis, we assume that nodes are members in $\nu{=}\Oh(1)$ communities which implies $C{=}\Oh(n)$.\footnote{%
	If the maximal community size grows with $\Theta(n^\epsilon)$ for $\epsilon>0$, the number of communities is governed by $C {=} o(n)$.
}

Depending on the mixing parameter $0 < \mu < 1$, every node $v_i \in V$ features $d^\text{ext}_i = \mu d_i$ inter-community edges and $d^\text{in}_i = (1{-}\mu) \cdot d_i$ edges within its communities.
The algorithm assigns every node $v_i$ to either $\nu_i{=}1$ or $\nu_i = \nu$ communities at random such that the requested community sizes and number of communities per node are realized.
Further, the desired internal degree $d^\text{in}_i$ has to be strictly smaller than the size $s_\xi$ of its community~$\xi$.

In the case of overlapping communities, the internal degree is evenly split among all communities the node is part of.
Both the computation of $d^\text{in}_i$ and the splitting into several communities use non-deterministic rounding to avoid biases.

As illustrated in Fig. \ref{fig:lfr_overview}, the LFR benchmark then generates the inter-community graph using FDSM on the degree sequence $(d^\text{ext}_{1}, \ldots, d^\text{ext}_{n})$.
In order not to violate the mixing parameter~$\mu$, rewiring steps are applied to the global inter-community graph to replace edges between two nodes sharing a community.
Analogously, an intra-community network is sampled for each community.
In the overlapping case, rewiring steps are used to remove edges that exist in multiple communities and would result in duplicate edges in the final graph.

While for realistic parameters most intra-community graphs fit into main memory, we cannot assume the same for the global graph.
For the global graph and large communities, an EM variant of the FDSM is applicable, which we implement using \emhh{} and \emes{} described in sections~\ref{sec:mat-degree-sequence} and~\ref{sec:io-efficient-edge-swaps}.

\begin{table}
	\centering
	\scalebox{0.9}{\small
		\begin{tabular}{|p{0.25\columnwidth}|p{0.73\columnwidth}|}
			\hline
			\textsl{Parameter} & \textsl{Definition} \\\hline
			$n$ & Number of nodes to be produced \\
			$\pld{d_\text{min}}{d_\text{max}}{\gamma}$ & Degree distribution of nodes, typically $\gamma=2$ \\\hline
			$0 {\le} O {\le} n$, $\nu {\ge} 1$ & $O$ random nodes belong to $\nu$ communities; remainder has one membership\\\hline
			$\pld{s_\text{min}}{s_\text{max}}{\beta}$ & Size distribution of communities, typically $\beta{=}1$ \\\hline
			$0 < \mu < 1$ & Mixing parameter: fraction of neighbors of every node $u$ that shall not share a community with $u$\\\hline
		\end{tabular}
	}
	
	\caption{Parameters of overlapping LFR. The typical values follow suggestions by \cite{Lancichinetti2009}.}
	\label{tab:lfr-param}
\end{table} %
\section{\emhh: Deterministic Edges from a Degree Sequence}\label{sec:mat-degree-sequence}
In this section, we introduce an EM-variant of the well-known Havel-Hakimi scheme that takes a positive non-decreasing degree sequence%
\footnote{%
	Within our pipeline, we generate a monotonic degree sequence by first sampling a monotonic uniform sequence online based on the ideas of \cite{DBLP:journals/toms/BentleyS80, DBLP:journals/toms/Vitter87}.
	Applying the inverse sampling technique (carrying over the monotonicity) yields the required distribution.
	Thus, no additional sorting steps are necessary for the inter-community graph.
}
$\degs{=}(d_1, \ldots, d_n)$ and, if possible, outputs a graph $G_\degs$ which realizes these degrees.
A sequence $\degs$ is called \emph{graphical} if a matching simple graph $G_\degs$ exists.
Havel~\cite{Havel1955} and Hakimi~\cite{doi:10.1137/0110037} gave inductive characterizations of graphical sequences which directly lead to a graph generator:
given $\degs$, connect the first node $v_1$ with degree $d_1$ (minimal among all nodes) to $d_1$-many high-degree vertices by emitting edges to nodes $v_{n\, -\, (d_1-1)}, \ldots, v_n$.
Then remove $d_1$ from $\mathcal D$ and decrement the remaining degree of every new neighbor which yields an updated sequence $\degs'$.%
\footnote{This variant is due to \cite{doi:10.1137/0110037}; in \cite{Havel1955}, the node of maximal degree is picked and connected.}
Subsequently, remove zero-entries and sort $\degs'$ while keeping track of the original positions to be able to output the correct node indices.
Finally, recurse until no positive entries remain.
After every iteration, the size of $\degs$ is reduced by at least one resulting in $\Oh(n)$ rounds.

For an implementation, it is non-trivial to keep the sequence ordered after decrementing the neighbors' degrees.
Internal memory solutions typically employ priority queues optimized for integer keys, e.g., bucket-lists~\cite{DBLP:journals/corr/StaudtSM14, DBLP:journals/corr/abs-cs-0502085}.
This approach incurs $\Theta(\sort(n+m))$~I/Os using a na\"ive EM PQ since every edge triggers an update to the pending degree of at least one endpoint.

We propose the Havel-Hakimi variant \emhh{} which emits a stream of edges in lexicographical order.
It can be fed to any single-pass streaming algorithm without a round-trip to disk.
Additionally, \emhh{} may be used in time $\Oh(n)$ to test whether a degree sequence $\degs$ is graphical or to drop problematic edges yielding a graphical sequence (cf. section~\ref{sec:cmes}).
Thus, we consider only internal I/Os and emphasize that storing the output -- if necessary by the application -- requires $\Oh(m)$ time and $\Oh(\scan(m))$ I/Os where $m$ is the number of edges produced.

\clearpageA

\subsection{Data structure} 
Instead of maintaining the degree of every node in $\degs$ individually, \emhh{} compacts nodes with equal degrees into a group, yielding $D_\degs \defrel \big| \{d_i: 1 {\le} i {\le} n\} \big|$ groups.
Since $\degs$ is monotonic, such nodes have consecutive ids and the compaction can be performed in a streaming fashion%
\footnote{While direct sampling of the group's multinomial distribution is not beneficial in LFR, it may be used to omit the compaction phase for other applications.}.

The sequence is then stored as a doubly linked list $L = [g_j]_{1 \le j \le D_\degs}$ where group $g_j=(b_j, n_j, \delta_j)$ assigns degree $\delta_j$ to nodes $v_{b_j}, \ldots, v_{b_j + (n_j - 1)}$.
The algorithm is built around the following invariants holding at the begin of every iteration:
\begin{itemize}
	\item[\invar1] $\delta_j < \delta_{j+1}\ \ \forall 1 \le j < D_\degs$, i.e. the groups represent strictly monotonic degrees
	\item[\invar2] $b_j + n_j = b_{j+1} \ \ \forall 1 \le j < D_\degs$, i.e. there are no gaps in the node ids
\end{itemize}

These invariants allow us to bound the memory footprint in two steps:
first observe that a list $L$ of size $D_\degs$ describes a graph with at least $\sum_{i = 1}^{D_\degs} i / 2$ edges due to \invar1.
Thus, graphs from an arbitrary $L$ filling the whole IM have $\Omega(M^2)$ edges.
Even under pessimistic assumptions this amounts to an edge list of more than \SI{1}{\peta\byte} of size on realistic machines.%
\footnote{%
	A single item of $L$ can be represented by its three values and two pointers, i.e. a total of $5 {\cdot} 8 {=} 40$ bytes per item (assuming 64 bit integers and pointers).
	Just \SI{2}{\giga\byte} of IM suffice for storing \num{5e7} items, which result in at least \num{6.25e14} edges, i.e. storing just two bytes per edge would require more than one Petabyte. 
	Further, standard tricks (e.g., exploiting the redundancy due to \invar2) can be used to reduce the memory footprint of $L$.%
}
Therefore, even in the worst case the whole data structure can be kept in IM for all practical scenarios.
On top of this, a probabilistic argument applies:
While there exist graphs with $D_\degs {=} \Theta(n)$ (cf. Fig.~\ref{fig:degree_worstcase}), Lemma~\ref{lem:pwl_bound_different_degrees} gives a sub-linear bound on $D_\degs$ if $\degs$ is sampled from a powerlaw distribution (refer also to section~\ref{subsec:statesize} for experimental results).

\begin{lemma}\label{lem:pwl_bound_different_degrees}
	Let $\degs$ be a degree sequence with $n$ nodes sampled from $\pld 1n\gamma$.
	Then, the number of unique degrees $D_\degs = \big| \{d_i: 1\le i \le n\} \big|$ is bounded by $\Oh(n ^{1/\gamma})$ with high probability.%
\end{lemma}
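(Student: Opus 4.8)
The plan is to split the degree range $[1, n)$ into two regimes at a threshold $t \defrel \lceil n^{1/\gamma} \rceil$ and bound the number of distinct degrees in each regime separately. Every degree $d_i$ with $d_i \le t$ contributes at most $t = \Oh(n^{1/\gamma})$ distinct values trivially, since there are only $t$ integers in $\{1, \ldots, t\}$. So the real work is to show that, with high probability, there are only $\Oh(n^{1/\gamma})$ distinct degrees \emph{above} the threshold, and in fact I expect to show the stronger statement that with high probability \emph{no} node has degree exceeding roughly $n^{1/\gamma}$ (up to a polylogarithmic factor), which would make the tail regime contribute a negligible amount. Either way, combining the two bounds gives $D_\degs = \Oh(n^{1/\gamma})$ w.h.p.

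First I would pin down the normalization of $\pld 1n\gamma$: for a single draw $X$, $\mathbb P[X = k] = k^{-\gamma}/H$ for $1 \le k < n$, where $H = \sum_{j=1}^{n-1} j^{-\gamma}$ is a constant (for $\gamma > 1$) or $\Theta(\log n)$ (for $\gamma = 1$); in all cases $H = \Omega(1)$. Then for any value $k$, the tail probability is $\mathbb P[X \ge k] = \frac 1H \sum_{j \ge k} j^{-\gamma} = \Oh(k^{1-\gamma})$ by comparison with an integral. Set $k^\star \defrel c \, n^{1/\gamma} \log n$ for a suitable constant $c$. Then $\mathbb P[X \ge k^\star] = \Oh\big((n^{1/\gamma}\log n)^{1-\gamma}\big) = \Oh\big(n^{(1-\gamma)/\gamma} (\log n)^{1-\gamma}\big) = \Oh\big(n^{1/\gamma - 1}\big)$ (absorbing the log factor, which only helps since $\gamma \ge 1$). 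A union bound over the $n$ independent node degrees gives $\mathbb P[\max_i d_i \ge k^\star] = \Oh(n \cdot n^{1/\gamma - 1}) = \Oh(n^{1/\gamma})$ — which is not yet $o(1)$. This is the crux of the difficulty: a naive union bound at threshold $\Theta(n^{1/\gamma})$ is too weak by a polynomial factor.

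To fix this I would push the threshold up by a further polynomial-in-$\log$ or small polynomial factor and argue more carefully. Concretely, taking $k^\star \defrel n^{1/\gamma} \cdot n^{\epsilon}$ for a tiny constant $\epsilon > 0$ gives $\mathbb P[X \ge k^\star] = \Oh(n^{(1/\gamma - 1)(1 + \epsilon\gamma)})$ and the union bound yields $\Oh(n^{1 + (1/\gamma-1)(1+\epsilon\gamma)}) = \Oh(n^{1/\gamma - \epsilon(\gamma - 1)})$, which is $o(1)$ when $\gamma > 1$ — but the bound on $D_\degs$ degrades to $\Oh(n^{1/\gamma + \epsilon})$ rather than $\Oh(n^{1/\gamma})$, and the argument breaks entirely at $\gamma = 1$. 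The clean way is therefore \emph{not} to bound the maximum degree but to directly bound the count of distinct large degrees: let $Y$ be the number of distinct values in $\{k : t \le k < n\}$ that are hit by at least one of the $n$ draws. Then $\mathbb E[Y] \le \sum_{k = t}^{n-1} \mathbb P[\exists i: d_i = k] \le \sum_{k=t}^{n-1} n \cdot \mathbb P[X = k] = \frac nH \sum_{k \ge t} k^{-\gamma} = \Oh(n \cdot t^{1-\gamma}) = \Oh(n \cdot n^{(1-\gamma)/\gamma}) = \Oh(n^{1/\gamma})$. So $\mathbb E[D_\degs] \le t + \mathbb E[Y] = \Oh(n^{1/\gamma})$ unconditionally, and the high-probability statement follows from a concentration argument: $Y$ is a function of the $n$ independent draws that changes by at most $1$ when any single $d_i$ is resampled, so McDiarmid's (bounded-differences) inequality gives $\mathbb P[Y \ge \mathbb E[Y] + \lambda] \le \exp(-2\lambda^2 / n)$; choosing $\lambda = \Theta(\sqrt{n \log n})$ makes the deviation $\Oh(\sqrt{n \log n}) = \Oh(n^{1/\gamma})$ (since $\gamma \le 2$ gives $1/\gamma \ge 1/2$) with failure probability $n^{-\Omega(1)}$.

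I expect the main obstacle to be exactly this tension between the threshold placement and the union-bound slack at $\gamma$ close to $1$; the resolution is to count distinct hit values via linearity of expectation (which is lossless) rather than to bound the maximum degree, and then to invoke bounded differences for concentration. A secondary technical point to handle cleanly is the $\gamma = 1$ boundary case, where $H = \Theta(\log n)$ and $\sum_{k \ge t} k^{-1} = \Theta(\log(n/t)) = \Theta(\log n)$, so $\mathbb E[Y] = \Oh(n / \log n \cdot \log n) = \Oh(n) = \Oh(n^{1/\gamma})$ still holds with room to spare. Putting the pieces together: $D_\degs \le \lceil n^{1/\gamma}\rceil + Y$, $Y$ concentrates around its $\Oh(n^{1/\gamma})$ mean, and hence $D_\degs = \Oh(n^{1/\gamma})$ with high probability, as claimed.
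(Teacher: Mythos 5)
Your decomposition at the threshold $t=\lceil n^{1/\gamma}\rceil$ and your expectation bound are exactly the paper's: the at most $t$ small values are charged trivially, and $\mathbb E[Y]\le n\cdot\mathbb P[X\ge t]=\Oh(n\,t^{1-\gamma})=\Oh(n^{1/\gamma})$ is the same integral-comparison computation. The gap is in the concentration step. McDiarmid's inequality on the distinct-value count $Y$ only yields an \emph{additive} deviation of order $\sqrt{n\log n}$, and your parenthetical ``since $\gamma\le 2$ gives $1/\gamma\ge 1/2$'' silently imports a hypothesis that is not in the lemma: the paper states the result for all $\gamma\ge 1$ (and experimentally verifies the $\Theta(n^{1/3})$ scaling for $\gamma=3$ in Fig.~\ref{fig:powerlaw_pop_count}). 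For any $\gamma>2$ we have $n^{1/\gamma}=o(\sqrt{n})$, so the bounded-differences deviation swamps the mean and your argument does not establish the claimed $\Oh(n^{1/\gamma})$ bound in that regime.

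The fix is to abandon the distinct-value count for the tail and instead bound it by the \emph{number of draws} exceeding $t$, i.e.\ $\sum_{i=1}^n Y_i$ with $Y_i=\mathbf 1[X_i>t]$ --- which is what the paper does, and which you are already implicitly upper-bounding in your expectation calculation. This is a sum of independent Bernoulli variables with mean $\Oh(n^{1/\gamma})$, so a \emph{multiplicative} Chernoff bound gives $\mathbb P\bigl[\sum_i Y_i > C n^{1/\gamma}\bigr]\le \exp\bigl(-\Omega(n^{1/\gamma})\bigr)$, which is superpolynomially small for every constant $\gamma$, with no case split on $\gamma$. Your observation that a union bound on the maximum degree is too weak is correct and well taken; the resolution is not a sharper count of distinct hit values but a concentration inequality whose error scales with the mean rather than with $\sqrt{n}$.
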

\begin{proof}
	Consider random variables $(X_1, \ldots, X_n)$ sampled i.i.d. from  $\pld 1n\gamma$ as an unordered degree sequence.
	Fix an index $1 {\le} j {\le} n$. 
	Due to the powerlaw distribution, $X_j$ is likely to have a small degree.
	Even if all degrees $1, \ldots, {n^{1/\gamma}}$ were realized, their occurrences would be covered by the claim.
	Thus, it suffices to bound the number of realized degrees larger than~${n^{1/\gamma}}$.

	We first show that their total probability mass is small.
	Then we can argue that $D_\degs$ is asymptotically  unaffected by their rare occurrences:
	\begin{align*}
	\mathbb P [X_j {>} {n^{1/\gamma}}]
	&= \sum_{i={n^{1/\gamma}}+1}^{n-1} \mathbb P [X_j {=} i]
	=  \frac{\sum_{i={n^{1/\gamma}}+1}^{n-1} i^{-\gamma}}{\sum_{i=1}^{n-1} i^{-\gamma}}
	\stackrel{(i)}{=} \frac{\sum_{i={n^{1/\gamma}}+1}^{n-1} i^{-\gamma}}{\zeta(\gamma) - \sum_{i=n}^\infty i^{-\gamma}}
	\stackrel{(ii)}{\le} \frac{\int_{n^{1/\gamma}}^{n-1} x^{-\gamma} \diff x}{\zeta(\gamma) - \int_{n}^\infty x^{-\gamma} \diff x}\\
	&= \frac{\frac{1}{1-\gamma}\left[(n{-}1)^{1-\gamma} - {n^{1/\gamma}}/n \right]}{\zeta(\gamma) + \frac{1}{1-\gamma}n^{1-\gamma}}
	= \frac{{n^{1/\gamma}}/n - (n-1)^{1-\gamma}}{(\gamma - 1)\zeta(\gamma) - n^{1-\gamma}}
	= \Oh({n^{1/\gamma}}/n),
	\end{align*}
	\noindent where (i) $\zeta(\gamma) = \sum_{i=1}^\infty i^{-\gamma}$ is the Riemann zeta function which satisfies $\zeta(\gamma) \ge 1$ for all $\gamma{\in}\mathbb R,\ \gamma{\ge}1$.
	In (ii) we exploit the sum's monotonicity to bound it between the two integrals
	$\int_a^{b+1} x^{-\gamma} \diff x \le \sum_{i=a}^b i^{-\gamma} \le \int_{a-1}^{b} x^{-\gamma} \diff x$.

	In order to bound the number of occurrences, define Boolean indicator variables $Y_i$ with $Y_i {=} 1$ iff $X_i {>} {n^{1/\gamma}}$.
	Observe that they model Bernoulli trials $Y_i {\in} B(p)$ with $p {=} \Oh({n^{1/\gamma}}/n)$.
	Thus, the expected number of high degrees is $\mathbb E[\sum_{i=1}^n Y_i] = \sum_{i=1}^n \mathbb P [X_i {>} {n^{1/\gamma}}] = \Oh({n^{1/\gamma}})$.
	Chernoff's inequality gives an exponentially decreasing bound on the tail distribution of the sum which thus holds with high probability.
\end{proof}

\begin{figure}
	\begin{center}
		\variantScale{
			\scalebox{0.8}{%
{\setlength{\baselineskip}{0.9em}%
\newcommand{\looptype}[1]{\\[0.5em] (\texttt{#1})}
\begin{tikzpicture}[
	node distance=3em and 4.5em,
	vertex/.style={draw, circle, minimum width=2em, minimum height=2em,font=\footnotesize, inner sep=0},
	keylabel/.style={node distance=0em,align=left, xshift=-1.2em},
	message/.style={draw, ->, bend left, shorten >=5pt},
	message label/.style={above,, align=center}
]
	\node[vertex,label=above:{$v_1$}]             (a1) {$1$};
	\node[vertex,label=above:{$v_3$},right=of a1] (a2) {$2$};
	\node[vertex,label=above:{$v_5$},right=of a2] (a3) {$3$};
	\node[                          right=of a3] (ap) {$\cdots$};
	\node[vertex,label=above:{$v_{2k-1}$},right=of ap] (ak) {$k$};
	
	\node[vertex,label=below:{$v_{2k}$},below=of a1] (b1) {$k$};
	\node[vertex,label=below:{$v_{2(k-1)}$},right=of b1] (b2) {$k{-}1$};
	\node[vertex,label=below:{$v_{2(k-2)}$},right=of b2] (b3) {$k{-}2$};
	\node[                          right=of b3] (bp) {$\cdots$};
	\node[vertex,label=below:{$v_2$},right=of bp] (bk) {$1$};
	
	\path[draw] (a1) to (b1)
	(a2) to (b1) (a2) to (b2)
	(a3) to (b1) (a3) to (b2) (a3) to (b3)
	(ak) to (b1) (ak) to (b2) (ak) to (b3) (ak) to node[left, xshift=-2.25em] {$\cdots$} (bk);
\end{tikzpicture}} %
}
		}
	\end{center}
	
	\caption{%
		Materialization of the degree sequence $\degs_k = (1,1,\,2,2,\,\ldots,\,k,k)$ with $D_{\degs_k}  =  k  =  \Theta(n)$
		which maximizes \emhh's memory consumption asymptotically.
		A node's label corresponds to the vertex' degree.
	}
	\label{fig:degree_worstcase}
\end{figure}

Due to Lemma~\ref{lem:pwl_bound_different_degrees}, a graph sampled from a powerlaw distribution with $m = \Oh(M^{2\gamma})$ can be computed in IM with high probability.

\subsection{Algorithm} %
\emhh{} works in $n$ rounds, where every iteration corresponds to a recursion step of the original formulation.
Each time it extracts vertex $v_{b_1}$ with the smallest available id and with minimal degree $\delta_1$.
The extraction is achieved by incrementing the lowest node id ($b_1' \gets b_1{+}1$) of group $g_1$ and decreasing its size ($n_1' \gets n_1{-}1$).
If the group becomes empty ($n_1' = 0$), it is removed from $L$ at the end of the iteration.
We now connect node $v_{b_1}$ to $\delta_1$ nodes from the end of $L$.
Let $g_j$ be the group of smallest index to which $v_{b_1}$ connects to.

\noindent Then there are two cases:

\begin{itemize}
	\item[(C1)]
		If node $v_{b_1}$ connects to all nodes in $g_j$, we directly emit the edges $\big\{[u,x] \,|\, n{-}\delta_1 < x \le n\big\}$ and decrement the degrees of all groups $g_j, \ldots, g_{|L|}$ accordingly.
		Since degree $\delta_{j-1}$ remains unchanged, it may now match the decremented $\delta_j$.
		This violation of \invar1 is resolved by \emph{merging} both groups.
		Due to \invar2, the union of $g_{j-1}$ and $g_j$ contains consecutive ids and it suffices to grow $n_{j-1} \gets n_{j-1}{+}n_j$ and to delete group~$g_j$.

	\item[(C2)]
		If $v_{b_1}$ connects only to a number $a < n_j$ of nodes in group $g_j$, we \emph{split} $g_j$ into two groups $g_j'$ and $g_j''$ of sizes $a$ and $n_j{-}a$ respectively.
		We then connect vertex $u$ to all $a$ nodes in the first fragment $g_j'$ and hence need to decrease its degree.
		Thus, a merge analogous to (C1) may be required (see Fig.~\ref{fig:hh_list}).
		Now, groups $g_{j{+}1}, \ldots, g_{|L|}$ are consumed wholly as in (C1).
\end{itemize}

If the requested degree $\delta_1$ cannot be met (i.e., $\delta_1 > \sum_{k=1}^{|L|}n_k$), the input is not graphical~\cite{doi:10.1137/0110037}.
Since the vast majority of nodes have low degrees, a sufficiently large random powerlaw degree sequence $\degs$ contains at most very few nodes that cannot be materialized as requested.
Therefore, we do not explicitly ensure that the sampled degree sequence is graphical and rather correct the negligible inconsistencies later on by ignoring the unsatisfiable requests.

\subsection{Improving the I/O-complexity}
In the current formulation of \emhh{} we perform constant work per edge which is already optimal.
However, we introduce a simple optimization which improves constant factors and gives I/O-efficient accesses.
It also allows \emhh{} to test whether $\degs$ is graphical in time $\Oh(n)$.
Observe that only groups in the vicinity of $g_j$ can be split or merge; we call these the \emph{active} frontier.
In contrast, the so-called \emph{stable} groups $g_{j{+}1},\ldots, g_{D_\degs}$ keep their relative degree differences as the pending degrees of all their nodes are decremented by one in each iteration.
Further, they will become neighbors to all subsequently extracted nodes until group $g_{j{+}1}$ eventually becomes an active merge candidate.
Thus, we do not have to update the stable degrees in every round, but rather maintain a single global iteration counter $I$ and count how many iterations a group remained stable:
when a group $g_k$ becomes stable in iteration $I_0$, we annotate it with $I_0$ by adding $\delta_k \gets \delta_k{+}I_0$.
If $g_k$ has to be activated again in iteration $I > I_0$, its updated degree follows as $\delta_k \gets \delta_k{-}I$. 
The degree $\delta_k$ remains positive since \invar1 enforces a timely activation.

\goodbreak

\begin{lemma}\label{lem:hphh-io}
	The optimized variant of \emhh{} requires $\Oh(\scan(D_\degs))$ I/Os if $L$ is stored in an external memory list.
\end{lemma}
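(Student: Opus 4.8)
The claim is that the optimized \emhh{} runs in $\Oh(\scan(D_\degs))$ I/Os when $L$ lives in an external-memory list. The key observation to exploit is that the access pattern on $L$ is almost purely sequential: in each round we touch the front of the list (to extract $v_{b_1}$) and a contiguous suffix of the list (the groups $g_j, \ldots, g_{|L|}$ that $v_{b_1}$ connects to), and between consecutive rounds the boundary $j$ of that suffix only moves \emph{monotonically leftward} through the list until $g_{j+1}$ is re-activated. So the plan is to maintain $L$ as a doubly linked EM list and keep two sliding cursors — one at the head, one at the "active frontier" $g_j$ — together with a small $\Oh(1)$-sized buffer of blocks around each cursor. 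Then I argue that the total movement of each cursor over the whole execution is $\Oh(D_\degs)$ positions, so the number of block transfers is $\Oh(\lceil D_\degs / B \rceil) = \Oh(\scan(D_\degs))$.

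First I would make precise what "active frontier" means across rounds, using the stable/active distinction set up just before the lemma: when a node is extracted, we find the smallest-index group $g_j$ it connects into, perform at most one split and at most one merge there (purely local, $\Oh(1)$ list surgery), and mark $g_{j+1}, \ldots, g_{|L|}$ as stable by bumping the global counter $I$; crucially we do \emph{not} walk over the stable groups. In the next round the new extracted node again connects into some suffix, and its left endpoint $j'$ satisfies $j' \le j$ up to the $\Oh(1)$ groups created/destroyed by the split/merge — because the stable groups' pending degrees have all dropped by one, so the reach of the next (still small-degree) extracted node starts no further right than before. Hence the "active" cursor only ever sweeps left; it is reset rightward only when the suffix it bounds is exhausted, i.e. when $g_{j+1}$ must be re-activated, and each group is re-activated only $\Oh(1)$ times (each re-activation strictly decreases $|L|$ via a merge, or advances a boundary that never comes back). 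Summing, the active cursor traverses $\Oh(D_\degs)$ group-slots in total; likewise the head cursor advances monotonically through at most $D_\degs$ groups. With $\Oh(1)$ resident blocks per cursor the resulting I/O count is $\Oh(D_\degs/B + 1) = \Oh(\scan(D_\degs))$. The per-edge emission work is charged to the output stream, which the lemma statement explicitly excludes ("internal I/Os" only), so it does not enter the bound.

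The step I expect to be the main obstacle is the amortization argument bounding the \emph{total} leftward–then–rightward motion of the active cursor: one has to rule out a pathological schedule in which the frontier oscillates back and forth many times, re-reading the same blocks $\omega(D_\degs/B)$ times in aggregate. The clean way to do this is a potential-function / charging argument: assign each group a constant number of credits when it is created (there are $\Oh(D_\degs)$ creations, since the only creation events are the initial $D_\degs$ groups plus one new fragment per split, and each split is paid for by the edge that caused it — or more simply, $|L|$ only ever decreases except for the $\Oh(1)$-per-round split, bounding the number of group-creations by $\Oh(n + D_\degs) = \Oh(n)$... but we need $\Oh(D_\degs)$, so instead charge splits to the \emph{merge} that typically follows and to the monotone progress of the head cursor). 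Getting this bookkeeping tight enough to land at $\Oh(D_\degs)$ rather than $\Oh(n)$ is the delicate point; I would handle it by showing that a group, once it leaves the active frontier to become stable, cannot return to the frontier before the head cursor has advanced past a fresh group, so re-activations are paid by head-cursor progress, which is globally $\Oh(D_\degs)$. Everything else — the locality of split/merge, the $\Oh(1)$ resident-block accounting, and the reduction of "list traversal" to $\scan$ — is routine EM-list manipulation.
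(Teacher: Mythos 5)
Your overall strategy coincides with the paper's: view the accesses to $L$ as two essentially sequential scans (one cursor at the head, one at the active frontier), keep $\Oh(1)$ blocks cached around each cursor, and argue that the total cursor movement is $\Oh(D_\degs)$ positions so that only $\Oh(\scan(D_\degs))$ block transfers occur. The treatment of the head cursor and of the monotone drift of the frontier towards the head is the same in both arguments.

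The gap is exactly where you flag it: the amortization of re-activations and splits. Your proposed charging scheme --- ``a group, once it leaves the active frontier to become stable, cannot return to the frontier before the head cursor has advanced past a fresh group, so re-activations are paid by head-cursor progress'' --- is asserted without proof and does not obviously hold: a stable group $g_{j+1}$ is re-activated when its (implicitly decremented) degree catches up with that of its left neighbour, an event governed by the degree gaps stored in $L$, whereas the head cursor advances past a group only when that group runs out of nodes; the two are not synchronized, and many re-activations may occur while the head cursor sits inside one large group. Your accounting of group creations likewise oscillates between $\Oh(n)$ and $\Oh(D_\degs)$ without resolution. The paper closes both holes with two separate, simpler observations. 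For splits: a split produces two adjacent fragments whose degrees differ by exactly one, so any further split of either fragment immediately triggers a merge; hence $|L| \le 2 D_\degs$ at all times and the split/merge surgery never leaves the cached frontier block. For re-activations: the block $G$ immediately behind the frontier (towards the end of $L$) is kept cached alongside the frontier block $F$; since the frontier scans backwards, $G$ was read before $F$ anyway, so a re-activation costs no I/O, and a fresh block load is needed only once $G$ has been entirely consumed, which requires $\Omega(B)$ merges --- giving $\Oh(\scan(D_\degs))$ I/Os over the whole run. In other words, the paper does not charge re-activations to head progress at all; it charges them to block consumption, and that is what brings the bound down to $\Oh(\scan(D_\degs))$ rather than $\Oh(\scan(n))$. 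Without one of these two mechanisms (or a proof of your synchronization claim) the argument does not go through.
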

\begin{proof}
	An external-memory list requires $\Oh(\scan(k))$ I/Os to execute any sequence of $k$ sequential read, insertion, and deletion requests to adjacent positions (i.e. if no seeking is necessary)~\cite{pagh2003basic}.
	We will argue that \emhh{} scans $L$ roughly twice, starting simultaneously from the front and back.
	
	Every iteration starts by extracting a node of minimal degree.
	Doing so corresponds to accessing and eventually deleting the list's first element $g_i$.
	If the list's head block is cached, we only incur an I/O after deleting $\Theta(B)$ head groups, yielding $\Oh(\scan(D_\degs))$ I/Os during the whole execution.
	The same is true for accesses to the back of the list:
	the minimal degree increases monotonically during the algorithm's execution until the extracted node has to be connected to all remaining vertices.
	In a graphical sequence, this implies that only one group remains and we can ignore the simple base case asymptotically.
	Neglecting splitting and merging, the distance between the list's head and the \emph{active} frontier decreases monotonically triggering $\Oh(\scan(D_\degs))$ I/Os.
	
	\textbf{Merging.} As described before, it may be necessary to reactivate \emph{stable} groups, i.e. to reload the group behind the active frontier (towards $L$'s end).
	Thus, we not only keep the block $F$ containing the frontier cached, but also block $G$ behind it.
	It does not incur additional I/O, since we are scanning backwards through $L$ and already read $G$ before $F$.
	The reactivation of \emph{stable} groups hence only incurs an I/O when the whole block $G$ is consumed and deleted.
	Since this does not happen before $\Omega(B)$ merges take place, reactivations may trigger $\Oh(\scan(D_\degs))$ I/Os in total.
	
	\textbf{Splitting.}
	Observe that $L$ at most doubles in size as splitting a group with degree $d$, which has a neighbor of degree $d {\pm} 1$, directly triggers another merge (cf. Fig.~\ref{fig:hh-split-merge}).
	Since a split replaces one group by two adjacent fragments which differ in their degree by exactly one, a second split to one of the fragments does not increase the size of the list.
\end{proof}

\begin{figure}
	\begin{center}
		\variantScale{\scalebox{0.78}{%
{
\def\sx{3em}
\def\sy{2.5em}
\begin{tikzpicture}[
	llabel/.style={anchor=west},
	group/.style={anchor=west, fill=black!20, minimum height=0.9em},
	hcell/.style={minimum width=(0.75*\sx), inner sep=0},
	graphn/.style={circle, draw, minimum width=2em, anchor=west},
	graphe/.style={draw, ->, thick},
]
	\node[llabel] at (16em, -0.5em) {Uncompressed Degree Sequence $\degs$};

	\newcommand{\hcell}[3]{\node[hcell] (i#1-#2) at ($(#2*\sx, -#1*\sy) + (12em, 0)$) {\phantom{$#3$}};}
	\hcell111 \hcell121 \hcell132 \hcell142 \hcell153 \hcell163 
	          \hcell221 \hcell232 \hcell242 \hcell252 \hcell263 
	                    \hcell332 \hcell342 \hcell352 \hcell362 
	                              \hcell441 \hcell451 \hcell462 
	                                        \hcell551 \hcell561 

	\newcommand{\hgroup}[3]{
		\node[group, minimum width=(#3-1)*\sx + (0.75*\sx)] at ($(i#1-#2.west)$) {};
	}
	
	\hgroup121 \hgroup132 \hgroup152
	           \hgroup233 \hgroup261
	\hgroup343
	\hgroup451 \hgroup461
	\hgroup561
	
	\foreach \i in {1, ..., 5} {
		\node[circle, red, draw, inner sep=0.4em] (ext-circle-\i) at ($(i\i-\i)$) {};
		\node[red, at=(ext-circle-\i), yshift=-0.8em] {\footnotesize \phantom{g}extract\phantom{g}};
	}

	\foreach \i/\n/\d in {1/5/2, 2/6/2, 3/4/1, 3/5/1, 4/6/1, 5/6/0} {
		\node[circle, blue, draw, inner sep=0.4em] (ext-circle-\i) at ($(i\i-\n)$) {};
		\node[blue, at=(ext-circle-\i), yshift=-0.8em] {\footnotesize edge-to};
		\node[blue, at=(ext-circle-\i), xshift=1em] {\footnotesize $\mathbf{\d}$};
		\path[draw, blue] (ext-circle-\i.south west) to (ext-circle-\i.north east);
	}

	\foreach \i/\n in {1/5, 3/5} {
		\pgfmathtruncatemacro{\nn}{\n+1}
		\node[ACMGreen, at=(i\i-\n), anchor=west, inner sep=0] (split-\i) at ($0.5*(i\i-\n) + 0.5*(i\i-\nn) + (0, 0.8em)$)	{\footnotesize\,split\phantom{g}};
		\path[ACMGreen, draw, thick] ($0.5*(i\i-\n) + 0.5*(i\i-\nn) + (0, 0.4*\sy)$) to +(0, -0.8*\sy);
	}
	
	\foreach \i/\n in {1/4, 2/5, 4/5} {
		\pgfmathtruncatemacro{\nn}{\n+1}
		\node[ACMGreen, at=(i\i-\n), inner sep=0] (merge-\i) at ($0.5*(i\i-\n) + 0.5*(i\i-\nn) + (0, 0.8em)$)	{\footnotesize \phantom{l}merge\phantom{l}};
		\node[ACMGreen, at=(i\i-\n), inner sep=0] (merge-\i) at ($0.5*(i\i-\n) + 0.5*(i\i-\nn)$)	{\footnotesize $\Leftrightarrow$};
				
	}

	\node[llabel] at (0, -0.5em) {List $[(b_i, n_i, \delta_i)]_i$};
	\node[llabel] at (0, -1*\sy) {$\big[(1, 2, 1),\ (3, 2, 2),\ (5, 2, 3)\big]$};
	\node[llabel] at (0, -2*\sy) {$\big[(2, 1, 1),\ (3, 3, 2),\ (6, 1, 3)\big]$};
	\node[llabel] at (0, -3*\sy) {$\big[(3, 4, 2)\big]$};
	\node[llabel] at (0, -4*\sy) {$\big[(4, 2, 1),\ (6, 1, 2)\big]$};
	\node[llabel] at (0, -5*\sy) {$\big[(5, 2, 1)\big]$};

	\renewcommand{\hcell}[3]{\node[] at ($(#2*\sx, -#1*\sy) + (12em, 0)$) {$#3$};}
	\hcell111 \hcell121 \hcell132 \hcell142 \hcell153 \hcell163 
	\hcell221 \hcell232 \hcell242 \hcell252 \hcell263 
	\hcell332 \hcell342 \hcell352 \hcell362 
	\hcell441 \hcell451 \hcell462 
	\hcell551 \hcell561

\end{tikzpicture}
} %
}}
		\hfill
		\variantScale{\scalebox{0.58}{%
\begin{tikzpicture}[
	grp/.style={draw, minimum width=6em, minimum height=2em},
	grpa/.style={grp},
	grpb/.style={grp, fill=black!30}
]

	\node[grpa, label={group $g_i$}] (a1) {$d{-}1$};
	\node[grpb, label={group $g_j$}] (b1) at ($(a1) + (8em, 0)$) {$d$};

	\node[grpa, label={group $g_i$}] (a2) at ($(a1) - (9em, 6.5em)$) {$d{-}1$};
	\node[grpb, label={group $g_j$}] (b2) at ($(a2) + (8em, 0)$) {$\mathbf{d{-1}}$};

	\node[align=center, anchor=north] at ($0.5*(a2) + 0.5*(b2) - (0, 1.2em)$) 
		{Before $g_i$ can be split, the degrees of\\
		 groups $g_j$ with $j > i$ are decreased};
	
	\node[grpa] (a21) at ($(a2) + (0, -7em)$) {};
	\node[grpb, label={group $g_j$}] (b21) at ($(a21) + (8em, 0)$) {$d{-}1$};
	
	\path[draw] ($(a21) + (0, 1.5em)$) to ($(a21) - (0, 1.5em)$);
	\node at ($(a21) - (1.5em, 0)$) {$d{-}2$};
	\node at ($(a21) + (1.5em, 0)$) {$d{-}1$};	
	
	\path[draw, decorate, decoration={snake}] (a21) to (b21);

	\node[grpa, label={group $g_i$}] (a3) at ($(b1) + (1em, -6.5em)$) {$d{-}1$};
	\node[grpb] (b3) at ($(a3) + ( 8em, 0)$) {};

	\path[draw] ($(b3) + (0, 1.5em)$) to ($(b3) - (0, 1.5em)$);
	\node at ($(b3) - (1.5em, 0)$) {$d{-}1$};
	\node at ($(b3) + (1.5em, 0)$) {$d$};	

	\path[draw, decorate, decoration={snake}] (a3) to (b3);

	\path[draw, dashed] (4em,-3em) to (4em, -14em);

	\node[align=center, anchor=north] at ($0.5*(a1) + 0.5*(b1) + (0, 4em)$) 
	{\textbf{Initial situation}};
	
	\node[align=center, anchor=north] at ($0.5*(a2) + 0.5*(b2) + (0, 4em)$) 
	{\textbf{Splitting at $g_i$ (front)}};
	
	\node[align=center, anchor=north] at ($0.5*(a3) + 0.5*(b3) + (0, 4em)$) 
	{\textbf{Splitting at $g_j$ (back)}};

\end{tikzpicture} %
}}
	\end{center}

	\caption{
		\textbf{Left:} 
		\emhh{} on $\degs{=}(1,1,2,2,3,3)$.
		Values of $L$ and $\degs$ in row $i$ correspond to the state at the beginning of the $i$-th iteration.
		Groups are visualized directly after extraction of the head vertex.
		The number next to an \texttt{edge-to} symbol indicates the new degree.
		After these updates, splitting and merging takes place.
		\textbf{Right:}
		Consider two adjacent groups $g_i$, $g_j$ with degrees $d{-}1$ and $d$.
		A split of $g_i$ (left) or $g_j$ (right) directly triggers a merge,	so the number of groups remains the same.
	}
	\label{fig:hh_list}
	\label{fig:hh-split-merge}	
\end{figure}

\section{\emes: I/O-efficient Edge Switching}\label{sec:io-efficient-edge-swaps}
\emes{} is a central building block of our pipeline and is used to randomize and rewire existing graphs.
It applies a sequence $S{=}[\sigma_s]_{1 \le s \le k}$ of edge swaps $\sigma_s$ to a simple graph $G{=}(V,E)$, where typically $k = c|E|$ for a constant $c \in [1, 100]$.
The graph is represented by a lexicographically ordered edge list $E_L{=}[e_i]_{1 \le i \le m}$ which only stores the pair $(u, v)$ and omits $(v, u)$ for every ordered edge $[u, v] \in E$.
As illustrated in Fig.~\ref{fig:es-swap-descriptor}, a swap \eswap abd is encoded by a direction bit $d$ and the edge ids $a$ and $b$ (i.e. the position in $E_L$) of the edges supposed to be swapped.
The switched edges are denoted as $e_a^\sigma$ and $e_b^\sigma$ and are given by $(e_a^\sigma,\ e_b^\sigma) \defrel \eswap abd$ as defined in Fig.~\ref{fig:es-swap-descriptor}.

We assume that the swap's constituents are drawn independently and uniformly at random.
Thus, the sequence can contain illegal swaps that would introduce multi-edges or self-loops if executed.
Such illegal swaps are simply skipped.
In order to do so, the following tasks have to be addressed for each \eswap abd:

\newcommand{\estask}[1]{(\texttt{T#1})}

\begin{itemize}
	  \setlength\itemsep{-0.2em}
    \item[\estask1] Gather the nodes incident to edges $e_a$ and $e_b$.
    \item[\estask2] Compute  $e_a^\sigma$ and $e_b^\sigma$ and skip if a self-loop arises.
    \item[\estask3] Verify that the graph remains simple, i.e. skip if edge $e_a^\sigma$ or $e_b^\sigma$ already exist in $E_L$.
    \item[\estask4] Update the graph representation~$E_L$.
\end{itemize}

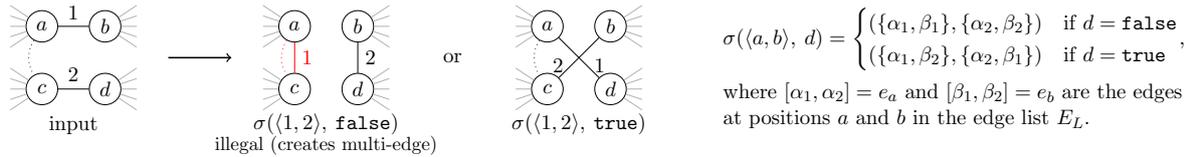
\begin{figure}
	\begin{center}
		\variantScale{%
			\scalebox{0.8}{%
\begin{tikzpicture}[
	gnode/.style={draw, circle, inner sep=0, minimum width=1.5em},
	inactive edge/.style={draw, black!30}
]
	\def\spx{12em};

	\foreach \i in {0, 1, 2} {
		\node[gnode] (a\i) at (\i*\spx, 0) {$a$};
		\node[gnode] (b\i) at (\i*\spx+3em, 0) {$b$};
		\node[gnode] (c\i) at (\i*\spx,     -3em) {$c$};
		\node[gnode] (d\i) at (\i*\spx+3em, -3em){$d$};

		\foreach \x in {-1, -0.5, 0, 0.5, 1} {
			\path[inactive edge] (a\i) to ++(-1.5em, \x em);
			\path[inactive edge] (c\i) to ++(-1.5em, \x em);
			\path[inactive edge] (b\i) to ++( 1.5em, \x em);
			\path[inactive edge] (d\i) to ++( 1.5em, \x em);
		}

	}
	
	\path[draw, ->, thick] (6em, -1.5em) to (9em, -1.5em);
	\node at (19.5em, -1.5em) {or};
	
	\path[draw] (a0) to node[above] {1} (b0) 
				(c0) to node[above] {2} (d0);
	\path[draw, red] (a1) to node[right] {1} (c1); 
	\path[draw]	(b1) to node[right] {2} (d1);
	\path[draw] (a2) to node[right, pos=0.7] {1} (d2) 
				(b2) to node[left, pos=0.7] {2} (c2);

	\path[draw, dotted, bend right] (a0) to (c0);
	\path[draw, dotted, bend right, red] (a1) to (c1);
	\path[draw, dotted, bend right] (a2) to (c2);

	\node[anchor=south] at (1.5em+0*\spx, -5.5em) {input};
	\node[anchor=south] at (1.5em+1*\spx, -5.5em) {\eswap{1}{2}{\texttt{false}}};
	\node[anchor=south] at (1.5em+2*\spx, -5.5em) {\eswap{1}{2}{\texttt{true}}};

	\node[anchor=south] at (1.5em+1*\spx, -6.5em) {\small illegal (creates multi-edge)};
	
	\node[anchor=west, align=left] at (32em, -2em) {$\eswap abd = 
\begin{cases}
(\{\alpha_1, \beta_1\}, \{\alpha_2, \beta_2\}) & \text{if } d = \texttt{false} \\
(\{\alpha_1, \beta_2\}, \{\alpha_2, \beta_1\}) & \text{if } d = \texttt{true}
\end{cases}$,\\[0.5em]
where $[\alpha_1, \alpha_2] = e_a$ and $[\beta_1, \beta_2] = e_b$
are the edges\\ at positions $a$ and $b$ in the edge list $E_L$.
};
\end{tikzpicture} %
}
		}
	\end{center}
	
	\caption{
		A swap consists of two edge ids and a direction flag.
		The edge ids describe an induced subgraph (left);
		the flag indicates how the incident nodes are shuffled.
		\vspace*{-1em}
	}
	\label{fig:es-swap-descriptor}
\end{figure}

If the whole graph fits in IM, a hash set per node storing all neighbors can be used for adjacency queries and updates in expected constant time (e.g., \vles~\cite{DBLP:journals/corr/abs-cs-0502085}). 
Then, \estask3 and \estask4 can be executed for each swap in expected time $\Oh(1)$.
However, in the EM model this approach incurs $\Omega(1)$ I/Os per swap with high probability for a graph with $m \ge cM$ and any constant $c > 1$.

To improve the situation, we split $S$ into smaller runs of $r = \Theta(|E|)$ swaps which are then batchwise processed.
Note that two swaps within a run can depend on each other.
If an edge is contained in more than one swap, the nodes incident to the edge may change after the first swap has been executed.
We call this a \emph{source edge} dependency.
Since the resulting graph has to remain simple, there further is a dependency between two swaps $\sigma_i$, $\sigma_j$ through \emph{target edges} if executing $\sigma_i$ creates or removes an edge created by $\sigma_j$.
We model both types of dependencies explicitly and forward information between dependent swaps using Time Forward Processing.

As illustrated in Fig.~\ref{fig:es_tfp_overview}, \emes{} executes several phases for each run.
They roughly correspond to the four tasks outlined above.
For simplicity's sake, we first assume that all swaps are independent, i.e. that there are no two swaps that share any source edge id or target edge.
We will then explain how dependencies are handled.

\subsection{\emes{} for Independent Swaps}\label{subsec:estfp-independent}

During the \textsl{request nodes} phase, \emes{} requests for each swap $\eswap ab\cdot$ the endpoints of the two edges at positions $a$ and $b$ in $E_L$.
These requests are then executed in the \textsl{load nodes} phase.
In combination, both implement task \estask1.
Subsequently, the step \textsl{simulate swaps} computes $e_a^\sigma$ and $e_b^\sigma$ which corresponds to \estask2.
In the fourth step, \textsl{load existence}, we check for each of these target edges whether it already exists.
Then, the step \textsl{perform swaps} executes swaps iff the graph remains simple; this corresponds to \estask3.
To implement \estask4, the state of all involved edges is materialized in the \textsl{update graph} phase.

The communication between the different phases is mostly realized via external memory sorters%
\footnote{%
    The term \emph{sorter} refers to a container with two modes of operation:
    in the first phase, items are pushed into the write-only sorter in an arbitrary order by some algorithm.
	After an explicit switch, the filled data structure becomes read-only and the elements are provided as a lexicographically non-decreasing stream which can be rewound at any time.
    While a sorter is functionally equivalent to filling, sorting and reading back an EM vector, the restricted access model reduces constant factors in the implementation's runtime and I/O-complexity~\cite{DBLP:conf/ipps/BeckmannDS09}.
}.
Independent swaps require only the communication shown at the top of Fig.~\ref{fig:es_tfp_overview}.

\begin{figure*}
	\begin{center}\variantScale{\scalebox{0.7}{%
{\setlength{\baselineskip}{0.9em}%
\newcommand{\looptype}[1]{\\[0.5em] (\texttt{#1})}
\begin{tikzpicture}[
	node distance=1.5em and 5.5em,
	task/.style={draw, minimum width=4em, minimum height=4.6em, align=center, anchor=north west, font=\small, fill=white},
	keylabel/.style={node distance=0em,align=left, xshift=-1.2em},
	message/.style={draw, ->, bend left, shorten >=5pt},
	message label/.style={above,font=\footnotesize, align=center},
	sorter/.style={red},
	prioq/.style={blue, dashed},
	stream/.style={olive, thick, dotted},
	legend/.style={inner sep=0, anchor=north west, align=left}
]
	\path[draw, black!50, dotted] (-4, -2.3em) to (50em, -2.3em);

	\node[task] (gather_nodes) {request\\nodes\looptype{swap\_id}};
	\node[task, right=of gather_nodes, xshift=-1em] (dep_chain) {load\\nodes\looptype{edge\_id}};
	\node[task, right=of dep_chain] (sim_swaps) {simulate\\swaps\looptype{swap\_id}};
	\node[task, right=of sim_swaps] (load_exist) {load\\existence\looptype{edge}};
	\node[task, right=of load_exist] (perf_swaps) {perform\\swaps\looptype{swap\_id}};
	\node[task, right=of perf_swaps, xshift=-1em] (update_graph) {update\\graph\looptype{edge}};
	
	\node[anchor=west, align=left, inner sep = 0, black!50] (DependLimitLabel) at (-4, -2.3em) {basic edge switching \\[0.5em] dependency handling};	
	
	\path[message, sorter] ($(gather_nodes.north)+(1em, 0)$) to 
		node[message label] {request nodes\\ incident to \\ edge id (\texttt{EdgeReq})} (dep_chain.north west);
	
	\path[message, sorter] (dep_chain.north) to
		node[message label] {edge state to\\ first swap\\ (\texttt{EdgeMsg})} (sim_swaps.north west);
		
	\path[message, sorter, bend right] (dep_chain.south) to
		node[message label, below] {inform about \\ successor swap \\ (\texttt{IdSucc})} (sim_swaps.south west);

	\path[message, sorter] (sim_swaps.north) to
		node[message label] {edge existence\\request\\ (\texttt{ExistReq})} (load_exist.north west);
	
	\path[message, sorter] (load_exist.north) to 
		node[message label] {edge exist. info\\ to first swap\\ (\texttt{ExistMsg})} (perf_swaps.north west);
		
	\path[message, bend right, sorter] (load_exist.south) to 
		node[message label, below] {inform about \\ successor swap \\ (\texttt{ExistSucc})} (perf_swaps.south west);
	
	\path[message, sorter] (perf_swaps.north) to
		node[message label] {edges after \\ processed swaps \\ (\texttt{EdgeUpdates})} (update_graph.north west);
	
	\path[draw, ->, in=330, out=310, looseness=3, prioq] (sim_swaps.south) to 
		node[message label, below, yshift=-0.5em] {edge state\\ updates to\\ successor} (sim_swaps.east);
	
	\path[draw, ->, in=330, out=310, looseness=3, prioq] (perf_swaps.south) to
		node[message label, below, yshift=-0.5em] {edge state and \\ existence updates\\ to successor} (perf_swaps.east);
	
	\path[draw, ->, shorten >=5pt, stream] ($(gather_nodes.north)-(1em, 0)$) to ++(0, 5.5em) -| ($(update_graph.north)-(1em, 0)$);
	\node[font=\footnotesize, stream] at (20.25em, 6em) {markers for edges that receive updates (\texttt{InvalidEdge})};
	
	\node[draw, circle, fill=black, minimum width=0.35em, inner sep=0] at (dep_chain.north) {};
	\path[draw, ->, shorten >=5pt, sorter] (dep_chain.north) to ++(0, 4.5em) -| ($(perf_swaps.north)-(1em, 0)$);

	\node[draw, circle, fill=black, minimum width=0.35em, inner sep=0] at (dep_chain.south) {};
	\path[draw, ->, shorten >=5pt, sorter] (dep_chain.south) to ++(0, -5em) -| ($(perf_swaps.south)-(1em, 0)$);
	
	\path[draw, ->, stream] ($(dep_chain.north west) - (2em, 1.2em)$) to node[message label] {$E_L$} ($(dep_chain.north west) - (0, 1.2em)$);
	\path[draw, ->, stream] ($(load_exist.north west) - (2em, 1.2em)$) to node[message label] {$E_L$} ($(load_exist.north west) - (0, 1.2em)$);
	\path[draw, ->, stream] ($(update_graph.north west) - (2em, 1.2em)$) to node[message label] {$E_L$} ($(update_graph.north west) - (0, 1.2em)$);
	\path[draw, ->, stream] ($(update_graph.north east) - (0, 1.2em)$) to node[message label] {$E_L$} ($(update_graph.north east) - (-2em, 1.2em)$);

	\node[legend] (legend-sorter) at ($(DependLimitLabel.south west) + (1em, -5em)$) {Sorter};
	\path[draw, ->, sorter] ($(legend-sorter.north) + (-1.5em, 1em)$) to ($(legend-sorter.north) + (1.5em, 1em)$);

	\node[legend, right=3em of legend-sorter] (legend-stream) {Stream};
	\path[draw, ->, stream] ($(legend-stream.north) + (-1.5em, 1em)$) to ($(legend-stream.north) + (1.5em, 1em)$);

	\node[legend, right=3em of legend-stream] (legend-pq) {Priority Queue};
	\path[draw, ->, prioq, in=80, out=100, looseness=3] ($(legend-pq.north) + (-1em, 1em)$) to ($(legend-pq.north) + (1em, 1em)$);
	
\end{tikzpicture}} %
}}\end{center}
	
	\caption{
		Data flow during an \emes{} run.
		Communication between phases is implemented via EM sorters, self-loops use a PQ-based TFP.
		Brackets within a phase represent the type of elements iterated over.
		If multiple input streams are used, they are joined with this key.
	}
	\label{fig:es_tfp_overview}
\end{figure*}

\subsubsection{\textsl{Request nodes} and \textsl{load nodes}}
	The goal of these two phases is to load every referenced edge.
	We iterate over the sequence $S$ of swaps.
	For the $s$-th swap \eswap abd, we push the two messages $\texttt{edge\_req}(a, s, 0)$ and $\texttt{edge\_req}(b, s, 1)$ into the sorter \texttt{EdgeReq}.
	A message's third entry encodes whether the request is issued for the first or second edge of a swap.
	This information only becomes relevant when we allow dependencies.
	\emes{} then scans in parallel through the edge list $E_L$ and the requests \texttt{EdgeReq}, which are now sorted by edge ids.
	If there is a request $\texttt{edge\_req}(i, s, p)$ for an edge $e_i {=} [u, v]$, the edge's node pair is sent to the requesting swap by pushing a message $\texttt{edge\_msg}(s, p, (u, v))$ into the sorter \texttt{EdgeMsg}.\goodbreak
	Additionally, for every edge we push a bit into the sequence \texttt{InvalidEdge}, which is asserted iff an edge received a request.
	These edges are considered invalid and will be deleted when updating the graph.
	Since both phases produce only a constant amount of data per input element, we obtain an I/O complexity of $\Oh(\sort(r) + \scan(m))$.

\subsubsection{\textsl{Simulate swaps} and \textsl{load existence}}
	The two phases gather all information required to decide whether a swap is legal.
	\emes{} scans through the sequence  $S$ of swaps and \texttt{EdgeMsg} in parallel:
	For the $s$-th swap \eswap abd, there are exactly two messages $\texttt{edge\_msg}(s, 0, e_a)$ and $\texttt{edge\_msg}(s, 1, e_b)$ in \texttt{EdgeMsg}.
	This information suffices to compute the switched edges $e_a^\sigma$ and $e_b^\sigma$, but not to test for multi-edges.

	To avoid these, it remains to check whether the switched edges already exist.
	Thus, we push existence requests $\texttt{exist\_req}(e_a^\sigma, s)$ and $\texttt{exist\_req}(e_b^\sigma, s)$ in the sorter \texttt{ExistReq}
	(in contrast to \textsl{request nodes} we use the node pairs rather than edge ids, which are not well defined here).
	Afterwards, a parallel scan through the edge list $E_L$ and \texttt{ExistReq} is performed to answer the requests.
	Only if an edge $e$ requested by swap id $s$ is found, the message $\texttt{exist\_msg}(s, e)$ is pushed into the sorter \texttt{ExistMsg}.
	Both phases hence incur a total of $\Oh(\sort(r) + \scan(m))$ I/Os.

\subsubsection{Perform swaps}
	We rewind the \texttt{EdgeMsg} sorter and jointly scan through the sequence of swaps $S$ and the sorters \texttt{EdgeMsg} and \texttt{ExistMsg}.
	As described in the simulation phase, \emes{} computes the switched edges $e_a^\sigma$ and $e_b^\sigma$ from the original state $e_a$ and $e_b$.
	The swap is marked illegal if a switched edge is a self-loop or if an existence info is received via \texttt{ExistMsg}.
	If $\sigma$ is legal we push the switched edges $e_a^\sigma$ and $e_b^\sigma$ into the sorter \texttt{EdgeUpdates}, otherwise we propagate the unaltered source edges $e_a$ and $e_b$.
	This phase requires $\Oh(\sort(r))$ I/Os.

\subsubsection{Update edge list}
	The new edge list $E'_L$ is obtained by merging the original edge list $E_L$ and the updated edges \texttt{EdgeUpdates}, triggering $\Oh(\scan(m))$ I/Os.
	During this process, we skip all edges in $E_L$ that are flagged invalid in the bit stream \texttt{InvalidEdge}.

\subsection{Inter-Swap Dependencies}\label{subsec:estfp-dependencies}
In contrast to the earlier over-simplification, swaps may share source ids or target edges.
In this case, \emes{} produces  the same result as a sequential processing of~$S$.
Two swaps containing the same source edges are detected during the \textsl{load nodes} phase.
In such a case there arrive multiple requests for the same edge id.
We record these dependencies as an explicit dependency chain (see below for details).

In the simulation phase we do not know yet whether a swap can be executed.
Therefore we need to consider both cases (i.e. a swap has been executed or an existing edge prevented its execution) and dynamically forward all possible edge states using a priority queue.

In the \textsl{load existence} phase, we detect whether several swaps might produce the same outcome; 
in this case both issue an existence request for the same edge during simulation.
Again, an explicit dependency chain is computed.
During the \textsl{perform swaps} phase, \emes{} forwards the source edge states and existence updates to successor swaps using information from both dependency chains.

\subsubsection{Target edge dependencies}
		Consider the case where a swap $\eswapi{s_1}{a}{b}{d}$ changes the state of edges $e_{a}$ and $e_{b}$ to $e_{a}^{\sigma_1}$ and $e_{b}^{\sigma_1}$ respectively.
		Later, a second swap $\sigma_2$ inquires about the existence of either of the four edges which has obviously changed compared to the initial state.
		We extend the simulation phase in order to track such edge modifications.
		Here, we not only push messages $\texttt{exist\_req}(e_{a}^{\sigma_1}, s_1)$ and $\texttt{exist\_req}(e_{b}^{\sigma_1}, s_1)$ into sorter \texttt{EdgeReq}, but also report that the original edges may change.
		This is achieved by using messages $\texttt{exist\_req}(e_{a}, s_1,$ $\texttt{may\_change})$ and $\texttt{exist\_req}(e_{b}, s_1,$ $\texttt{may\_change})$ that are pushed into the same sorter.
		If there are dependencies, multiple messages are received for the same edge $e$ during the \textsl{load existence} phase.
		In this case, only the request of the first swap involved is answered as before.
		Also, every swap $\sigma_{s_1}$ is informed about its direct successor $\sigma_{s_2}$ (if any) by pushing the message $\mathtt{exist\_succ}(s_1, e, s_2)$ into the sorter \texttt{ExistSucc}, yielding the aforementioned dependency chain.
		As an optimization, \texttt{may\_change} requests at the end of a chain are discarded since no recipient exists.
				
		During the \textsl{perform swaps} phase, \emes{} executes the same steps as described earlier.
		The swap may receive a successor for every edge it sent an existence request to, and it informs each successor about the state of the appropriate edge after the swap is processed.

\subsubsection{Source edge dependencies}
		Consider two swaps $\eswapi{s_1}{a_1}{b_1}{d_1}$ and $\eswapi{s_2}{a_2}{b_2}{d_2}$ with $s_1 {<} s_2$ which share a source edge id, i.e. $\{a_1, b_1\} \cap \{a_2, b_2\}$ is non-empty.
		This dependency is detected during the \textsl{load nodes} phase, where requests $\texttt{edge\_req}(e_i, s_1, p_1)$ and $\texttt{edge\_req}(e_i, s_2, p_2)$  arrive for the same edge id $e_i$.
		In this case, we answer only the request of $s_1$ and build a dependency chain as described before using messages $\texttt{id\_succ}(s_1, p_1, s_2, p_2)$ pushed into the sorter \texttt{IdSucc}.

		During the simulation phase, \emes{} cannot decide whether a swap is legal.
		Therefore, $s_1$ sends for every conflicting edge its original state $e$ as well as the updated state $e^{\sigma_1}$ to the $p_2$-th slot of ${s_2}$ using a PQ.
		If a swap receives multiple edge states per slot, it simulates the swap for every possible combination.
		
		During the \textsl{perform swaps} phase, \emes{} operates as described in the independent case:
		it computes the swapped edges and determines whether the swap has to be skipped.
		If a successor exists, the new state is not pushed into the \texttt{EdgeUpdates} sorter but rather forwarded to the successor in a TFP fashion.
		This way, every invalidated edge id receives exactly one update in \texttt{EdgeUpdates} and the merging remains correct.

Due to the second modification, \emes{}'s complexity increases with the number of swaps that target the same edge id.
This number is quite low in case $r = \Oh(m)$.
Let $X_i$ be a random variable expressing the number of swaps that reference edge $e_i$.
Since every swap constitutes two independent Bernoulli trials towards $e_i$, the indicator $X_i$ is binomially distributed with $p=1/m$, yielding an expected chain length of $2r/m$.\clearpageA
Also, for $r=m/2$ swaps, $\max_{1 \le i \le n}( X_i ) = \Oh(\ln(m) / \ln\ln(m))$ holds with high probability based on a balls-into-bins argument~\cite{motwani2010randomized}.
Thus, we can bound the largest number of edge states simulated with high probability by $\Oh(\polylog(m))$, assuming non-overlapping dependency chains.
Further observe that $X_i$ converges towards an independent Poisson distribution for large $m$.
Then the expected state space per edge is $\Oh(1)$.
Experiments suggest that this bound also holds for overlapping dependency chains (cf. section~\ref{subsec:exp-dep-infos}).

In order to keep the dependency chains short, \emes{} splits the sequence of swaps~$S$ into runs of equal size.
Our experimental results  show that a run size of $r = m/8$ is a suitable choice.
For every run, the algorithm executes the six phases as described before.
Each time the graph is updated, the mapping between an edge and its id may change.
The switching probabilities, however, remain unaltered due to the initial assumption of uniformly distributed swaps.
Thus \emes{} triggers $\Oh(k/m \sort(m))$ I/Os in total with high probability.
\section{\emcmes: Sampling of random graphs from prescribed degree sequence}\label{sec:cmes}
In this section, we propose an alternative approach to generate a graph from a prescribed degree sequence.
In contrast to \emhh{} which generates a highly biased but simple graph, we use the Configuration Model to sample a random but non-simple graph.
Thus, the resulting graph may contain self-loops and multi-edges which we then remove to obtain a simple graph.

\subsection{Configuration Model}
Let $\degs = [d_i]_{1 \le i \le n}$ be a degree sequence with $n$ nodes.
The Configuration Model builds a multiset of node ids which can be thought of as \textsl{half-edges} (or stubs).
It produces a total of $d_i$ \textsl{half-edges} labelled $v_i$ for each node $v_i$.
The algorithm then chooses two half-edges uniformly at random and creates an edge according to their labels.
It repeats the last step with the remaining half-edges until all are matched.
In this na\"ive implementation, the procedure requires $\Omega(m)$ I/Os with high probability for $m \ge cM$ and any constant $c > 1$.
It is therefore impractical in the fully external setting.

As illustrated in Fig.~\ref{fig:cm_example} and similar to \cite{kumar2015mathematical}, we instead materialize the multiset as a sequence in which each node appears $d_i$ times.
Subsequently, the sequence is shuffled to obtain a random permutation with $\Oh(\sort (m))$ I/Os by sorting the sequence by a uniform variate drawn for each half-edge\footnote{%
	The random permutation can be obtained with $\Oh(\scan(m))$~I/Os in case $M > \sqrt{mB}(1+o(1)) + \Oh(B)$ \cite{DBLP:journals/ipl/Sanders98} which does not affect the complexity of our total pipeline.
}.
Finally, we scan over the shuffled sequence and match pairs of adjacent half-edges.

\begin{figure}
	\begin{center}
		\variantScale{\scalebox{0.9}{%
\begin{tikzpicture}[
	node distance=3.5em and 1.5em,
	description/.style={anchor=center, font=\small},
	headline/.style={anchor=center, font=\small},
	gnode/.style={draw, circle, inner sep=0, minimum width=1.5em}
]
	\node[description] (deg_seq) {Degree sequence $\degs = (1,1,2,2,2,4)$};
	\node[description, below of=deg_seq] (materialzed_vec) {$[1, 2, 3, 3, 4, 4, 5, 5, 6, 6, 6, 6]$};
	\node[description, right =of deg_seq] (shuffled_vec) {$[6, 6,\quad 4, 5,\quad 4, 5,\quad 6, 1,\quad 3, 2,\quad 3, 6]$};
	\node[description, below of=shuffled_vec] (matched_edges) {$[6,6]\quad [4,5]\quad [4,5]\quad [1,6]\quad [2,3]\quad [3,6]$};
	
	\node[gnode] at ($(shuffled_vec) + (15em, 0)$) (gn1) {1};
	\node[gnode] at ($(gn1) + (0em, -3em)$) (gn6) {6};
	\node[gnode] at ($(gn6) + (3em, 0)$) (gn3) {3};
	\node[gnode] at ($(gn3) + (3em, 0)$) (gn2) {2};
	\node[gnode] at ($(gn1) + (3em, 0)$) (gn4) {4};
	\node[gnode] at ($(gn4) + (3em, 0)$) (gn5) {5};
	
	\node[headline] at ($(deg_seq) + (0, 1.5em)$) (input) {\textbf{Input}};
	\node[headline] at ($(materialzed_vec) + (0, 1.5em)$) (multiset) {\textbf{Materialized multi-set}};
	\node[headline] at ($(shuffled_vec) + (0, 1.5em)$) (shuff_seq) {\textbf{Shuffled sequence}};
	\node[headline] at ($(shuff_seq) + (18em, 0)$) (graph) {\textbf{Resulting graph}};
	\node[headline] at ($(matched_edges) + (0, 1.5em)$) (edges) {\textbf{Matched edges}};
	
	\draw (gn6) to  [out=270, in=180, looseness=7] (gn6);
	\draw (gn1) to (gn6);
	\draw (gn6) to (gn3);
	\draw (gn3) to (gn2);
	\draw[bend left] (gn5) to (gn4);
	\draw[bend left] (gn4) to (gn5);
	
	\begin{scope}[on background layer]				
		\draw[->, line width=0.8em, gray!50] (input.north) to (materialzed_vec.center) to (shuff_seq.center) to  (matched_edges.center);
	\end{scope}
\end{tikzpicture} %
}}
	\end{center}
	
	\vspace{-2em}
	
	\caption{
		A Configuration Model run on degree sequence $\degs = (1,1,2,2,2,4)$.
	}
	\label{fig:cm_example}
\end{figure}
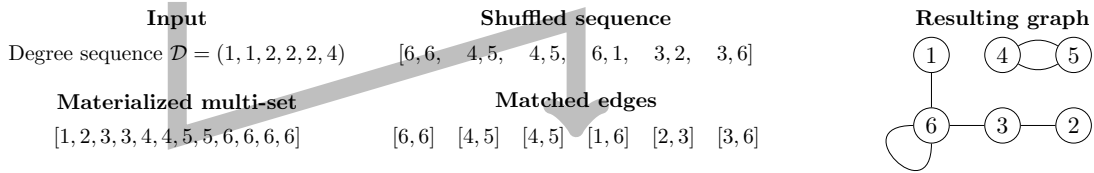

We now give upper bounds for the number of self-loops and multi-edges introduced by the Configuration Model.
Define  $\moment{\degs} \defrel \sum_v d_v / n$ and $\moment{\degs^2} \defrel \sum_v d_v^2 / n$ to be the mean and the second moment of the sequence $\degs$.

\clearpageB

In \cite{angel2016limit} and \cite{Newman:2010:NI:1809753} the expected number of self-loops and multi-edges has already been studied. 
The results are stated in the following two lemmata.

\begin{lemma}\label{lem:ub-sl}
	Let $\degs$ be a degree sequence with $n$ nodes.
	The expected number of self-loops is given by
	\[ \frac{\moment{\degs^2} - \moment{\degs}}{2(\moment{\degs} - 1/n)} \longrightarrow \frac{\moment{\degs^2} - \moment{\degs}}{2\moment{\degs}} \text{ for } n \to \infty. \]
\end{lemma}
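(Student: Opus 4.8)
The plan is to compute the expected number of self-loops by linearity of expectation over the pairing process of the Configuration Model, tracking half-edges rather than nodes. First I would set up notation: there are $m$ edges, hence $2m = n\moment{\degs}$ half-edges in total, and node $v_i$ owns exactly $d_i$ of them. A self-loop at $v_i$ arises precisely when two half-edges both labelled $v_i$ are matched to each other. I would count ordered pairs of distinct half-edges of $v_i$: there are $d_i(d_i - 1)$ such ordered pairs, or $\binom{d_i}{2}$ unordered ones.

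The key step is to determine the probability that a fixed pair of half-edges ends up matched. Since the uniform random perfect matching on $2m$ half-edges is exchangeable, any fixed half-edge is matched to each of the other $2m - 1$ half-edges with probability $1/(2m-1)$. Hence the expected number of self-loops is
\begin{align*}
\mathbb{E}[\#\text{self-loops}] &= \sum_{i=1}^n \binom{d_i}{2} \cdot \frac{1}{2m-1} = \frac{\sum_i d_i(d_i-1)}{2(2m-1)}.
\end{align*}
Now I would substitute $\sum_i d_i^2 = n\moment{\degs^2}$ and $\sum_i d_i = n\moment{\degs} = 2m$, so that $\sum_i d_i(d_i-1) = n(\moment{\degs^2} - \moment{\degs})$ and $2m - 1 = n\moment{\degs} - 1 = n(\moment{\degs} - 1/n)$. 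This yields exactly
\[
\frac{n(\moment{\degs^2} - \moment{\degs})}{2n(\moment{\degs} - 1/n)} = \frac{\moment{\degs^2} - \moment{\degs}}{2(\moment{\degs} - 1/n)},
\]
and letting $n \to \infty$ (with $\moment{\degs}$, $\moment{\degs^2}$ bounded) the $1/n$ term vanishes, giving the claimed limit.

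The only subtlety — and the main thing to get right rather than a genuine obstacle — is justifying that the sequential "pick two unmatched half-edges uniformly at random" procedure of Fig.~\ref{fig:cm_example} produces a uniformly random perfect matching on the $2m$ half-edges, so that the exchangeability argument (probability $1/(2m-1)$ for any fixed pair) is valid; this is standard and follows because each of the $(2m-1)!! $ matchings is generated with equal probability. One should also note the implicit assumption that $\sum_i d_i$ is even so the matching exists, and that we are counting loop-edges with multiplicity (a node with several self-loops contributes each of them), which is consistent with the $\binom{d_i}{2}$ count. Since the statement is attributed to \cite{angel2016limit} and \cite{Newman:2010:NI:1809753}, I would keep the proof to this short linearity-of-expectation computation.
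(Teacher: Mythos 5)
Your computation is correct. Note, however, that the paper itself does not prove this lemma at all: it is stated as an imported result, with the derivation deferred to \cite{angel2016limit} and \cite{Newman:2010:NI:1809753}. What you have written is essentially the standard argument from those references: linearity of expectation over the $\binom{d_i}{2}$ unordered pairs of half-edges at each node, combined with the fact that in a uniform perfect matching of $2m$ half-edges any fixed pair is matched with probability $1/(2m-1)$, followed by the substitutions $\sum_i d_i = n\moment{\degs}$ and $\sum_i d_i^2 = n\moment{\degs^2}$, which reproduce the exact expression $(\moment{\degs^2}-\moment{\degs})/(2(\moment{\degs}-1/n))$ and its limit. The two side remarks you flag --- that the sequential pairing procedure induces the uniform distribution over all $(2m-1)!!$ matchings, and that $\sum_i d_i$ must be even --- are exactly the right things to check and are handled correctly.
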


\begin{lemma}\label{lem:ub-me}
	Let $\degs$ be a degree sequence with $n$ nodes.
	The expected number of multi-edges is bounded by
	\[ \frac{1}{2}\left( \frac{(\moment{\degs^2} - \moment{\degs})^2}{(\degs - 1/n)(\degs - 3/n)} \right) \longrightarrow \frac{1}{2}\left( \frac{\moment{\degs^2}-\moment{\degs}}{\moment{\degs}} \right)^2 \text{ for } n \to \infty. \]
\end{lemma}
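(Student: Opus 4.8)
The plan is to argue directly about a uniformly random perfect matching on the multiset of $L \defrel \sum_v d_v$ half-edges, which is exactly the distribution produced by the shuffled-sequence construction of Section~\ref{sec:cmes} (assume $L$ even; otherwise discard one half-edge, which is asymptotically irrelevant). For distinct nodes $u \ne v$ let $X_{uv}$ be the number of edges joining them in the resulting multigraph. A node pair $\{u,v\}$ carries a multi-edge precisely when $X_{uv}\ge 2$, and since $\binom{X_{uv}}{2}\ge 1$ in that case, the number of such pairs is at most $\sum_{u<v}\binom{X_{uv}}{2}$; this sum also dominates the count of excess parallel edges $\sum_{u<v}(X_{uv}-1)^+$, so under any natural reading it suffices to bound $\mathbb E\big[\sum_{u<v}\binom{X_{uv}}{2}\big]$, and by linearity to evaluate $\mathbb E[\binom{X_{uv}}{2}]$ per pair.

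For the core estimate I would note that $\binom{X_{uv}}{2}$ counts unordered pairs of parallel $u$--$v$ edges, and each such pair is specified by choosing two half-edges at $u$, two half-edges at $v$, and one of the two perfect matchings between these four half-edges, i.e.\ by $2\binom{d_u}{2}\binom{d_v}{2}$ configurations. Any fixed such configuration occurs with probability $\tfrac{1}{(L-1)(L-3)}$, because in a uniform perfect matching two prescribed disjoint half-edge pairs are matched as prescribed with probability $\tfrac{1}{L-1}\cdot\tfrac{1}{L-3}$ --- the same two-step reasoning that underlies Lemma~\ref{lem:ub-sl}. Hence $\mathbb E[\binom{X_{uv}}{2}] = \tfrac{2\binom{d_u}{2}\binom{d_v}{2}}{(L-1)(L-3)}$, and summing over pairs, $\mathbb E[\#\text{multi-edges}] \le \tfrac{2}{(L-1)(L-3)}\sum_{u<v}\binom{d_u}{2}\binom{d_v}{2} \le \tfrac{1}{(L-1)(L-3)}\big(\sum_v \binom{d_v}{2}\big)^2$, the last step dropping the nonnegative diagonal terms of $\big(\sum_v \binom{d_v}{2}\big)^2$. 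Substituting $\sum_v\binom{d_v}{2} = \tfrac12\sum_v d_v(d_v-1) = \tfrac n2(\moment{\degs^2}-\moment{\degs})$ and $L = n\moment{\degs}$ turns the bound into $\tfrac{(\moment{\degs^2}-\moment{\degs})^2}{4(\moment{\degs}-1/n)(\moment{\degs}-3/n)}$, which lies inside the claimed expression; letting $n\to\infty$ (so $1/n,\,3/n$ vanish relative to $\moment{\degs}$) yields the stated limit.

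The whole argument is linearity of expectation and closely mirrors the self-loop count of Lemma~\ref{lem:ub-sl}: fix a small set of half-edges and multiply by the probability that they close into the forbidden structure. The only mildly delicate points are (i) pinning down the joint matching probability $\tfrac1{(L-1)(L-3)}$ together with the multiplicity $2\binom{d_u}{2}\binom{d_v}{2}$ of configurations without over- or undercounting, and (ii) checking that the quantity one actually wants to call ``number of multi-edges'' is dominated by $\sum_{u<v}\binom{X_{uv}}{2}$; neither is a real obstacle. Alternatively the statement can be imported verbatim from \cite{angel2016limit, Newman:2010:NI:1809753}, as the text already indicates. If one insisted on the exact constant $\tfrac12$ rather than the stronger $\tfrac14$ my sketch gives, retaining the $\sum_v \binom{d_v}{2}^2$ term recovers it, but this is not needed.
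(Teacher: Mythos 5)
Your derivation is correct, but there is nothing in the paper to compare it against: the authors do not prove Lemma~\ref{lem:ub-me} at all, they import it from \cite{angel2016limit} and \cite{Newman:2010:NI:1809753} (``The results are stated in the following two lemmata''). Your self-contained argument is the standard second-moment computation for the Configuration Model viewed as a uniform perfect matching on the $L=n\moment{\degs}$ half-edges, and every step checks out: the identity $\mathbb E\bigl[\binom{X_{uv}}{2}\bigr]=2\binom{d_u}{2}\binom{d_v}{2}/\bigl((L-1)(L-3)\bigr)$ is exact (verify it on $d_u=d_v=2$, $L=4$: both sides equal $2/3$), the estimate $\sum_{u<v}\binom{d_u}{2}\binom{d_v}{2}\le\tfrac12\bigl(\sum_v\binom{d_v}{2}\bigr)^2$ is valid, and the substitution gives $\tfrac14(\moment{\degs^2}-\moment{\degs})^2/\bigl((\moment{\degs}-1/n)(\moment{\degs}-3/n)\bigr)$, which implies the stated bound with constant $\tfrac12$; you also correctly read the paper's denominator $(\degs-1/n)(\degs-3/n)$ as a typo for $(\moment{\degs}-1/n)(\moment{\degs}-3/n)$, since only that reading is consistent with the displayed limit. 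Your constant $\tfrac14$ is the one that actually appears in \cite{angel2016limit} for $\sum_{u<v}\binom{X_{uv}}{2}$, while the paper's $\tfrac12$ reflects Newman's convention of counting via $\mathbb E[X_{uv}(X_{uv}-1)]$; both dominate the number of offending pairs and the excess-edge count $\sum_{u<v}(X_{uv}-1)^+$ that the rewiring of section~\ref{subsec:emcmes-rewiring} cares about, so your reduction in the first paragraph is the right one. The only slip is the closing aside: keeping the diagonal term $\sum_v\binom{d_v}{2}^2$ makes your bound \emph{smaller}, not larger, so it cannot ``recover'' the constant $\tfrac12$ --- but nothing in the proof depends on that remark.
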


Let $\degs$ be a degree sequence drawn from the powerlaw distribution $\pld{a}{b}{\gamma}$.
For fixed $\gamma = 2$, we bound the number of illegal edges as a function of $a$ and $b$.
Since each entry in $\degs$ is independently drawn, it suffices to give bounds for the expected value and the second moment of the underlying distribution.

For general $\gamma$, the expected value and second moment are given by $\sum_{i = a}^{b} i^{-\gamma + 1}$ and $\sum_{i = a}^{b} i^{-\gamma + 2}$ respectively. 
Both expressions are bound between the two integrals $\int_{a}^{b+1}x^{-p}\diff x \le \sum_{i=a}^{b}i^{-p} \le \int_{a-1}^{b}x^{-p}\diff x$ where $p = -\gamma + 1$ or $p = -\gamma + 2$ respectively.
In the case of $\gamma = 2$, the second moment is $\moment{\degs^2} = \sum_{i = a}^{b}1 = b {-} a {+} 1$.
Then, by using this identity and the lower bound $\int_{a}^{b+1}x^{-1}\diff x \le \moment{\degs}$, we obtain the two following lemmata:
\begin{lemma}\label{lem:pld-sl}
	Let $\degs$ be drawn from $\pld{a}{b}{2}$.
	The expected number of self-loops is bounded by
	\[ \frac{1}{2}\left( \frac{b - a + 1}{\ln(b + 1) - \ln(a)} \right). \]
\end{lemma}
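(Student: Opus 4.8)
The statement is obtained by specializing Lemma~\ref{lem:ub-sl} to the powerlaw case, and in fact almost all of the required computation has already been carried out in the paragraph preceding the lemma. The plan is to start from the limiting form of the expected number of self-loops, namely $\frac{\moment{\degs^2} - \moment{\degs}}{2\moment{\degs}}$, and bound it from above using the two moment estimates for $\pld{a}{b}{2}$.

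First I would note that the numerator is nonnegative and at most $\moment{\degs^2}$: since every degree satisfies $d_v \ge a \ge 1$ we have $d_v^2 \ge d_v$, hence $\moment{\degs^2} \ge \moment{\degs} > 0$. Dropping the $-\moment{\degs}$ term only enlarges the quotient, leaving $\frac{\moment{\degs^2}}{2\moment{\degs}}$ as an upper bound. Next I would substitute the estimates already established for $\gamma = 2$: the second moment is exactly $\moment{\degs^2} = \sum_{i=a}^{b} i^{0} = b - a + 1$, while the first moment is bounded below by $\moment{\degs} = \sum_{i=a}^{b} i^{-1} \ge \int_{a}^{b+1} x^{-1}\diff x = \ln(b+1) - \ln(a)$, using that $x \mapsto x^{-1}$ is monotonically decreasing. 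Combining the two yields $\frac{\moment{\degs^2}}{2\moment{\degs}} \le \frac{b-a+1}{2(\ln(b+1)-\ln(a))} = \frac{1}{2}\left(\frac{b-a+1}{\ln(b+1)-\ln(a)}\right)$, which is the claimed bound.

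The one point I would take care to spell out is the transition from Lemma~\ref{lem:ub-sl}, which is stated for a \emph{fixed} degree sequence, to a statement about the random sequence drawn from $\pld{a}{b}{2}$. The expression in Lemma~\ref{lem:ub-sl} is a ratio of the \emph{empirical} moments $\moment{\degs}$ and $\moment{\degs^2}$, so its expectation is not literally the ratio of the distribution moments; this is bridged, as indicated before the lemma, by observing that the entries of $\degs$ are i.i.d., so the empirical moments concentrate around the distribution moments, and that the powerlaw normalization constant cancels in the ratio $\moment{\degs^2}/\moment{\degs}$ (so one may work with the unnormalized partial sums $\sum_{i=a}^{b} i^{-\gamma+2}$ and $\sum_{i=a}^{b} i^{-\gamma+1}$, which for $\gamma = 2$ are $b-a+1$ and $\sum_{i=a}^{b} i^{-1}$). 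The finite-$n$ correction $-1/n$ in the denominator of Lemma~\ref{lem:ub-sl} only makes the exact value slightly larger than its $n \to \infty$ limit, so strictly the bound above is the limiting value; I would state it that way, matching how Lemma~\ref{lem:ub-sl} itself is phrased.

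I do not expect any real obstacle here — the estimate is a short chain of elementary inequalities once the moment formulas are in place. The only part worth being explicit about, rather than technically hard, is the passage from the fixed-sequence formula to the distribution-level statement sketched above; everything else is a one-line substitution into Lemma~\ref{lem:ub-sl}.
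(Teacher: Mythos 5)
Your proposal is correct and follows essentially the same route as the paper: specialize Lemma~\ref{lem:ub-sl}, drop the $-\moment{\degs}$ term in the numerator, and plug in $\moment{\degs^2} = b-a+1$ together with the integral lower bound $\moment{\degs} \ge \ln(b+1)-\ln(a)$ for $\gamma=2$. The paper gives no separate proof beyond the paragraph preceding the lemma, and your added remarks on the normalization cancelling in the moment ratio and on the empirical-versus-distributional moments are, if anything, more careful than the paper's own treatment.
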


\begin{lemma}\label{lem:pld-me}
	Let $\degs$ be drawn from $\pld{a}{b}{2}$.
	The expected number of multi-edges is bounded by
	\[ \frac{1}{2}\left( \frac{b - a + 1}{\ln(b + 1) - \ln(a)} \right)^2. \]
\end{lemma}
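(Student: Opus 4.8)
The plan is to obtain this bound from Lemma~\ref{lem:ub-me} in exactly the way Lemma~\ref{lem:pld-sl} follows from Lemma~\ref{lem:ub-sl}. Writing $R \defrel \frac{b-a+1}{\ln(b+1)-\ln(a)}$, note that the limiting expression in Lemma~\ref{lem:ub-me} is $\tfrac12\big((\moment{\degs^2}-\moment{\degs})/\moment{\degs}\big)^2$, while the one in Lemma~\ref{lem:ub-sl} is $\tfrac12\,(\moment{\degs^2}-\moment{\degs})/\moment{\degs}$; so once we show $(\moment{\degs^2}-\moment{\degs})/\moment{\degs} \le R$ for $\degs$ drawn from $\pld ab2$, the self-loop case yields the bound $\tfrac12 R$ of Lemma~\ref{lem:pld-sl} and the multi-edge case yields $\tfrac12 R^2$, as claimed. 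Thus the entire task reduces to that single ratio estimate, followed by a substitution and an appeal to the monotonicity of $x\mapsto x^2$ on $[0,\infty)$.

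For the ratio estimate I would first observe that, since every entry of $\degs$ satisfies $d_v \ge a \ge 1$, we have $d_v^2 \ge d_v$ and hence $\moment{\degs^2} \ge \moment{\degs} \ge 1$; in particular $0 \le \moment{\degs^2}-\moment{\degs} \le \moment{\degs^2}$, so it suffices to bound $\moment{\degs^2}/\moment{\degs}$. At this point I would pass from the empirical moments of a sampled sequence to the moments of the underlying single-draw distribution $X \sim \pld ab2$ (this is precisely what the $n\to\infty$ limit in Lemma~\ref{lem:ub-me} is doing, and is where a concentration estimate for the empirical moments would be inserted for a fully non-asymptotic statement). With normalizing constant $Z = \sum_{i=a}^{b} i^{-2}$ one has $\mathbb E[X^2] = \tfrac1Z\sum_{i=a}^{b} i^2 i^{-2} = (b-a+1)/Z$ and $\mathbb E[X] = \tfrac1Z\sum_{i=a}^{b} i^{-1}$, so $Z$ cancels in the ratio:
\[
\frac{\moment{\degs^2}}{\moment{\degs}} \;=\; \frac{\mathbb E[X^2]}{\mathbb E[X]} \;=\; \frac{b-a+1}{\sum_{i=a}^{b} i^{-1}} \;\le\; \frac{b-a+1}{\int_{a}^{b+1} x^{-1}\diff x} \;=\; R,
\]
where the inequality is the harmonic-sum bound $\sum_{i=a}^{b} i^{-1} \ge \int_{a}^{b+1} x^{-1}\diff x$ recalled earlier in the text. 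Combining with Lemma~\ref{lem:ub-me}, the expected number of multi-edges is at most $\tfrac12\big((\moment{\degs^2}-\moment{\degs})/\moment{\degs}\big)^2 \le \tfrac12\,(\moment{\degs^2}/\moment{\degs})^2 \le \tfrac12 R^2$.

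The only genuinely delicate point is that the finite-$n$ expression in Lemma~\ref{lem:ub-me} carries denominator $(\moment{\degs}-1/n)(\moment{\degs}-3/n)$, which is \emph{smaller} than $\moment{\degs}^2$; consequently the clean bound is strictly justified only in the limit $n\to\infty$ (equivalently up to a $(1+o(1))$ factor, or after a concentration estimate that simultaneously controls the empirical first and second moments of a powerlaw sample and the $1/n$ corrections). Everything else is a one-line substitution, and since the identical computation applied to Lemma~\ref{lem:ub-sl} reproduces Lemma~\ref{lem:pld-sl}, no estimate beyond the harmonic-sum/integral comparison is needed. I would therefore state the lemma, as the surrounding text implicitly intends, for this limiting/typical regime.
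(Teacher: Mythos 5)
Your proposal matches the paper's own (very terse) argument: the paper likewise takes the limiting expression from Lemma~\ref{lem:ub-me}, uses the identity $\moment{\degs^2}=\sum_{i=a}^{b}1=b-a+1$ for $\gamma=2$ together with the integral lower bound $\int_{a}^{b+1}x^{-1}\diff x\le\moment{\degs}$, and squares the resulting ratio. You are in fact slightly more careful than the paper, which silently ignores both the $1/n$ corrections in the finite-$n$ denominator and the passage from empirical to distributional moments that you explicitly flag.
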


\subsection{Edge rewiring for non-simple graphs}\label{subsec:emcmes-rewiring}
As a consequence of lemmata~\ref{lem:pld-sl} and \ref{lem:pld-me}, graphs generated using the Configuration Model may contain multi-edges and self-loops.
In order to detect them, we first sort the edge list lexicographically.
Then both types of illegal edges can be detected in a single scan.
For each self-loop we issue a swap with a randomly selected partner edge.
Similarly, for each group of $f > 1$ of parallel edges, we generate $f{-}1$ swaps with random partner edges.
Subsequently, we execute the provisioned swaps using a variant of \emes{} (see below).
The process is repeated until all illegal edges have been removed.
To accelerate the endgame, we double the number of swaps for each remaining illegal edge in every iteration.

Since \emes{} is employed to remove parallel edges based on targeted swaps, it needs to process non-simple graphs.
Analogous to the initial formulation, we forbid swaps that introduce multi-edges even if they would reduce the multiplicity of another edge (cf. \cite{DBLP:journals/corr/Zhao13b}).
Nevertheless, \emes{} requires slight modifications for non-simple graphs. 

Consider the case where the existence of a multi-edge is inquired several times.
Since $E_L$ is sorted, the initial edge multiplicities can be counted while scanning $E_L$ during the \emph{load existence} phase.
In order to correctly process the dependency chain, we have to forward the (possibly updated) multiplicity information to successor swaps.
We annotate the existence tokens $\mathtt{exist\_msg}(s, e, \mult{e})$ with these counters where $\mult{e}$ is the multiplicity of edge $e$.

More precisely, during the \textsl{perform swaps} phase, swap $\sigma_1 = \eswap{a}{b}{d}$ is informed (amongst others) of multiplicities of edges $e_a, e_b, e_a^{\sigma_1}$ and $e_b^{\sigma_1}$ by incoming existence messages.
If $\sigma_1$ is legal, we send requested edges and multiplicities of the swapped state to any successor $\sigma_2$ of $\sigma_1$ provided in \texttt{ExistSucc}. 

The swapped state consists of the same edges where multiplicities for $e_a^{\sigma_1}$ and $e_b^{\sigma_1}$ are incremented and decremented for $e_a$ and $e_b$.
Otherwise, we forward the edges and multiplicities of the unchanged initial state.
As an optimization, edges which have been removed (i.e. have multiplicity zero) are omitted.
\section{\emca: Community Assignment}\label{sec:community-assignment}
For the sake of simplicity, we first restrict the EM community assignment \emca{} to the non-overlapping case, in which every node belongs to exactly one community.
Consider a sequence of community sizes $S {=} [s_\xi]_{1 \le \xi \le C}$ with $n{=}\sum_{\xi=1}^C s_\xi$ and a sequence of intra-community degrees $\degs {=} [d^\text{in}_v]_{1 \le v \le n}$.
Let $S$ and $\degs$ be non-increasing and positive.
The task is to find a random surjective assignment $\chi\colon V {\rightarrow} [C]$ with: 
\begin{enumerate*}
	\item[(R1)] Every community $\xi$ is assigned $s_\xi$ nodes as requested, with
	$s_\xi \defrel \big|\{v \,|\, v\in V \land \chi(v){=}\xi\}\big|$.
	\item[(R2)] Every node $v \in V$ becomes member of a sufficiently large community $\chi(v)$ with $s_{\chi(v)} > d^\text{in}_v$. 
\end{enumerate*}

\subsection{Ignoring constraint on community size (R2)}
Without constraint (R2), the bipartite assignment graph\footnote{%
	Consider a bipartite graph where the partition classes are given by the communities $[C]$ and nodes $[n]$ respectively.
	Each edge then corresponds to an assignment.
} $\chi$ can be sampled in the spirit of the Configuration Model (cf. section~\ref{sec:cmes}):
Draw a permutation $\pi$ of nodes uniformly at random and assign nodes $\{v_{\pi(x_\xi{+}1)}, \ldots, v_{\pi(x_\xi{+}s_\xi)}\}$ to community $\xi$ where $x_\xi \defrel \sum_{i=1}^{\xi-1} s_i$.
To ease later modifications, we prefer an equivalent iterative formulation:
while there exists a yet unassigned node $u$, draw a community $X$ with probability proportional to the number of its remaining free slots (i.e. $\mathbb P[X{=}\xi] \propto s_\xi$).
Assign $u$ to $X$, reduce the community's probability mass by updating $s_X \gets s_X - 1$ and repeat.
By construction, the first scheme is unbiased and the equivalence of both approaches follows as a special case of Lemma~\ref{lem:emca-uniformity}.

We implement the random selection process efficiently based on a binary tree where each community corresponds to a leaf with a weight equal to the number of free slots in the community.
Inner nodes store the total weight of their left subtree. 
In order to draw a community, we sample an integer $Y \in [0, W_C)$ uniformly at random where $W_C \defrel \sum_{\xi=1}^C s_\xi$ is the tree's total weight.
Following the tree according to $Y$ yields the leaf corresponding to community $X$.
An I/O-efficient data structure~\cite{doi:10.1137/1.9781611974317.4} based on lazy evaluation for such dynamic probability distributions  enables a fully external algorithm with $\Oh(n/B \cdot \log_{M/B}(C/B)) = \Oh(\sort(n))$ I/Os.
However, since $C < M$, we can store the tree in IM, allowing a semi-external algorithm which only needs to scan  through $\degs$, triggering $\Oh(\scan(n))$~I/Os.

\subsection{Enforcing constraint on community size (R2)}
To enforce (R2), we exploit the monotonicity of $S$ and $\degs$.
Define $p_v \defrel \max\{\xi\, |\, s_\xi > d^\text{in}_v \}$ as the index of the smallest community node $v$ may be assigned to.
Since $[p_v]_v$ is therefore monotonic itself, it can be computed online with $\Oh(1)$ additional IM and $\Oh(\scan(n))$ I/Os in the fully external setting by scanning through $S$ and $\degs$ in parallel.
In order to restrict the random sampling to the communities $\{1, \ldots, p_v\}$, we reduce the aforementioned random interval to $[0, W_v)$ where the partial sum $W_v \defrel \sum_{\xi=1}^{p_v - 1} s_\xi$ is available while computing $p_v$.
We generalize the notation of uniformity to assignments subject to (R2) as follows:

\clearpageB

\begin{lemma}\label{lem:emca-uniformity}
	Given $S{=}\{s_1, \ldots, s_C\}$ and $\degs$, let $u, v \in V$ be two nodes with the same constraints ($p_u = p_v$) and let $c$ be an arbitrary community.
	Further, let $\chi$ be an assignment generated by \emca.
	Then, $\mathbb P[\chi(u){=}c] = \mathbb P[\chi(v){=}c]$.
\end{lemma}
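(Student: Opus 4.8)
The plan is to show that two nodes $u,v$ with identical constraints ($p_u = p_v$) are exchangeable in the random process defining \emca, so that swapping their labels is a measure-preserving bijection on the set of runs of the algorithm. Concretely, I would set up a coupling / symmetry argument on the sequence of random choices the algorithm makes. Recall \emca processes nodes in some order and, for each node $w$, draws a community $X$ from the communities $\{1,\ldots,p_w\}$ with probability proportional to the current residual sizes, then decrements the chosen slot. The key structural observation is that the only way $u$ and $v$ enter the dynamics is through (a) the order in which they are processed and (b) their admissible ranges $\{1,\ldots,p_u\}$ and $\{1,\ldots,p_v\}$, which are equal by hypothesis. Everything else — the residual size vector seen at each step, the normalization, the set of admissible leaves in the IM tree — depends on $u$ and $v$ only through the multiset of (node, admissible-range) pairs, which is invariant under transposing $u$ and $v$.

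First I would make precise the probability space: a run is determined by the processing order (fixed or itself random but independent of labels) together with the sequence of uniform variates $Y$ used to descend the tree; the map from these variates to the resulting assignment $\chi$ is deterministic. Second, I would define the involution $\tau$ on assignments that exchanges the values of $u$ and $v$, i.e. $\tau(\chi)(u)=\chi(v)$, $\tau(\chi)(v)=\chi(u)$, and $\tau(\chi)(w)=\chi(w)$ otherwise. Third — the heart of the argument — I would exhibit a measure-preserving bijection $\Phi$ on the underlying sequences of random choices such that $\chi \circ \Phi = \tau \circ \chi$. Intuitively $\Phi$ swaps the choice made ``at the step where $u$ is processed'' with the choice made ``at the step where $v$ is processed'': because $p_u = p_v$, at the moment $u$ is handled and the moment $v$ is handled the admissible community set has exactly the same form, and the residual-size vector governing the draw is the same function of the earlier (unswapped) choices; hence the pushforward of the uniform measure is unchanged. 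Since $\tau$ maps $\{\chi : \chi(u)=c\}$ bijectively onto $\{\chi : \chi(v)=c\}$, measure-preservation of $\Phi$ yields $\mathbb P[\chi(u){=}c] = \mathbb P[\chi(v){=}c]$.

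A cleaner alternative I would actually prefer to write up: argue by conditioning. Condition on the processing order and on all random choices made for nodes other than the one of $u,v$ that comes first (say $u$ is processed before $v$ without loss of generality). Before $u$'s step the state (residual sizes, hence the conditional law of the community drawn for $u$, restricted to $\{1,\ldots,p_u\}$) is fixed; call this distribution $q$. After $u$ is assigned, the state is updated and $v$ is eventually processed with admissible range $\{1,\ldots,p_v\}=\{1,\ldots,p_u\}$; a short computation using the ``proportional to residual size'' rule and the decrement shows that the \emph{marginal} law of $v$'s community, integrating out $u$'s choice and the intervening choices for other nodes, equals $q$ as well — this is the classical exchangeability of sampling-without-replacement-style urn schemes, and it is where the proportionality to remaining slots is used essentially. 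Summing over the conditioning gives the claim; the same computation, specialized to the case of no (R2) constraint (all $p_w = C$), establishes the equivalence of the two sampling schemes promised earlier.

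The main obstacle is making the symmetry rigorous in the presence of the size constraint $p_w$: one must check that the constraints of the \emph{other} nodes never break the $u \leftrightarrow v$ symmetry. This is fine because each other node $w$'s admissible set $\{1,\ldots,p_w\}$ is determined by $d^\text{in}_w$ and $S$ alone and does not reference $u$ or $v$; and feasibility of completing the assignment is likewise a function of the residual size vector, which is symmetric under the swap. I would state this as a small lemma (the state after processing any prefix of nodes depends on the labels $u,v$ only symmetrically) and then the exchangeability computation goes through verbatim. The only genuinely delicate point is ensuring no division-by-zero / infeasibility issue arises — i.e. that the process started from a graphical/feasible $(S,\degs)$ never gets stuck — but this is exactly the monotonicity invariant already exploited when defining $p_v$, so it can be cited rather than reproved.
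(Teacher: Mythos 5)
Your ``cleaner alternative'' is essentially the paper's own proof: the paper conditions on the state reached just before the first node with constraint $p_u$ (justified because \emca{} is a single oblivious pass over $\degs$) and then carries out exactly the urn-exchangeability induction you describe, showing $\mathbb P[\chi(u){=}c]=s^{(u)}_c/W^{(u)}=s^{(1)}_c/W^{(1)}$ via the ``proportional to residual slots, then decrement'' computation. The only detail worth making explicit in your write-up is that the monotonicity of $[p_v]_v$ forces every node processed between $u$ and $v$ to have the same constraint $p_u$, which is what lets the sampling-without-replacement exchangeability apply verbatim.
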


\begin{proof}
	Without loss of generality, assume that $p_u = p_1$, i.e. $u$ is one of the nodes with the tightest constraints.
	If this is not the case, we just execute \emca{} until we reach a node $u'$ which has the same constraints as $u$ does (i.e. $p_{u'} = p_u$), and apply the Lemma inductively.\clearpageA
	This is legal since \emca{} streams through $\degs$ in a single pass and is oblivious to any future values.
	In case $c > p_1$, neither $u$ nor $v$ can become a member of $c$.
	Therefore, $\mathbb P[\chi(u){=}c] = \mathbb P[\chi(v){=}c] = 0$ and the claim follows trivially.
	
	Now consider the case $c \le p_1$.
	Let $s^{(i)}_c$ be the number of free slots in community $c$ at the beginning of round $i \ge 1$ and $W^{(i)} = \sum_{j=1}^C s^{(i)}_j$ their sum at that time.
	By definition, \emca{} assigns node $u$ to community $c$ with probability $\mathbb P[\chi(u){=}c] = s^{(u)}_c / W^{(u)}$.
	Further, the algorithm has to update the number of free slots.
	Thus, for iteration $i \in [n]$ it holds that
	\begin{align*}
		s^{(i)}_c & = \begin{cases}
		s_c & \text{if $i = 1$} \\
		s^{(i-1)}_c - 1 & \text{if $i{-}1$ was assigned to $c$} \\
		s^{(i-1)}_c & \text{otherwise}
		\end{cases}.
	\end{align*}	

	\noindent The number of free slots is reduced by one in each step:
	\begin{align*}
		W^{(i)} & = \left(\sum_{j=1}^C S_j\right ) - (i-1)
	\end{align*}
	
	\noindent It remains to show $\mathbb P[\chi(u){=}c] = s^{(u)}_c / W^{(u)} = s^{(1)}_c / W^{(1)}$ and the claim follows by transitivity.
 	For $u {=} 1$ it is true by definition.
	Now, consider the induction step for $u {>} 1$:
	{\small
		\begin{align*}
		&\mathbb P[\chi(u){=}c] = s^{(u)}_c / W^{(u)} \\
		&\phantom{abc} = \mathbb P[\chi(u{-}1){=}c] \frac{s^{(u-1)}_c - 1}{W^{(u)}} + \mathbb P[\chi(u{-}1){\ne}c] \frac{s^{(u-1)}_c}{W^{(u)}} 
		= \frac{s^{(u-1)}_c}{W^{(u-1)}}\frac{s^{(u-1)}_c - 1}{W^{(u)}} +
		\left(1 -  \frac{s^{(u-1)}_c}{W^{(u-1)}}\right)\frac{s^{(u-1)}_c}{W^{(u)}} \\
		&\phantom{abc} = \frac{s^{(u-1)}_c \cdot W^{(u-1)} - s^{(u-1)}_c}{W^{(u-1)} \cdot W^{(u)}} = \frac{s^{(u-1)}_c (W^{(u-1)} - 1) }{W^{(u-1)} \cdot (W^{(u-1)}-1)}
		= \frac{s^{(u-1)}_c }{W^{(u-1)}}  \stackrel{\text{Ind. Hyp.}}{=} \frac{s^{(1)}_c}{W^{(1)}} 
		\end{align*}	
	}
\end{proof}

\subsection{Assignment with overlapping communities}
In the overlapping case, the weight of $S$ increases to account for nodes with multiple memberships.
There is further an additional input sequence $[\nu_v]_{1\le v \le n}$ corresponding to the number of memberships a node $v$ shall have, each of which has $d^\text{in}_v$ intra-community neighbors.
We then sample not only one community per node $v$, but $\nu_v$ different ones.

Since the number of memberships $\nu_v \ll M$ is small, a duplication check during the repeated sampling is easy in the semi-external case and does not change the I/O complexity.
However, it is possible that near the end of the execution there are less free communities than memberships requested.
We address this issue by switching to an offline strategy for the last $\Theta(M)$ assignments and keep them in IM.
As $\nu = \Oh(1)$, there are $\Omega(\nu)$ communities with free slots for the last $\Theta(M)$ vertices and a legal assignment exists with high probability.
The offline strategy proceeds as before until it is unable to find $\nu$ different communities for a node.
In that case, it randomly picks earlier assignments until swapping the communities is possible.

In the fully external setting, the I/O complexity grows linearly in the number of samples taken and is thus bounded by $\Oh(\nu \sort(n))$.
However, the community memberships are obtained lazily and out-of-order which may assign a node several times to the same community.
This corresponds to a multi-edge in the bipartite assignment graph.
It can be removed using the rewiring technique detailed in section~\ref{subsec:emcmes-rewiring}.
 \clearpage
\section{Merging and repairing the intra- and inter-community graphs}
\label{sec:lfr-rewiring}
\subsection{Global Edge Rewiring}
\label{sec:edge-rewiring}
The global graph is materialized without taking the community structure into account.
It therefore can contain edges between nodes that share a community.
Those edges have to be removed as they increase the mixing parameter $\mu$.

In accordance with LFR, we use rewiring steps to do so and perform an edge swap for each forbidden edge with a randomly selected partner.
Since it is unlikely that such a random swap introduces another illegal edge (if sufficiently many communities exist), the probabilistic approach effectively removes forbidden edges. 
We apply this idea iteratively and perform multiple rounds until no forbidden edges remain.

The community assignment step outputs a lexicographically ordered sequence $\chi$ of $(v, \xi)$-pairs containing the community $\xi$ for each node $v$.
For nodes that join multiple communities several such pairs exist.
Based on this, we annotate every edge with the communities of both incident vertices by scanning through the edge list twice:
once sorted by source nodes, once by target nodes.
For each forbidden edge, a swap is generated by drawing a random partner edge id and a swap direction.
Subsequently, all swaps are executed using \emes{} which now also emits the set of edges involved.
It suffices to restrict the scan for illegal edges to this set since all edges not contained have to be legal.

\textbf{Complexity. }
Each round needs $\Oh(\sort(m))$ I/Os for selecting the edges and executing the swaps.
The number of rounds is usually small but depends on the community size distribution: 
the smaller the communities, the less likely are edges inside them.

\subsection{Community Edge Rewiring}
\label{sec:community-edge-rewiring}
In the case of overlapping communities, an edge can be generated as part of multiple clusters.
Similarly to section~\ref{subsec:emcmes-rewiring}, we iteratively apply semi-random swaps to remove those parallel edges.
Here, however, the selection of random partners is more involved.
In order not to violate the community size distribution, both edges of a swap have to belong to the same community.
While it is easy to achieve this by considering all communities independently, we need to consider the whole merged graph to detect forbidden edges.

In order to do so, we annotate each edge with its community id and merge all communities together into one graph that possibly contains parallel edges.
During a scan through all sorted edges, we select from each set of parallel edges all but one as candidates for rewiring.
Then, a random partner from the same community is drawn for each of these edges.
For this, we sort all edges and the selected candidates by community.
By counting the edges per community, we can sample random partners, load them in a second scan, randomize their order, and assign them to the candidates of their community.
For the execution, multi-edges need to be considered, i.e. we do not only need to know if an edge exists, but also how often, and update that information accordingly.
Together with all loaded edges we also need to store community ids such that we can uniquely identify them and update the correct information.

While these steps are possible in external memory, we exploit the fact that there are significantly less communities than nodes.
Hence, storing some information per community in internal memory is possible.
We assume further that all candidates can be stored in internal memory.
If there were too many candidates, we would simply consider only some of them in each round.

We can avoid the expensive step of sorting all edges by community for every round using the following observation:
When scanning the edges, we can keep track of how many edges of each community we have seen so far.
We sort the edge ids to be loaded for every community and keep a pointer on the current position in the list for every community.
This allows us to load specific edges of all communities without the need to sort all edges by community.

\textbf{Complexity. }
The fully external rewiring requires $\Oh(\sort(m))$ I/Os for the initial step and each following round.
The semi-external variant triggers only $\Oh(\scan(m))$ I/Os per round.
The number of rounds is usually small and the overall runtime spend on this step is insignificant.
Nevertheless, the described scheme is a Las Vegas algorithm and there exist (unlikely) instances on which it will fail.%
\footnote{
	Consider a node which is a member of two communities in which it is connected to all other nodes.
	If only one of its neighbors also appears in both communities, the multi-edge cannot be rewired.
}
To mitigate this issue, we allow a small fraction of edges (e.g., $10^{-3}$) to be removed if we detect a slow convergence.
To speed up the endgame, we also draw additional swaps uniformly at random from communities which contain a multi-edge.

\section{Implementation}\label{sec:implementation}
We implemented the proposed algorithms in C++ based on the STXXL library~\cite{STXXL}, providing implementations of EM data structures, a parallel EM sorter, and an EM priority queue.
Among others, we applied the following optimizations for \emes{}:
\begin{itemize*}
	\item
	Most message types contain both a swap id and a flag indicating which of the swap's edges is targeted.
	We encoded both of them in a single integer by using all but the least significant bit for the swap id and store the flag in there.
	This significantly reduces the memory volume and yields a simpler comparison operator since the standard integer comparison already ensures the correct lexicographic order.
	
	\item
	Instead of storing and reading the sequence of swaps several times, we exploit the implementation's pipeline structure and directly issue edge id requests for every arriving swap.
	Since this is the only time edge ids are read from a swap, only the remaining direction flag is stored in an efficient EM vector, which uses one bit per flag and supports I/O-efficient writing and reading.
	Both steps can be overlapped with an ongoing \emes{}~run.	

	\item
	Instead of storing each edge in the sorted external edge list as a pair of nodes, we only store each source node once and then list all targets of that vertex.
	This still supports sequential scan and merge operations which are the only operations we need.
	This almost halves the I/O volume of scanning or updating the edge list.

	\item
	During the execution of several runs we can delay the updating of the edge list and combine it with the \emph{load nodes} phase of the next run.
    This reduces the number of scans per additional run from three to two.

	\item
	We use asynchronous stream adapters for tasks such as streaming from sorters or the generation of random numbers.
	These adapters run in parallel in the background to preprocess and buffer portions of the stream in advance and hand them over to the main thread.
\end{itemize*}

Besides parallel sorting and asynchronous pipeline stage, the current \emlfr{} implementation facilitates parallelism only during the generation and randomization of intra-community graphs which can be computed pleasingly parallel.

\newcommand{\settingsLin}{\textbf{(lin)}}
\newcommand{\settingsConst}{\textbf{(const)}}

\section{Experimental Results}\label{sec:exp}
The number of repetitions per data point (with different random seeds) is denoted with~$S$.
Errorbars correspond to the unbiased estimation of the standard deviation.
For LFR we perform experiments based on two different scenarios:
\begin{itemize}
	\inA{\item \settingsLin{} ---}\inB{\item[\settingsLin]}
		In one setting, the maximal degrees and community sizes scale linearly as a function of $n$.
		For a $n$ and $\nu$ the parameters are chosen as: $\mu \in \{0.2, 0.4, 0.6\}$, 
		$d_\text{min}{=}10\nu$, $d_\text{max}{=}n\nu/20$, $\gamma{=}2$,
		$s_\text{min}{=}20$, $s_\text{max}{=}n/10$, $\beta{=}1$, $O{=}n$.

	\inA{\item \settingsConst{} ---}\inB{\item[\settingsConst]}
		In the second setting, we keep the community sizes and the degrees constant and consider only non-overlapping communities.
		The parameters are chosen as:
		$d_\text{min}{=}50$, $d_\text{max}{=}\num{10000}$, $\gamma{=}2$,
		$s_\text{min}{=}50$, $s_\text{max}{=}\num{12000}$, $\beta{=}1$, $O{=}n$.
\end{itemize}

Real-world networks have been shown to have increasing average degrees as they become larger~\cite{lkf-gotdl-05}.
Increasing the maximum degree as in our first setting \settingsLin{} increases the average degree.
Having a maximum community size of $n/10$ means, however, that a significant proportion of the nodes belongs to such huge communities which are not very tightly knit due to the large number of nodes of low degree.
While a more limited growth is probably more realistic, the exact parameters depend on the network model.

Our second parameter set \settingsConst{} shows an example of much smaller maximum degrees and community sizes.
We chose the parameters such that they approximate the degree distribution of the Facebook network in May 2011 when it consisted of 721 million active users as reported in~\cite{DBLP:journals/corr/abs-1111-4503}.
Note however that strict powerlaw models are unable to accurately mimic Facebook's degree distribution~\cite{DBLP:journals/corr/abs-1111-4503}.
Further, they show that the degree distribution of the U.S. users (removing connections to non-U.S. users) is very similar to the one of the Facebook users of the whole world, supporting our use of just one parameter set for different graph sizes.

The actual minimum degree of the Facebook network is 1, but the smaller degrees are significantly less prevalent than a power law degree sequence would suggest, which is why we chose a larger value of \num{50}.
Our maximum degree of \num{10000} is larger than the one reported for Facebook (\num{5000}), but the latter was also an arbitrarily enforced limit by Facebook.
The expected average degree of this degree sequence is 264, which is slightly higher than the reported 190 (world) or 214 (U.S. only).
Our parameters are chosen such that the median degree is approximately 99, which matches the worldwide Facebook network.
Similar to the first parameter set, we chose the maximum community size slightly larger than the maximum degree of \num{12000} nodes.

\subsection{\emhh's state size}\label{subsec:statesize}
\begin{figure}
	\centering
	\scalebox{0.6}{%
\begingroup
  \makeatletter
  \providecommand\color[2][]{%
    \GenericError{(gnuplot) \space\space\space\@spaces}{%
      Package color not loaded in conjunction with
      terminal option `colourtext'%
    }{See the gnuplot documentation for explanation.%
    }{Either use 'blacktext' in gnuplot or load the package
      color.sty in LaTeX.}%
    \renewcommand\color[2][]{}%
  }%
  \providecommand\includegraphics[2][]{%
    \GenericError{(gnuplot) \space\space\space\@spaces}{%
      Package graphicx or graphics not loaded%
    }{See the gnuplot documentation for explanation.%
    }{The gnuplot epslatex terminal needs graphicx.sty or graphics.sty.}%
    \renewcommand\includegraphics[2][]{}%
  }%
  \providecommand\rotatebox[2]{#2}%
  \@ifundefined{ifGPcolor}{%
    \newif\ifGPcolor
    \GPcolortrue
  }{}%
  \@ifundefined{ifGPblacktext}{%
    \newif\ifGPblacktext
    \GPblacktextfalse
  }{}%
  \let\gplgaddtomacro\g@addto@macro
  \gdef\gplbacktext{}%
  \gdef\gplfronttext{}%
  \makeatother
  \ifGPblacktext
    \def\colorrgb#1{}%
    \def\colorgray#1{}%
  \else
    \ifGPcolor
      \def\colorrgb#1{\color[rgb]{#1}}%
      \def\colorgray#1{\color[gray]{#1}}%
      \expandafter\def\csname LTw\endcsname{\color{white}}%
      \expandafter\def\csname LTb\endcsname{\color{black}}%
      \expandafter\def\csname LTa\endcsname{\color{black}}%
      \expandafter\def\csname LT0\endcsname{\color[rgb]{1,0,0}}%
      \expandafter\def\csname LT1\endcsname{\color[rgb]{0,1,0}}%
      \expandafter\def\csname LT2\endcsname{\color[rgb]{0,0,1}}%
      \expandafter\def\csname LT3\endcsname{\color[rgb]{1,0,1}}%
      \expandafter\def\csname LT4\endcsname{\color[rgb]{0,1,1}}%
      \expandafter\def\csname LT5\endcsname{\color[rgb]{1,1,0}}%
      \expandafter\def\csname LT6\endcsname{\color[rgb]{0,0,0}}%
      \expandafter\def\csname LT7\endcsname{\color[rgb]{1,0.3,0}}%
      \expandafter\def\csname LT8\endcsname{\color[rgb]{0.5,0.5,0.5}}%
    \else
      \def\colorrgb#1{\color{black}}%
      \def\colorgray#1{\color[gray]{#1}}%
      \expandafter\def\csname LTw\endcsname{\color{white}}%
      \expandafter\def\csname LTb\endcsname{\color{black}}%
      \expandafter\def\csname LTa\endcsname{\color{black}}%
      \expandafter\def\csname LT0\endcsname{\color{black}}%
      \expandafter\def\csname LT1\endcsname{\color{black}}%
      \expandafter\def\csname LT2\endcsname{\color{black}}%
      \expandafter\def\csname LT3\endcsname{\color{black}}%
      \expandafter\def\csname LT4\endcsname{\color{black}}%
      \expandafter\def\csname LT5\endcsname{\color{black}}%
      \expandafter\def\csname LT6\endcsname{\color{black}}%
      \expandafter\def\csname LT7\endcsname{\color{black}}%
      \expandafter\def\csname LT8\endcsname{\color{black}}%
    \fi
  \fi
    \setlength{\unitlength}{0.0500bp}%
    \ifx\gptboxheight\undefined%
      \newlength{\gptboxheight}%
      \newlength{\gptboxwidth}%
      \newsavebox{\gptboxtext}%
    \fi%
    \setlength{\fboxrule}{0.5pt}%
    \setlength{\fboxsep}{1pt}%
\begin{picture}(6802.00,3684.00)%
    \gplgaddtomacro\gplbacktext{%
      \csname LTb\endcsname%
      \put(814,756){\makebox(0,0)[r]{\strut{}$10^{1}$}}%
      \csname LTb\endcsname%
      \put(814,1288){\makebox(0,0)[r]{\strut{}$10^{2}$}}%
      \csname LTb\endcsname%
      \put(814,1821){\makebox(0,0)[r]{\strut{}$10^{3}$}}%
      \csname LTb\endcsname%
      \put(814,2354){\makebox(0,0)[r]{\strut{}$10^{4}$}}%
      \csname LTb\endcsname%
      \put(814,2886){\makebox(0,0)[r]{\strut{}$10^{5}$}}%
      \csname LTb\endcsname%
      \put(814,3419){\makebox(0,0)[r]{\strut{}$10^{6}$}}%
      \csname LTb\endcsname%
      \put(1178,484){\makebox(0,0){\strut{}$10^{3}$}}%
      \csname LTb\endcsname%
      \put(2223,484){\makebox(0,0){\strut{}$10^{4}$}}%
      \csname LTb\endcsname%
      \put(3269,484){\makebox(0,0){\strut{}$10^{5}$}}%
      \csname LTb\endcsname%
      \put(4314,484){\makebox(0,0){\strut{}$10^{6}$}}%
      \csname LTb\endcsname%
      \put(5360,484){\makebox(0,0){\strut{}$10^{7}$}}%
      \csname LTb\endcsname%
      \put(6405,484){\makebox(0,0){\strut{}$10^{8}$}}%
    }%
    \gplgaddtomacro\gplfronttext{%
      \csname LTb\endcsname%
      \put(176,2061){\rotatebox{-270}{\makebox(0,0){\strut{}Number of unique elements}}}%
      \put(3675,154){\makebox(0,0){\strut{}Number $n$ of samples}}%
      \csname LTb\endcsname%
      \put(2662,3246){\makebox(0,0)[r]{\strut{}$\gamma=1\colon 0.379 \cdot x^{1/1}$}}%
      \csname LTb\endcsname%
      \put(2662,3026){\makebox(0,0)[r]{\strut{}$\gamma=2\colon 1.380 \cdot x^{1/2}$}}%
      \csname LTb\endcsname%
      \put(2662,2806){\makebox(0,0)[r]{\strut{}$\gamma=3\colon 1.270 \cdot x^{1/3}$}}%
    }%
    \gplbacktext
    \put(0,0){\includegraphics{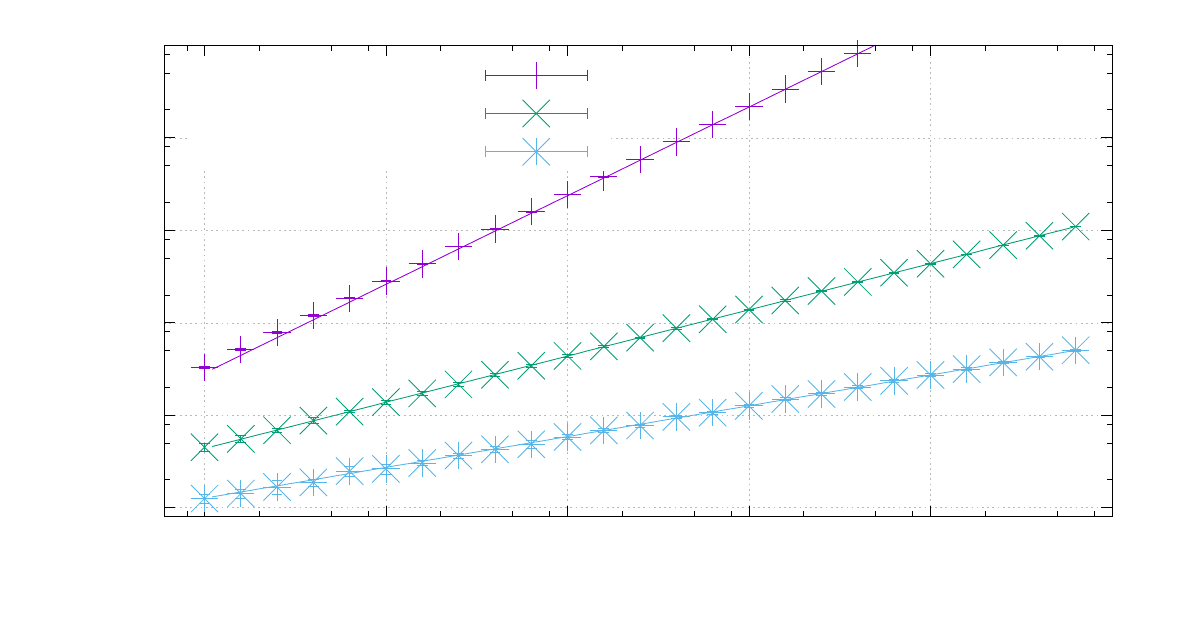}}%
    \gplfronttext
  \end{picture}%
\endgroup
}
	\hfill
	\scalebox{0.6}{%
\begingroup
  \makeatletter
  \providecommand\color[2][]{%
    \GenericError{(gnuplot) \space\space\space\@spaces}{%
      Package color not loaded in conjunction with
      terminal option `colourtext'%
    }{See the gnuplot documentation for explanation.%
    }{Either use 'blacktext' in gnuplot or load the package
      color.sty in LaTeX.}%
    \renewcommand\color[2][]{}%
  }%
  \providecommand\includegraphics[2][]{%
    \GenericError{(gnuplot) \space\space\space\@spaces}{%
      Package graphicx or graphics not loaded%
    }{See the gnuplot documentation for explanation.%
    }{The gnuplot epslatex terminal needs graphicx.sty or graphics.sty.}%
    \renewcommand\includegraphics[2][]{}%
  }%
  \providecommand\rotatebox[2]{#2}%
  \@ifundefined{ifGPcolor}{%
    \newif\ifGPcolor
    \GPcolortrue
  }{}%
  \@ifundefined{ifGPblacktext}{%
    \newif\ifGPblacktext
    \GPblacktextfalse
  }{}%
  \let\gplgaddtomacro\g@addto@macro
  \gdef\gplbacktext{}%
  \gdef\gplfronttext{}%
  \makeatother
  \ifGPblacktext
    \def\colorrgb#1{}%
    \def\colorgray#1{}%
  \else
    \ifGPcolor
      \def\colorrgb#1{\color[rgb]{#1}}%
      \def\colorgray#1{\color[gray]{#1}}%
      \expandafter\def\csname LTw\endcsname{\color{white}}%
      \expandafter\def\csname LTb\endcsname{\color{black}}%
      \expandafter\def\csname LTa\endcsname{\color{black}}%
      \expandafter\def\csname LT0\endcsname{\color[rgb]{1,0,0}}%
      \expandafter\def\csname LT1\endcsname{\color[rgb]{0,1,0}}%
      \expandafter\def\csname LT2\endcsname{\color[rgb]{0,0,1}}%
      \expandafter\def\csname LT3\endcsname{\color[rgb]{1,0,1}}%
      \expandafter\def\csname LT4\endcsname{\color[rgb]{0,1,1}}%
      \expandafter\def\csname LT5\endcsname{\color[rgb]{1,1,0}}%
      \expandafter\def\csname LT6\endcsname{\color[rgb]{0,0,0}}%
      \expandafter\def\csname LT7\endcsname{\color[rgb]{1,0.3,0}}%
      \expandafter\def\csname LT8\endcsname{\color[rgb]{0.5,0.5,0.5}}%
    \else
      \def\colorrgb#1{\color{black}}%
      \def\colorgray#1{\color[gray]{#1}}%
      \expandafter\def\csname LTw\endcsname{\color{white}}%
      \expandafter\def\csname LTb\endcsname{\color{black}}%
      \expandafter\def\csname LTa\endcsname{\color{black}}%
      \expandafter\def\csname LT0\endcsname{\color{black}}%
      \expandafter\def\csname LT1\endcsname{\color{black}}%
      \expandafter\def\csname LT2\endcsname{\color{black}}%
      \expandafter\def\csname LT3\endcsname{\color{black}}%
      \expandafter\def\csname LT4\endcsname{\color{black}}%
      \expandafter\def\csname LT5\endcsname{\color{black}}%
      \expandafter\def\csname LT6\endcsname{\color{black}}%
      \expandafter\def\csname LT7\endcsname{\color{black}}%
      \expandafter\def\csname LT8\endcsname{\color{black}}%
    \fi
  \fi
    \setlength{\unitlength}{0.0500bp}%
    \ifx\gptboxheight\undefined%
      \newlength{\gptboxheight}%
      \newlength{\gptboxwidth}%
      \newsavebox{\gptboxtext}%
    \fi%
    \setlength{\fboxrule}{0.5pt}%
    \setlength{\fboxsep}{1pt}%
\begin{picture}(6802.00,3684.00)%
    \gplgaddtomacro\gplbacktext{%
      \csname LTb\endcsname%
      \put(946,834){\makebox(0,0)[r]{\strut{}$10^{-6}$}}%
      \csname LTb\endcsname%
      \put(946,1265){\makebox(0,0)[r]{\strut{}$10^{-5}$}}%
      \csname LTb\endcsname%
      \put(946,1695){\makebox(0,0)[r]{\strut{}$10^{-4}$}}%
      \csname LTb\endcsname%
      \put(946,2126){\makebox(0,0)[r]{\strut{}$10^{-3}$}}%
      \csname LTb\endcsname%
      \put(946,2557){\makebox(0,0)[r]{\strut{}$10^{-2}$}}%
      \csname LTb\endcsname%
      \put(946,2988){\makebox(0,0)[r]{\strut{}$10^{-1}$}}%
      \csname LTb\endcsname%
      \put(946,3419){\makebox(0,0)[r]{\strut{}$10^{0}$}}%
      \csname LTb\endcsname%
      \put(3310,484){\makebox(0,0){\strut{}$10$}}%
      \csname LTb\endcsname%
      \put(6405,484){\makebox(0,0){\strut{}$100$}}%
    }%
    \gplgaddtomacro\gplfronttext{%
      \csname LTb\endcsname%
      \put(176,2061){\rotatebox{-270}{\makebox(0,0){\strut{}Relative frequency}}}%
      \put(3741,154){\makebox(0,0){\strut{}Number of edge configuration received by swap}}%
      \csname LTb\endcsname%
      \put(5418,3246){\makebox(0,0)[r]{\strut{}Run size $0.05 n$}}%
      \csname LTb\endcsname%
      \put(5418,3026){\makebox(0,0)[r]{\strut{}Run size $0.125 n$}}%
      \csname LTb\endcsname%
      \put(5418,2806){\makebox(0,0)[r]{\strut{}Run size $0.5 n$}}%
    }%
    \gplbacktext
    \put(0,0){\includegraphics{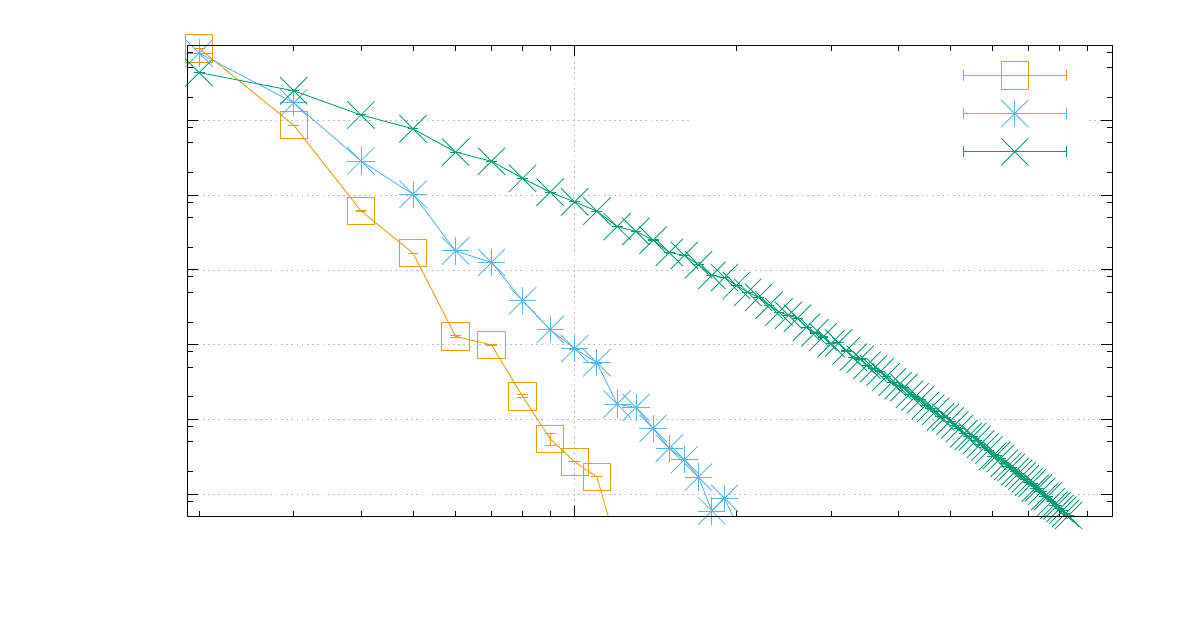}}%
    \gplfronttext
  \end{picture}%
\endgroup
}%
	
	\caption{
		\textbf{Left:} Number of distinct elements in $n$ samples (i.e. node degrees in a degree sequence) taken from $\pld 1n\gamma$.\inB{\\}
		\textbf{Right:} Overhead induced by tracing inter-swap dependencies.  Fraction of swaps as function of the number of edge configurations they receive during the simulation phase.
	}
	\label{fig:powerlaw_pop_count}
	\label{fig:inter_swap_dependencies}	
\end{figure}
In Lemma~\ref{lem:pwl_bound_different_degrees}, we bound the internal memory consumption of \emhh{} by showing that a sequence of $n$ numbers randomly sampled from $\pld 1n\gamma$ contains only $\Oh(n ^{1/\gamma})$ distinct values with high probability.\goodbreak
In order to support Lemma~\ref{lem:pwl_bound_different_degrees} and to estimate the hidden constants, samples of varying size between $10^3$ and $10^8$ are taken from distributions with exponents $\gamma \in \{1,2,3\}$.
Each time, the number of unique elements is computed and averaged over $S={9}$ runs with identical configurations but different random seeds.
The results illustrated in Fig.~\ref{fig:powerlaw_pop_count} support the predictions with small constants.
For the commonly used exponent~$2$, we find $1.38 \sqrt n$ distinct elements in a sequence of length $n$.

\subsection{Inter-Swap Dependencies}\label{subsec:exp-dep-infos}
Whenever multiple swaps target the same edge, \emes{} simulates all possible states to be able to retrieve conflicting edges.
We argued that the number of dependencies (and thus the state size) remains manageable if the sequence of swaps is split into sufficiently short runs.
We found that for $m$ edges and $k$ swaps, $8 k/m$ runs minimize the runtime for large instances of \settingsLin.
As indicated in Fig.~\ref{fig:inter_swap_dependencies}, in this setting \SI{78.7}{\percent} of swaps do not receive additional edge configurations during the simulation phase and less than \SI{0.4}{\percent} have to consider more than four additional states.
Similarly, \SI{78.6}{\percent} of existence requests remain without dependencies.

\subsection{Performance benchmarks}
\noindent Runtime measurements were conducted on the following~systems:
\begin{itemize}
	\inA{\item \textbf{(SysA)} ---}\inB{\item[\textbf{(SysA)}]}
	\textit{inexpensive compute server}:\\
	Intel Xeon E5-2630~v3 (8~cores/16 threads, 2.40GHz), \SI{64}{\giga\byte}~RAM, 3$\times$~Samsung~850~PRO SATA SSD (1~TB).
	
	\inA{\item \textbf{(SysB)} ---}\inB{\item[\textbf{(SysB)}]}
	\textit{commodity hardware}:\\
	Intel Core i7 CPU 970 (6~cores/12 threads, 3.2GHz), \SI{12}{\giga\byte} RAM, 1$\times$~Samsung~850~PRO SATA SSD (1~TB).
\end{itemize}

Since edge switching scales linearly in the number of swaps (in case of \emes{} in the number of runs),
some of the measurements beyond \SI{3}{\hour} runtime are extrapolated from the progress until then.
We verified that errors stay within the indicated margin using reference measurements without extrapolation.

\begin{figure}
	\centering
    \scalebox{0.6}{%
\begingroup
  \makeatletter
  \providecommand\color[2][]{%
    \GenericError{(gnuplot) \space\space\space\@spaces}{%
      Package color not loaded in conjunction with
      terminal option `colourtext'%
    }{See the gnuplot documentation for explanation.%
    }{Either use 'blacktext' in gnuplot or load the package
      color.sty in LaTeX.}%
    \renewcommand\color[2][]{}%
  }%
  \providecommand\includegraphics[2][]{%
    \GenericError{(gnuplot) \space\space\space\@spaces}{%
      Package graphicx or graphics not loaded%
    }{See the gnuplot documentation for explanation.%
    }{The gnuplot epslatex terminal needs graphicx.sty or graphics.sty.}%
    \renewcommand\includegraphics[2][]{}%
  }%
  \providecommand\rotatebox[2]{#2}%
  \@ifundefined{ifGPcolor}{%
    \newif\ifGPcolor
    \GPcolortrue
  }{}%
  \@ifundefined{ifGPblacktext}{%
    \newif\ifGPblacktext
    \GPblacktextfalse
  }{}%
  \let\gplgaddtomacro\g@addto@macro
  \gdef\gplbacktext{}%
  \gdef\gplfronttext{}%
  \makeatother
  \ifGPblacktext
    \def\colorrgb#1{}%
    \def\colorgray#1{}%
  \else
    \ifGPcolor
      \def\colorrgb#1{\color[rgb]{#1}}%
      \def\colorgray#1{\color[gray]{#1}}%
      \expandafter\def\csname LTw\endcsname{\color{white}}%
      \expandafter\def\csname LTb\endcsname{\color{black}}%
      \expandafter\def\csname LTa\endcsname{\color{black}}%
      \expandafter\def\csname LT0\endcsname{\color[rgb]{1,0,0}}%
      \expandafter\def\csname LT1\endcsname{\color[rgb]{0,1,0}}%
      \expandafter\def\csname LT2\endcsname{\color[rgb]{0,0,1}}%
      \expandafter\def\csname LT3\endcsname{\color[rgb]{1,0,1}}%
      \expandafter\def\csname LT4\endcsname{\color[rgb]{0,1,1}}%
      \expandafter\def\csname LT5\endcsname{\color[rgb]{1,1,0}}%
      \expandafter\def\csname LT6\endcsname{\color[rgb]{0,0,0}}%
      \expandafter\def\csname LT7\endcsname{\color[rgb]{1,0.3,0}}%
      \expandafter\def\csname LT8\endcsname{\color[rgb]{0.5,0.5,0.5}}%
    \else
      \def\colorrgb#1{\color{black}}%
      \def\colorgray#1{\color[gray]{#1}}%
      \expandafter\def\csname LTw\endcsname{\color{white}}%
      \expandafter\def\csname LTb\endcsname{\color{black}}%
      \expandafter\def\csname LTa\endcsname{\color{black}}%
      \expandafter\def\csname LT0\endcsname{\color{black}}%
      \expandafter\def\csname LT1\endcsname{\color{black}}%
      \expandafter\def\csname LT2\endcsname{\color{black}}%
      \expandafter\def\csname LT3\endcsname{\color{black}}%
      \expandafter\def\csname LT4\endcsname{\color{black}}%
      \expandafter\def\csname LT5\endcsname{\color{black}}%
      \expandafter\def\csname LT6\endcsname{\color{black}}%
      \expandafter\def\csname LT7\endcsname{\color{black}}%
      \expandafter\def\csname LT8\endcsname{\color{black}}%
    \fi
  \fi
    \setlength{\unitlength}{0.0500bp}%
    \ifx\gptboxheight\undefined%
      \newlength{\gptboxheight}%
      \newlength{\gptboxwidth}%
      \newsavebox{\gptboxtext}%
    \fi%
    \setlength{\fboxrule}{0.5pt}%
    \setlength{\fboxsep}{1pt}%
\begin{picture}(6802.00,3684.00)%
    \gplgaddtomacro\gplbacktext{%
      \csname LTb\endcsname%
      \put(814,704){\makebox(0,0)[r]{\strut{}$10^{0}$}}%
      \csname LTb\endcsname%
      \put(814,1247){\makebox(0,0)[r]{\strut{}$10^{1}$}}%
      \csname LTb\endcsname%
      \put(814,1790){\makebox(0,0)[r]{\strut{}$10^{2}$}}%
      \csname LTb\endcsname%
      \put(814,2333){\makebox(0,0)[r]{\strut{}$10^{3}$}}%
      \csname LTb\endcsname%
      \put(814,2876){\makebox(0,0)[r]{\strut{}$10^{4}$}}%
      \csname LTb\endcsname%
      \put(814,3419){\makebox(0,0)[r]{\strut{}$10^{5}$}}%
      \csname LTb\endcsname%
      \put(946,484){\makebox(0,0){\strut{}$10^{4}$}}%
      \csname LTb\endcsname%
      \put(2253,484){\makebox(0,0){\strut{}$10^{5}$}}%
      \csname LTb\endcsname%
      \put(3560,484){\makebox(0,0){\strut{}$10^{6}$}}%
      \csname LTb\endcsname%
      \put(4868,484){\makebox(0,0){\strut{}$10^{7}$}}%
      \csname LTb\endcsname%
      \put(6175,484){\makebox(0,0){\strut{}$10^{8}$}}%
    }%
    \gplgaddtomacro\gplfronttext{%
      \csname LTb\endcsname%
      \put(176,2061){\rotatebox{-270}{\makebox(0,0){\strut{}Runtime [s]}}}%
      \put(3675,154){\makebox(0,0){\strut{}Number $n$ of node}}%
      \csname LTb\endcsname%
      \put(5418,1097){\makebox(0,0)[r]{\strut{}Original LFR}}%
      \csname LTb\endcsname%
      \put(5418,877){\makebox(0,0)[r]{\strut{}\emlfr}}%
    }%
    \gplbacktext
    \put(0,0){\includegraphics{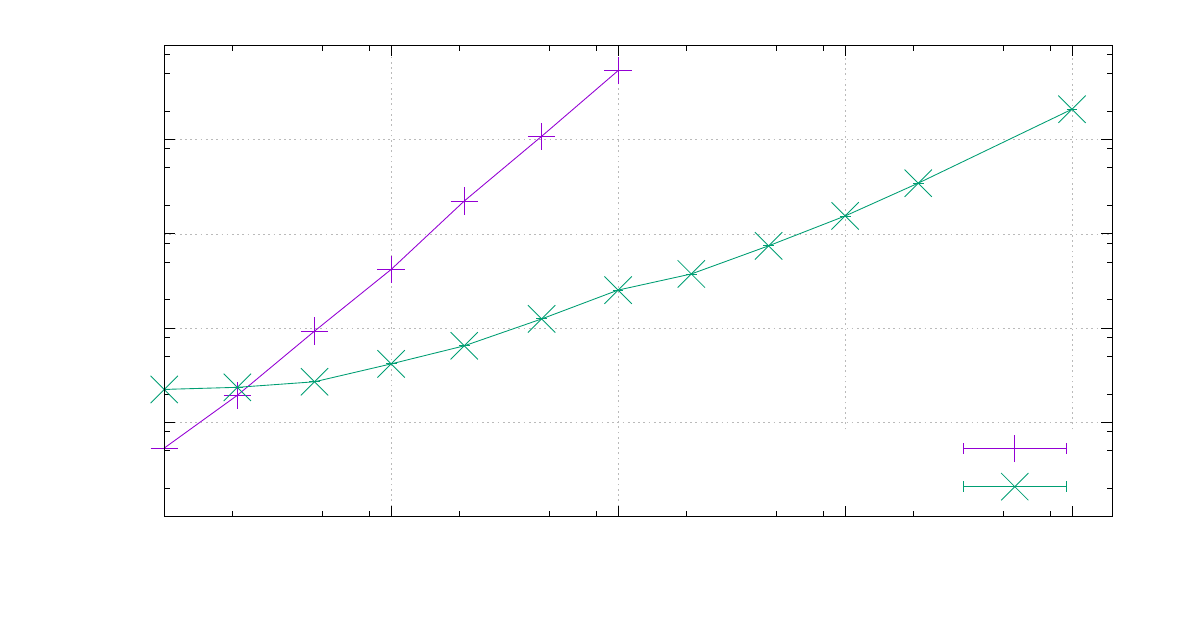}}%
    \gplfronttext
  \end{picture}%
\endgroup
}\hfill%
	\scalebox{0.6}{%
\begingroup
  \makeatletter
  \providecommand\color[2][]{%
    \GenericError{(gnuplot) \space\space\space\@spaces}{%
      Package color not loaded in conjunction with
      terminal option `colourtext'%
    }{See the gnuplot documentation for explanation.%
    }{Either use 'blacktext' in gnuplot or load the package
      color.sty in LaTeX.}%
    \renewcommand\color[2][]{}%
  }%
  \providecommand\includegraphics[2][]{%
    \GenericError{(gnuplot) \space\space\space\@spaces}{%
      Package graphicx or graphics not loaded%
    }{See the gnuplot documentation for explanation.%
    }{The gnuplot epslatex terminal needs graphicx.sty or graphics.sty.}%
    \renewcommand\includegraphics[2][]{}%
  }%
  \providecommand\rotatebox[2]{#2}%
  \@ifundefined{ifGPcolor}{%
    \newif\ifGPcolor
    \GPcolortrue
  }{}%
  \@ifundefined{ifGPblacktext}{%
    \newif\ifGPblacktext
    \GPblacktextfalse
  }{}%
  \let\gplgaddtomacro\g@addto@macro
  \gdef\gplbacktext{}%
  \gdef\gplfronttext{}%
  \makeatother
  \ifGPblacktext
    \def\colorrgb#1{}%
    \def\colorgray#1{}%
  \else
    \ifGPcolor
      \def\colorrgb#1{\color[rgb]{#1}}%
      \def\colorgray#1{\color[gray]{#1}}%
      \expandafter\def\csname LTw\endcsname{\color{white}}%
      \expandafter\def\csname LTb\endcsname{\color{black}}%
      \expandafter\def\csname LTa\endcsname{\color{black}}%
      \expandafter\def\csname LT0\endcsname{\color[rgb]{1,0,0}}%
      \expandafter\def\csname LT1\endcsname{\color[rgb]{0,1,0}}%
      \expandafter\def\csname LT2\endcsname{\color[rgb]{0,0,1}}%
      \expandafter\def\csname LT3\endcsname{\color[rgb]{1,0,1}}%
      \expandafter\def\csname LT4\endcsname{\color[rgb]{0,1,1}}%
      \expandafter\def\csname LT5\endcsname{\color[rgb]{1,1,0}}%
      \expandafter\def\csname LT6\endcsname{\color[rgb]{0,0,0}}%
      \expandafter\def\csname LT7\endcsname{\color[rgb]{1,0.3,0}}%
      \expandafter\def\csname LT8\endcsname{\color[rgb]{0.5,0.5,0.5}}%
    \else
      \def\colorrgb#1{\color{black}}%
      \def\colorgray#1{\color[gray]{#1}}%
      \expandafter\def\csname LTw\endcsname{\color{white}}%
      \expandafter\def\csname LTb\endcsname{\color{black}}%
      \expandafter\def\csname LTa\endcsname{\color{black}}%
      \expandafter\def\csname LT0\endcsname{\color{black}}%
      \expandafter\def\csname LT1\endcsname{\color{black}}%
      \expandafter\def\csname LT2\endcsname{\color{black}}%
      \expandafter\def\csname LT3\endcsname{\color{black}}%
      \expandafter\def\csname LT4\endcsname{\color{black}}%
      \expandafter\def\csname LT5\endcsname{\color{black}}%
      \expandafter\def\csname LT6\endcsname{\color{black}}%
      \expandafter\def\csname LT7\endcsname{\color{black}}%
      \expandafter\def\csname LT8\endcsname{\color{black}}%
    \fi
  \fi
    \setlength{\unitlength}{0.0500bp}%
    \ifx\gptboxheight\undefined%
      \newlength{\gptboxheight}%
      \newlength{\gptboxwidth}%
      \newsavebox{\gptboxtext}%
    \fi%
    \setlength{\fboxrule}{0.5pt}%
    \setlength{\fboxsep}{1pt}%
\begin{picture}(6802.00,3684.00)%
    \gplgaddtomacro\gplbacktext{%
      \csname LTb\endcsname%
      \put(814,704){\makebox(0,0)[r]{\strut{}$10^{1}$}}%
      \csname LTb\endcsname%
      \put(814,1247){\makebox(0,0)[r]{\strut{}$10^{2}$}}%
      \csname LTb\endcsname%
      \put(814,1790){\makebox(0,0)[r]{\strut{}$10^{3}$}}%
      \csname LTb\endcsname%
      \put(814,2333){\makebox(0,0)[r]{\strut{}$10^{4}$}}%
      \csname LTb\endcsname%
      \put(814,2876){\makebox(0,0)[r]{\strut{}$10^{5}$}}%
      \csname LTb\endcsname%
      \put(814,3419){\makebox(0,0)[r]{\strut{}$10^{6}$}}%
      \csname LTb\endcsname%
      \put(1402,484){\makebox(0,0){\strut{}$10^{7}$}}%
      \csname LTb\endcsname%
      \put(2918,484){\makebox(0,0){\strut{}$10^{8}$}}%
      \csname LTb\endcsname%
      \put(4433,484){\makebox(0,0){\strut{}$10^{9}$}}%
      \csname LTb\endcsname%
      \put(5949,484){\makebox(0,0){\strut{}$10^{10}$}}%
    }%
    \gplgaddtomacro\gplfronttext{%
      \csname LTb\endcsname%
      \put(176,2061){\rotatebox{-270}{\makebox(0,0){\strut{}Runtime [s]}}}%
      \put(3675,154){\makebox(0,0){\strut{}Number $m$ of edges}}%
      \csname LTb\endcsname%
      \put(5418,1537){\makebox(0,0)[r]{\strut{}\vles, $\bar d{=100}$}}%
      \csname LTb\endcsname%
      \put(5418,1317){\makebox(0,0)[r]{\strut{}\vles, $\bar d{=1000}$}}%
      \csname LTb\endcsname%
      \put(5418,1097){\makebox(0,0)[r]{\strut{}\emes, $\bar d{=100}$}}%
      \csname LTb\endcsname%
      \put(5418,877){\makebox(0,0)[r]{\strut{}\emes, $\bar d{=1000}$}}%
    }%
    \gplbacktext
    \put(0,0){\includegraphics{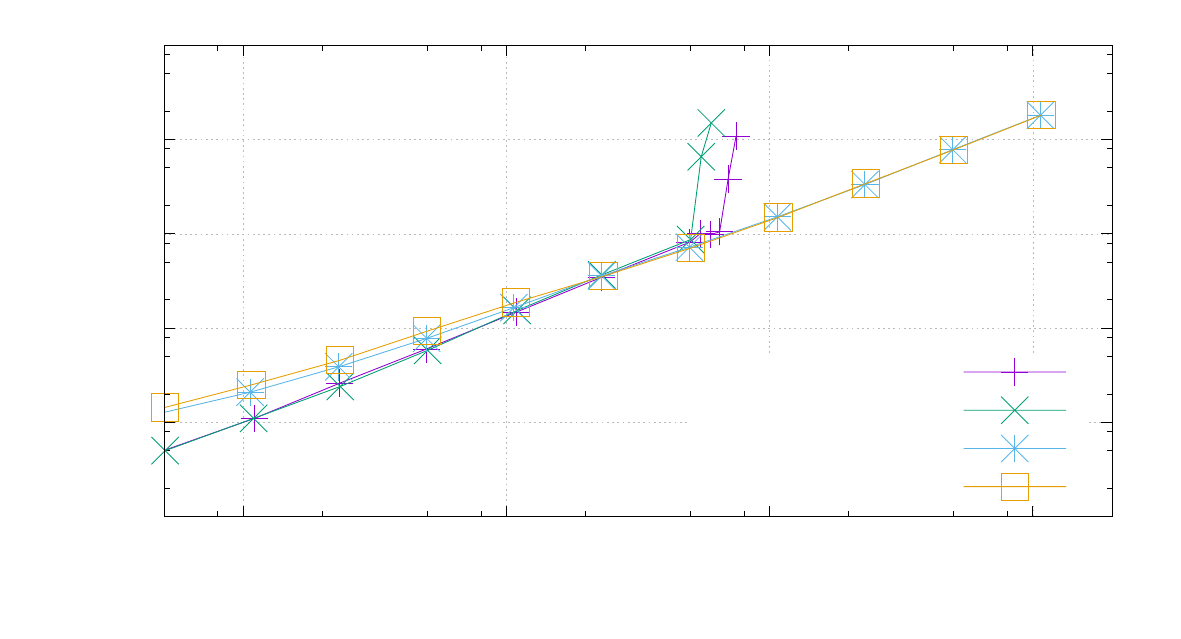}}%
    \gplfronttext
  \end{picture}%
\endgroup
}%
	
    \caption{%
        \textbf{Left:} Runtime on (SysA) of the original LFR implementation and \emlfr{} for $\mu{=}0.2$. 
        \textbf{Right: } Runtime on (SysB) of IM \vles{} and \emes{} on a graph with $m$ of edges and average degree $\bar d$ executing $k{=}10m$ swaps.
    }
    \label{fig:runtime_sweep}
    \label{fig:runtime_sweep_es}
\end{figure}

\subsection{Performance of \emhh{}} 
Our implementation of \emhh{} produces \num{180 \pm 5} million edges per second on (SysA) up to  at least $\num{2e10}$ edges.
Here, we include the computation of the input degree sequence, \emhh's compaction step, as well as the writing of the output to external memory.

\subsection{Performance of \emes{}} Figure~\ref{fig:runtime_sweep_es} presents the runtime required on (SysB) to process $k{=}10m$ swaps in an input graph with $m$ edges and an average degree $\bar d \in \{100, 1000\}$.
For reference, the performance of the existing internal memory edge swap algorithm \vles{} based on the authors' implementation~\cite{DBLP:journals/corr/abs-cs-0502085} is included.
Here we report only on the edge swapping process excluding any precomputation.
To achieve comparability, we removed connectivity tests, fixed memory management issues, and adopted the number of swaps.
Further, we extended counters for edge ids and accumulated degrees to \SI{64}{\bit} integers in order to support experiments with more than $2^{30}$ edges.
\vles{} slows down by a factor of $25$ if the data structure exceeds the available internal memory by less than \SI{10}{\percent}.
We observe an analogous behavior on machines with larger RAM.
\emes{} is faster than \vles{} for all instances with $m > \num{2.5e8}$ edges; those graphs still fit into main memory.

FDSM has applications beyond synthetic graphs, and is for instance used on real data to assess the statistical significance of observations~\cite{Schlauch2015}.
In that spirit, we execute \emes{} on an undirected version of the crawled ClueWeb12 graph's core~\cite{clueweb12}
which we obtain by deleting all nodes corresponding to uncrawled URLs.%
\footnote{We consider such vertices unrealistically (simple) as they have only degree~1 and account for $\approx $\SI{84}{\percent} of nodes in the original graph.}
Performing $k = m$ swaps on this graph with $n \approx \num{9.8e8}$ nodes and $m \approx \num{3.7e10}$ edges is feasible in less than \SI{19.1}{\hour} on (SysB).

Bhuiyan et al. propose a distributed edge switching algorithm and evaluate it on a compute cluster with 64~nodes each equipped with two Intel~Xeon~E5-2670~2.60GHz 8-core processors and 64GB RAM~\cite{DBLP:conf/icpp/BhuiyanCKM14}.
The authors report to perform $k{=}\num{1.15e11}$ swaps on a graph with $m{=}\num{e10}$ generated in a preferential attachment process in less than \SI{3}{\hour}.
We generate a preferential attachment graph using an EM~generator~\cite{doi:10.1137/1.9781611974317.4} matching the aforementioned properties and carried out edge swaps using \emes{} on (SysA).
We observe a slow down of only $8.3$ on a machine with $1/128$ the number of comparable cores and $1/64$ of internal memory.

\subsection{Performance of \emcmes{} and qualitative comparison with \emes{}}\label{subsec:emcmes-exp}
\begin{figure}
	\centering
	\includegraphics[width=0.45\textwidth]{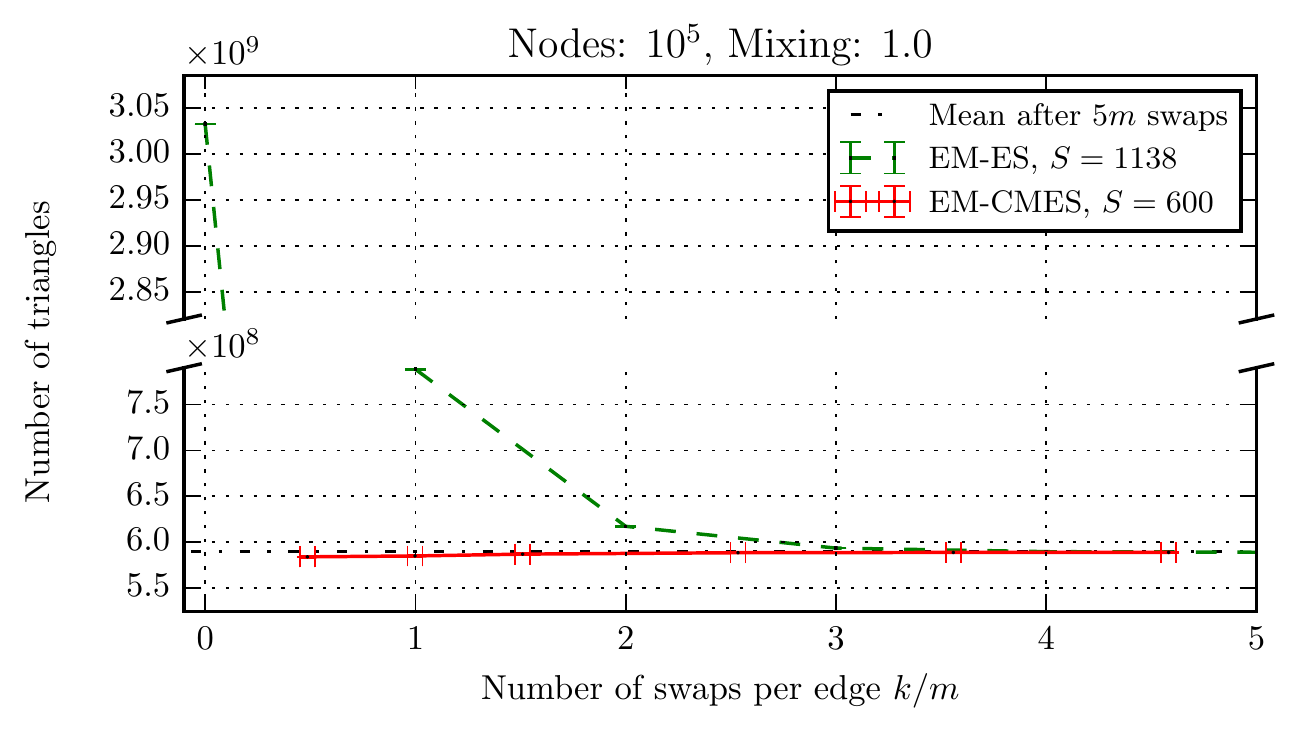}\hfill
	\includegraphics[width=0.45\textwidth]{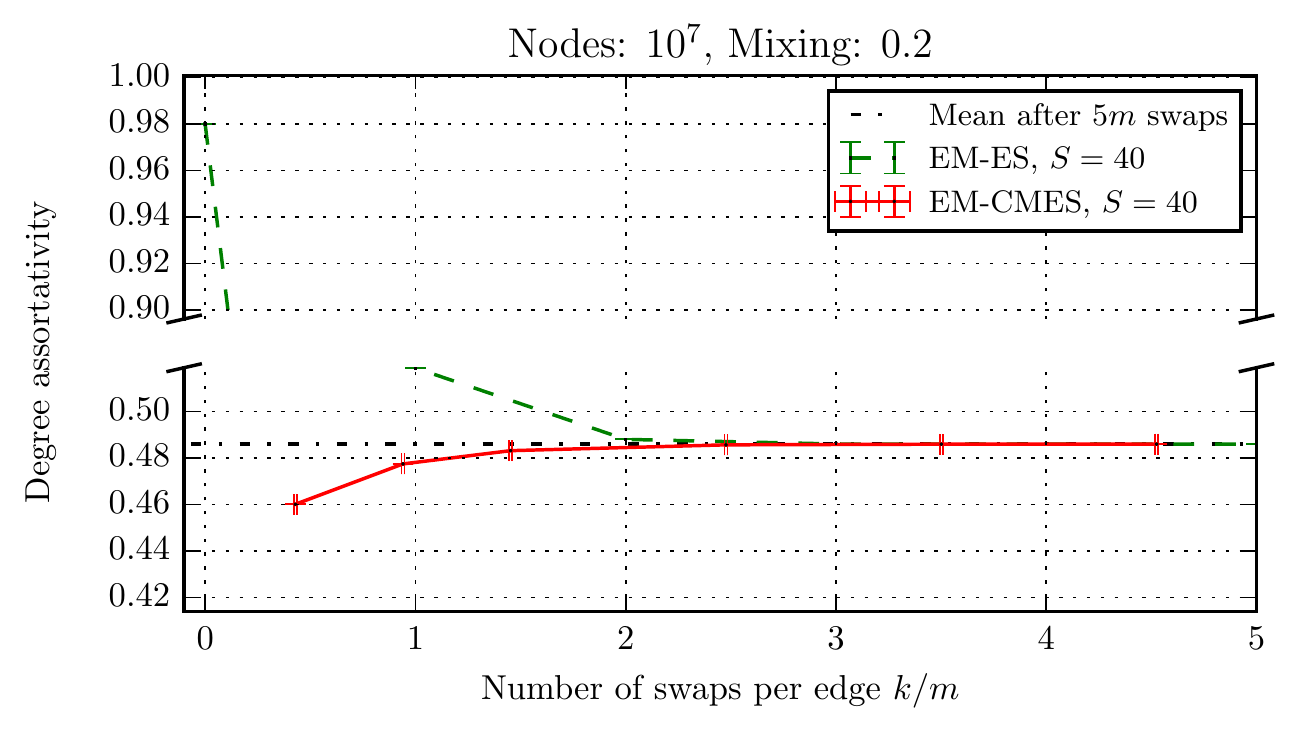}
	
	\caption{%
		\textbf{Left:} Number of triangles on \settingsConst{} with $n=\num{1e5}$ and $\mu=1.0$.
		\textbf{Right: } Degree assortativity on \settingsConst{} with $n=\num{1e7}$ and $\mu=0.2$.
		In order to factor in the increased runtime of \emcmes{} compared to \emhh{}, plots of \emcmes{} are shifted by the runtime of this phase relative to the execution of \emes.
		As \emcmes{} is a Las-Vegas algorithm, this incurs an additional error along the x-axis.
	}
	\label{fig:mixing_cmes}
\end{figure}
In section~\ref{sec:cmes}, we describe an alternative graph sampling method.
Instead of seeding \emes{} with a highly biased graph using \emhh{}, we employ the Configuration Model to quickly generate a non-simple random graph.
In order to obtain a simple graph, we then carry out several \emes{} runs in a Las-Vegas fashion.

Since \emes{} scans through the edge list in each iteration, runs with very few swaps are inefficient.
For this reason, we start the subsequent Markov Chain early: 
First identify all multi-edges and self-loops and generate swaps with random partners.
In a second step, we then introduce additional random swaps until the run contains at least $m/10$ operations\footnote{We chose this number as it yields execution times similar to the $m/8$-setting of \emes{} on simple graphs}.

For an experimental comparison between \emes{} and \emcmes{}, we consider the runtime until both yield a sufficiently uniform random sample.
Of course, the uniformity is hard to quantify; similarly to related studies (cf. section~\ref{subsec:empiricalconv}), we estimate the mixing times of both approaches as follows.
Starting from a common seed graph $G^{(0)}$, we generate an ensemble $\{G_1^{(k)}, \ldots, G_S^{(k)}\}$ of $S \gg 1$ instances by applying independent random sequences of $k \gg m$ swaps each.
During this process, we regularly export snapshots $G_i^{(jm)}$ of the intermediate instances $j \in [k/m]$ of graph $G_i$.
For \emcmes{}, we start from the same seed graph, apply the algorithm and then carry out $k$ swaps as described above.

For each snapshot, we compute a number of metrics, such as the average local clustering coefficient (ACC), the number of triangles, and degree assortativity%
\footnote{%
	In preliminary experiments, we also included spectral properties (such as extremal eigenvalues of the adjacency/laplacian matrix) and the closeness centrality of fixed nodes.
	As these measurement are more expensive to compute and yield qualitatively similar results, we decided not to include them in the larger trials.
}.
We then investigate how the distribution of these measures evolves within the ensemble as we carry out an increasing number of swaps.
We omit results for ACC since they are less sensitive compared to the other measures (cf. section~\ref{subsec:mixing-emes}).

As illustrated in Fig.~\ref{fig:mixing_cmes} and Appendix~\ref{sec:appendix-emcmes}, all proxy measures converge within $5m$ swaps with a very small variance.
No statistically significant change can be observed compared to a Markov chain with $30m$ operations (which was only computed for a subset of each ensemble).
\emhh{} generates biased instances with special properties, such as a high number of triangles and correlated node degrees, while the features of \emcmes{}'s output nearly match the converged ensemble.
This suggests that the number of swaps to obtain a sufficiently uniform sample can be reduced for \emcmes{}.

Due to computational costs, the study was carried out on multiple machines executing several tasks in parallel.
Hence, absolute running times are not meaningful, and we rather measure the computational costs in units of time required to carry out $1m$ swaps by the same process.
This accounts for the offset of \emcmes{}'s first data point.

The number of rounds required to obtain a simple graph depends on the degree distribution.
For \settingsConst{} with $n=\num{1e5}$ and $\mu=1$, a fraction of $\SI{5.1}{\percent}$ of the edges produced by the Configuration Model are illegal.
\emes{} requires $\num{18 \pm 2}$ rewiring runs in case a single swap is used per round to rewire an illegal edge.
In the default mode of operation, $\num{5.0 \pm 0.0}$ rounds suffice as the number of rewiring swaps per illegal edge is doubled in each round.
For larger graphs with $n=\num{1e7}$, only $\SI{0.07}{\percent}$ of edges are illegal and need \num{2.25 \pm 0.4} rewiring runs.

\subsection{Convergence of \emes{}}\label{subsec:mixing-emes}
\begin{figure}
	\centering
	\includegraphics[width=0.45\textwidth]{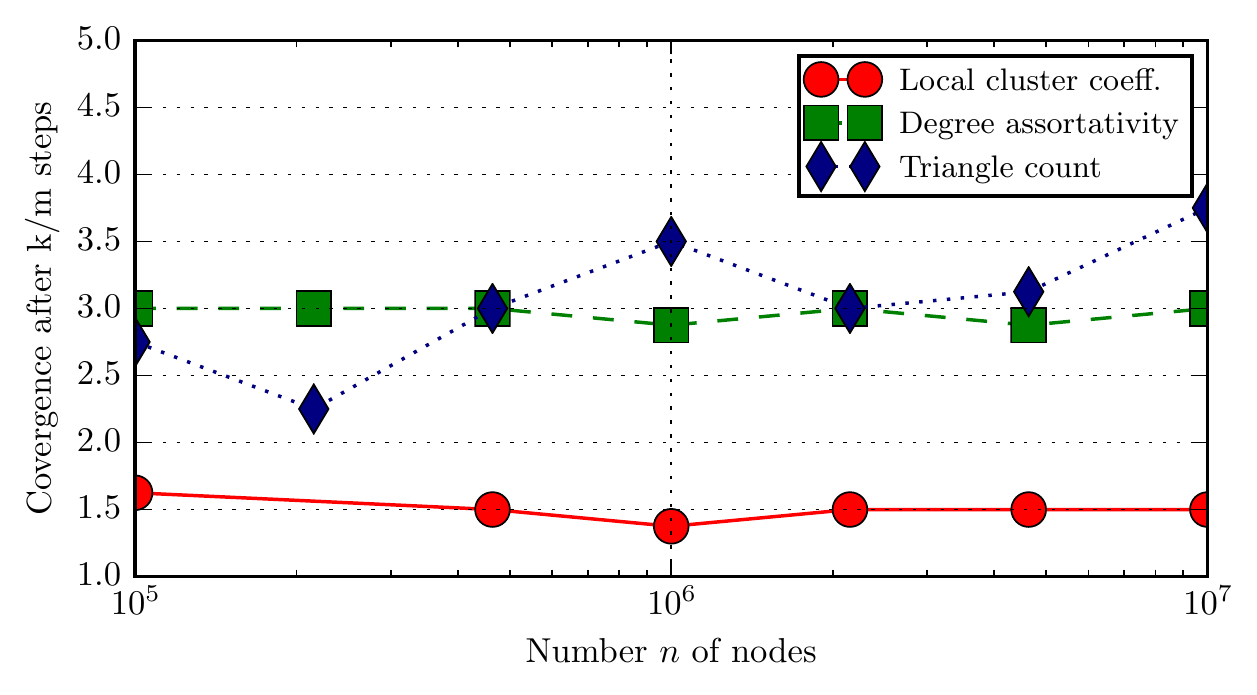}\hfill
	\includegraphics[width=0.45\textwidth]{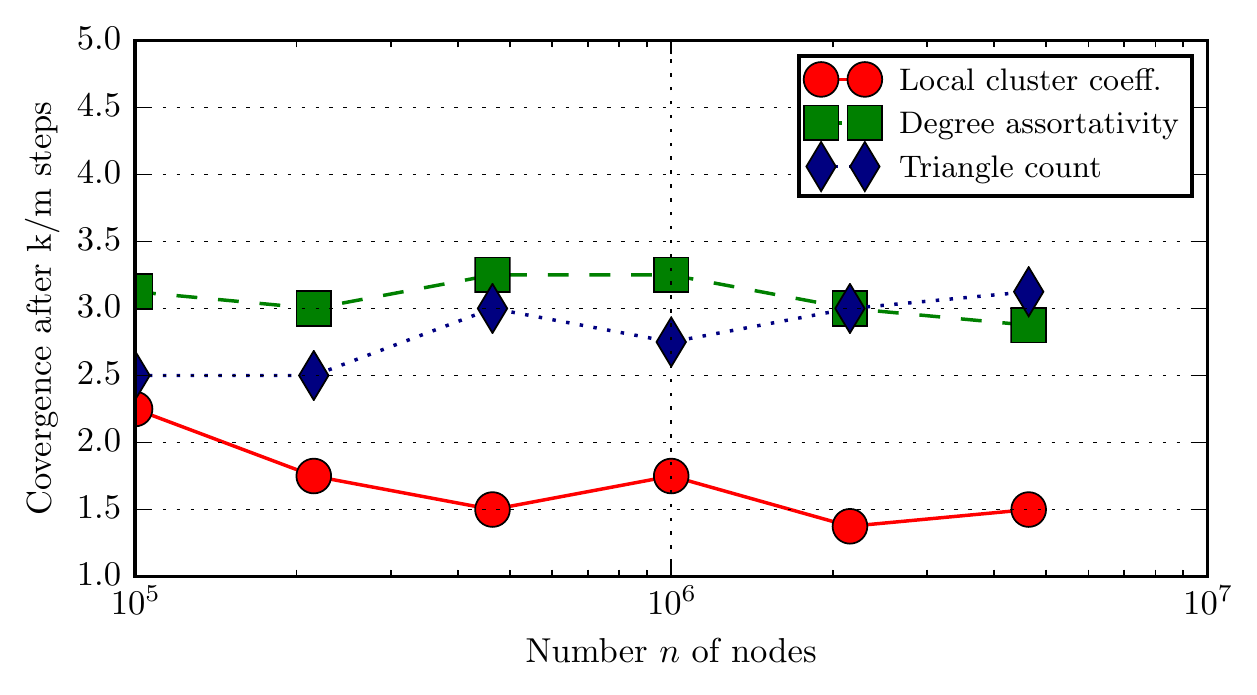}
	
	\caption{%
		Number of swaps per edge after which ensembles of graphs with \settingsConst{}, $\num{1e5} \le n \le \num{1e7}$ and $\mu = 0.4$ (left) and $\mu = 0.6$ (right) converge.
		Due to computation costs, the ensemble size is reduced from $S > 100$ to $S > 10$ for large graphs.
	}
	\label{fig:mixing_emes}
\end{figure}

In a similar spirit to the previous section, we indirectly investigate the Markov chain's mixing time as a function of the number of nodes $n$.
To do so, we generate ensembles as before with $\num{1e5} \le n \le \num{1e7}$ and compute the same graph metrics.
For each group and measure, we then search for the first snapshot $p$ in which the measure's mean is within an interval of half the standard deviation of the final values and subsequently remains there for at least three phases.
We then interpret $p$ as a proxy for the mixing time.
As depicted in Fig.~\ref{fig:mixing_emes}, no measure shows a systematic increase over the two orders of magnitude considered.
It hence seems plausible not to increase the number of swaps performed by \emlfr{} compared to the original implementation.

\subsection{Performance of \emlfr{}} Figure~\ref{fig:runtime_sweep} reports the runtime of the original LFR implementation and \emlfr{} as a function of the number of nodes $n$ and $\nu = 1$.
\emlfr{} is faster for graphs with $n \ge \num{2.5e4}$ nodes which feature approximately \num{5e5} edges and are well in the IM domain.
Further, the implementation is capable of producing graphs with more than $\num{1e10}$ edges in  \SI{17}{\hour}.%
\footnote{%
	Roughly \SI{1.5}{\hour} are spend in the end-game of the Global Rewiring (at that point less than one edge out of \num{e6} is invalid).
	In this situation, an algorithm using random I/Os may yield a speed-up.
	Alternatively, we could simply discard the few remaining invalid edges since they only constitute an insignificant fraction.
}
Using the same time budget, the original implementation generates graphs more than two orders of magnitude smaller.

\subsection{Qualitative Comparison of \emlfr{}}\label{subsec:exp-qual-compare}
When designing \emlfr{}, we made sure that it closely follows the LFR benchmark such that we can expect it to produce graphs following the same distribution as the original LFR generator.
In order to show experimentally that we achieved this goal, we generated graphs with identical parameters using the original LFR implementation and \emlfr{}.
For disjoint clusters we also compare it with the implementation that is part of NetworKit~\cite{DBLP:journals/corr/StaudtSM14}.
Using NetworKit, we evaluate the results of Infomap \cite{rab-t-09}, Louvain \cite{bgll-f-08} and OSLOM \cite{lrrf-fsscn-11}, three state-of-the-art clustering algorithms \cite{ekgb-ancac-16, Buzun2014,fh-ca-16}, and compare them using the adjusted rand measure \cite{ha-c-85} and NMI \cite{er-c-12}.

\begin{figure}[t]
  \scalebox{0.6}{%
\begingroup
  \makeatletter
  \providecommand\color[2][]{%
    \GenericError{(gnuplot) \space\space\space\@spaces}{%
      Package color not loaded in conjunction with
      terminal option `colourtext'%
    }{See the gnuplot documentation for explanation.%
    }{Either use 'blacktext' in gnuplot or load the package
      color.sty in LaTeX.}%
    \renewcommand\color[2][]{}%
  }%
  \providecommand\includegraphics[2][]{%
    \GenericError{(gnuplot) \space\space\space\@spaces}{%
      Package graphicx or graphics not loaded%
    }{See the gnuplot documentation for explanation.%
    }{The gnuplot epslatex terminal needs graphicx.sty or graphics.sty.}%
    \renewcommand\includegraphics[2][]{}%
  }%
  \providecommand\rotatebox[2]{#2}%
  \@ifundefined{ifGPcolor}{%
    \newif\ifGPcolor
    \GPcolortrue
  }{}%
  \@ifundefined{ifGPblacktext}{%
    \newif\ifGPblacktext
    \GPblacktextfalse
  }{}%
  \let\gplgaddtomacro\g@addto@macro
  \gdef\gplbacktext{}%
  \gdef\gplfronttext{}%
  \makeatother
  \ifGPblacktext
    \def\colorrgb#1{}%
    \def\colorgray#1{}%
  \else
    \ifGPcolor
      \def\colorrgb#1{\color[rgb]{#1}}%
      \def\colorgray#1{\color[gray]{#1}}%
      \expandafter\def\csname LTw\endcsname{\color{white}}%
      \expandafter\def\csname LTb\endcsname{\color{black}}%
      \expandafter\def\csname LTa\endcsname{\color{black}}%
      \expandafter\def\csname LT0\endcsname{\color[rgb]{1,0,0}}%
      \expandafter\def\csname LT1\endcsname{\color[rgb]{0,1,0}}%
      \expandafter\def\csname LT2\endcsname{\color[rgb]{0,0,1}}%
      \expandafter\def\csname LT3\endcsname{\color[rgb]{1,0,1}}%
      \expandafter\def\csname LT4\endcsname{\color[rgb]{0,1,1}}%
      \expandafter\def\csname LT5\endcsname{\color[rgb]{1,1,0}}%
      \expandafter\def\csname LT6\endcsname{\color[rgb]{0,0,0}}%
      \expandafter\def\csname LT7\endcsname{\color[rgb]{1,0.3,0}}%
      \expandafter\def\csname LT8\endcsname{\color[rgb]{0.5,0.5,0.5}}%
    \else
      \def\colorrgb#1{\color{black}}%
      \def\colorgray#1{\color[gray]{#1}}%
      \expandafter\def\csname LTw\endcsname{\color{white}}%
      \expandafter\def\csname LTb\endcsname{\color{black}}%
      \expandafter\def\csname LTa\endcsname{\color{black}}%
      \expandafter\def\csname LT0\endcsname{\color{black}}%
      \expandafter\def\csname LT1\endcsname{\color{black}}%
      \expandafter\def\csname LT2\endcsname{\color{black}}%
      \expandafter\def\csname LT3\endcsname{\color{black}}%
      \expandafter\def\csname LT4\endcsname{\color{black}}%
      \expandafter\def\csname LT5\endcsname{\color{black}}%
      \expandafter\def\csname LT6\endcsname{\color{black}}%
      \expandafter\def\csname LT7\endcsname{\color{black}}%
      \expandafter\def\csname LT8\endcsname{\color{black}}%
    \fi
  \fi
    \setlength{\unitlength}{0.0500bp}%
    \ifx\gptboxheight\undefined%
      \newlength{\gptboxheight}%
      \newlength{\gptboxwidth}%
      \newsavebox{\gptboxtext}%
    \fi%
    \setlength{\fboxrule}{0.5pt}%
    \setlength{\fboxsep}{1pt}%
\begin{picture}(6802.00,3614.00)%
    \gplgaddtomacro\gplbacktext{%
      \csname LTb\endcsname%
      \put(814,704){\makebox(0,0)[r]{\strut{}$0$}}%
      \csname LTb\endcsname%
      \put(814,1113){\makebox(0,0)[r]{\strut{}$0.2$}}%
      \csname LTb\endcsname%
      \put(814,1522){\makebox(0,0)[r]{\strut{}$0.4$}}%
      \csname LTb\endcsname%
      \put(814,1931){\makebox(0,0)[r]{\strut{}$0.6$}}%
      \csname LTb\endcsname%
      \put(814,2340){\makebox(0,0)[r]{\strut{}$0.8$}}%
      \csname LTb\endcsname%
      \put(814,2749){\makebox(0,0)[r]{\strut{}$1$}}%
      \csname LTb\endcsname%
      \put(1234,484){\makebox(0,0){\strut{}$10^{3}$}}%
      \csname LTb\endcsname%
      \put(3096,484){\makebox(0,0){\strut{}$10^{4}$}}%
      \csname LTb\endcsname%
      \put(4957,484){\makebox(0,0){\strut{}$10^{5}$}}%
    }%
    \gplgaddtomacro\gplfronttext{%
      \csname LTb\endcsname%
      \put(176,1828){\rotatebox{-270}{\makebox(0,0){\strut{}AR}}}%
      \put(3675,154){\makebox(0,0){\strut{}Number $n$ of nodes}}%
      \put(3675,3283){\makebox(0,0){\strut{}Mixing $\mu = 0.6$, Cluster: Infomap}}%
      \csname LTb\endcsname%
      \put(2266,1317){\makebox(0,0)[r]{\strut{}Orig}}%
      \csname LTb\endcsname%
      \put(2266,1097){\makebox(0,0)[r]{\strut{}NetworKit}}%
      \csname LTb\endcsname%
      \put(2266,877){\makebox(0,0)[r]{\strut{}EM}}%
    }%
    \gplbacktext
    \put(0,0){\includegraphics{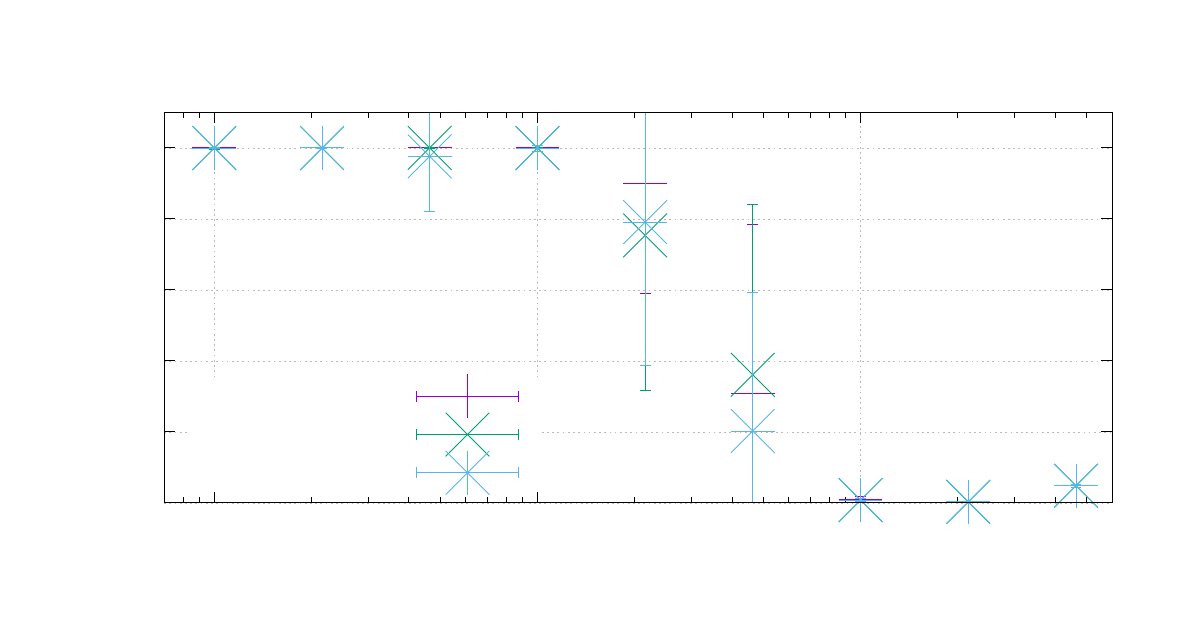}}%
    \gplfronttext
  \end{picture}%
\endgroup
}\hfill
  \scalebox{0.6}{%
\begingroup
  \makeatletter
  \providecommand\color[2][]{%
    \GenericError{(gnuplot) \space\space\space\@spaces}{%
      Package color not loaded in conjunction with
      terminal option `colourtext'%
    }{See the gnuplot documentation for explanation.%
    }{Either use 'blacktext' in gnuplot or load the package
      color.sty in LaTeX.}%
    \renewcommand\color[2][]{}%
  }%
  \providecommand\includegraphics[2][]{%
    \GenericError{(gnuplot) \space\space\space\@spaces}{%
      Package graphicx or graphics not loaded%
    }{See the gnuplot documentation for explanation.%
    }{The gnuplot epslatex terminal needs graphicx.sty or graphics.sty.}%
    \renewcommand\includegraphics[2][]{}%
  }%
  \providecommand\rotatebox[2]{#2}%
  \@ifundefined{ifGPcolor}{%
    \newif\ifGPcolor
    \GPcolortrue
  }{}%
  \@ifundefined{ifGPblacktext}{%
    \newif\ifGPblacktext
    \GPblacktextfalse
  }{}%
  \let\gplgaddtomacro\g@addto@macro
  \gdef\gplbacktext{}%
  \gdef\gplfronttext{}%
  \makeatother
  \ifGPblacktext
    \def\colorrgb#1{}%
    \def\colorgray#1{}%
  \else
    \ifGPcolor
      \def\colorrgb#1{\color[rgb]{#1}}%
      \def\colorgray#1{\color[gray]{#1}}%
      \expandafter\def\csname LTw\endcsname{\color{white}}%
      \expandafter\def\csname LTb\endcsname{\color{black}}%
      \expandafter\def\csname LTa\endcsname{\color{black}}%
      \expandafter\def\csname LT0\endcsname{\color[rgb]{1,0,0}}%
      \expandafter\def\csname LT1\endcsname{\color[rgb]{0,1,0}}%
      \expandafter\def\csname LT2\endcsname{\color[rgb]{0,0,1}}%
      \expandafter\def\csname LT3\endcsname{\color[rgb]{1,0,1}}%
      \expandafter\def\csname LT4\endcsname{\color[rgb]{0,1,1}}%
      \expandafter\def\csname LT5\endcsname{\color[rgb]{1,1,0}}%
      \expandafter\def\csname LT6\endcsname{\color[rgb]{0,0,0}}%
      \expandafter\def\csname LT7\endcsname{\color[rgb]{1,0.3,0}}%
      \expandafter\def\csname LT8\endcsname{\color[rgb]{0.5,0.5,0.5}}%
    \else
      \def\colorrgb#1{\color{black}}%
      \def\colorgray#1{\color[gray]{#1}}%
      \expandafter\def\csname LTw\endcsname{\color{white}}%
      \expandafter\def\csname LTb\endcsname{\color{black}}%
      \expandafter\def\csname LTa\endcsname{\color{black}}%
      \expandafter\def\csname LT0\endcsname{\color{black}}%
      \expandafter\def\csname LT1\endcsname{\color{black}}%
      \expandafter\def\csname LT2\endcsname{\color{black}}%
      \expandafter\def\csname LT3\endcsname{\color{black}}%
      \expandafter\def\csname LT4\endcsname{\color{black}}%
      \expandafter\def\csname LT5\endcsname{\color{black}}%
      \expandafter\def\csname LT6\endcsname{\color{black}}%
      \expandafter\def\csname LT7\endcsname{\color{black}}%
      \expandafter\def\csname LT8\endcsname{\color{black}}%
    \fi
  \fi
    \setlength{\unitlength}{0.0500bp}%
    \ifx\gptboxheight\undefined%
      \newlength{\gptboxheight}%
      \newlength{\gptboxwidth}%
      \newsavebox{\gptboxtext}%
    \fi%
    \setlength{\fboxrule}{0.5pt}%
    \setlength{\fboxsep}{1pt}%
\begin{picture}(6802.00,3614.00)%
    \gplgaddtomacro\gplbacktext{%
      \csname LTb\endcsname%
      \put(814,704){\makebox(0,0)[r]{\strut{}$0$}}%
      \csname LTb\endcsname%
      \put(814,1113){\makebox(0,0)[r]{\strut{}$0.2$}}%
      \csname LTb\endcsname%
      \put(814,1522){\makebox(0,0)[r]{\strut{}$0.4$}}%
      \csname LTb\endcsname%
      \put(814,1931){\makebox(0,0)[r]{\strut{}$0.6$}}%
      \csname LTb\endcsname%
      \put(814,2340){\makebox(0,0)[r]{\strut{}$0.8$}}%
      \csname LTb\endcsname%
      \put(814,2749){\makebox(0,0)[r]{\strut{}$1$}}%
      \csname LTb\endcsname%
      \put(1234,484){\makebox(0,0){\strut{}$10^{3}$}}%
      \csname LTb\endcsname%
      \put(3096,484){\makebox(0,0){\strut{}$10^{4}$}}%
      \csname LTb\endcsname%
      \put(4957,484){\makebox(0,0){\strut{}$10^{5}$}}%
    }%
    \gplgaddtomacro\gplfronttext{%
      \csname LTb\endcsname%
      \put(176,1828){\rotatebox{-270}{\makebox(0,0){\strut{}AR}}}%
      \put(3675,154){\makebox(0,0){\strut{}Number $n$ of nodes}}%
      \put(3675,3283){\makebox(0,0){\strut{}Mixing $\mu = 0.6$, Cluster: Louvain}}%
      \csname LTb\endcsname%
      \put(2266,1317){\makebox(0,0)[r]{\strut{}Orig}}%
      \csname LTb\endcsname%
      \put(2266,1097){\makebox(0,0)[r]{\strut{}NetworKit}}%
      \csname LTb\endcsname%
      \put(2266,877){\makebox(0,0)[r]{\strut{}EM}}%
    }%
    \gplbacktext
    \put(0,0){\includegraphics{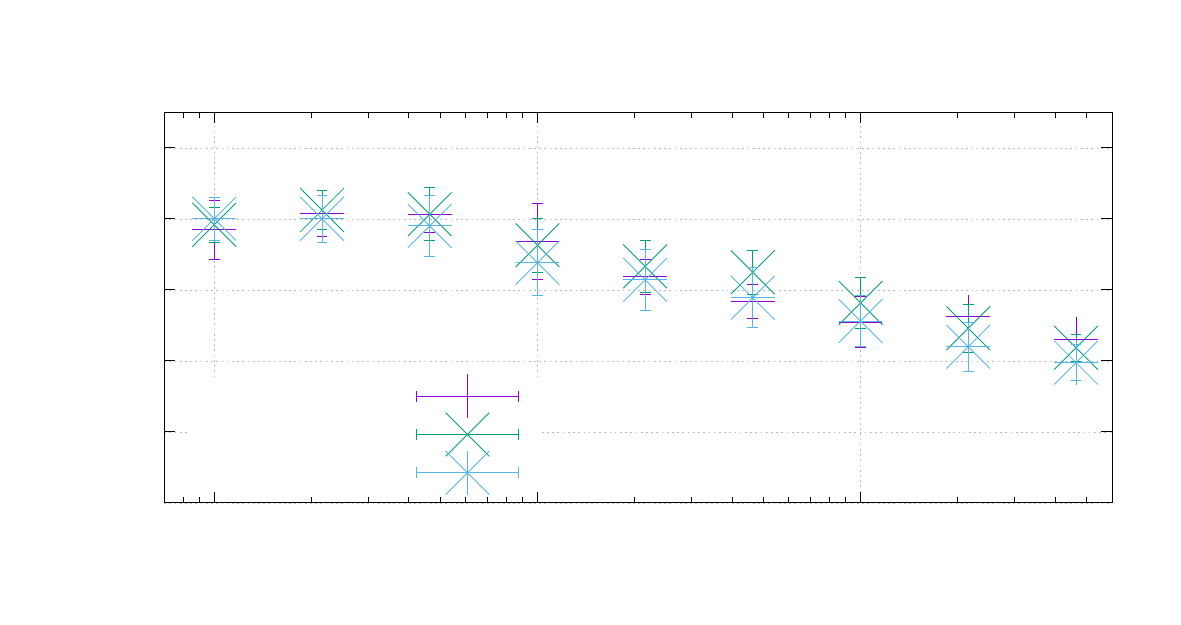}}%
    \gplfronttext
  \end{picture}%
\endgroup
}

  \caption{Adjusted rand measure of Infomap/Louvain and ground truth at $\mu = 0.6$ with disjoint clusters, $s_{\min} = 10$, $s_{\max} = n/20$.}
  \label{fig:quality_lfr_disjoint}
\end{figure}

\begin{figure}
  \scalebox{0.6}{%
\begingroup
  \makeatletter
  \providecommand\color[2][]{%
    \GenericError{(gnuplot) \space\space\space\@spaces}{%
      Package color not loaded in conjunction with
      terminal option `colourtext'%
    }{See the gnuplot documentation for explanation.%
    }{Either use 'blacktext' in gnuplot or load the package
      color.sty in LaTeX.}%
    \renewcommand\color[2][]{}%
  }%
  \providecommand\includegraphics[2][]{%
    \GenericError{(gnuplot) \space\space\space\@spaces}{%
      Package graphicx or graphics not loaded%
    }{See the gnuplot documentation for explanation.%
    }{The gnuplot epslatex terminal needs graphicx.sty or graphics.sty.}%
    \renewcommand\includegraphics[2][]{}%
  }%
  \providecommand\rotatebox[2]{#2}%
  \@ifundefined{ifGPcolor}{%
    \newif\ifGPcolor
    \GPcolortrue
  }{}%
  \@ifundefined{ifGPblacktext}{%
    \newif\ifGPblacktext
    \GPblacktextfalse
  }{}%
  \let\gplgaddtomacro\g@addto@macro
  \gdef\gplbacktext{}%
  \gdef\gplfronttext{}%
  \makeatother
  \ifGPblacktext
    \def\colorrgb#1{}%
    \def\colorgray#1{}%
  \else
    \ifGPcolor
      \def\colorrgb#1{\color[rgb]{#1}}%
      \def\colorgray#1{\color[gray]{#1}}%
      \expandafter\def\csname LTw\endcsname{\color{white}}%
      \expandafter\def\csname LTb\endcsname{\color{black}}%
      \expandafter\def\csname LTa\endcsname{\color{black}}%
      \expandafter\def\csname LT0\endcsname{\color[rgb]{1,0,0}}%
      \expandafter\def\csname LT1\endcsname{\color[rgb]{0,1,0}}%
      \expandafter\def\csname LT2\endcsname{\color[rgb]{0,0,1}}%
      \expandafter\def\csname LT3\endcsname{\color[rgb]{1,0,1}}%
      \expandafter\def\csname LT4\endcsname{\color[rgb]{0,1,1}}%
      \expandafter\def\csname LT5\endcsname{\color[rgb]{1,1,0}}%
      \expandafter\def\csname LT6\endcsname{\color[rgb]{0,0,0}}%
      \expandafter\def\csname LT7\endcsname{\color[rgb]{1,0.3,0}}%
      \expandafter\def\csname LT8\endcsname{\color[rgb]{0.5,0.5,0.5}}%
    \else
      \def\colorrgb#1{\color{black}}%
      \def\colorgray#1{\color[gray]{#1}}%
      \expandafter\def\csname LTw\endcsname{\color{white}}%
      \expandafter\def\csname LTb\endcsname{\color{black}}%
      \expandafter\def\csname LTa\endcsname{\color{black}}%
      \expandafter\def\csname LT0\endcsname{\color{black}}%
      \expandafter\def\csname LT1\endcsname{\color{black}}%
      \expandafter\def\csname LT2\endcsname{\color{black}}%
      \expandafter\def\csname LT3\endcsname{\color{black}}%
      \expandafter\def\csname LT4\endcsname{\color{black}}%
      \expandafter\def\csname LT5\endcsname{\color{black}}%
      \expandafter\def\csname LT6\endcsname{\color{black}}%
      \expandafter\def\csname LT7\endcsname{\color{black}}%
      \expandafter\def\csname LT8\endcsname{\color{black}}%
    \fi
  \fi
    \setlength{\unitlength}{0.0500bp}%
    \ifx\gptboxheight\undefined%
      \newlength{\gptboxheight}%
      \newlength{\gptboxwidth}%
      \newsavebox{\gptboxtext}%
    \fi%
    \setlength{\fboxrule}{0.5pt}%
    \setlength{\fboxsep}{1pt}%
\begin{picture}(6802.00,3614.00)%
    \gplgaddtomacro\gplbacktext{%
      \csname LTb\endcsname%
      \put(814,704){\makebox(0,0)[r]{\strut{}$0$}}%
      \csname LTb\endcsname%
      \put(814,1113){\makebox(0,0)[r]{\strut{}$0.2$}}%
      \csname LTb\endcsname%
      \put(814,1522){\makebox(0,0)[r]{\strut{}$0.4$}}%
      \csname LTb\endcsname%
      \put(814,1931){\makebox(0,0)[r]{\strut{}$0.6$}}%
      \csname LTb\endcsname%
      \put(814,2340){\makebox(0,0)[r]{\strut{}$0.8$}}%
      \csname LTb\endcsname%
      \put(814,2749){\makebox(0,0)[r]{\strut{}$1$}}%
      \csname LTb\endcsname%
      \put(1578,484){\makebox(0,0){\strut{}$10^{3}$}}%
      \csname LTb\endcsname%
      \put(3676,484){\makebox(0,0){\strut{}$10^{4}$}}%
      \csname LTb\endcsname%
      \put(5773,484){\makebox(0,0){\strut{}$10^{5}$}}%
    }%
    \gplgaddtomacro\gplfronttext{%
      \csname LTb\endcsname%
      \put(176,1828){\rotatebox{-270}{\makebox(0,0){\strut{}NMI}}}%
      \put(3675,154){\makebox(0,0){\strut{}Number $n$ of nodes}}%
      \put(3675,3283){\makebox(0,0){\strut{}Mixing: $\mu = 0.4$, Cluster: OSLOM, Overlap: $\nu = 2$}}%
      \csname LTb\endcsname%
      \put(5418,2780){\makebox(0,0)[r]{\strut{}Orig}}%
      \csname LTb\endcsname%
      \put(5418,2560){\makebox(0,0)[r]{\strut{}EM}}%
    }%
    \gplbacktext
    \put(0,0){\includegraphics{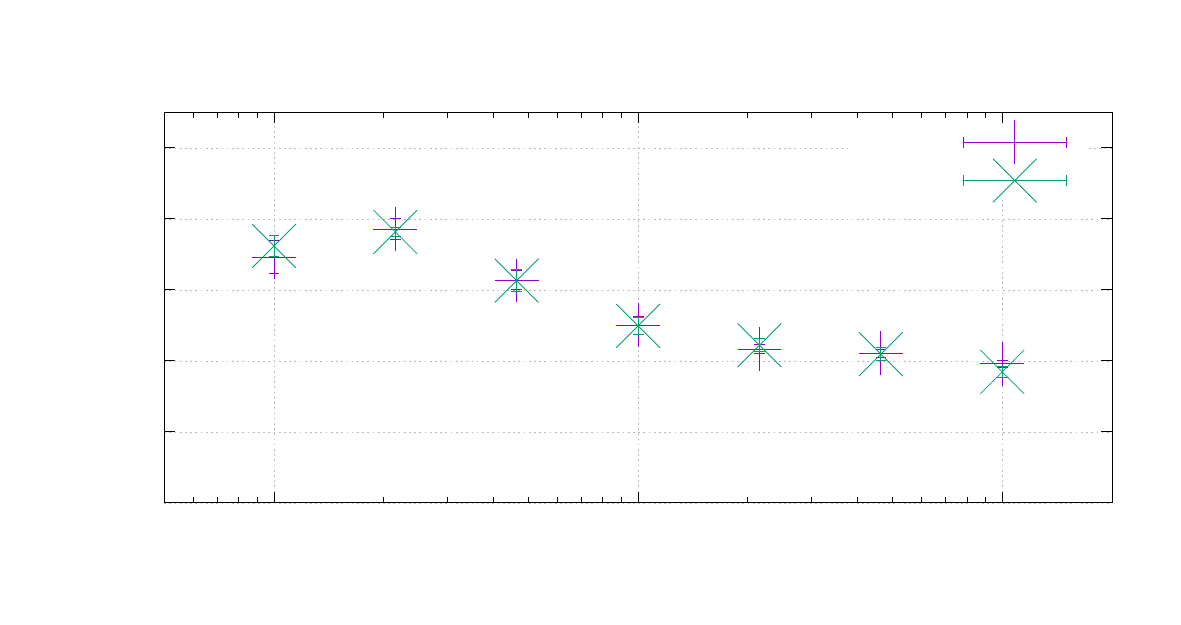}}%
    \gplfronttext
  \end{picture}%
\endgroup
}\hfill
  \scalebox{0.6}{%
\begingroup
  \makeatletter
  \providecommand\color[2][]{%
    \GenericError{(gnuplot) \space\space\space\@spaces}{%
      Package color not loaded in conjunction with
      terminal option `colourtext'%
    }{See the gnuplot documentation for explanation.%
    }{Either use 'blacktext' in gnuplot or load the package
      color.sty in LaTeX.}%
    \renewcommand\color[2][]{}%
  }%
  \providecommand\includegraphics[2][]{%
    \GenericError{(gnuplot) \space\space\space\@spaces}{%
      Package graphicx or graphics not loaded%
    }{See the gnuplot documentation for explanation.%
    }{The gnuplot epslatex terminal needs graphicx.sty or graphics.sty.}%
    \renewcommand\includegraphics[2][]{}%
  }%
  \providecommand\rotatebox[2]{#2}%
  \@ifundefined{ifGPcolor}{%
    \newif\ifGPcolor
    \GPcolortrue
  }{}%
  \@ifundefined{ifGPblacktext}{%
    \newif\ifGPblacktext
    \GPblacktextfalse
  }{}%
  \let\gplgaddtomacro\g@addto@macro
  \gdef\gplbacktext{}%
  \gdef\gplfronttext{}%
  \makeatother
  \ifGPblacktext
    \def\colorrgb#1{}%
    \def\colorgray#1{}%
  \else
    \ifGPcolor
      \def\colorrgb#1{\color[rgb]{#1}}%
      \def\colorgray#1{\color[gray]{#1}}%
      \expandafter\def\csname LTw\endcsname{\color{white}}%
      \expandafter\def\csname LTb\endcsname{\color{black}}%
      \expandafter\def\csname LTa\endcsname{\color{black}}%
      \expandafter\def\csname LT0\endcsname{\color[rgb]{1,0,0}}%
      \expandafter\def\csname LT1\endcsname{\color[rgb]{0,1,0}}%
      \expandafter\def\csname LT2\endcsname{\color[rgb]{0,0,1}}%
      \expandafter\def\csname LT3\endcsname{\color[rgb]{1,0,1}}%
      \expandafter\def\csname LT4\endcsname{\color[rgb]{0,1,1}}%
      \expandafter\def\csname LT5\endcsname{\color[rgb]{1,1,0}}%
      \expandafter\def\csname LT6\endcsname{\color[rgb]{0,0,0}}%
      \expandafter\def\csname LT7\endcsname{\color[rgb]{1,0.3,0}}%
      \expandafter\def\csname LT8\endcsname{\color[rgb]{0.5,0.5,0.5}}%
    \else
      \def\colorrgb#1{\color{black}}%
      \def\colorgray#1{\color[gray]{#1}}%
      \expandafter\def\csname LTw\endcsname{\color{white}}%
      \expandafter\def\csname LTb\endcsname{\color{black}}%
      \expandafter\def\csname LTa\endcsname{\color{black}}%
      \expandafter\def\csname LT0\endcsname{\color{black}}%
      \expandafter\def\csname LT1\endcsname{\color{black}}%
      \expandafter\def\csname LT2\endcsname{\color{black}}%
      \expandafter\def\csname LT3\endcsname{\color{black}}%
      \expandafter\def\csname LT4\endcsname{\color{black}}%
      \expandafter\def\csname LT5\endcsname{\color{black}}%
      \expandafter\def\csname LT6\endcsname{\color{black}}%
      \expandafter\def\csname LT7\endcsname{\color{black}}%
      \expandafter\def\csname LT8\endcsname{\color{black}}%
    \fi
  \fi
    \setlength{\unitlength}{0.0500bp}%
    \ifx\gptboxheight\undefined%
      \newlength{\gptboxheight}%
      \newlength{\gptboxwidth}%
      \newsavebox{\gptboxtext}%
    \fi%
    \setlength{\fboxrule}{0.5pt}%
    \setlength{\fboxsep}{1pt}%
\begin{picture}(6802.00,3614.00)%
    \gplgaddtomacro\gplbacktext{%
      \csname LTb\endcsname%
      \put(814,704){\makebox(0,0)[r]{\strut{}$0$}}%
      \csname LTb\endcsname%
      \put(814,1113){\makebox(0,0)[r]{\strut{}$0.2$}}%
      \csname LTb\endcsname%
      \put(814,1522){\makebox(0,0)[r]{\strut{}$0.4$}}%
      \csname LTb\endcsname%
      \put(814,1931){\makebox(0,0)[r]{\strut{}$0.6$}}%
      \csname LTb\endcsname%
      \put(814,2340){\makebox(0,0)[r]{\strut{}$0.8$}}%
      \csname LTb\endcsname%
      \put(814,2749){\makebox(0,0)[r]{\strut{}$1$}}%
      \csname LTb\endcsname%
      \put(1578,484){\makebox(0,0){\strut{}$10^{3}$}}%
      \csname LTb\endcsname%
      \put(3676,484){\makebox(0,0){\strut{}$10^{4}$}}%
      \csname LTb\endcsname%
      \put(5773,484){\makebox(0,0){\strut{}$10^{5}$}}%
    }%
    \gplgaddtomacro\gplfronttext{%
      \csname LTb\endcsname%
      \put(176,1828){\rotatebox{-270}{\makebox(0,0){\strut{}NMI}}}%
      \put(3675,154){\makebox(0,0){\strut{}Number $n$ of nodes}}%
      \put(3675,3283){\makebox(0,0){\strut{}Mixing: $\mu = 0.4$, Cluster: OSLOM, Overlap: $\nu = 4$}}%
      \csname LTb\endcsname%
      \put(5418,2780){\makebox(0,0)[r]{\strut{}Orig}}%
      \csname LTb\endcsname%
      \put(5418,2560){\makebox(0,0)[r]{\strut{}EM}}%
    }%
    \gplbacktext
    \put(0,0){\includegraphics{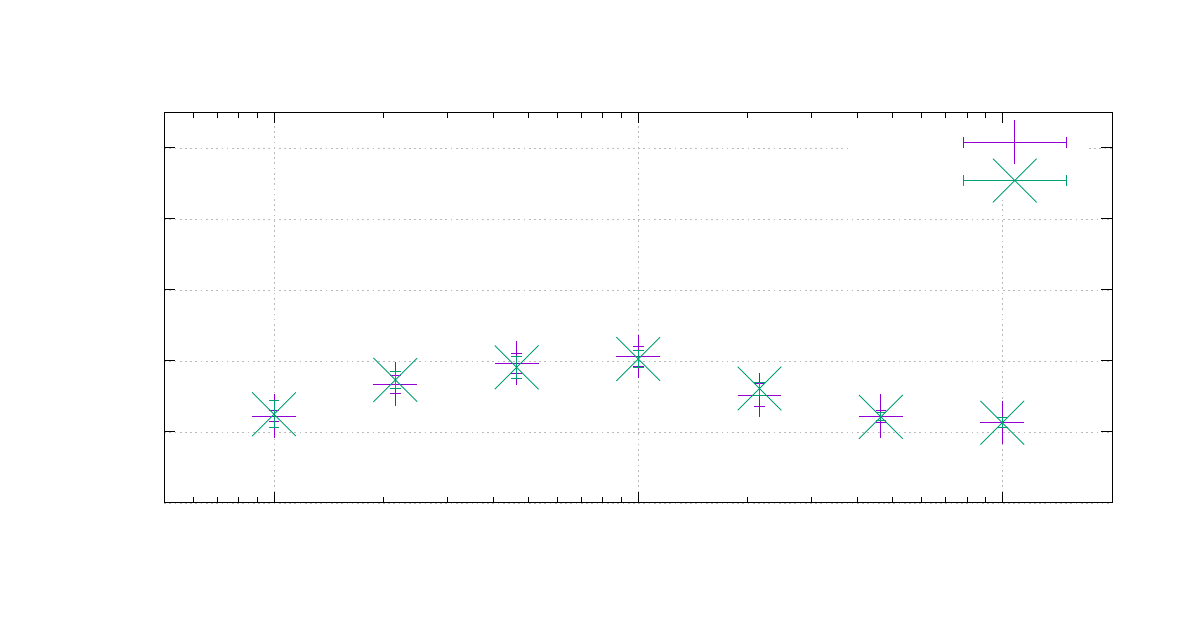}}%
    \gplfronttext
  \end{picture}%
\endgroup
}

  \caption{NMI of OSLOM and ground truth at $\mu = 0.4$ with $2$/$4$ overlapping clusters per node.}
  \label{fig:quality_lfr_overlapping}
\end{figure}

\begin{figure}[t]
	\begin{center}
		\scalebox{0.6}{%
\begingroup
  \makeatletter
  \providecommand\color[2][]{%
    \GenericError{(gnuplot) \space\space\space\@spaces}{%
      Package color not loaded in conjunction with
      terminal option `colourtext'%
    }{See the gnuplot documentation for explanation.%
    }{Either use 'blacktext' in gnuplot or load the package
      color.sty in LaTeX.}%
    \renewcommand\color[2][]{}%
  }%
  \providecommand\includegraphics[2][]{%
    \GenericError{(gnuplot) \space\space\space\@spaces}{%
      Package graphicx or graphics not loaded%
    }{See the gnuplot documentation for explanation.%
    }{The gnuplot epslatex terminal needs graphicx.sty or graphics.sty.}%
    \renewcommand\includegraphics[2][]{}%
  }%
  \providecommand\rotatebox[2]{#2}%
  \@ifundefined{ifGPcolor}{%
    \newif\ifGPcolor
    \GPcolortrue
  }{}%
  \@ifundefined{ifGPblacktext}{%
    \newif\ifGPblacktext
    \GPblacktextfalse
  }{}%
  \let\gplgaddtomacro\g@addto@macro
  \gdef\gplbacktext{}%
  \gdef\gplfronttext{}%
  \makeatother
  \ifGPblacktext
    \def\colorrgb#1{}%
    \def\colorgray#1{}%
  \else
    \ifGPcolor
      \def\colorrgb#1{\color[rgb]{#1}}%
      \def\colorgray#1{\color[gray]{#1}}%
      \expandafter\def\csname LTw\endcsname{\color{white}}%
      \expandafter\def\csname LTb\endcsname{\color{black}}%
      \expandafter\def\csname LTa\endcsname{\color{black}}%
      \expandafter\def\csname LT0\endcsname{\color[rgb]{1,0,0}}%
      \expandafter\def\csname LT1\endcsname{\color[rgb]{0,1,0}}%
      \expandafter\def\csname LT2\endcsname{\color[rgb]{0,0,1}}%
      \expandafter\def\csname LT3\endcsname{\color[rgb]{1,0,1}}%
      \expandafter\def\csname LT4\endcsname{\color[rgb]{0,1,1}}%
      \expandafter\def\csname LT5\endcsname{\color[rgb]{1,1,0}}%
      \expandafter\def\csname LT6\endcsname{\color[rgb]{0,0,0}}%
      \expandafter\def\csname LT7\endcsname{\color[rgb]{1,0.3,0}}%
      \expandafter\def\csname LT8\endcsname{\color[rgb]{0.5,0.5,0.5}}%
    \else
      \def\colorrgb#1{\color{black}}%
      \def\colorgray#1{\color[gray]{#1}}%
      \expandafter\def\csname LTw\endcsname{\color{white}}%
      \expandafter\def\csname LTb\endcsname{\color{black}}%
      \expandafter\def\csname LTa\endcsname{\color{black}}%
      \expandafter\def\csname LT0\endcsname{\color{black}}%
      \expandafter\def\csname LT1\endcsname{\color{black}}%
      \expandafter\def\csname LT2\endcsname{\color{black}}%
      \expandafter\def\csname LT3\endcsname{\color{black}}%
      \expandafter\def\csname LT4\endcsname{\color{black}}%
      \expandafter\def\csname LT5\endcsname{\color{black}}%
      \expandafter\def\csname LT6\endcsname{\color{black}}%
      \expandafter\def\csname LT7\endcsname{\color{black}}%
      \expandafter\def\csname LT8\endcsname{\color{black}}%
    \fi
  \fi
    \setlength{\unitlength}{0.0500bp}%
    \ifx\gptboxheight\undefined%
      \newlength{\gptboxheight}%
      \newlength{\gptboxwidth}%
      \newsavebox{\gptboxtext}%
    \fi%
    \setlength{\fboxrule}{0.5pt}%
    \setlength{\fboxsep}{1pt}%
\begin{picture}(6802.00,3614.00)%
    \gplgaddtomacro\gplbacktext{%
      \csname LTb\endcsname%
      \put(814,704){\makebox(0,0)[r]{\strut{}$0$}}%
      \csname LTb\endcsname%
      \put(814,1113){\makebox(0,0)[r]{\strut{}$0.2$}}%
      \csname LTb\endcsname%
      \put(814,1522){\makebox(0,0)[r]{\strut{}$0.4$}}%
      \csname LTb\endcsname%
      \put(814,1931){\makebox(0,0)[r]{\strut{}$0.6$}}%
      \csname LTb\endcsname%
      \put(814,2340){\makebox(0,0)[r]{\strut{}$0.8$}}%
      \csname LTb\endcsname%
      \put(814,2749){\makebox(0,0)[r]{\strut{}$1$}}%
      \csname LTb\endcsname%
      \put(1234,484){\makebox(0,0){\strut{}$10^{3}$}}%
      \csname LTb\endcsname%
      \put(3096,484){\makebox(0,0){\strut{}$10^{4}$}}%
      \csname LTb\endcsname%
      \put(4957,484){\makebox(0,0){\strut{}$10^{5}$}}%
    }%
    \gplgaddtomacro\gplfronttext{%
      \csname LTb\endcsname%
      \put(176,1828){\rotatebox{-270}{\makebox(0,0){\strut{}Avg. Local Clustering Coeff.}}}%
      \put(3675,154){\makebox(0,0){\strut{}Number $n$ of nodes}}%
      \put(3675,3283){\makebox(0,0){\strut{}Mixing: $\mu = 0.6$}}%
      \csname LTb\endcsname%
      \put(5418,2780){\makebox(0,0)[r]{\strut{}Orig}}%
      \csname LTb\endcsname%
      \put(5418,2560){\makebox(0,0)[r]{\strut{}NetworKit}}%
      \csname LTb\endcsname%
      \put(5418,2340){\makebox(0,0)[r]{\strut{}EM}}%
    }%
    \gplbacktext
    \put(0,0){\includegraphics{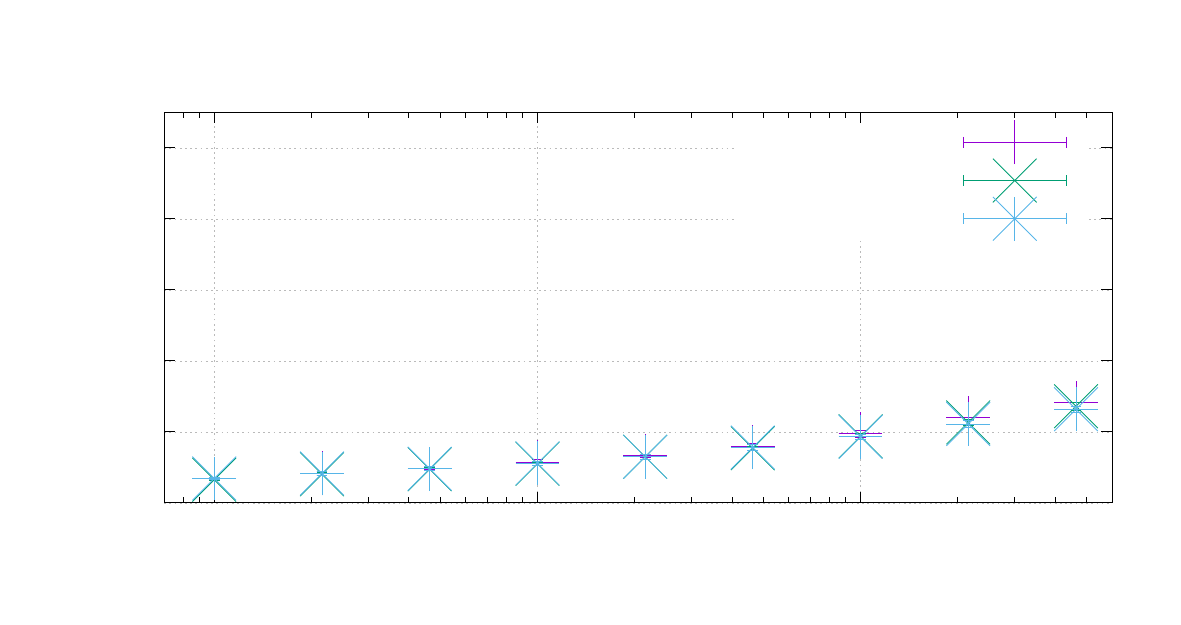}}%
    \gplfronttext
  \end{picture}%
\endgroup
}	
	\end{center}
	
	\caption{Average local clustering coefficient at $\mu = 0.6$ with disjoint clusters.}
	\label{fig:quality_lfr_avgcc}
\end{figure}

Further, we examine the average local clustering coefficient, a measure for the percentage of closed triangles which shows the presence of locally denser areas as expected in communities \cite{k-m-08}.
We report these measures for graphs ranging from $10^3$ to $10^6$ nodes.
In Fig.~\ref{fig:quality_lfr_disjoint}, \ref{fig:quality_lfr_avgcc} and \ref{fig:quality_lfr_overlapping} we present a selection of results; all of them can be found in Appendix~\ref{sec:appendix-lfr-comparison}.
There are only small differences within the range of random noise between the graphs generated by \emlfr{} and the other two implementations.
Note that due to the computational costs above $10^5$ edges, there is only one sample for the original implementation which explains the outliers in Fig.~\ref{fig:quality_lfr_disjoint}.
Similar to the results in \cite{ekgb-ancac-16}, we also observe that the performance of clustering algorithms drops significantly as the graph's size grows.
This might be due to less clearly defined community structures since the parameters are scaled, and also due to limits of current clustering algorithms.
Such behavior clearly demonstrates the necessity of \emlfr{} for being able to study this phenomenon on even larger graphs and develop algorithms that are able to handle such instances.

\section{Outlook and Conclusion}
We propose the first I/O-efficient graph generator for the LFR benchmark and the FDSM, which is the most challenging step involved:
\emhh{} materializes a graph based on a prescribed degree distribution without I/O for virtually all realistic parameters.
Including the generation of a powerlaw degree sequence and the writing of the output to disk, our implementation generates $\num{1.8e8}$ edges per second for graphs exceeding main memory.
\emes{} perturbs existing graphs with $m$ edges based on $k$ edge switches using $\Oh(k/m\cdot \sort(m))$ I/Os for $k = \Omega(m)$.

We demonstrate that \emes{} is faster than the internal memory implementation~\cite{DBLP:journals/corr/abs-cs-0502085} even for large instances still fitting in main memory and scales well beyond the limited main memory.
Compared to the distributed approach by~\cite{DBLP:conf/icpp/BhuiyanCKM14} on a cluster with 128~CPUs, \emes{} exhibits a slow-down of only $8.3$ on one CPU and hence poses a viable and cost-efficient alternative.
Our \emlfr{} implementation is orders of magnitude faster than the original LFR implementation for large instances and scales well to graphs exceeding main memory while the generated graphs are equivalent.
We further gave evidence indicating that commonly accepted parameters to derive the length of the Edge Switching Markov Chain remain valid for graph sizes approaching the external memory domain and that \emcmes{} can be used to accelerate the process. 

Currently, \emes{} does not yet fully exploit the parallelism offered by modern machines.
While dependencies between swaps make a parallelization challenging, preliminary experiments indicate that an extension is possible:
each run can be split further into smaller batches which can be parallelized in the spirit of \cite{DBLP:conf/esa/MeyerS98}.

Another possibility for a speedup could be to use the recently proposed Curveball sampling algorithm for graphs with fixed degree sequence~\cite{Carstens2016}.
Further studies are necessary to establish whether this really leads to a faster sampling in practice.
While the underlying Markov chain seems to require less steps to converge in practice, each step is more expensive.
Also, a combination of Curveball with \emcmes{} seems possible.

Further, this is just the starting point for clustering large graphs that exceed main memory using external memory.
Not only are new clustering algorithms needed, but also the evaluation of the results needs to be investigated since existing evaluation measures might not be easily computable in external memory.

\section*{Acknowledgment}
We thank Hannes Seiwert and Mark Ortmann for valuable discussions on~\emhh.

\clearpage
\bibliographystyle{plainurl}

\appendix
\newpage

\section{Summary of Definitions}
{\newcommand{\mrefto}[1]{ (section~\ref{#1})}
\begin{table}[h]
	\centering
	\begin{tabular}{|p{0.15\columnwidth}|p{0.7\columnwidth}|}\hline
		\emph{Symbol} & \emph{Description} \\\hline
		$[k]$ & $[k] \defrel \{1, \ldots, k\}$ for $k \in \mathbb N_+$\mrefto{sec:notation}\\\hline
		$[u, v]$ & Undirected simple edge with implication $u < v$\mrefto{sec:notation}\\\hline
		$B$ & Number of items in a block transferred between IM and EM\mrefto{ssec:emm}\\\hline
		$d_\text{min}$, $d_\text{max}$ & Min/max degree of nodes in LFR benchmark\mrefto{sec:lfr}\\\hline
		$d^\text{in}_v$ & $d^\text{in}_v = (1{-}\mu) \cdot d_v$, intra-community degree of node $v$\mrefto{sec:lfr}\\\hline
		$\degs$ & $\degs = (d_1, \ldots, d_n)$ with $d_{i} \le d_{i+1} \forall i$. Degree sequence of a graph\mrefto{sec:mat-degree-sequence}\\\hline
		$D_\degs$ & $D_\degs = \big| \{d_i: 1\le i \le n\} \big|$ where $\degs = (d_1, \ldots, d_n)$, degree support\mrefto{sec:mat-degree-sequence}\\\hline		
		$n$ & Number of vertices in a graph\mrefto{sec:notation}\\\hline
		$m$ & Number of edges in a graph\mrefto{sec:notation}\\\hline
		$\mu$ & Mixing parameter in LFR benchmark, i.e. ratio of neighbors that shall be in other communities\mrefto{sec:lfr}\\\hline		
		$M$ & Number of items fitting into internal memory\mrefto{ssec:emm}\\\hline
		$\pld ab\gamma$ & Powerlaw distribution with exponent $-\gamma$ on the interval $[a, b)$\mrefto{sec:notation}\\\hline
		$s_\text{min}$, $s_\text{max}$ & Min/max size of communities in LFR benchmark\mrefto{sec:lfr}\\\hline
		$\scan(n)$ & $\scan(n)=\Theta(n/B)$ I/Os, scan complexity\mrefto{ssec:emm}\\\hline
		$\sort(n)$ & $\sort(n)=\Theta((n/B) \cdot \log_{M/B}(n/B))$ I/Os, sort complexity\mrefto{ssec:emm}\\\hline
	\end{tabular}
	
	\caption{Definitions used in this paper.}
	\label{table:def-summary}
\end{table}
}
 \clearpage
\onecolumn

\section{Comparing LFR Implementations}\label{sec:appendix-lfr-comparison}
\def\threescale{0.36}
	
{
	\noindent\scalebox{\threescale}{%
\begingroup
  \makeatletter
  \providecommand\color[2][]{%
    \GenericError{(gnuplot) \space\space\space\@spaces}{%
      Package color not loaded in conjunction with
      terminal option `colourtext'%
    }{See the gnuplot documentation for explanation.%
    }{Either use 'blacktext' in gnuplot or load the package
      color.sty in LaTeX.}%
    \renewcommand\color[2][]{}%
  }%
  \providecommand\includegraphics[2][]{%
    \GenericError{(gnuplot) \space\space\space\@spaces}{%
      Package graphicx or graphics not loaded%
    }{See the gnuplot documentation for explanation.%
    }{The gnuplot epslatex terminal needs graphicx.sty or graphics.sty.}%
    \renewcommand\includegraphics[2][]{}%
  }%
  \providecommand\rotatebox[2]{#2}%
  \@ifundefined{ifGPcolor}{%
    \newif\ifGPcolor
    \GPcolortrue
  }{}%
  \@ifundefined{ifGPblacktext}{%
    \newif\ifGPblacktext
    \GPblacktextfalse
  }{}%
  \let\gplgaddtomacro\g@addto@macro
  \gdef\gplbacktext{}%
  \gdef\gplfronttext{}%
  \makeatother
  \ifGPblacktext
    \def\colorrgb#1{}%
    \def\colorgray#1{}%
  \else
    \ifGPcolor
      \def\colorrgb#1{\color[rgb]{#1}}%
      \def\colorgray#1{\color[gray]{#1}}%
      \expandafter\def\csname LTw\endcsname{\color{white}}%
      \expandafter\def\csname LTb\endcsname{\color{black}}%
      \expandafter\def\csname LTa\endcsname{\color{black}}%
      \expandafter\def\csname LT0\endcsname{\color[rgb]{1,0,0}}%
      \expandafter\def\csname LT1\endcsname{\color[rgb]{0,1,0}}%
      \expandafter\def\csname LT2\endcsname{\color[rgb]{0,0,1}}%
      \expandafter\def\csname LT3\endcsname{\color[rgb]{1,0,1}}%
      \expandafter\def\csname LT4\endcsname{\color[rgb]{0,1,1}}%
      \expandafter\def\csname LT5\endcsname{\color[rgb]{1,1,0}}%
      \expandafter\def\csname LT6\endcsname{\color[rgb]{0,0,0}}%
      \expandafter\def\csname LT7\endcsname{\color[rgb]{1,0.3,0}}%
      \expandafter\def\csname LT8\endcsname{\color[rgb]{0.5,0.5,0.5}}%
    \else
      \def\colorrgb#1{\color{black}}%
      \def\colorgray#1{\color[gray]{#1}}%
      \expandafter\def\csname LTw\endcsname{\color{white}}%
      \expandafter\def\csname LTb\endcsname{\color{black}}%
      \expandafter\def\csname LTa\endcsname{\color{black}}%
      \expandafter\def\csname LT0\endcsname{\color{black}}%
      \expandafter\def\csname LT1\endcsname{\color{black}}%
      \expandafter\def\csname LT2\endcsname{\color{black}}%
      \expandafter\def\csname LT3\endcsname{\color{black}}%
      \expandafter\def\csname LT4\endcsname{\color{black}}%
      \expandafter\def\csname LT5\endcsname{\color{black}}%
      \expandafter\def\csname LT6\endcsname{\color{black}}%
      \expandafter\def\csname LT7\endcsname{\color{black}}%
      \expandafter\def\csname LT8\endcsname{\color{black}}%
    \fi
  \fi
    \setlength{\unitlength}{0.0500bp}%
    \ifx\gptboxheight\undefined%
      \newlength{\gptboxheight}%
      \newlength{\gptboxwidth}%
      \newsavebox{\gptboxtext}%
    \fi%
    \setlength{\fboxrule}{0.5pt}%
    \setlength{\fboxsep}{1pt}%
\begin{picture}(6802.00,3614.00)%
    \gplgaddtomacro\gplbacktext{%
      \csname LTb\endcsname%
      \put(814,704){\makebox(0,0)[r]{\strut{}$0$}}%
      \csname LTb\endcsname%
      \put(814,1113){\makebox(0,0)[r]{\strut{}$0.2$}}%
      \csname LTb\endcsname%
      \put(814,1522){\makebox(0,0)[r]{\strut{}$0.4$}}%
      \csname LTb\endcsname%
      \put(814,1931){\makebox(0,0)[r]{\strut{}$0.6$}}%
      \csname LTb\endcsname%
      \put(814,2340){\makebox(0,0)[r]{\strut{}$0.8$}}%
      \csname LTb\endcsname%
      \put(814,2749){\makebox(0,0)[r]{\strut{}$1$}}%
      \csname LTb\endcsname%
      \put(1234,484){\makebox(0,0){\strut{}$10^{3}$}}%
      \csname LTb\endcsname%
      \put(3096,484){\makebox(0,0){\strut{}$10^{4}$}}%
      \csname LTb\endcsname%
      \put(4957,484){\makebox(0,0){\strut{}$10^{5}$}}%
    }%
    \gplgaddtomacro\gplfronttext{%
      \csname LTb\endcsname%
      \put(176,1828){\rotatebox{-270}{\makebox(0,0){\strut{}AR}}}%
      \put(3675,154){\makebox(0,0){\strut{}Number $n$ of nodes}}%
      \put(3675,3283){\makebox(0,0){\strut{}Mixing $\mu = 0.2$, Cluster: Infomap}}%
      \csname LTb\endcsname%
      \put(2266,1317){\makebox(0,0)[r]{\strut{}Orig}}%
      \csname LTb\endcsname%
      \put(2266,1097){\makebox(0,0)[r]{\strut{}NetworKit}}%
      \csname LTb\endcsname%
      \put(2266,877){\makebox(0,0)[r]{\strut{}EM}}%
    }%
    \gplbacktext
    \put(0,0){\includegraphics{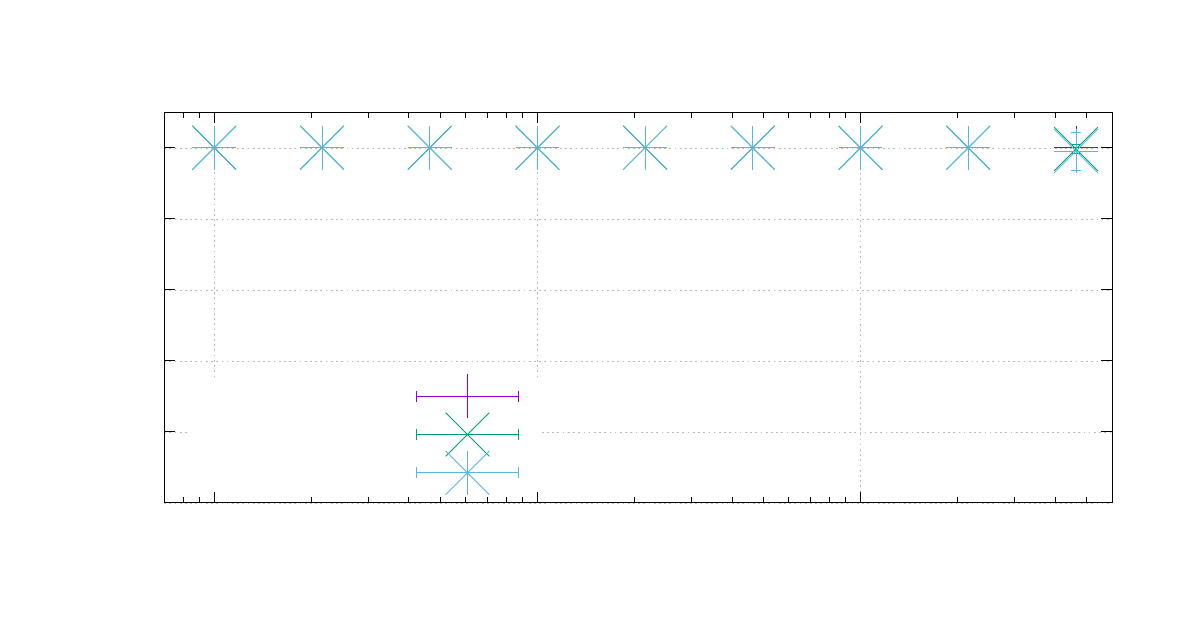}}%
    \gplfronttext
  \end{picture}%
\endgroup
}\hfill\scalebox{\threescale}{%
\begingroup
  \makeatletter
  \providecommand\color[2][]{%
    \GenericError{(gnuplot) \space\space\space\@spaces}{%
      Package color not loaded in conjunction with
      terminal option `colourtext'%
    }{See the gnuplot documentation for explanation.%
    }{Either use 'blacktext' in gnuplot or load the package
      color.sty in LaTeX.}%
    \renewcommand\color[2][]{}%
  }%
  \providecommand\includegraphics[2][]{%
    \GenericError{(gnuplot) \space\space\space\@spaces}{%
      Package graphicx or graphics not loaded%
    }{See the gnuplot documentation for explanation.%
    }{The gnuplot epslatex terminal needs graphicx.sty or graphics.sty.}%
    \renewcommand\includegraphics[2][]{}%
  }%
  \providecommand\rotatebox[2]{#2}%
  \@ifundefined{ifGPcolor}{%
    \newif\ifGPcolor
    \GPcolortrue
  }{}%
  \@ifundefined{ifGPblacktext}{%
    \newif\ifGPblacktext
    \GPblacktextfalse
  }{}%
  \let\gplgaddtomacro\g@addto@macro
  \gdef\gplbacktext{}%
  \gdef\gplfronttext{}%
  \makeatother
  \ifGPblacktext
    \def\colorrgb#1{}%
    \def\colorgray#1{}%
  \else
    \ifGPcolor
      \def\colorrgb#1{\color[rgb]{#1}}%
      \def\colorgray#1{\color[gray]{#1}}%
      \expandafter\def\csname LTw\endcsname{\color{white}}%
      \expandafter\def\csname LTb\endcsname{\color{black}}%
      \expandafter\def\csname LTa\endcsname{\color{black}}%
      \expandafter\def\csname LT0\endcsname{\color[rgb]{1,0,0}}%
      \expandafter\def\csname LT1\endcsname{\color[rgb]{0,1,0}}%
      \expandafter\def\csname LT2\endcsname{\color[rgb]{0,0,1}}%
      \expandafter\def\csname LT3\endcsname{\color[rgb]{1,0,1}}%
      \expandafter\def\csname LT4\endcsname{\color[rgb]{0,1,1}}%
      \expandafter\def\csname LT5\endcsname{\color[rgb]{1,1,0}}%
      \expandafter\def\csname LT6\endcsname{\color[rgb]{0,0,0}}%
      \expandafter\def\csname LT7\endcsname{\color[rgb]{1,0.3,0}}%
      \expandafter\def\csname LT8\endcsname{\color[rgb]{0.5,0.5,0.5}}%
    \else
      \def\colorrgb#1{\color{black}}%
      \def\colorgray#1{\color[gray]{#1}}%
      \expandafter\def\csname LTw\endcsname{\color{white}}%
      \expandafter\def\csname LTb\endcsname{\color{black}}%
      \expandafter\def\csname LTa\endcsname{\color{black}}%
      \expandafter\def\csname LT0\endcsname{\color{black}}%
      \expandafter\def\csname LT1\endcsname{\color{black}}%
      \expandafter\def\csname LT2\endcsname{\color{black}}%
      \expandafter\def\csname LT3\endcsname{\color{black}}%
      \expandafter\def\csname LT4\endcsname{\color{black}}%
      \expandafter\def\csname LT5\endcsname{\color{black}}%
      \expandafter\def\csname LT6\endcsname{\color{black}}%
      \expandafter\def\csname LT7\endcsname{\color{black}}%
      \expandafter\def\csname LT8\endcsname{\color{black}}%
    \fi
  \fi
    \setlength{\unitlength}{0.0500bp}%
    \ifx\gptboxheight\undefined%
      \newlength{\gptboxheight}%
      \newlength{\gptboxwidth}%
      \newsavebox{\gptboxtext}%
    \fi%
    \setlength{\fboxrule}{0.5pt}%
    \setlength{\fboxsep}{1pt}%
\begin{picture}(6802.00,3614.00)%
    \gplgaddtomacro\gplbacktext{%
      \csname LTb\endcsname%
      \put(814,704){\makebox(0,0)[r]{\strut{}$0$}}%
      \csname LTb\endcsname%
      \put(814,1113){\makebox(0,0)[r]{\strut{}$0.2$}}%
      \csname LTb\endcsname%
      \put(814,1522){\makebox(0,0)[r]{\strut{}$0.4$}}%
      \csname LTb\endcsname%
      \put(814,1931){\makebox(0,0)[r]{\strut{}$0.6$}}%
      \csname LTb\endcsname%
      \put(814,2340){\makebox(0,0)[r]{\strut{}$0.8$}}%
      \csname LTb\endcsname%
      \put(814,2749){\makebox(0,0)[r]{\strut{}$1$}}%
      \csname LTb\endcsname%
      \put(1234,484){\makebox(0,0){\strut{}$10^{3}$}}%
      \csname LTb\endcsname%
      \put(3096,484){\makebox(0,0){\strut{}$10^{4}$}}%
      \csname LTb\endcsname%
      \put(4957,484){\makebox(0,0){\strut{}$10^{5}$}}%
    }%
    \gplgaddtomacro\gplfronttext{%
      \csname LTb\endcsname%
      \put(176,1828){\rotatebox{-270}{\makebox(0,0){\strut{}AR}}}%
      \put(3675,154){\makebox(0,0){\strut{}Number $n$ of nodes}}%
      \put(3675,3283){\makebox(0,0){\strut{}Mixing $\mu = 0.4$, Cluster: Infomap}}%
      \csname LTb\endcsname%
      \put(2266,1317){\makebox(0,0)[r]{\strut{}Orig}}%
      \csname LTb\endcsname%
      \put(2266,1097){\makebox(0,0)[r]{\strut{}NetworKit}}%
      \csname LTb\endcsname%
      \put(2266,877){\makebox(0,0)[r]{\strut{}EM}}%
    }%
    \gplbacktext
    \put(0,0){\includegraphics{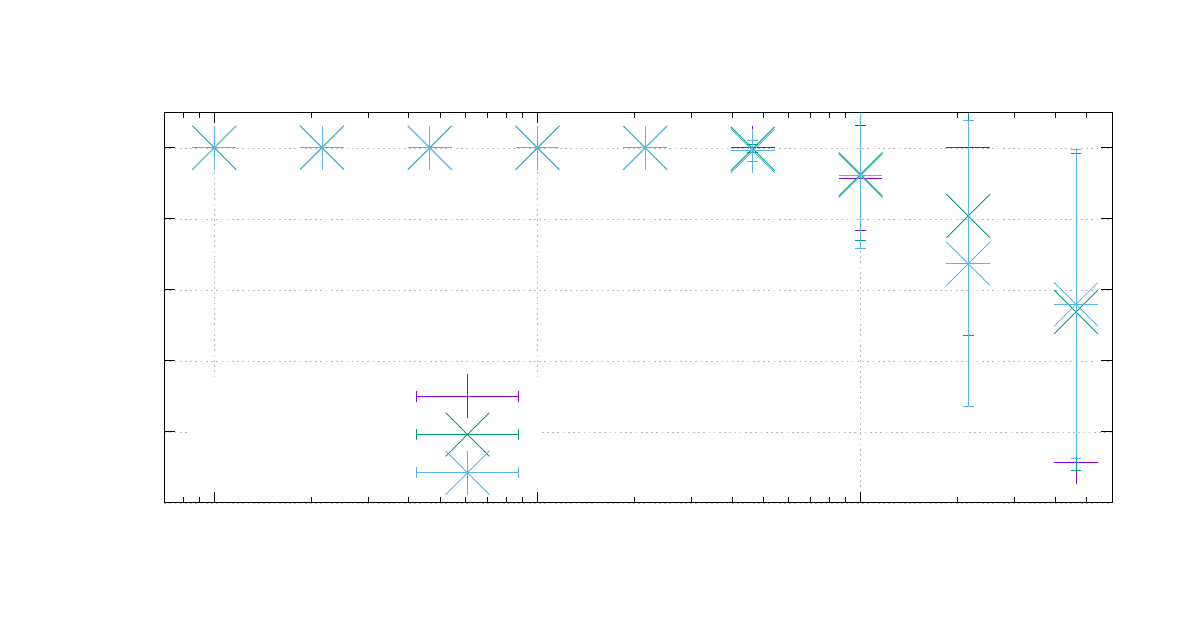}}%
    \gplfronttext
  \end{picture}%
\endgroup
}\hfill\scalebox{\threescale}{%
\begingroup
  \makeatletter
  \providecommand\color[2][]{%
    \GenericError{(gnuplot) \space\space\space\@spaces}{%
      Package color not loaded in conjunction with
      terminal option `colourtext'%
    }{See the gnuplot documentation for explanation.%
    }{Either use 'blacktext' in gnuplot or load the package
      color.sty in LaTeX.}%
    \renewcommand\color[2][]{}%
  }%
  \providecommand\includegraphics[2][]{%
    \GenericError{(gnuplot) \space\space\space\@spaces}{%
      Package graphicx or graphics not loaded%
    }{See the gnuplot documentation for explanation.%
    }{The gnuplot epslatex terminal needs graphicx.sty or graphics.sty.}%
    \renewcommand\includegraphics[2][]{}%
  }%
  \providecommand\rotatebox[2]{#2}%
  \@ifundefined{ifGPcolor}{%
    \newif\ifGPcolor
    \GPcolortrue
  }{}%
  \@ifundefined{ifGPblacktext}{%
    \newif\ifGPblacktext
    \GPblacktextfalse
  }{}%
  \let\gplgaddtomacro\g@addto@macro
  \gdef\gplbacktext{}%
  \gdef\gplfronttext{}%
  \makeatother
  \ifGPblacktext
    \def\colorrgb#1{}%
    \def\colorgray#1{}%
  \else
    \ifGPcolor
      \def\colorrgb#1{\color[rgb]{#1}}%
      \def\colorgray#1{\color[gray]{#1}}%
      \expandafter\def\csname LTw\endcsname{\color{white}}%
      \expandafter\def\csname LTb\endcsname{\color{black}}%
      \expandafter\def\csname LTa\endcsname{\color{black}}%
      \expandafter\def\csname LT0\endcsname{\color[rgb]{1,0,0}}%
      \expandafter\def\csname LT1\endcsname{\color[rgb]{0,1,0}}%
      \expandafter\def\csname LT2\endcsname{\color[rgb]{0,0,1}}%
      \expandafter\def\csname LT3\endcsname{\color[rgb]{1,0,1}}%
      \expandafter\def\csname LT4\endcsname{\color[rgb]{0,1,1}}%
      \expandafter\def\csname LT5\endcsname{\color[rgb]{1,1,0}}%
      \expandafter\def\csname LT6\endcsname{\color[rgb]{0,0,0}}%
      \expandafter\def\csname LT7\endcsname{\color[rgb]{1,0.3,0}}%
      \expandafter\def\csname LT8\endcsname{\color[rgb]{0.5,0.5,0.5}}%
    \else
      \def\colorrgb#1{\color{black}}%
      \def\colorgray#1{\color[gray]{#1}}%
      \expandafter\def\csname LTw\endcsname{\color{white}}%
      \expandafter\def\csname LTb\endcsname{\color{black}}%
      \expandafter\def\csname LTa\endcsname{\color{black}}%
      \expandafter\def\csname LT0\endcsname{\color{black}}%
      \expandafter\def\csname LT1\endcsname{\color{black}}%
      \expandafter\def\csname LT2\endcsname{\color{black}}%
      \expandafter\def\csname LT3\endcsname{\color{black}}%
      \expandafter\def\csname LT4\endcsname{\color{black}}%
      \expandafter\def\csname LT5\endcsname{\color{black}}%
      \expandafter\def\csname LT6\endcsname{\color{black}}%
      \expandafter\def\csname LT7\endcsname{\color{black}}%
      \expandafter\def\csname LT8\endcsname{\color{black}}%
    \fi
  \fi
    \setlength{\unitlength}{0.0500bp}%
    \ifx\gptboxheight\undefined%
      \newlength{\gptboxheight}%
      \newlength{\gptboxwidth}%
      \newsavebox{\gptboxtext}%
    \fi%
    \setlength{\fboxrule}{0.5pt}%
    \setlength{\fboxsep}{1pt}%
\begin{picture}(6802.00,3614.00)%
    \gplgaddtomacro\gplbacktext{%
      \csname LTb\endcsname%
      \put(814,704){\makebox(0,0)[r]{\strut{}$0$}}%
      \csname LTb\endcsname%
      \put(814,1113){\makebox(0,0)[r]{\strut{}$0.2$}}%
      \csname LTb\endcsname%
      \put(814,1522){\makebox(0,0)[r]{\strut{}$0.4$}}%
      \csname LTb\endcsname%
      \put(814,1931){\makebox(0,0)[r]{\strut{}$0.6$}}%
      \csname LTb\endcsname%
      \put(814,2340){\makebox(0,0)[r]{\strut{}$0.8$}}%
      \csname LTb\endcsname%
      \put(814,2749){\makebox(0,0)[r]{\strut{}$1$}}%
      \csname LTb\endcsname%
      \put(1234,484){\makebox(0,0){\strut{}$10^{3}$}}%
      \csname LTb\endcsname%
      \put(3096,484){\makebox(0,0){\strut{}$10^{4}$}}%
      \csname LTb\endcsname%
      \put(4957,484){\makebox(0,0){\strut{}$10^{5}$}}%
    }%
    \gplgaddtomacro\gplfronttext{%
      \csname LTb\endcsname%
      \put(176,1828){\rotatebox{-270}{\makebox(0,0){\strut{}AR}}}%
      \put(3675,154){\makebox(0,0){\strut{}Number $n$ of nodes}}%
      \put(3675,3283){\makebox(0,0){\strut{}Mixing $\mu = 0.6$, Cluster: Infomap}}%
      \csname LTb\endcsname%
      \put(2266,1317){\makebox(0,0)[r]{\strut{}Orig}}%
      \csname LTb\endcsname%
      \put(2266,1097){\makebox(0,0)[r]{\strut{}NetworKit}}%
      \csname LTb\endcsname%
      \put(2266,877){\makebox(0,0)[r]{\strut{}EM}}%
    }%
    \gplbacktext
    \put(0,0){\includegraphics{lfr_no_Infomap_AR_6}}%
    \gplfronttext
  \end{picture}%
\endgroup
}\hfill\\ %
	\noindent\scalebox{\threescale}{%
\begingroup
  \makeatletter
  \providecommand\color[2][]{%
    \GenericError{(gnuplot) \space\space\space\@spaces}{%
      Package color not loaded in conjunction with
      terminal option `colourtext'%
    }{See the gnuplot documentation for explanation.%
    }{Either use 'blacktext' in gnuplot or load the package
      color.sty in LaTeX.}%
    \renewcommand\color[2][]{}%
  }%
  \providecommand\includegraphics[2][]{%
    \GenericError{(gnuplot) \space\space\space\@spaces}{%
      Package graphicx or graphics not loaded%
    }{See the gnuplot documentation for explanation.%
    }{The gnuplot epslatex terminal needs graphicx.sty or graphics.sty.}%
    \renewcommand\includegraphics[2][]{}%
  }%
  \providecommand\rotatebox[2]{#2}%
  \@ifundefined{ifGPcolor}{%
    \newif\ifGPcolor
    \GPcolortrue
  }{}%
  \@ifundefined{ifGPblacktext}{%
    \newif\ifGPblacktext
    \GPblacktextfalse
  }{}%
  \let\gplgaddtomacro\g@addto@macro
  \gdef\gplbacktext{}%
  \gdef\gplfronttext{}%
  \makeatother
  \ifGPblacktext
    \def\colorrgb#1{}%
    \def\colorgray#1{}%
  \else
    \ifGPcolor
      \def\colorrgb#1{\color[rgb]{#1}}%
      \def\colorgray#1{\color[gray]{#1}}%
      \expandafter\def\csname LTw\endcsname{\color{white}}%
      \expandafter\def\csname LTb\endcsname{\color{black}}%
      \expandafter\def\csname LTa\endcsname{\color{black}}%
      \expandafter\def\csname LT0\endcsname{\color[rgb]{1,0,0}}%
      \expandafter\def\csname LT1\endcsname{\color[rgb]{0,1,0}}%
      \expandafter\def\csname LT2\endcsname{\color[rgb]{0,0,1}}%
      \expandafter\def\csname LT3\endcsname{\color[rgb]{1,0,1}}%
      \expandafter\def\csname LT4\endcsname{\color[rgb]{0,1,1}}%
      \expandafter\def\csname LT5\endcsname{\color[rgb]{1,1,0}}%
      \expandafter\def\csname LT6\endcsname{\color[rgb]{0,0,0}}%
      \expandafter\def\csname LT7\endcsname{\color[rgb]{1,0.3,0}}%
      \expandafter\def\csname LT8\endcsname{\color[rgb]{0.5,0.5,0.5}}%
    \else
      \def\colorrgb#1{\color{black}}%
      \def\colorgray#1{\color[gray]{#1}}%
      \expandafter\def\csname LTw\endcsname{\color{white}}%
      \expandafter\def\csname LTb\endcsname{\color{black}}%
      \expandafter\def\csname LTa\endcsname{\color{black}}%
      \expandafter\def\csname LT0\endcsname{\color{black}}%
      \expandafter\def\csname LT1\endcsname{\color{black}}%
      \expandafter\def\csname LT2\endcsname{\color{black}}%
      \expandafter\def\csname LT3\endcsname{\color{black}}%
      \expandafter\def\csname LT4\endcsname{\color{black}}%
      \expandafter\def\csname LT5\endcsname{\color{black}}%
      \expandafter\def\csname LT6\endcsname{\color{black}}%
      \expandafter\def\csname LT7\endcsname{\color{black}}%
      \expandafter\def\csname LT8\endcsname{\color{black}}%
    \fi
  \fi
    \setlength{\unitlength}{0.0500bp}%
    \ifx\gptboxheight\undefined%
      \newlength{\gptboxheight}%
      \newlength{\gptboxwidth}%
      \newsavebox{\gptboxtext}%
    \fi%
    \setlength{\fboxrule}{0.5pt}%
    \setlength{\fboxsep}{1pt}%
\begin{picture}(6802.00,3614.00)%
    \gplgaddtomacro\gplbacktext{%
      \csname LTb\endcsname%
      \put(814,704){\makebox(0,0)[r]{\strut{}$0$}}%
      \csname LTb\endcsname%
      \put(814,1113){\makebox(0,0)[r]{\strut{}$0.2$}}%
      \csname LTb\endcsname%
      \put(814,1522){\makebox(0,0)[r]{\strut{}$0.4$}}%
      \csname LTb\endcsname%
      \put(814,1931){\makebox(0,0)[r]{\strut{}$0.6$}}%
      \csname LTb\endcsname%
      \put(814,2340){\makebox(0,0)[r]{\strut{}$0.8$}}%
      \csname LTb\endcsname%
      \put(814,2749){\makebox(0,0)[r]{\strut{}$1$}}%
      \csname LTb\endcsname%
      \put(1234,484){\makebox(0,0){\strut{}$10^{3}$}}%
      \csname LTb\endcsname%
      \put(3096,484){\makebox(0,0){\strut{}$10^{4}$}}%
      \csname LTb\endcsname%
      \put(4957,484){\makebox(0,0){\strut{}$10^{5}$}}%
    }%
    \gplgaddtomacro\gplfronttext{%
      \csname LTb\endcsname%
      \put(176,1828){\rotatebox{-270}{\makebox(0,0){\strut{}NMI}}}%
      \put(3675,154){\makebox(0,0){\strut{}Number $n$ of nodes}}%
      \put(3675,3283){\makebox(0,0){\strut{}Mixing $\mu = 0.2$, Cluster: Infomap}}%
      \csname LTb\endcsname%
      \put(2266,1317){\makebox(0,0)[r]{\strut{}Orig}}%
      \csname LTb\endcsname%
      \put(2266,1097){\makebox(0,0)[r]{\strut{}NetworKit}}%
      \csname LTb\endcsname%
      \put(2266,877){\makebox(0,0)[r]{\strut{}EM}}%
    }%
    \gplbacktext
    \put(0,0){\includegraphics{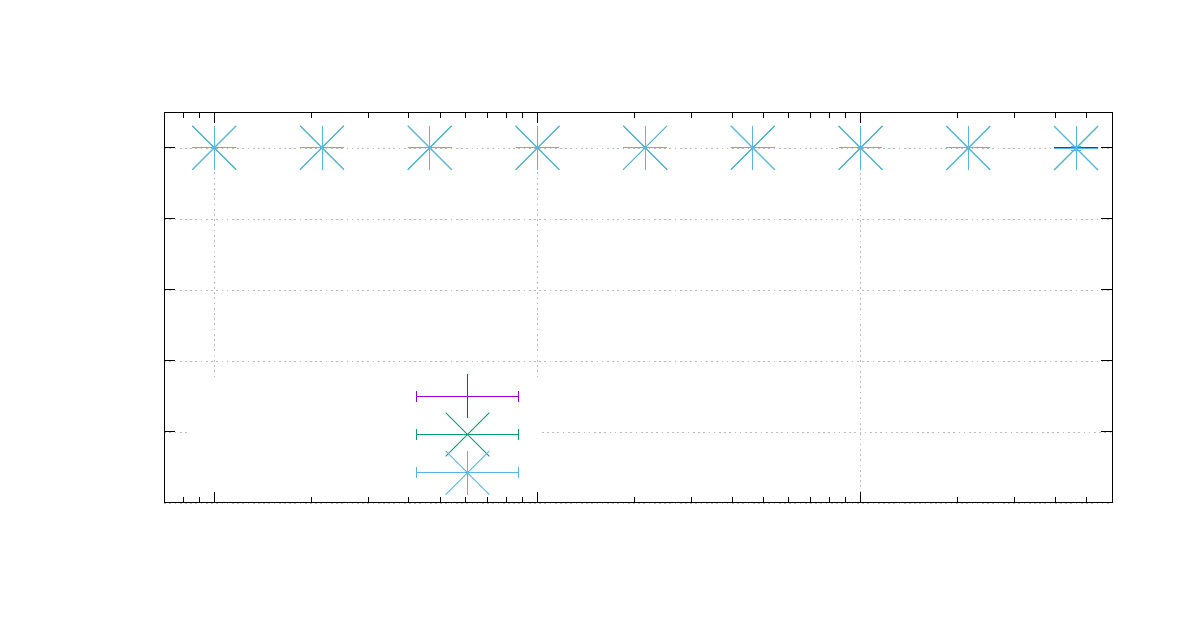}}%
    \gplfronttext
  \end{picture}%
\endgroup
}\hfill\scalebox{\threescale}{%
\begingroup
  \makeatletter
  \providecommand\color[2][]{%
    \GenericError{(gnuplot) \space\space\space\@spaces}{%
      Package color not loaded in conjunction with
      terminal option `colourtext'%
    }{See the gnuplot documentation for explanation.%
    }{Either use 'blacktext' in gnuplot or load the package
      color.sty in LaTeX.}%
    \renewcommand\color[2][]{}%
  }%
  \providecommand\includegraphics[2][]{%
    \GenericError{(gnuplot) \space\space\space\@spaces}{%
      Package graphicx or graphics not loaded%
    }{See the gnuplot documentation for explanation.%
    }{The gnuplot epslatex terminal needs graphicx.sty or graphics.sty.}%
    \renewcommand\includegraphics[2][]{}%
  }%
  \providecommand\rotatebox[2]{#2}%
  \@ifundefined{ifGPcolor}{%
    \newif\ifGPcolor
    \GPcolortrue
  }{}%
  \@ifundefined{ifGPblacktext}{%
    \newif\ifGPblacktext
    \GPblacktextfalse
  }{}%
  \let\gplgaddtomacro\g@addto@macro
  \gdef\gplbacktext{}%
  \gdef\gplfronttext{}%
  \makeatother
  \ifGPblacktext
    \def\colorrgb#1{}%
    \def\colorgray#1{}%
  \else
    \ifGPcolor
      \def\colorrgb#1{\color[rgb]{#1}}%
      \def\colorgray#1{\color[gray]{#1}}%
      \expandafter\def\csname LTw\endcsname{\color{white}}%
      \expandafter\def\csname LTb\endcsname{\color{black}}%
      \expandafter\def\csname LTa\endcsname{\color{black}}%
      \expandafter\def\csname LT0\endcsname{\color[rgb]{1,0,0}}%
      \expandafter\def\csname LT1\endcsname{\color[rgb]{0,1,0}}%
      \expandafter\def\csname LT2\endcsname{\color[rgb]{0,0,1}}%
      \expandafter\def\csname LT3\endcsname{\color[rgb]{1,0,1}}%
      \expandafter\def\csname LT4\endcsname{\color[rgb]{0,1,1}}%
      \expandafter\def\csname LT5\endcsname{\color[rgb]{1,1,0}}%
      \expandafter\def\csname LT6\endcsname{\color[rgb]{0,0,0}}%
      \expandafter\def\csname LT7\endcsname{\color[rgb]{1,0.3,0}}%
      \expandafter\def\csname LT8\endcsname{\color[rgb]{0.5,0.5,0.5}}%
    \else
      \def\colorrgb#1{\color{black}}%
      \def\colorgray#1{\color[gray]{#1}}%
      \expandafter\def\csname LTw\endcsname{\color{white}}%
      \expandafter\def\csname LTb\endcsname{\color{black}}%
      \expandafter\def\csname LTa\endcsname{\color{black}}%
      \expandafter\def\csname LT0\endcsname{\color{black}}%
      \expandafter\def\csname LT1\endcsname{\color{black}}%
      \expandafter\def\csname LT2\endcsname{\color{black}}%
      \expandafter\def\csname LT3\endcsname{\color{black}}%
      \expandafter\def\csname LT4\endcsname{\color{black}}%
      \expandafter\def\csname LT5\endcsname{\color{black}}%
      \expandafter\def\csname LT6\endcsname{\color{black}}%
      \expandafter\def\csname LT7\endcsname{\color{black}}%
      \expandafter\def\csname LT8\endcsname{\color{black}}%
    \fi
  \fi
    \setlength{\unitlength}{0.0500bp}%
    \ifx\gptboxheight\undefined%
      \newlength{\gptboxheight}%
      \newlength{\gptboxwidth}%
      \newsavebox{\gptboxtext}%
    \fi%
    \setlength{\fboxrule}{0.5pt}%
    \setlength{\fboxsep}{1pt}%
\begin{picture}(6802.00,3614.00)%
    \gplgaddtomacro\gplbacktext{%
      \csname LTb\endcsname%
      \put(814,704){\makebox(0,0)[r]{\strut{}$0$}}%
      \csname LTb\endcsname%
      \put(814,1113){\makebox(0,0)[r]{\strut{}$0.2$}}%
      \csname LTb\endcsname%
      \put(814,1522){\makebox(0,0)[r]{\strut{}$0.4$}}%
      \csname LTb\endcsname%
      \put(814,1931){\makebox(0,0)[r]{\strut{}$0.6$}}%
      \csname LTb\endcsname%
      \put(814,2340){\makebox(0,0)[r]{\strut{}$0.8$}}%
      \csname LTb\endcsname%
      \put(814,2749){\makebox(0,0)[r]{\strut{}$1$}}%
      \csname LTb\endcsname%
      \put(1234,484){\makebox(0,0){\strut{}$10^{3}$}}%
      \csname LTb\endcsname%
      \put(3096,484){\makebox(0,0){\strut{}$10^{4}$}}%
      \csname LTb\endcsname%
      \put(4957,484){\makebox(0,0){\strut{}$10^{5}$}}%
    }%
    \gplgaddtomacro\gplfronttext{%
      \csname LTb\endcsname%
      \put(176,1828){\rotatebox{-270}{\makebox(0,0){\strut{}NMI}}}%
      \put(3675,154){\makebox(0,0){\strut{}Number $n$ of nodes}}%
      \put(3675,3283){\makebox(0,0){\strut{}Mixing $\mu = 0.4$, Cluster: Infomap}}%
      \csname LTb\endcsname%
      \put(2266,1317){\makebox(0,0)[r]{\strut{}Orig}}%
      \csname LTb\endcsname%
      \put(2266,1097){\makebox(0,0)[r]{\strut{}NetworKit}}%
      \csname LTb\endcsname%
      \put(2266,877){\makebox(0,0)[r]{\strut{}EM}}%
    }%
    \gplbacktext
    \put(0,0){\includegraphics{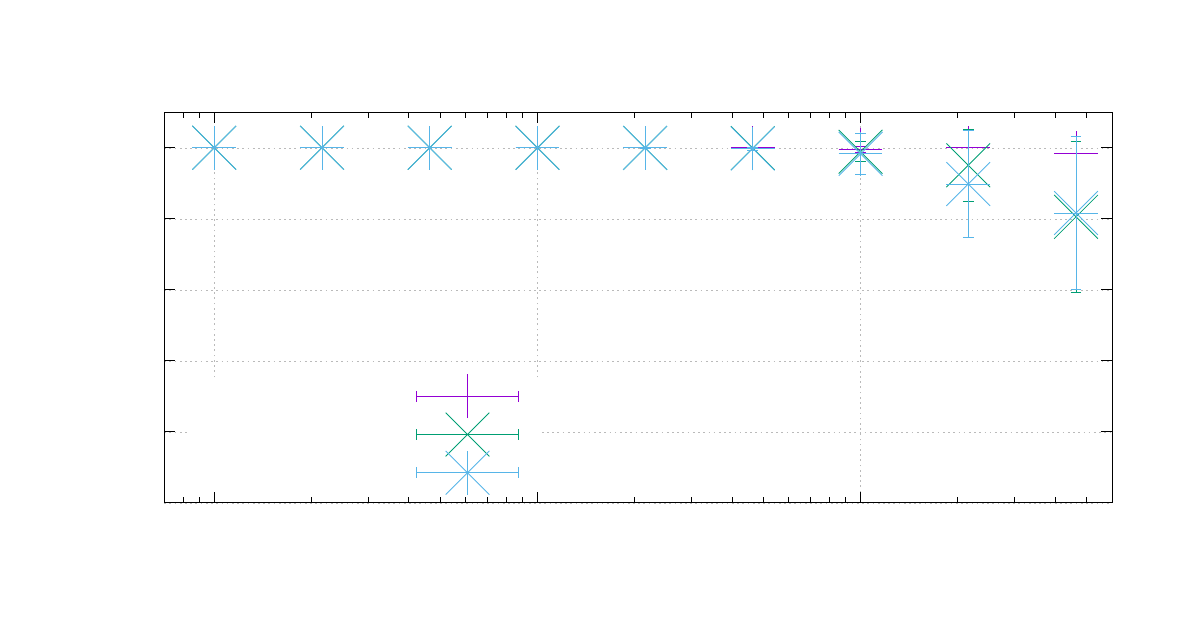}}%
    \gplfronttext
  \end{picture}%
\endgroup
}\hfill\scalebox{\threescale}{%
\begingroup
  \makeatletter
  \providecommand\color[2][]{%
    \GenericError{(gnuplot) \space\space\space\@spaces}{%
      Package color not loaded in conjunction with
      terminal option `colourtext'%
    }{See the gnuplot documentation for explanation.%
    }{Either use 'blacktext' in gnuplot or load the package
      color.sty in LaTeX.}%
    \renewcommand\color[2][]{}%
  }%
  \providecommand\includegraphics[2][]{%
    \GenericError{(gnuplot) \space\space\space\@spaces}{%
      Package graphicx or graphics not loaded%
    }{See the gnuplot documentation for explanation.%
    }{The gnuplot epslatex terminal needs graphicx.sty or graphics.sty.}%
    \renewcommand\includegraphics[2][]{}%
  }%
  \providecommand\rotatebox[2]{#2}%
  \@ifundefined{ifGPcolor}{%
    \newif\ifGPcolor
    \GPcolortrue
  }{}%
  \@ifundefined{ifGPblacktext}{%
    \newif\ifGPblacktext
    \GPblacktextfalse
  }{}%
  \let\gplgaddtomacro\g@addto@macro
  \gdef\gplbacktext{}%
  \gdef\gplfronttext{}%
  \makeatother
  \ifGPblacktext
    \def\colorrgb#1{}%
    \def\colorgray#1{}%
  \else
    \ifGPcolor
      \def\colorrgb#1{\color[rgb]{#1}}%
      \def\colorgray#1{\color[gray]{#1}}%
      \expandafter\def\csname LTw\endcsname{\color{white}}%
      \expandafter\def\csname LTb\endcsname{\color{black}}%
      \expandafter\def\csname LTa\endcsname{\color{black}}%
      \expandafter\def\csname LT0\endcsname{\color[rgb]{1,0,0}}%
      \expandafter\def\csname LT1\endcsname{\color[rgb]{0,1,0}}%
      \expandafter\def\csname LT2\endcsname{\color[rgb]{0,0,1}}%
      \expandafter\def\csname LT3\endcsname{\color[rgb]{1,0,1}}%
      \expandafter\def\csname LT4\endcsname{\color[rgb]{0,1,1}}%
      \expandafter\def\csname LT5\endcsname{\color[rgb]{1,1,0}}%
      \expandafter\def\csname LT6\endcsname{\color[rgb]{0,0,0}}%
      \expandafter\def\csname LT7\endcsname{\color[rgb]{1,0.3,0}}%
      \expandafter\def\csname LT8\endcsname{\color[rgb]{0.5,0.5,0.5}}%
    \else
      \def\colorrgb#1{\color{black}}%
      \def\colorgray#1{\color[gray]{#1}}%
      \expandafter\def\csname LTw\endcsname{\color{white}}%
      \expandafter\def\csname LTb\endcsname{\color{black}}%
      \expandafter\def\csname LTa\endcsname{\color{black}}%
      \expandafter\def\csname LT0\endcsname{\color{black}}%
      \expandafter\def\csname LT1\endcsname{\color{black}}%
      \expandafter\def\csname LT2\endcsname{\color{black}}%
      \expandafter\def\csname LT3\endcsname{\color{black}}%
      \expandafter\def\csname LT4\endcsname{\color{black}}%
      \expandafter\def\csname LT5\endcsname{\color{black}}%
      \expandafter\def\csname LT6\endcsname{\color{black}}%
      \expandafter\def\csname LT7\endcsname{\color{black}}%
      \expandafter\def\csname LT8\endcsname{\color{black}}%
    \fi
  \fi
    \setlength{\unitlength}{0.0500bp}%
    \ifx\gptboxheight\undefined%
      \newlength{\gptboxheight}%
      \newlength{\gptboxwidth}%
      \newsavebox{\gptboxtext}%
    \fi%
    \setlength{\fboxrule}{0.5pt}%
    \setlength{\fboxsep}{1pt}%
\begin{picture}(6802.00,3614.00)%
    \gplgaddtomacro\gplbacktext{%
      \csname LTb\endcsname%
      \put(814,704){\makebox(0,0)[r]{\strut{}$0$}}%
      \csname LTb\endcsname%
      \put(814,1113){\makebox(0,0)[r]{\strut{}$0.2$}}%
      \csname LTb\endcsname%
      \put(814,1522){\makebox(0,0)[r]{\strut{}$0.4$}}%
      \csname LTb\endcsname%
      \put(814,1931){\makebox(0,0)[r]{\strut{}$0.6$}}%
      \csname LTb\endcsname%
      \put(814,2340){\makebox(0,0)[r]{\strut{}$0.8$}}%
      \csname LTb\endcsname%
      \put(814,2749){\makebox(0,0)[r]{\strut{}$1$}}%
      \csname LTb\endcsname%
      \put(1234,484){\makebox(0,0){\strut{}$10^{3}$}}%
      \csname LTb\endcsname%
      \put(3096,484){\makebox(0,0){\strut{}$10^{4}$}}%
      \csname LTb\endcsname%
      \put(4957,484){\makebox(0,0){\strut{}$10^{5}$}}%
    }%
    \gplgaddtomacro\gplfronttext{%
      \csname LTb\endcsname%
      \put(176,1828){\rotatebox{-270}{\makebox(0,0){\strut{}NMI}}}%
      \put(3675,154){\makebox(0,0){\strut{}Number $n$ of nodes}}%
      \put(3675,3283){\makebox(0,0){\strut{}Mixing $\mu = 0.6$, Cluster: Infomap}}%
      \csname LTb\endcsname%
      \put(2266,1317){\makebox(0,0)[r]{\strut{}Orig}}%
      \csname LTb\endcsname%
      \put(2266,1097){\makebox(0,0)[r]{\strut{}NetworKit}}%
      \csname LTb\endcsname%
      \put(2266,877){\makebox(0,0)[r]{\strut{}EM}}%
    }%
    \gplbacktext
    \put(0,0){\includegraphics{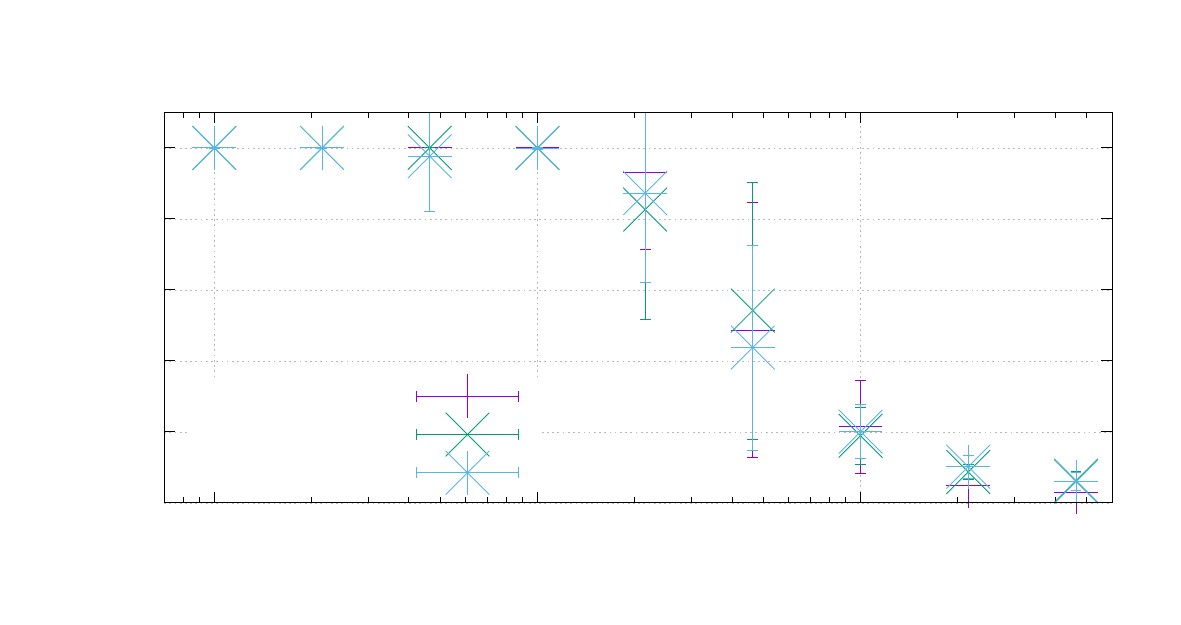}}%
    \gplfronttext
  \end{picture}%
\endgroup
}\hfill\\ %
	
	\vspace{0.1em}
	
	\noindent\scalebox{\threescale}{%
\begingroup
  \makeatletter
  \providecommand\color[2][]{%
    \GenericError{(gnuplot) \space\space\space\@spaces}{%
      Package color not loaded in conjunction with
      terminal option `colourtext'%
    }{See the gnuplot documentation for explanation.%
    }{Either use 'blacktext' in gnuplot or load the package
      color.sty in LaTeX.}%
    \renewcommand\color[2][]{}%
  }%
  \providecommand\includegraphics[2][]{%
    \GenericError{(gnuplot) \space\space\space\@spaces}{%
      Package graphicx or graphics not loaded%
    }{See the gnuplot documentation for explanation.%
    }{The gnuplot epslatex terminal needs graphicx.sty or graphics.sty.}%
    \renewcommand\includegraphics[2][]{}%
  }%
  \providecommand\rotatebox[2]{#2}%
  \@ifundefined{ifGPcolor}{%
    \newif\ifGPcolor
    \GPcolortrue
  }{}%
  \@ifundefined{ifGPblacktext}{%
    \newif\ifGPblacktext
    \GPblacktextfalse
  }{}%
  \let\gplgaddtomacro\g@addto@macro
  \gdef\gplbacktext{}%
  \gdef\gplfronttext{}%
  \makeatother
  \ifGPblacktext
    \def\colorrgb#1{}%
    \def\colorgray#1{}%
  \else
    \ifGPcolor
      \def\colorrgb#1{\color[rgb]{#1}}%
      \def\colorgray#1{\color[gray]{#1}}%
      \expandafter\def\csname LTw\endcsname{\color{white}}%
      \expandafter\def\csname LTb\endcsname{\color{black}}%
      \expandafter\def\csname LTa\endcsname{\color{black}}%
      \expandafter\def\csname LT0\endcsname{\color[rgb]{1,0,0}}%
      \expandafter\def\csname LT1\endcsname{\color[rgb]{0,1,0}}%
      \expandafter\def\csname LT2\endcsname{\color[rgb]{0,0,1}}%
      \expandafter\def\csname LT3\endcsname{\color[rgb]{1,0,1}}%
      \expandafter\def\csname LT4\endcsname{\color[rgb]{0,1,1}}%
      \expandafter\def\csname LT5\endcsname{\color[rgb]{1,1,0}}%
      \expandafter\def\csname LT6\endcsname{\color[rgb]{0,0,0}}%
      \expandafter\def\csname LT7\endcsname{\color[rgb]{1,0.3,0}}%
      \expandafter\def\csname LT8\endcsname{\color[rgb]{0.5,0.5,0.5}}%
    \else
      \def\colorrgb#1{\color{black}}%
      \def\colorgray#1{\color[gray]{#1}}%
      \expandafter\def\csname LTw\endcsname{\color{white}}%
      \expandafter\def\csname LTb\endcsname{\color{black}}%
      \expandafter\def\csname LTa\endcsname{\color{black}}%
      \expandafter\def\csname LT0\endcsname{\color{black}}%
      \expandafter\def\csname LT1\endcsname{\color{black}}%
      \expandafter\def\csname LT2\endcsname{\color{black}}%
      \expandafter\def\csname LT3\endcsname{\color{black}}%
      \expandafter\def\csname LT4\endcsname{\color{black}}%
      \expandafter\def\csname LT5\endcsname{\color{black}}%
      \expandafter\def\csname LT6\endcsname{\color{black}}%
      \expandafter\def\csname LT7\endcsname{\color{black}}%
      \expandafter\def\csname LT8\endcsname{\color{black}}%
    \fi
  \fi
    \setlength{\unitlength}{0.0500bp}%
    \ifx\gptboxheight\undefined%
      \newlength{\gptboxheight}%
      \newlength{\gptboxwidth}%
      \newsavebox{\gptboxtext}%
    \fi%
    \setlength{\fboxrule}{0.5pt}%
    \setlength{\fboxsep}{1pt}%
\begin{picture}(6802.00,3614.00)%
    \gplgaddtomacro\gplbacktext{%
      \csname LTb\endcsname%
      \put(814,704){\makebox(0,0)[r]{\strut{}$0$}}%
      \csname LTb\endcsname%
      \put(814,1113){\makebox(0,0)[r]{\strut{}$0.2$}}%
      \csname LTb\endcsname%
      \put(814,1522){\makebox(0,0)[r]{\strut{}$0.4$}}%
      \csname LTb\endcsname%
      \put(814,1931){\makebox(0,0)[r]{\strut{}$0.6$}}%
      \csname LTb\endcsname%
      \put(814,2340){\makebox(0,0)[r]{\strut{}$0.8$}}%
      \csname LTb\endcsname%
      \put(814,2749){\makebox(0,0)[r]{\strut{}$1$}}%
      \csname LTb\endcsname%
      \put(1234,484){\makebox(0,0){\strut{}$10^{3}$}}%
      \csname LTb\endcsname%
      \put(3096,484){\makebox(0,0){\strut{}$10^{4}$}}%
      \csname LTb\endcsname%
      \put(4957,484){\makebox(0,0){\strut{}$10^{5}$}}%
    }%
    \gplgaddtomacro\gplfronttext{%
      \csname LTb\endcsname%
      \put(176,1828){\rotatebox{-270}{\makebox(0,0){\strut{}AR}}}%
      \put(3675,154){\makebox(0,0){\strut{}Number $n$ of nodes}}%
      \put(3675,3283){\makebox(0,0){\strut{}Mixing $\mu = 0.2$, Cluster: Louvain}}%
      \csname LTb\endcsname%
      \put(2266,1317){\makebox(0,0)[r]{\strut{}Orig}}%
      \csname LTb\endcsname%
      \put(2266,1097){\makebox(0,0)[r]{\strut{}NetworKit}}%
      \csname LTb\endcsname%
      \put(2266,877){\makebox(0,0)[r]{\strut{}EM}}%
    }%
    \gplbacktext
    \put(0,0){\includegraphics{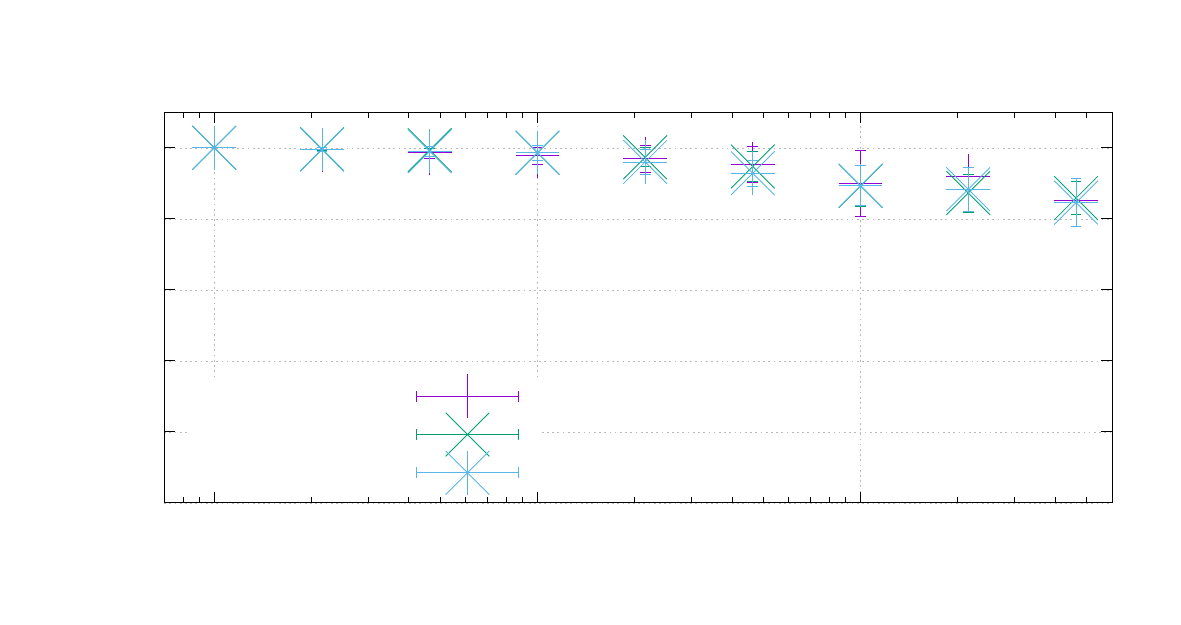}}%
    \gplfronttext
  \end{picture}%
\endgroup
}\hfill\scalebox{\threescale}{%
\begingroup
  \makeatletter
  \providecommand\color[2][]{%
    \GenericError{(gnuplot) \space\space\space\@spaces}{%
      Package color not loaded in conjunction with
      terminal option `colourtext'%
    }{See the gnuplot documentation for explanation.%
    }{Either use 'blacktext' in gnuplot or load the package
      color.sty in LaTeX.}%
    \renewcommand\color[2][]{}%
  }%
  \providecommand\includegraphics[2][]{%
    \GenericError{(gnuplot) \space\space\space\@spaces}{%
      Package graphicx or graphics not loaded%
    }{See the gnuplot documentation for explanation.%
    }{The gnuplot epslatex terminal needs graphicx.sty or graphics.sty.}%
    \renewcommand\includegraphics[2][]{}%
  }%
  \providecommand\rotatebox[2]{#2}%
  \@ifundefined{ifGPcolor}{%
    \newif\ifGPcolor
    \GPcolortrue
  }{}%
  \@ifundefined{ifGPblacktext}{%
    \newif\ifGPblacktext
    \GPblacktextfalse
  }{}%
  \let\gplgaddtomacro\g@addto@macro
  \gdef\gplbacktext{}%
  \gdef\gplfronttext{}%
  \makeatother
  \ifGPblacktext
    \def\colorrgb#1{}%
    \def\colorgray#1{}%
  \else
    \ifGPcolor
      \def\colorrgb#1{\color[rgb]{#1}}%
      \def\colorgray#1{\color[gray]{#1}}%
      \expandafter\def\csname LTw\endcsname{\color{white}}%
      \expandafter\def\csname LTb\endcsname{\color{black}}%
      \expandafter\def\csname LTa\endcsname{\color{black}}%
      \expandafter\def\csname LT0\endcsname{\color[rgb]{1,0,0}}%
      \expandafter\def\csname LT1\endcsname{\color[rgb]{0,1,0}}%
      \expandafter\def\csname LT2\endcsname{\color[rgb]{0,0,1}}%
      \expandafter\def\csname LT3\endcsname{\color[rgb]{1,0,1}}%
      \expandafter\def\csname LT4\endcsname{\color[rgb]{0,1,1}}%
      \expandafter\def\csname LT5\endcsname{\color[rgb]{1,1,0}}%
      \expandafter\def\csname LT6\endcsname{\color[rgb]{0,0,0}}%
      \expandafter\def\csname LT7\endcsname{\color[rgb]{1,0.3,0}}%
      \expandafter\def\csname LT8\endcsname{\color[rgb]{0.5,0.5,0.5}}%
    \else
      \def\colorrgb#1{\color{black}}%
      \def\colorgray#1{\color[gray]{#1}}%
      \expandafter\def\csname LTw\endcsname{\color{white}}%
      \expandafter\def\csname LTb\endcsname{\color{black}}%
      \expandafter\def\csname LTa\endcsname{\color{black}}%
      \expandafter\def\csname LT0\endcsname{\color{black}}%
      \expandafter\def\csname LT1\endcsname{\color{black}}%
      \expandafter\def\csname LT2\endcsname{\color{black}}%
      \expandafter\def\csname LT3\endcsname{\color{black}}%
      \expandafter\def\csname LT4\endcsname{\color{black}}%
      \expandafter\def\csname LT5\endcsname{\color{black}}%
      \expandafter\def\csname LT6\endcsname{\color{black}}%
      \expandafter\def\csname LT7\endcsname{\color{black}}%
      \expandafter\def\csname LT8\endcsname{\color{black}}%
    \fi
  \fi
    \setlength{\unitlength}{0.0500bp}%
    \ifx\gptboxheight\undefined%
      \newlength{\gptboxheight}%
      \newlength{\gptboxwidth}%
      \newsavebox{\gptboxtext}%
    \fi%
    \setlength{\fboxrule}{0.5pt}%
    \setlength{\fboxsep}{1pt}%
\begin{picture}(6802.00,3614.00)%
    \gplgaddtomacro\gplbacktext{%
      \csname LTb\endcsname%
      \put(814,704){\makebox(0,0)[r]{\strut{}$0$}}%
      \csname LTb\endcsname%
      \put(814,1113){\makebox(0,0)[r]{\strut{}$0.2$}}%
      \csname LTb\endcsname%
      \put(814,1522){\makebox(0,0)[r]{\strut{}$0.4$}}%
      \csname LTb\endcsname%
      \put(814,1931){\makebox(0,0)[r]{\strut{}$0.6$}}%
      \csname LTb\endcsname%
      \put(814,2340){\makebox(0,0)[r]{\strut{}$0.8$}}%
      \csname LTb\endcsname%
      \put(814,2749){\makebox(0,0)[r]{\strut{}$1$}}%
      \csname LTb\endcsname%
      \put(1234,484){\makebox(0,0){\strut{}$10^{3}$}}%
      \csname LTb\endcsname%
      \put(3096,484){\makebox(0,0){\strut{}$10^{4}$}}%
      \csname LTb\endcsname%
      \put(4957,484){\makebox(0,0){\strut{}$10^{5}$}}%
    }%
    \gplgaddtomacro\gplfronttext{%
      \csname LTb\endcsname%
      \put(176,1828){\rotatebox{-270}{\makebox(0,0){\strut{}AR}}}%
      \put(3675,154){\makebox(0,0){\strut{}Number $n$ of nodes}}%
      \put(3675,3283){\makebox(0,0){\strut{}Mixing $\mu = 0.4$, Cluster: Louvain}}%
      \csname LTb\endcsname%
      \put(2266,1317){\makebox(0,0)[r]{\strut{}Orig}}%
      \csname LTb\endcsname%
      \put(2266,1097){\makebox(0,0)[r]{\strut{}NetworKit}}%
      \csname LTb\endcsname%
      \put(2266,877){\makebox(0,0)[r]{\strut{}EM}}%
    }%
    \gplbacktext
    \put(0,0){\includegraphics{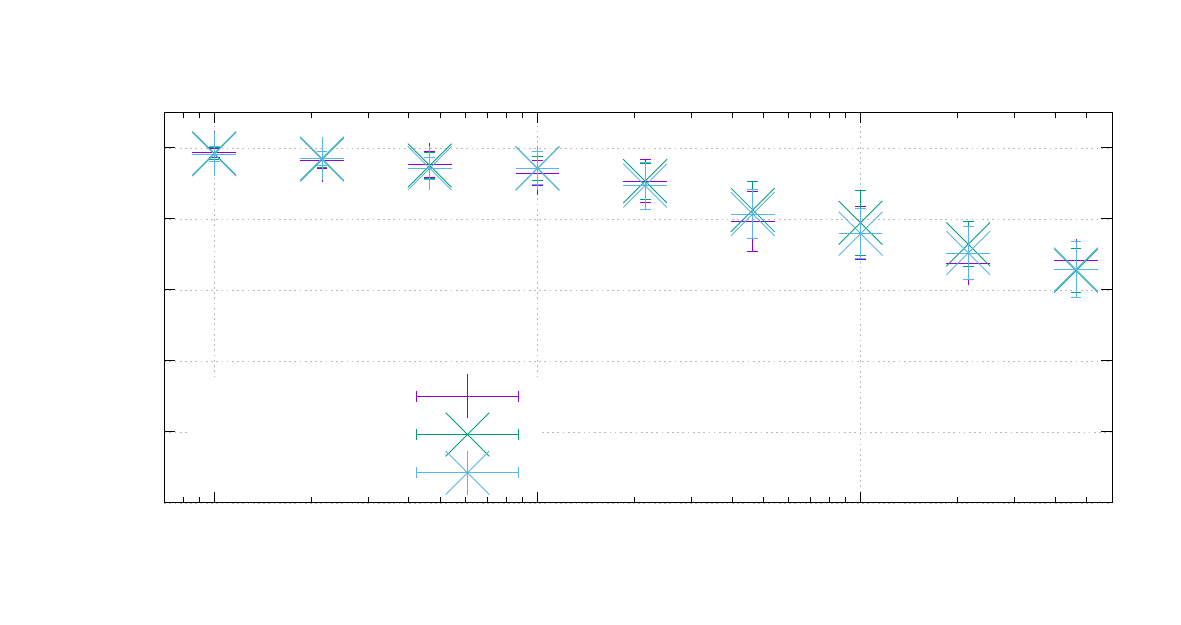}}%
    \gplfronttext
  \end{picture}%
\endgroup
}\hfill\scalebox{\threescale}{%
\begingroup
  \makeatletter
  \providecommand\color[2][]{%
    \GenericError{(gnuplot) \space\space\space\@spaces}{%
      Package color not loaded in conjunction with
      terminal option `colourtext'%
    }{See the gnuplot documentation for explanation.%
    }{Either use 'blacktext' in gnuplot or load the package
      color.sty in LaTeX.}%
    \renewcommand\color[2][]{}%
  }%
  \providecommand\includegraphics[2][]{%
    \GenericError{(gnuplot) \space\space\space\@spaces}{%
      Package graphicx or graphics not loaded%
    }{See the gnuplot documentation for explanation.%
    }{The gnuplot epslatex terminal needs graphicx.sty or graphics.sty.}%
    \renewcommand\includegraphics[2][]{}%
  }%
  \providecommand\rotatebox[2]{#2}%
  \@ifundefined{ifGPcolor}{%
    \newif\ifGPcolor
    \GPcolortrue
  }{}%
  \@ifundefined{ifGPblacktext}{%
    \newif\ifGPblacktext
    \GPblacktextfalse
  }{}%
  \let\gplgaddtomacro\g@addto@macro
  \gdef\gplbacktext{}%
  \gdef\gplfronttext{}%
  \makeatother
  \ifGPblacktext
    \def\colorrgb#1{}%
    \def\colorgray#1{}%
  \else
    \ifGPcolor
      \def\colorrgb#1{\color[rgb]{#1}}%
      \def\colorgray#1{\color[gray]{#1}}%
      \expandafter\def\csname LTw\endcsname{\color{white}}%
      \expandafter\def\csname LTb\endcsname{\color{black}}%
      \expandafter\def\csname LTa\endcsname{\color{black}}%
      \expandafter\def\csname LT0\endcsname{\color[rgb]{1,0,0}}%
      \expandafter\def\csname LT1\endcsname{\color[rgb]{0,1,0}}%
      \expandafter\def\csname LT2\endcsname{\color[rgb]{0,0,1}}%
      \expandafter\def\csname LT3\endcsname{\color[rgb]{1,0,1}}%
      \expandafter\def\csname LT4\endcsname{\color[rgb]{0,1,1}}%
      \expandafter\def\csname LT5\endcsname{\color[rgb]{1,1,0}}%
      \expandafter\def\csname LT6\endcsname{\color[rgb]{0,0,0}}%
      \expandafter\def\csname LT7\endcsname{\color[rgb]{1,0.3,0}}%
      \expandafter\def\csname LT8\endcsname{\color[rgb]{0.5,0.5,0.5}}%
    \else
      \def\colorrgb#1{\color{black}}%
      \def\colorgray#1{\color[gray]{#1}}%
      \expandafter\def\csname LTw\endcsname{\color{white}}%
      \expandafter\def\csname LTb\endcsname{\color{black}}%
      \expandafter\def\csname LTa\endcsname{\color{black}}%
      \expandafter\def\csname LT0\endcsname{\color{black}}%
      \expandafter\def\csname LT1\endcsname{\color{black}}%
      \expandafter\def\csname LT2\endcsname{\color{black}}%
      \expandafter\def\csname LT3\endcsname{\color{black}}%
      \expandafter\def\csname LT4\endcsname{\color{black}}%
      \expandafter\def\csname LT5\endcsname{\color{black}}%
      \expandafter\def\csname LT6\endcsname{\color{black}}%
      \expandafter\def\csname LT7\endcsname{\color{black}}%
      \expandafter\def\csname LT8\endcsname{\color{black}}%
    \fi
  \fi
    \setlength{\unitlength}{0.0500bp}%
    \ifx\gptboxheight\undefined%
      \newlength{\gptboxheight}%
      \newlength{\gptboxwidth}%
      \newsavebox{\gptboxtext}%
    \fi%
    \setlength{\fboxrule}{0.5pt}%
    \setlength{\fboxsep}{1pt}%
\begin{picture}(6802.00,3614.00)%
    \gplgaddtomacro\gplbacktext{%
      \csname LTb\endcsname%
      \put(814,704){\makebox(0,0)[r]{\strut{}$0$}}%
      \csname LTb\endcsname%
      \put(814,1113){\makebox(0,0)[r]{\strut{}$0.2$}}%
      \csname LTb\endcsname%
      \put(814,1522){\makebox(0,0)[r]{\strut{}$0.4$}}%
      \csname LTb\endcsname%
      \put(814,1931){\makebox(0,0)[r]{\strut{}$0.6$}}%
      \csname LTb\endcsname%
      \put(814,2340){\makebox(0,0)[r]{\strut{}$0.8$}}%
      \csname LTb\endcsname%
      \put(814,2749){\makebox(0,0)[r]{\strut{}$1$}}%
      \csname LTb\endcsname%
      \put(1234,484){\makebox(0,0){\strut{}$10^{3}$}}%
      \csname LTb\endcsname%
      \put(3096,484){\makebox(0,0){\strut{}$10^{4}$}}%
      \csname LTb\endcsname%
      \put(4957,484){\makebox(0,0){\strut{}$10^{5}$}}%
    }%
    \gplgaddtomacro\gplfronttext{%
      \csname LTb\endcsname%
      \put(176,1828){\rotatebox{-270}{\makebox(0,0){\strut{}AR}}}%
      \put(3675,154){\makebox(0,0){\strut{}Number $n$ of nodes}}%
      \put(3675,3283){\makebox(0,0){\strut{}Mixing $\mu = 0.6$, Cluster: Louvain}}%
      \csname LTb\endcsname%
      \put(2266,1317){\makebox(0,0)[r]{\strut{}Orig}}%
      \csname LTb\endcsname%
      \put(2266,1097){\makebox(0,0)[r]{\strut{}NetworKit}}%
      \csname LTb\endcsname%
      \put(2266,877){\makebox(0,0)[r]{\strut{}EM}}%
    }%
    \gplbacktext
    \put(0,0){\includegraphics{lfr_no_Louvain_AR_6}}%
    \gplfronttext
  \end{picture}%
\endgroup
}\hfill\\ %
	\noindent\scalebox{\threescale}{%
\begingroup
  \makeatletter
  \providecommand\color[2][]{%
    \GenericError{(gnuplot) \space\space\space\@spaces}{%
      Package color not loaded in conjunction with
      terminal option `colourtext'%
    }{See the gnuplot documentation for explanation.%
    }{Either use 'blacktext' in gnuplot or load the package
      color.sty in LaTeX.}%
    \renewcommand\color[2][]{}%
  }%
  \providecommand\includegraphics[2][]{%
    \GenericError{(gnuplot) \space\space\space\@spaces}{%
      Package graphicx or graphics not loaded%
    }{See the gnuplot documentation for explanation.%
    }{The gnuplot epslatex terminal needs graphicx.sty or graphics.sty.}%
    \renewcommand\includegraphics[2][]{}%
  }%
  \providecommand\rotatebox[2]{#2}%
  \@ifundefined{ifGPcolor}{%
    \newif\ifGPcolor
    \GPcolortrue
  }{}%
  \@ifundefined{ifGPblacktext}{%
    \newif\ifGPblacktext
    \GPblacktextfalse
  }{}%
  \let\gplgaddtomacro\g@addto@macro
  \gdef\gplbacktext{}%
  \gdef\gplfronttext{}%
  \makeatother
  \ifGPblacktext
    \def\colorrgb#1{}%
    \def\colorgray#1{}%
  \else
    \ifGPcolor
      \def\colorrgb#1{\color[rgb]{#1}}%
      \def\colorgray#1{\color[gray]{#1}}%
      \expandafter\def\csname LTw\endcsname{\color{white}}%
      \expandafter\def\csname LTb\endcsname{\color{black}}%
      \expandafter\def\csname LTa\endcsname{\color{black}}%
      \expandafter\def\csname LT0\endcsname{\color[rgb]{1,0,0}}%
      \expandafter\def\csname LT1\endcsname{\color[rgb]{0,1,0}}%
      \expandafter\def\csname LT2\endcsname{\color[rgb]{0,0,1}}%
      \expandafter\def\csname LT3\endcsname{\color[rgb]{1,0,1}}%
      \expandafter\def\csname LT4\endcsname{\color[rgb]{0,1,1}}%
      \expandafter\def\csname LT5\endcsname{\color[rgb]{1,1,0}}%
      \expandafter\def\csname LT6\endcsname{\color[rgb]{0,0,0}}%
      \expandafter\def\csname LT7\endcsname{\color[rgb]{1,0.3,0}}%
      \expandafter\def\csname LT8\endcsname{\color[rgb]{0.5,0.5,0.5}}%
    \else
      \def\colorrgb#1{\color{black}}%
      \def\colorgray#1{\color[gray]{#1}}%
      \expandafter\def\csname LTw\endcsname{\color{white}}%
      \expandafter\def\csname LTb\endcsname{\color{black}}%
      \expandafter\def\csname LTa\endcsname{\color{black}}%
      \expandafter\def\csname LT0\endcsname{\color{black}}%
      \expandafter\def\csname LT1\endcsname{\color{black}}%
      \expandafter\def\csname LT2\endcsname{\color{black}}%
      \expandafter\def\csname LT3\endcsname{\color{black}}%
      \expandafter\def\csname LT4\endcsname{\color{black}}%
      \expandafter\def\csname LT5\endcsname{\color{black}}%
      \expandafter\def\csname LT6\endcsname{\color{black}}%
      \expandafter\def\csname LT7\endcsname{\color{black}}%
      \expandafter\def\csname LT8\endcsname{\color{black}}%
    \fi
  \fi
    \setlength{\unitlength}{0.0500bp}%
    \ifx\gptboxheight\undefined%
      \newlength{\gptboxheight}%
      \newlength{\gptboxwidth}%
      \newsavebox{\gptboxtext}%
    \fi%
    \setlength{\fboxrule}{0.5pt}%
    \setlength{\fboxsep}{1pt}%
\begin{picture}(6802.00,3614.00)%
    \gplgaddtomacro\gplbacktext{%
      \csname LTb\endcsname%
      \put(814,704){\makebox(0,0)[r]{\strut{}$0$}}%
      \csname LTb\endcsname%
      \put(814,1113){\makebox(0,0)[r]{\strut{}$0.2$}}%
      \csname LTb\endcsname%
      \put(814,1522){\makebox(0,0)[r]{\strut{}$0.4$}}%
      \csname LTb\endcsname%
      \put(814,1931){\makebox(0,0)[r]{\strut{}$0.6$}}%
      \csname LTb\endcsname%
      \put(814,2340){\makebox(0,0)[r]{\strut{}$0.8$}}%
      \csname LTb\endcsname%
      \put(814,2749){\makebox(0,0)[r]{\strut{}$1$}}%
      \csname LTb\endcsname%
      \put(1234,484){\makebox(0,0){\strut{}$10^{3}$}}%
      \csname LTb\endcsname%
      \put(3096,484){\makebox(0,0){\strut{}$10^{4}$}}%
      \csname LTb\endcsname%
      \put(4957,484){\makebox(0,0){\strut{}$10^{5}$}}%
    }%
    \gplgaddtomacro\gplfronttext{%
      \csname LTb\endcsname%
      \put(176,1828){\rotatebox{-270}{\makebox(0,0){\strut{}NMI}}}%
      \put(3675,154){\makebox(0,0){\strut{}Number $n$ of nodes}}%
      \put(3675,3283){\makebox(0,0){\strut{}Mixing $\mu = 0.2$, Cluster: Louvain}}%
      \csname LTb\endcsname%
      \put(2266,1317){\makebox(0,0)[r]{\strut{}Orig}}%
      \csname LTb\endcsname%
      \put(2266,1097){\makebox(0,0)[r]{\strut{}NetworKit}}%
      \csname LTb\endcsname%
      \put(2266,877){\makebox(0,0)[r]{\strut{}EM}}%
    }%
    \gplbacktext
    \put(0,0){\includegraphics{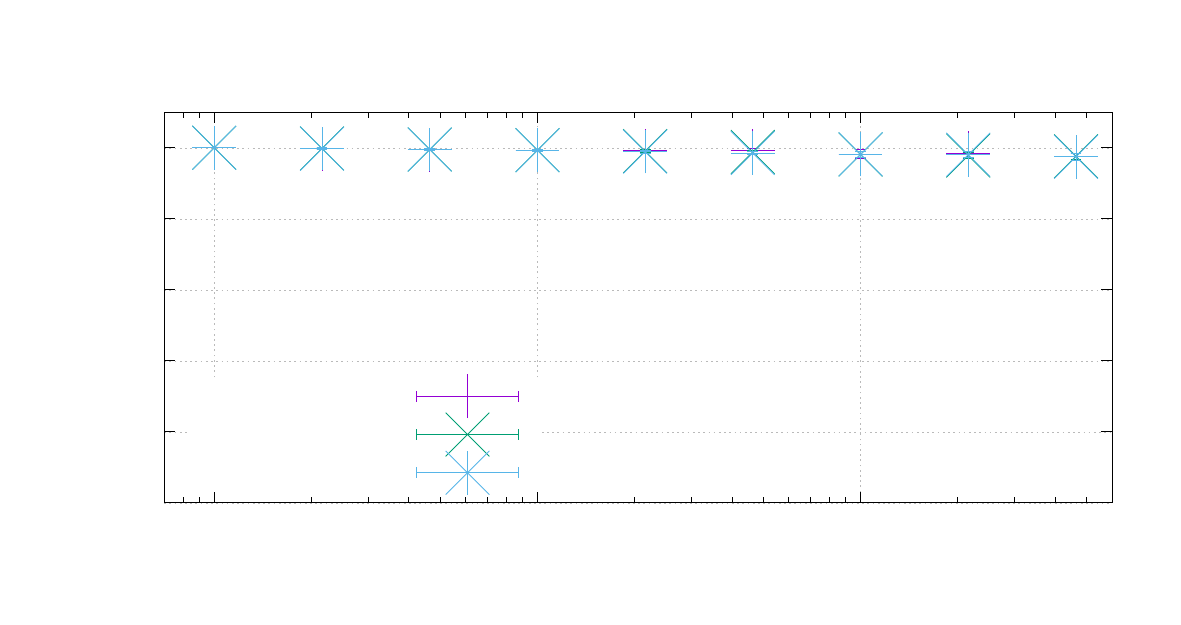}}%
    \gplfronttext
  \end{picture}%
\endgroup
}\hfill\scalebox{\threescale}{%
\begingroup
  \makeatletter
  \providecommand\color[2][]{%
    \GenericError{(gnuplot) \space\space\space\@spaces}{%
      Package color not loaded in conjunction with
      terminal option `colourtext'%
    }{See the gnuplot documentation for explanation.%
    }{Either use 'blacktext' in gnuplot or load the package
      color.sty in LaTeX.}%
    \renewcommand\color[2][]{}%
  }%
  \providecommand\includegraphics[2][]{%
    \GenericError{(gnuplot) \space\space\space\@spaces}{%
      Package graphicx or graphics not loaded%
    }{See the gnuplot documentation for explanation.%
    }{The gnuplot epslatex terminal needs graphicx.sty or graphics.sty.}%
    \renewcommand\includegraphics[2][]{}%
  }%
  \providecommand\rotatebox[2]{#2}%
  \@ifundefined{ifGPcolor}{%
    \newif\ifGPcolor
    \GPcolortrue
  }{}%
  \@ifundefined{ifGPblacktext}{%
    \newif\ifGPblacktext
    \GPblacktextfalse
  }{}%
  \let\gplgaddtomacro\g@addto@macro
  \gdef\gplbacktext{}%
  \gdef\gplfronttext{}%
  \makeatother
  \ifGPblacktext
    \def\colorrgb#1{}%
    \def\colorgray#1{}%
  \else
    \ifGPcolor
      \def\colorrgb#1{\color[rgb]{#1}}%
      \def\colorgray#1{\color[gray]{#1}}%
      \expandafter\def\csname LTw\endcsname{\color{white}}%
      \expandafter\def\csname LTb\endcsname{\color{black}}%
      \expandafter\def\csname LTa\endcsname{\color{black}}%
      \expandafter\def\csname LT0\endcsname{\color[rgb]{1,0,0}}%
      \expandafter\def\csname LT1\endcsname{\color[rgb]{0,1,0}}%
      \expandafter\def\csname LT2\endcsname{\color[rgb]{0,0,1}}%
      \expandafter\def\csname LT3\endcsname{\color[rgb]{1,0,1}}%
      \expandafter\def\csname LT4\endcsname{\color[rgb]{0,1,1}}%
      \expandafter\def\csname LT5\endcsname{\color[rgb]{1,1,0}}%
      \expandafter\def\csname LT6\endcsname{\color[rgb]{0,0,0}}%
      \expandafter\def\csname LT7\endcsname{\color[rgb]{1,0.3,0}}%
      \expandafter\def\csname LT8\endcsname{\color[rgb]{0.5,0.5,0.5}}%
    \else
      \def\colorrgb#1{\color{black}}%
      \def\colorgray#1{\color[gray]{#1}}%
      \expandafter\def\csname LTw\endcsname{\color{white}}%
      \expandafter\def\csname LTb\endcsname{\color{black}}%
      \expandafter\def\csname LTa\endcsname{\color{black}}%
      \expandafter\def\csname LT0\endcsname{\color{black}}%
      \expandafter\def\csname LT1\endcsname{\color{black}}%
      \expandafter\def\csname LT2\endcsname{\color{black}}%
      \expandafter\def\csname LT3\endcsname{\color{black}}%
      \expandafter\def\csname LT4\endcsname{\color{black}}%
      \expandafter\def\csname LT5\endcsname{\color{black}}%
      \expandafter\def\csname LT6\endcsname{\color{black}}%
      \expandafter\def\csname LT7\endcsname{\color{black}}%
      \expandafter\def\csname LT8\endcsname{\color{black}}%
    \fi
  \fi
    \setlength{\unitlength}{0.0500bp}%
    \ifx\gptboxheight\undefined%
      \newlength{\gptboxheight}%
      \newlength{\gptboxwidth}%
      \newsavebox{\gptboxtext}%
    \fi%
    \setlength{\fboxrule}{0.5pt}%
    \setlength{\fboxsep}{1pt}%
\begin{picture}(6802.00,3614.00)%
    \gplgaddtomacro\gplbacktext{%
      \csname LTb\endcsname%
      \put(814,704){\makebox(0,0)[r]{\strut{}$0$}}%
      \csname LTb\endcsname%
      \put(814,1113){\makebox(0,0)[r]{\strut{}$0.2$}}%
      \csname LTb\endcsname%
      \put(814,1522){\makebox(0,0)[r]{\strut{}$0.4$}}%
      \csname LTb\endcsname%
      \put(814,1931){\makebox(0,0)[r]{\strut{}$0.6$}}%
      \csname LTb\endcsname%
      \put(814,2340){\makebox(0,0)[r]{\strut{}$0.8$}}%
      \csname LTb\endcsname%
      \put(814,2749){\makebox(0,0)[r]{\strut{}$1$}}%
      \csname LTb\endcsname%
      \put(1234,484){\makebox(0,0){\strut{}$10^{3}$}}%
      \csname LTb\endcsname%
      \put(3096,484){\makebox(0,0){\strut{}$10^{4}$}}%
      \csname LTb\endcsname%
      \put(4957,484){\makebox(0,0){\strut{}$10^{5}$}}%
    }%
    \gplgaddtomacro\gplfronttext{%
      \csname LTb\endcsname%
      \put(176,1828){\rotatebox{-270}{\makebox(0,0){\strut{}NMI}}}%
      \put(3675,154){\makebox(0,0){\strut{}Number $n$ of nodes}}%
      \put(3675,3283){\makebox(0,0){\strut{}Mixing $\mu = 0.4$, Cluster: Louvain}}%
      \csname LTb\endcsname%
      \put(2266,1317){\makebox(0,0)[r]{\strut{}Orig}}%
      \csname LTb\endcsname%
      \put(2266,1097){\makebox(0,0)[r]{\strut{}NetworKit}}%
      \csname LTb\endcsname%
      \put(2266,877){\makebox(0,0)[r]{\strut{}EM}}%
    }%
    \gplbacktext
    \put(0,0){\includegraphics{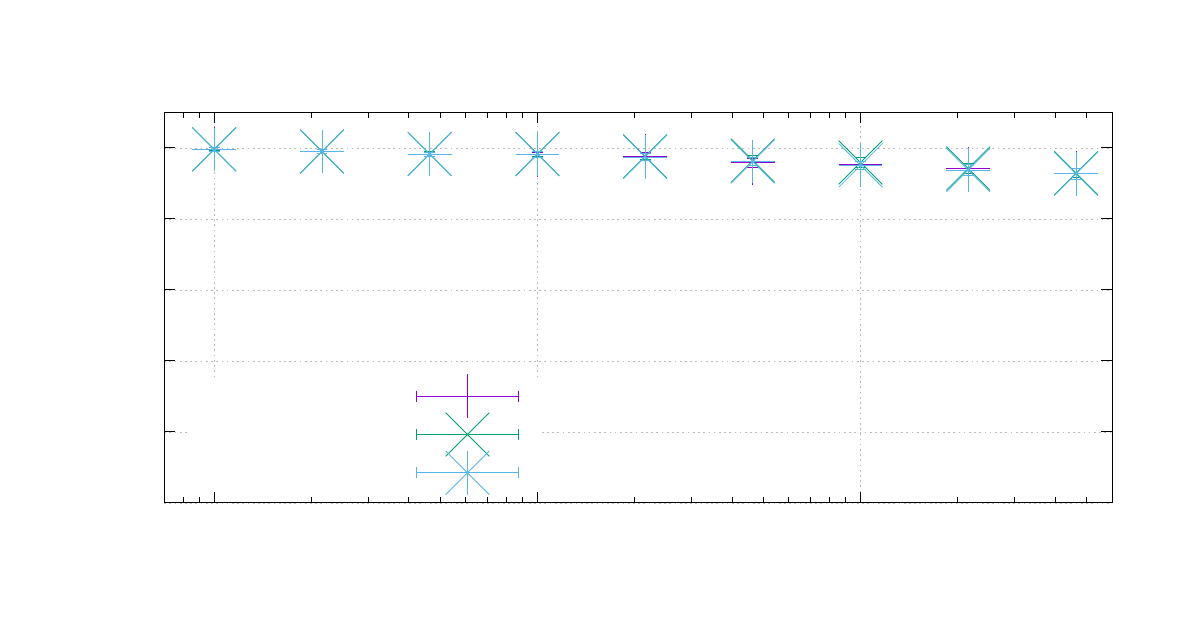}}%
    \gplfronttext
  \end{picture}%
\endgroup
}\hfill\scalebox{\threescale}{%
\begingroup
  \makeatletter
  \providecommand\color[2][]{%
    \GenericError{(gnuplot) \space\space\space\@spaces}{%
      Package color not loaded in conjunction with
      terminal option `colourtext'%
    }{See the gnuplot documentation for explanation.%
    }{Either use 'blacktext' in gnuplot or load the package
      color.sty in LaTeX.}%
    \renewcommand\color[2][]{}%
  }%
  \providecommand\includegraphics[2][]{%
    \GenericError{(gnuplot) \space\space\space\@spaces}{%
      Package graphicx or graphics not loaded%
    }{See the gnuplot documentation for explanation.%
    }{The gnuplot epslatex terminal needs graphicx.sty or graphics.sty.}%
    \renewcommand\includegraphics[2][]{}%
  }%
  \providecommand\rotatebox[2]{#2}%
  \@ifundefined{ifGPcolor}{%
    \newif\ifGPcolor
    \GPcolortrue
  }{}%
  \@ifundefined{ifGPblacktext}{%
    \newif\ifGPblacktext
    \GPblacktextfalse
  }{}%
  \let\gplgaddtomacro\g@addto@macro
  \gdef\gplbacktext{}%
  \gdef\gplfronttext{}%
  \makeatother
  \ifGPblacktext
    \def\colorrgb#1{}%
    \def\colorgray#1{}%
  \else
    \ifGPcolor
      \def\colorrgb#1{\color[rgb]{#1}}%
      \def\colorgray#1{\color[gray]{#1}}%
      \expandafter\def\csname LTw\endcsname{\color{white}}%
      \expandafter\def\csname LTb\endcsname{\color{black}}%
      \expandafter\def\csname LTa\endcsname{\color{black}}%
      \expandafter\def\csname LT0\endcsname{\color[rgb]{1,0,0}}%
      \expandafter\def\csname LT1\endcsname{\color[rgb]{0,1,0}}%
      \expandafter\def\csname LT2\endcsname{\color[rgb]{0,0,1}}%
      \expandafter\def\csname LT3\endcsname{\color[rgb]{1,0,1}}%
      \expandafter\def\csname LT4\endcsname{\color[rgb]{0,1,1}}%
      \expandafter\def\csname LT5\endcsname{\color[rgb]{1,1,0}}%
      \expandafter\def\csname LT6\endcsname{\color[rgb]{0,0,0}}%
      \expandafter\def\csname LT7\endcsname{\color[rgb]{1,0.3,0}}%
      \expandafter\def\csname LT8\endcsname{\color[rgb]{0.5,0.5,0.5}}%
    \else
      \def\colorrgb#1{\color{black}}%
      \def\colorgray#1{\color[gray]{#1}}%
      \expandafter\def\csname LTw\endcsname{\color{white}}%
      \expandafter\def\csname LTb\endcsname{\color{black}}%
      \expandafter\def\csname LTa\endcsname{\color{black}}%
      \expandafter\def\csname LT0\endcsname{\color{black}}%
      \expandafter\def\csname LT1\endcsname{\color{black}}%
      \expandafter\def\csname LT2\endcsname{\color{black}}%
      \expandafter\def\csname LT3\endcsname{\color{black}}%
      \expandafter\def\csname LT4\endcsname{\color{black}}%
      \expandafter\def\csname LT5\endcsname{\color{black}}%
      \expandafter\def\csname LT6\endcsname{\color{black}}%
      \expandafter\def\csname LT7\endcsname{\color{black}}%
      \expandafter\def\csname LT8\endcsname{\color{black}}%
    \fi
  \fi
    \setlength{\unitlength}{0.0500bp}%
    \ifx\gptboxheight\undefined%
      \newlength{\gptboxheight}%
      \newlength{\gptboxwidth}%
      \newsavebox{\gptboxtext}%
    \fi%
    \setlength{\fboxrule}{0.5pt}%
    \setlength{\fboxsep}{1pt}%
\begin{picture}(6802.00,3614.00)%
    \gplgaddtomacro\gplbacktext{%
      \csname LTb\endcsname%
      \put(814,704){\makebox(0,0)[r]{\strut{}$0$}}%
      \csname LTb\endcsname%
      \put(814,1113){\makebox(0,0)[r]{\strut{}$0.2$}}%
      \csname LTb\endcsname%
      \put(814,1522){\makebox(0,0)[r]{\strut{}$0.4$}}%
      \csname LTb\endcsname%
      \put(814,1931){\makebox(0,0)[r]{\strut{}$0.6$}}%
      \csname LTb\endcsname%
      \put(814,2340){\makebox(0,0)[r]{\strut{}$0.8$}}%
      \csname LTb\endcsname%
      \put(814,2749){\makebox(0,0)[r]{\strut{}$1$}}%
      \csname LTb\endcsname%
      \put(1234,484){\makebox(0,0){\strut{}$10^{3}$}}%
      \csname LTb\endcsname%
      \put(3096,484){\makebox(0,0){\strut{}$10^{4}$}}%
      \csname LTb\endcsname%
      \put(4957,484){\makebox(0,0){\strut{}$10^{5}$}}%
    }%
    \gplgaddtomacro\gplfronttext{%
      \csname LTb\endcsname%
      \put(176,1828){\rotatebox{-270}{\makebox(0,0){\strut{}NMI}}}%
      \put(3675,154){\makebox(0,0){\strut{}Number $n$ of nodes}}%
      \put(3675,3283){\makebox(0,0){\strut{}Mixing $\mu = 0.6$, Cluster: Louvain}}%
      \csname LTb\endcsname%
      \put(2266,1317){\makebox(0,0)[r]{\strut{}Orig}}%
      \csname LTb\endcsname%
      \put(2266,1097){\makebox(0,0)[r]{\strut{}NetworKit}}%
      \csname LTb\endcsname%
      \put(2266,877){\makebox(0,0)[r]{\strut{}EM}}%
    }%
    \gplbacktext
    \put(0,0){\includegraphics{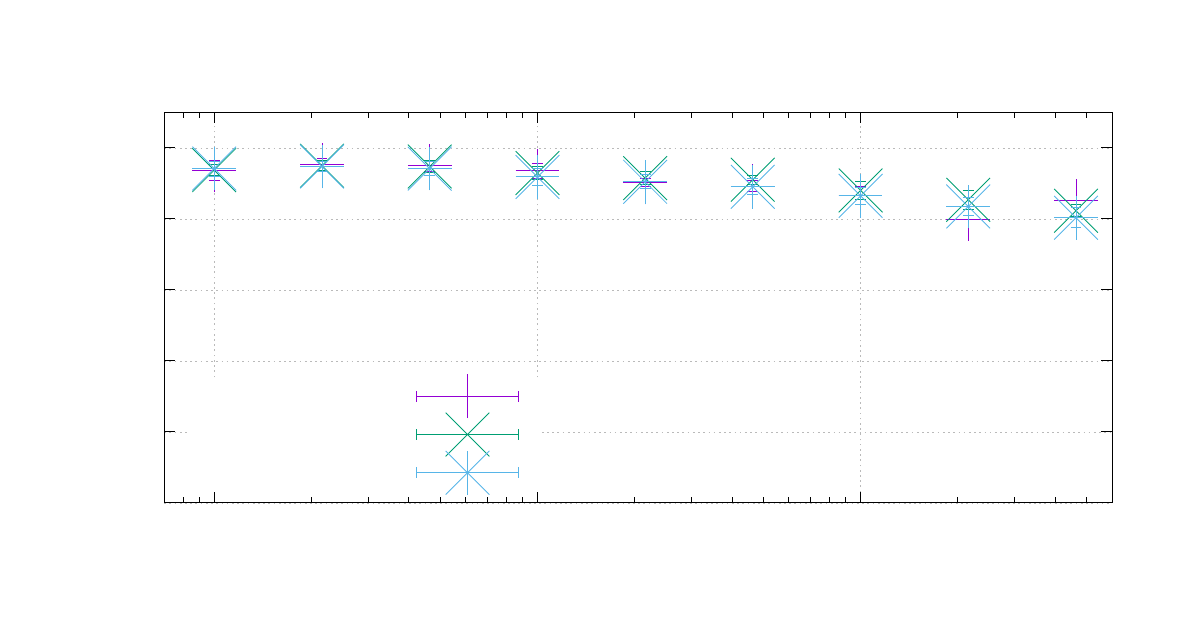}}%
    \gplfronttext
  \end{picture}%
\endgroup
}\hfill\\ %

	\vspace{0.1em}

	\noindent\scalebox{\threescale}{%
\begingroup
  \makeatletter
  \providecommand\color[2][]{%
    \GenericError{(gnuplot) \space\space\space\@spaces}{%
      Package color not loaded in conjunction with
      terminal option `colourtext'%
    }{See the gnuplot documentation for explanation.%
    }{Either use 'blacktext' in gnuplot or load the package
      color.sty in LaTeX.}%
    \renewcommand\color[2][]{}%
  }%
  \providecommand\includegraphics[2][]{%
    \GenericError{(gnuplot) \space\space\space\@spaces}{%
      Package graphicx or graphics not loaded%
    }{See the gnuplot documentation for explanation.%
    }{The gnuplot epslatex terminal needs graphicx.sty or graphics.sty.}%
    \renewcommand\includegraphics[2][]{}%
  }%
  \providecommand\rotatebox[2]{#2}%
  \@ifundefined{ifGPcolor}{%
    \newif\ifGPcolor
    \GPcolortrue
  }{}%
  \@ifundefined{ifGPblacktext}{%
    \newif\ifGPblacktext
    \GPblacktextfalse
  }{}%
  \let\gplgaddtomacro\g@addto@macro
  \gdef\gplbacktext{}%
  \gdef\gplfronttext{}%
  \makeatother
  \ifGPblacktext
    \def\colorrgb#1{}%
    \def\colorgray#1{}%
  \else
    \ifGPcolor
      \def\colorrgb#1{\color[rgb]{#1}}%
      \def\colorgray#1{\color[gray]{#1}}%
      \expandafter\def\csname LTw\endcsname{\color{white}}%
      \expandafter\def\csname LTb\endcsname{\color{black}}%
      \expandafter\def\csname LTa\endcsname{\color{black}}%
      \expandafter\def\csname LT0\endcsname{\color[rgb]{1,0,0}}%
      \expandafter\def\csname LT1\endcsname{\color[rgb]{0,1,0}}%
      \expandafter\def\csname LT2\endcsname{\color[rgb]{0,0,1}}%
      \expandafter\def\csname LT3\endcsname{\color[rgb]{1,0,1}}%
      \expandafter\def\csname LT4\endcsname{\color[rgb]{0,1,1}}%
      \expandafter\def\csname LT5\endcsname{\color[rgb]{1,1,0}}%
      \expandafter\def\csname LT6\endcsname{\color[rgb]{0,0,0}}%
      \expandafter\def\csname LT7\endcsname{\color[rgb]{1,0.3,0}}%
      \expandafter\def\csname LT8\endcsname{\color[rgb]{0.5,0.5,0.5}}%
    \else
      \def\colorrgb#1{\color{black}}%
      \def\colorgray#1{\color[gray]{#1}}%
      \expandafter\def\csname LTw\endcsname{\color{white}}%
      \expandafter\def\csname LTb\endcsname{\color{black}}%
      \expandafter\def\csname LTa\endcsname{\color{black}}%
      \expandafter\def\csname LT0\endcsname{\color{black}}%
      \expandafter\def\csname LT1\endcsname{\color{black}}%
      \expandafter\def\csname LT2\endcsname{\color{black}}%
      \expandafter\def\csname LT3\endcsname{\color{black}}%
      \expandafter\def\csname LT4\endcsname{\color{black}}%
      \expandafter\def\csname LT5\endcsname{\color{black}}%
      \expandafter\def\csname LT6\endcsname{\color{black}}%
      \expandafter\def\csname LT7\endcsname{\color{black}}%
      \expandafter\def\csname LT8\endcsname{\color{black}}%
    \fi
  \fi
    \setlength{\unitlength}{0.0500bp}%
    \ifx\gptboxheight\undefined%
      \newlength{\gptboxheight}%
      \newlength{\gptboxwidth}%
      \newsavebox{\gptboxtext}%
    \fi%
    \setlength{\fboxrule}{0.5pt}%
    \setlength{\fboxsep}{1pt}%
\begin{picture}(6802.00,3614.00)%
    \gplgaddtomacro\gplbacktext{%
      \csname LTb\endcsname%
      \put(814,704){\makebox(0,0)[r]{\strut{}$10^{3}$}}%
      \csname LTb\endcsname%
      \put(814,1154){\makebox(0,0)[r]{\strut{}$10^{4}$}}%
      \csname LTb\endcsname%
      \put(814,1604){\makebox(0,0)[r]{\strut{}$10^{5}$}}%
      \csname LTb\endcsname%
      \put(814,2053){\makebox(0,0)[r]{\strut{}$10^{6}$}}%
      \csname LTb\endcsname%
      \put(814,2503){\makebox(0,0)[r]{\strut{}$10^{7}$}}%
      \csname LTb\endcsname%
      \put(814,2953){\makebox(0,0)[r]{\strut{}$10^{8}$}}%
      \csname LTb\endcsname%
      \put(1234,484){\makebox(0,0){\strut{}$10^{3}$}}%
      \csname LTb\endcsname%
      \put(3096,484){\makebox(0,0){\strut{}$10^{4}$}}%
      \csname LTb\endcsname%
      \put(4957,484){\makebox(0,0){\strut{}$10^{5}$}}%
    }%
    \gplgaddtomacro\gplfronttext{%
      \csname LTb\endcsname%
      \put(176,1828){\rotatebox{-270}{\makebox(0,0){\strut{}Edges}}}%
      \put(3675,154){\makebox(0,0){\strut{}Number $n$ of nodes}}%
      \put(3675,3283){\makebox(0,0){\strut{}Mixing: $\mu = 0.2$}}%
      \csname LTb\endcsname%
      \put(5418,1317){\makebox(0,0)[r]{\strut{}Orig}}%
      \csname LTb\endcsname%
      \put(5418,1097){\makebox(0,0)[r]{\strut{}NetworKit}}%
      \csname LTb\endcsname%
      \put(5418,877){\makebox(0,0)[r]{\strut{}EM}}%
    }%
    \gplbacktext
    \put(0,0){\includegraphics{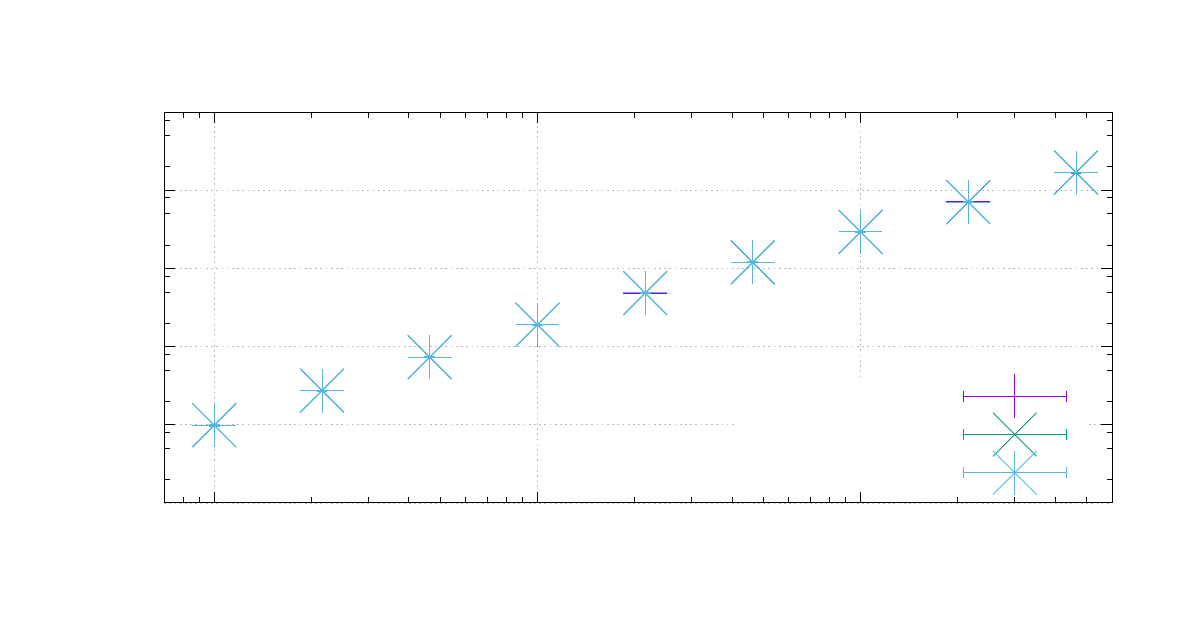}}%
    \gplfronttext
  \end{picture}%
\endgroup
}\hfill\scalebox{\threescale}{%
\begingroup
  \makeatletter
  \providecommand\color[2][]{%
    \GenericError{(gnuplot) \space\space\space\@spaces}{%
      Package color not loaded in conjunction with
      terminal option `colourtext'%
    }{See the gnuplot documentation for explanation.%
    }{Either use 'blacktext' in gnuplot or load the package
      color.sty in LaTeX.}%
    \renewcommand\color[2][]{}%
  }%
  \providecommand\includegraphics[2][]{%
    \GenericError{(gnuplot) \space\space\space\@spaces}{%
      Package graphicx or graphics not loaded%
    }{See the gnuplot documentation for explanation.%
    }{The gnuplot epslatex terminal needs graphicx.sty or graphics.sty.}%
    \renewcommand\includegraphics[2][]{}%
  }%
  \providecommand\rotatebox[2]{#2}%
  \@ifundefined{ifGPcolor}{%
    \newif\ifGPcolor
    \GPcolortrue
  }{}%
  \@ifundefined{ifGPblacktext}{%
    \newif\ifGPblacktext
    \GPblacktextfalse
  }{}%
  \let\gplgaddtomacro\g@addto@macro
  \gdef\gplbacktext{}%
  \gdef\gplfronttext{}%
  \makeatother
  \ifGPblacktext
    \def\colorrgb#1{}%
    \def\colorgray#1{}%
  \else
    \ifGPcolor
      \def\colorrgb#1{\color[rgb]{#1}}%
      \def\colorgray#1{\color[gray]{#1}}%
      \expandafter\def\csname LTw\endcsname{\color{white}}%
      \expandafter\def\csname LTb\endcsname{\color{black}}%
      \expandafter\def\csname LTa\endcsname{\color{black}}%
      \expandafter\def\csname LT0\endcsname{\color[rgb]{1,0,0}}%
      \expandafter\def\csname LT1\endcsname{\color[rgb]{0,1,0}}%
      \expandafter\def\csname LT2\endcsname{\color[rgb]{0,0,1}}%
      \expandafter\def\csname LT3\endcsname{\color[rgb]{1,0,1}}%
      \expandafter\def\csname LT4\endcsname{\color[rgb]{0,1,1}}%
      \expandafter\def\csname LT5\endcsname{\color[rgb]{1,1,0}}%
      \expandafter\def\csname LT6\endcsname{\color[rgb]{0,0,0}}%
      \expandafter\def\csname LT7\endcsname{\color[rgb]{1,0.3,0}}%
      \expandafter\def\csname LT8\endcsname{\color[rgb]{0.5,0.5,0.5}}%
    \else
      \def\colorrgb#1{\color{black}}%
      \def\colorgray#1{\color[gray]{#1}}%
      \expandafter\def\csname LTw\endcsname{\color{white}}%
      \expandafter\def\csname LTb\endcsname{\color{black}}%
      \expandafter\def\csname LTa\endcsname{\color{black}}%
      \expandafter\def\csname LT0\endcsname{\color{black}}%
      \expandafter\def\csname LT1\endcsname{\color{black}}%
      \expandafter\def\csname LT2\endcsname{\color{black}}%
      \expandafter\def\csname LT3\endcsname{\color{black}}%
      \expandafter\def\csname LT4\endcsname{\color{black}}%
      \expandafter\def\csname LT5\endcsname{\color{black}}%
      \expandafter\def\csname LT6\endcsname{\color{black}}%
      \expandafter\def\csname LT7\endcsname{\color{black}}%
      \expandafter\def\csname LT8\endcsname{\color{black}}%
    \fi
  \fi
    \setlength{\unitlength}{0.0500bp}%
    \ifx\gptboxheight\undefined%
      \newlength{\gptboxheight}%
      \newlength{\gptboxwidth}%
      \newsavebox{\gptboxtext}%
    \fi%
    \setlength{\fboxrule}{0.5pt}%
    \setlength{\fboxsep}{1pt}%
\begin{picture}(6802.00,3614.00)%
    \gplgaddtomacro\gplbacktext{%
      \csname LTb\endcsname%
      \put(814,704){\makebox(0,0)[r]{\strut{}$10^{3}$}}%
      \csname LTb\endcsname%
      \put(814,1154){\makebox(0,0)[r]{\strut{}$10^{4}$}}%
      \csname LTb\endcsname%
      \put(814,1604){\makebox(0,0)[r]{\strut{}$10^{5}$}}%
      \csname LTb\endcsname%
      \put(814,2053){\makebox(0,0)[r]{\strut{}$10^{6}$}}%
      \csname LTb\endcsname%
      \put(814,2503){\makebox(0,0)[r]{\strut{}$10^{7}$}}%
      \csname LTb\endcsname%
      \put(814,2953){\makebox(0,0)[r]{\strut{}$10^{8}$}}%
      \csname LTb\endcsname%
      \put(1234,484){\makebox(0,0){\strut{}$10^{3}$}}%
      \csname LTb\endcsname%
      \put(3096,484){\makebox(0,0){\strut{}$10^{4}$}}%
      \csname LTb\endcsname%
      \put(4957,484){\makebox(0,0){\strut{}$10^{5}$}}%
    }%
    \gplgaddtomacro\gplfronttext{%
      \csname LTb\endcsname%
      \put(176,1828){\rotatebox{-270}{\makebox(0,0){\strut{}Edges}}}%
      \put(3675,154){\makebox(0,0){\strut{}Number $n$ of nodes}}%
      \put(3675,3283){\makebox(0,0){\strut{}Mixing: $\mu = 0.4$}}%
      \csname LTb\endcsname%
      \put(5418,1317){\makebox(0,0)[r]{\strut{}Orig}}%
      \csname LTb\endcsname%
      \put(5418,1097){\makebox(0,0)[r]{\strut{}NetworKit}}%
      \csname LTb\endcsname%
      \put(5418,877){\makebox(0,0)[r]{\strut{}EM}}%
    }%
    \gplbacktext
    \put(0,0){\includegraphics{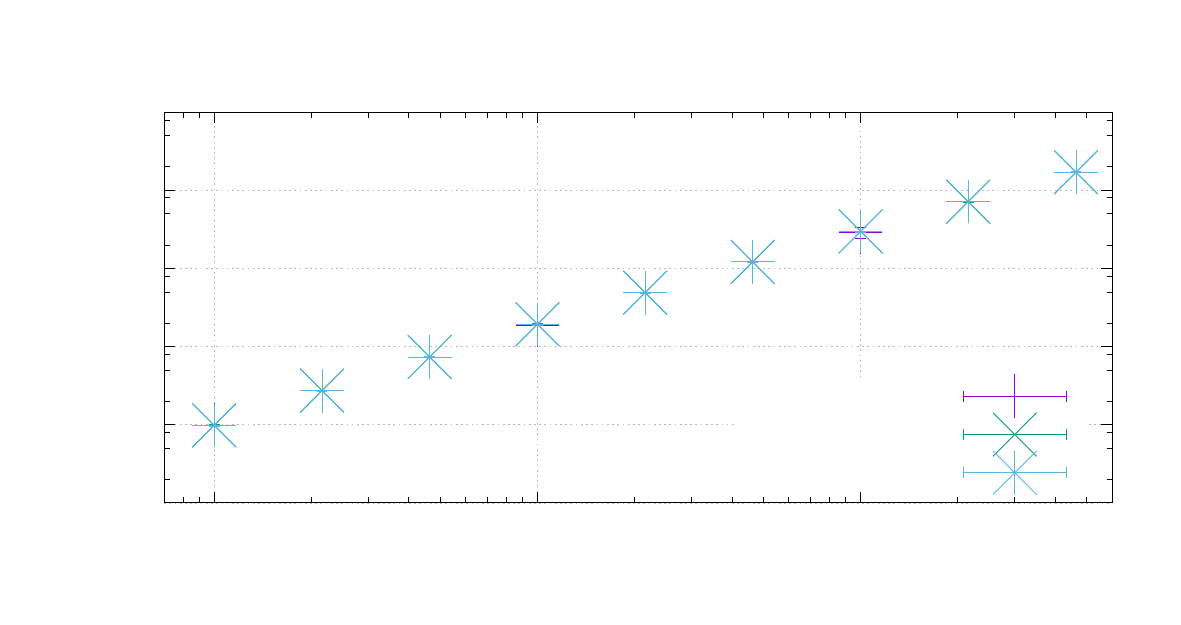}}%
    \gplfronttext
  \end{picture}%
\endgroup
}\hfill\scalebox{\threescale}{%
\begingroup
  \makeatletter
  \providecommand\color[2][]{%
    \GenericError{(gnuplot) \space\space\space\@spaces}{%
      Package color not loaded in conjunction with
      terminal option `colourtext'%
    }{See the gnuplot documentation for explanation.%
    }{Either use 'blacktext' in gnuplot or load the package
      color.sty in LaTeX.}%
    \renewcommand\color[2][]{}%
  }%
  \providecommand\includegraphics[2][]{%
    \GenericError{(gnuplot) \space\space\space\@spaces}{%
      Package graphicx or graphics not loaded%
    }{See the gnuplot documentation for explanation.%
    }{The gnuplot epslatex terminal needs graphicx.sty or graphics.sty.}%
    \renewcommand\includegraphics[2][]{}%
  }%
  \providecommand\rotatebox[2]{#2}%
  \@ifundefined{ifGPcolor}{%
    \newif\ifGPcolor
    \GPcolortrue
  }{}%
  \@ifundefined{ifGPblacktext}{%
    \newif\ifGPblacktext
    \GPblacktextfalse
  }{}%
  \let\gplgaddtomacro\g@addto@macro
  \gdef\gplbacktext{}%
  \gdef\gplfronttext{}%
  \makeatother
  \ifGPblacktext
    \def\colorrgb#1{}%
    \def\colorgray#1{}%
  \else
    \ifGPcolor
      \def\colorrgb#1{\color[rgb]{#1}}%
      \def\colorgray#1{\color[gray]{#1}}%
      \expandafter\def\csname LTw\endcsname{\color{white}}%
      \expandafter\def\csname LTb\endcsname{\color{black}}%
      \expandafter\def\csname LTa\endcsname{\color{black}}%
      \expandafter\def\csname LT0\endcsname{\color[rgb]{1,0,0}}%
      \expandafter\def\csname LT1\endcsname{\color[rgb]{0,1,0}}%
      \expandafter\def\csname LT2\endcsname{\color[rgb]{0,0,1}}%
      \expandafter\def\csname LT3\endcsname{\color[rgb]{1,0,1}}%
      \expandafter\def\csname LT4\endcsname{\color[rgb]{0,1,1}}%
      \expandafter\def\csname LT5\endcsname{\color[rgb]{1,1,0}}%
      \expandafter\def\csname LT6\endcsname{\color[rgb]{0,0,0}}%
      \expandafter\def\csname LT7\endcsname{\color[rgb]{1,0.3,0}}%
      \expandafter\def\csname LT8\endcsname{\color[rgb]{0.5,0.5,0.5}}%
    \else
      \def\colorrgb#1{\color{black}}%
      \def\colorgray#1{\color[gray]{#1}}%
      \expandafter\def\csname LTw\endcsname{\color{white}}%
      \expandafter\def\csname LTb\endcsname{\color{black}}%
      \expandafter\def\csname LTa\endcsname{\color{black}}%
      \expandafter\def\csname LT0\endcsname{\color{black}}%
      \expandafter\def\csname LT1\endcsname{\color{black}}%
      \expandafter\def\csname LT2\endcsname{\color{black}}%
      \expandafter\def\csname LT3\endcsname{\color{black}}%
      \expandafter\def\csname LT4\endcsname{\color{black}}%
      \expandafter\def\csname LT5\endcsname{\color{black}}%
      \expandafter\def\csname LT6\endcsname{\color{black}}%
      \expandafter\def\csname LT7\endcsname{\color{black}}%
      \expandafter\def\csname LT8\endcsname{\color{black}}%
    \fi
  \fi
    \setlength{\unitlength}{0.0500bp}%
    \ifx\gptboxheight\undefined%
      \newlength{\gptboxheight}%
      \newlength{\gptboxwidth}%
      \newsavebox{\gptboxtext}%
    \fi%
    \setlength{\fboxrule}{0.5pt}%
    \setlength{\fboxsep}{1pt}%
\begin{picture}(6802.00,3614.00)%
    \gplgaddtomacro\gplbacktext{%
      \csname LTb\endcsname%
      \put(814,704){\makebox(0,0)[r]{\strut{}$10^{3}$}}%
      \csname LTb\endcsname%
      \put(814,1154){\makebox(0,0)[r]{\strut{}$10^{4}$}}%
      \csname LTb\endcsname%
      \put(814,1604){\makebox(0,0)[r]{\strut{}$10^{5}$}}%
      \csname LTb\endcsname%
      \put(814,2053){\makebox(0,0)[r]{\strut{}$10^{6}$}}%
      \csname LTb\endcsname%
      \put(814,2503){\makebox(0,0)[r]{\strut{}$10^{7}$}}%
      \csname LTb\endcsname%
      \put(814,2953){\makebox(0,0)[r]{\strut{}$10^{8}$}}%
      \csname LTb\endcsname%
      \put(1234,484){\makebox(0,0){\strut{}$10^{3}$}}%
      \csname LTb\endcsname%
      \put(3096,484){\makebox(0,0){\strut{}$10^{4}$}}%
      \csname LTb\endcsname%
      \put(4957,484){\makebox(0,0){\strut{}$10^{5}$}}%
    }%
    \gplgaddtomacro\gplfronttext{%
      \csname LTb\endcsname%
      \put(176,1828){\rotatebox{-270}{\makebox(0,0){\strut{}Edges}}}%
      \put(3675,154){\makebox(0,0){\strut{}Number $n$ of nodes}}%
      \put(3675,3283){\makebox(0,0){\strut{}Mixing: $\mu = 0.6$}}%
      \csname LTb\endcsname%
      \put(5418,1317){\makebox(0,0)[r]{\strut{}Orig}}%
      \csname LTb\endcsname%
      \put(5418,1097){\makebox(0,0)[r]{\strut{}NetworKit}}%
      \csname LTb\endcsname%
      \put(5418,877){\makebox(0,0)[r]{\strut{}EM}}%
    }%
    \gplbacktext
    \put(0,0){\includegraphics{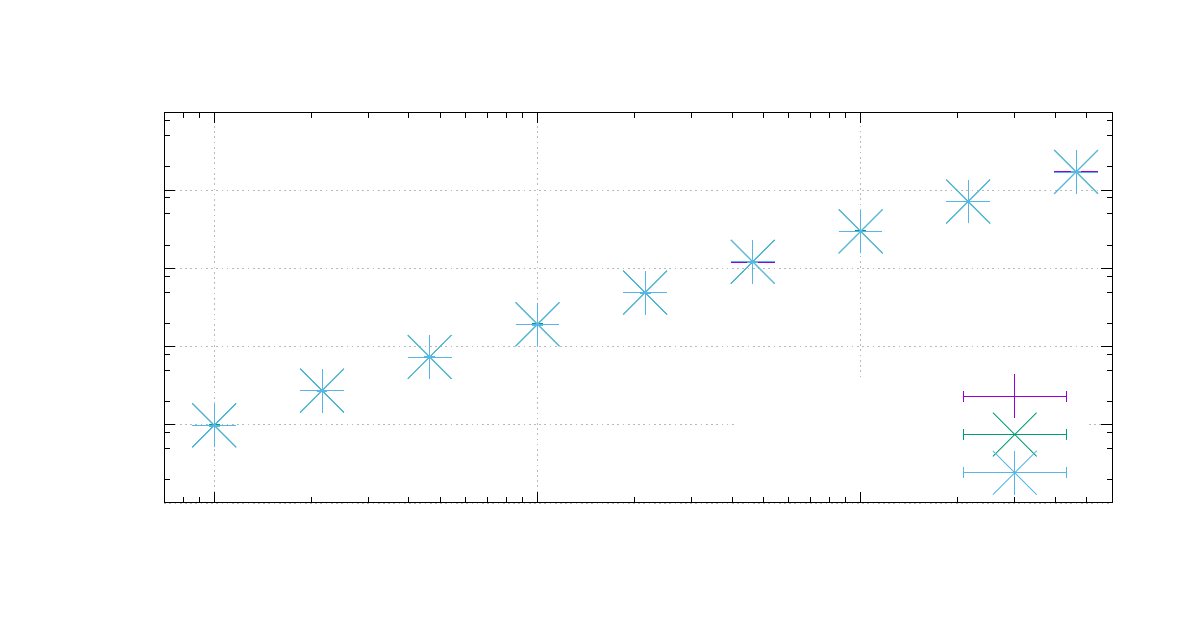}}%
    \gplfronttext
  \end{picture}%
\endgroup
}\hfill\\ %
	\noindent\scalebox{\threescale}{%
\begingroup
  \makeatletter
  \providecommand\color[2][]{%
    \GenericError{(gnuplot) \space\space\space\@spaces}{%
      Package color not loaded in conjunction with
      terminal option `colourtext'%
    }{See the gnuplot documentation for explanation.%
    }{Either use 'blacktext' in gnuplot or load the package
      color.sty in LaTeX.}%
    \renewcommand\color[2][]{}%
  }%
  \providecommand\includegraphics[2][]{%
    \GenericError{(gnuplot) \space\space\space\@spaces}{%
      Package graphicx or graphics not loaded%
    }{See the gnuplot documentation for explanation.%
    }{The gnuplot epslatex terminal needs graphicx.sty or graphics.sty.}%
    \renewcommand\includegraphics[2][]{}%
  }%
  \providecommand\rotatebox[2]{#2}%
  \@ifundefined{ifGPcolor}{%
    \newif\ifGPcolor
    \GPcolortrue
  }{}%
  \@ifundefined{ifGPblacktext}{%
    \newif\ifGPblacktext
    \GPblacktextfalse
  }{}%
  \let\gplgaddtomacro\g@addto@macro
  \gdef\gplbacktext{}%
  \gdef\gplfronttext{}%
  \makeatother
  \ifGPblacktext
    \def\colorrgb#1{}%
    \def\colorgray#1{}%
  \else
    \ifGPcolor
      \def\colorrgb#1{\color[rgb]{#1}}%
      \def\colorgray#1{\color[gray]{#1}}%
      \expandafter\def\csname LTw\endcsname{\color{white}}%
      \expandafter\def\csname LTb\endcsname{\color{black}}%
      \expandafter\def\csname LTa\endcsname{\color{black}}%
      \expandafter\def\csname LT0\endcsname{\color[rgb]{1,0,0}}%
      \expandafter\def\csname LT1\endcsname{\color[rgb]{0,1,0}}%
      \expandafter\def\csname LT2\endcsname{\color[rgb]{0,0,1}}%
      \expandafter\def\csname LT3\endcsname{\color[rgb]{1,0,1}}%
      \expandafter\def\csname LT4\endcsname{\color[rgb]{0,1,1}}%
      \expandafter\def\csname LT5\endcsname{\color[rgb]{1,1,0}}%
      \expandafter\def\csname LT6\endcsname{\color[rgb]{0,0,0}}%
      \expandafter\def\csname LT7\endcsname{\color[rgb]{1,0.3,0}}%
      \expandafter\def\csname LT8\endcsname{\color[rgb]{0.5,0.5,0.5}}%
    \else
      \def\colorrgb#1{\color{black}}%
      \def\colorgray#1{\color[gray]{#1}}%
      \expandafter\def\csname LTw\endcsname{\color{white}}%
      \expandafter\def\csname LTb\endcsname{\color{black}}%
      \expandafter\def\csname LTa\endcsname{\color{black}}%
      \expandafter\def\csname LT0\endcsname{\color{black}}%
      \expandafter\def\csname LT1\endcsname{\color{black}}%
      \expandafter\def\csname LT2\endcsname{\color{black}}%
      \expandafter\def\csname LT3\endcsname{\color{black}}%
      \expandafter\def\csname LT4\endcsname{\color{black}}%
      \expandafter\def\csname LT5\endcsname{\color{black}}%
      \expandafter\def\csname LT6\endcsname{\color{black}}%
      \expandafter\def\csname LT7\endcsname{\color{black}}%
      \expandafter\def\csname LT8\endcsname{\color{black}}%
    \fi
  \fi
    \setlength{\unitlength}{0.0500bp}%
    \ifx\gptboxheight\undefined%
      \newlength{\gptboxheight}%
      \newlength{\gptboxwidth}%
      \newsavebox{\gptboxtext}%
    \fi%
    \setlength{\fboxrule}{0.5pt}%
    \setlength{\fboxsep}{1pt}%
\begin{picture}(6802.00,3614.00)%
    \gplgaddtomacro\gplbacktext{%
      \csname LTb\endcsname%
      \put(814,704){\makebox(0,0)[r]{\strut{}$0$}}%
      \csname LTb\endcsname%
      \put(814,1113){\makebox(0,0)[r]{\strut{}$0.2$}}%
      \csname LTb\endcsname%
      \put(814,1522){\makebox(0,0)[r]{\strut{}$0.4$}}%
      \csname LTb\endcsname%
      \put(814,1931){\makebox(0,0)[r]{\strut{}$0.6$}}%
      \csname LTb\endcsname%
      \put(814,2340){\makebox(0,0)[r]{\strut{}$0.8$}}%
      \csname LTb\endcsname%
      \put(814,2749){\makebox(0,0)[r]{\strut{}$1$}}%
      \csname LTb\endcsname%
      \put(1234,484){\makebox(0,0){\strut{}$10^{3}$}}%
      \csname LTb\endcsname%
      \put(3096,484){\makebox(0,0){\strut{}$10^{4}$}}%
      \csname LTb\endcsname%
      \put(4957,484){\makebox(0,0){\strut{}$10^{5}$}}%
    }%
    \gplgaddtomacro\gplfronttext{%
      \csname LTb\endcsname%
      \put(176,1828){\rotatebox{-270}{\makebox(0,0){\strut{}Avg. Local Clustering Coeff.}}}%
      \put(3675,154){\makebox(0,0){\strut{}Number $n$ of nodes}}%
      \put(3675,3283){\makebox(0,0){\strut{}Mixing: $\mu = 0.2$}}%
      \csname LTb\endcsname%
      \put(5418,2780){\makebox(0,0)[r]{\strut{}Orig}}%
      \csname LTb\endcsname%
      \put(5418,2560){\makebox(0,0)[r]{\strut{}NetworKit}}%
      \csname LTb\endcsname%
      \put(5418,2340){\makebox(0,0)[r]{\strut{}EM}}%
    }%
    \gplbacktext
    \put(0,0){\includegraphics{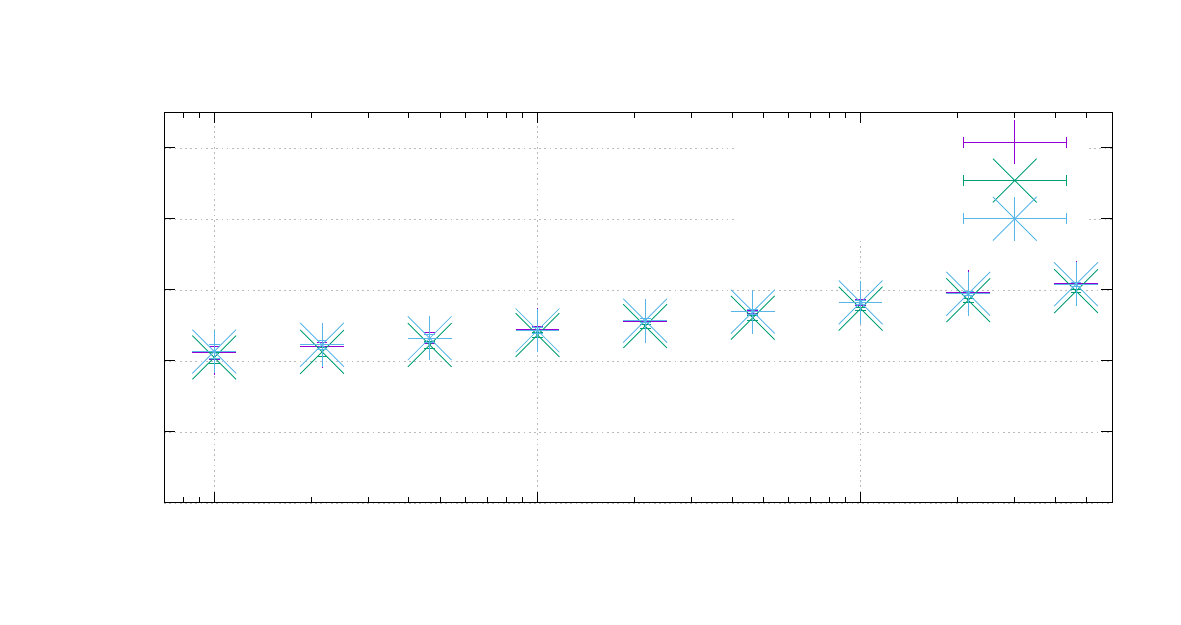}}%
    \gplfronttext
  \end{picture}%
\endgroup
}\hfill\scalebox{\threescale}{%
\begingroup
  \makeatletter
  \providecommand\color[2][]{%
    \GenericError{(gnuplot) \space\space\space\@spaces}{%
      Package color not loaded in conjunction with
      terminal option `colourtext'%
    }{See the gnuplot documentation for explanation.%
    }{Either use 'blacktext' in gnuplot or load the package
      color.sty in LaTeX.}%
    \renewcommand\color[2][]{}%
  }%
  \providecommand\includegraphics[2][]{%
    \GenericError{(gnuplot) \space\space\space\@spaces}{%
      Package graphicx or graphics not loaded%
    }{See the gnuplot documentation for explanation.%
    }{The gnuplot epslatex terminal needs graphicx.sty or graphics.sty.}%
    \renewcommand\includegraphics[2][]{}%
  }%
  \providecommand\rotatebox[2]{#2}%
  \@ifundefined{ifGPcolor}{%
    \newif\ifGPcolor
    \GPcolortrue
  }{}%
  \@ifundefined{ifGPblacktext}{%
    \newif\ifGPblacktext
    \GPblacktextfalse
  }{}%
  \let\gplgaddtomacro\g@addto@macro
  \gdef\gplbacktext{}%
  \gdef\gplfronttext{}%
  \makeatother
  \ifGPblacktext
    \def\colorrgb#1{}%
    \def\colorgray#1{}%
  \else
    \ifGPcolor
      \def\colorrgb#1{\color[rgb]{#1}}%
      \def\colorgray#1{\color[gray]{#1}}%
      \expandafter\def\csname LTw\endcsname{\color{white}}%
      \expandafter\def\csname LTb\endcsname{\color{black}}%
      \expandafter\def\csname LTa\endcsname{\color{black}}%
      \expandafter\def\csname LT0\endcsname{\color[rgb]{1,0,0}}%
      \expandafter\def\csname LT1\endcsname{\color[rgb]{0,1,0}}%
      \expandafter\def\csname LT2\endcsname{\color[rgb]{0,0,1}}%
      \expandafter\def\csname LT3\endcsname{\color[rgb]{1,0,1}}%
      \expandafter\def\csname LT4\endcsname{\color[rgb]{0,1,1}}%
      \expandafter\def\csname LT5\endcsname{\color[rgb]{1,1,0}}%
      \expandafter\def\csname LT6\endcsname{\color[rgb]{0,0,0}}%
      \expandafter\def\csname LT7\endcsname{\color[rgb]{1,0.3,0}}%
      \expandafter\def\csname LT8\endcsname{\color[rgb]{0.5,0.5,0.5}}%
    \else
      \def\colorrgb#1{\color{black}}%
      \def\colorgray#1{\color[gray]{#1}}%
      \expandafter\def\csname LTw\endcsname{\color{white}}%
      \expandafter\def\csname LTb\endcsname{\color{black}}%
      \expandafter\def\csname LTa\endcsname{\color{black}}%
      \expandafter\def\csname LT0\endcsname{\color{black}}%
      \expandafter\def\csname LT1\endcsname{\color{black}}%
      \expandafter\def\csname LT2\endcsname{\color{black}}%
      \expandafter\def\csname LT3\endcsname{\color{black}}%
      \expandafter\def\csname LT4\endcsname{\color{black}}%
      \expandafter\def\csname LT5\endcsname{\color{black}}%
      \expandafter\def\csname LT6\endcsname{\color{black}}%
      \expandafter\def\csname LT7\endcsname{\color{black}}%
      \expandafter\def\csname LT8\endcsname{\color{black}}%
    \fi
  \fi
    \setlength{\unitlength}{0.0500bp}%
    \ifx\gptboxheight\undefined%
      \newlength{\gptboxheight}%
      \newlength{\gptboxwidth}%
      \newsavebox{\gptboxtext}%
    \fi%
    \setlength{\fboxrule}{0.5pt}%
    \setlength{\fboxsep}{1pt}%
\begin{picture}(6802.00,3614.00)%
    \gplgaddtomacro\gplbacktext{%
      \csname LTb\endcsname%
      \put(814,704){\makebox(0,0)[r]{\strut{}$0$}}%
      \csname LTb\endcsname%
      \put(814,1113){\makebox(0,0)[r]{\strut{}$0.2$}}%
      \csname LTb\endcsname%
      \put(814,1522){\makebox(0,0)[r]{\strut{}$0.4$}}%
      \csname LTb\endcsname%
      \put(814,1931){\makebox(0,0)[r]{\strut{}$0.6$}}%
      \csname LTb\endcsname%
      \put(814,2340){\makebox(0,0)[r]{\strut{}$0.8$}}%
      \csname LTb\endcsname%
      \put(814,2749){\makebox(0,0)[r]{\strut{}$1$}}%
      \csname LTb\endcsname%
      \put(1234,484){\makebox(0,0){\strut{}$10^{3}$}}%
      \csname LTb\endcsname%
      \put(3096,484){\makebox(0,0){\strut{}$10^{4}$}}%
      \csname LTb\endcsname%
      \put(4957,484){\makebox(0,0){\strut{}$10^{5}$}}%
    }%
    \gplgaddtomacro\gplfronttext{%
      \csname LTb\endcsname%
      \put(176,1828){\rotatebox{-270}{\makebox(0,0){\strut{}Avg. Local Clustering Coeff.}}}%
      \put(3675,154){\makebox(0,0){\strut{}Number $n$ of nodes}}%
      \put(3675,3283){\makebox(0,0){\strut{}Mixing: $\mu = 0.4$}}%
      \csname LTb\endcsname%
      \put(5418,2780){\makebox(0,0)[r]{\strut{}Orig}}%
      \csname LTb\endcsname%
      \put(5418,2560){\makebox(0,0)[r]{\strut{}NetworKit}}%
      \csname LTb\endcsname%
      \put(5418,2340){\makebox(0,0)[r]{\strut{}EM}}%
    }%
    \gplbacktext
    \put(0,0){\includegraphics{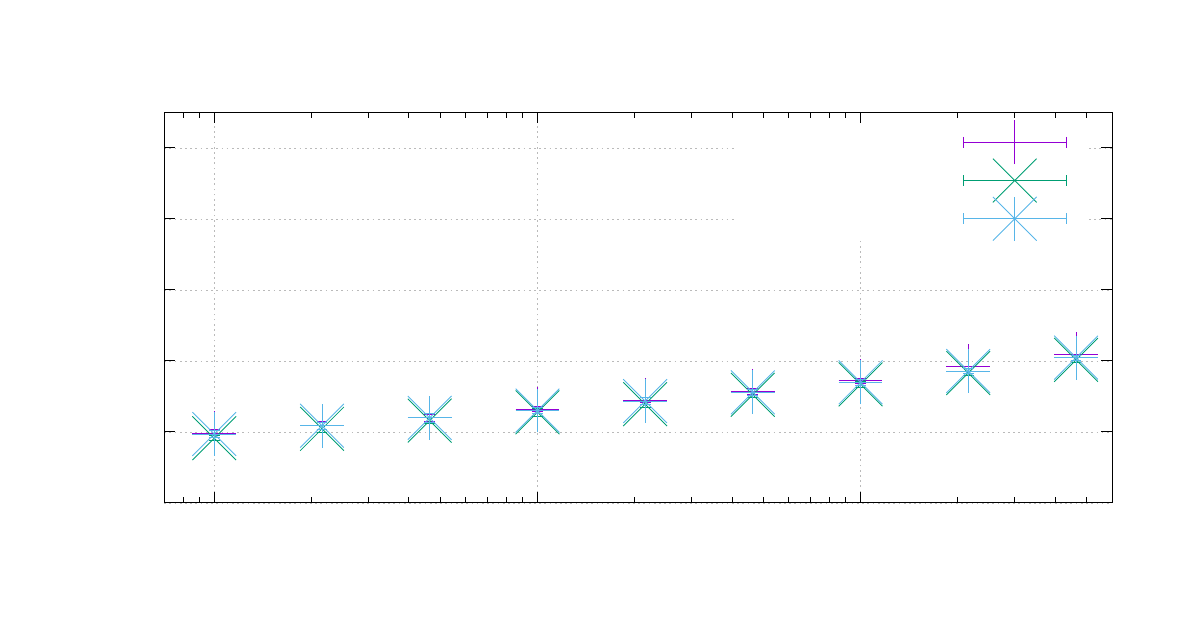}}%
    \gplfronttext
  \end{picture}%
\endgroup
}\hfill\scalebox{\threescale}{%
\begingroup
  \makeatletter
  \providecommand\color[2][]{%
    \GenericError{(gnuplot) \space\space\space\@spaces}{%
      Package color not loaded in conjunction with
      terminal option `colourtext'%
    }{See the gnuplot documentation for explanation.%
    }{Either use 'blacktext' in gnuplot or load the package
      color.sty in LaTeX.}%
    \renewcommand\color[2][]{}%
  }%
  \providecommand\includegraphics[2][]{%
    \GenericError{(gnuplot) \space\space\space\@spaces}{%
      Package graphicx or graphics not loaded%
    }{See the gnuplot documentation for explanation.%
    }{The gnuplot epslatex terminal needs graphicx.sty or graphics.sty.}%
    \renewcommand\includegraphics[2][]{}%
  }%
  \providecommand\rotatebox[2]{#2}%
  \@ifundefined{ifGPcolor}{%
    \newif\ifGPcolor
    \GPcolortrue
  }{}%
  \@ifundefined{ifGPblacktext}{%
    \newif\ifGPblacktext
    \GPblacktextfalse
  }{}%
  \let\gplgaddtomacro\g@addto@macro
  \gdef\gplbacktext{}%
  \gdef\gplfronttext{}%
  \makeatother
  \ifGPblacktext
    \def\colorrgb#1{}%
    \def\colorgray#1{}%
  \else
    \ifGPcolor
      \def\colorrgb#1{\color[rgb]{#1}}%
      \def\colorgray#1{\color[gray]{#1}}%
      \expandafter\def\csname LTw\endcsname{\color{white}}%
      \expandafter\def\csname LTb\endcsname{\color{black}}%
      \expandafter\def\csname LTa\endcsname{\color{black}}%
      \expandafter\def\csname LT0\endcsname{\color[rgb]{1,0,0}}%
      \expandafter\def\csname LT1\endcsname{\color[rgb]{0,1,0}}%
      \expandafter\def\csname LT2\endcsname{\color[rgb]{0,0,1}}%
      \expandafter\def\csname LT3\endcsname{\color[rgb]{1,0,1}}%
      \expandafter\def\csname LT4\endcsname{\color[rgb]{0,1,1}}%
      \expandafter\def\csname LT5\endcsname{\color[rgb]{1,1,0}}%
      \expandafter\def\csname LT6\endcsname{\color[rgb]{0,0,0}}%
      \expandafter\def\csname LT7\endcsname{\color[rgb]{1,0.3,0}}%
      \expandafter\def\csname LT8\endcsname{\color[rgb]{0.5,0.5,0.5}}%
    \else
      \def\colorrgb#1{\color{black}}%
      \def\colorgray#1{\color[gray]{#1}}%
      \expandafter\def\csname LTw\endcsname{\color{white}}%
      \expandafter\def\csname LTb\endcsname{\color{black}}%
      \expandafter\def\csname LTa\endcsname{\color{black}}%
      \expandafter\def\csname LT0\endcsname{\color{black}}%
      \expandafter\def\csname LT1\endcsname{\color{black}}%
      \expandafter\def\csname LT2\endcsname{\color{black}}%
      \expandafter\def\csname LT3\endcsname{\color{black}}%
      \expandafter\def\csname LT4\endcsname{\color{black}}%
      \expandafter\def\csname LT5\endcsname{\color{black}}%
      \expandafter\def\csname LT6\endcsname{\color{black}}%
      \expandafter\def\csname LT7\endcsname{\color{black}}%
      \expandafter\def\csname LT8\endcsname{\color{black}}%
    \fi
  \fi
    \setlength{\unitlength}{0.0500bp}%
    \ifx\gptboxheight\undefined%
      \newlength{\gptboxheight}%
      \newlength{\gptboxwidth}%
      \newsavebox{\gptboxtext}%
    \fi%
    \setlength{\fboxrule}{0.5pt}%
    \setlength{\fboxsep}{1pt}%
\begin{picture}(6802.00,3614.00)%
    \gplgaddtomacro\gplbacktext{%
      \csname LTb\endcsname%
      \put(814,704){\makebox(0,0)[r]{\strut{}$0$}}%
      \csname LTb\endcsname%
      \put(814,1113){\makebox(0,0)[r]{\strut{}$0.2$}}%
      \csname LTb\endcsname%
      \put(814,1522){\makebox(0,0)[r]{\strut{}$0.4$}}%
      \csname LTb\endcsname%
      \put(814,1931){\makebox(0,0)[r]{\strut{}$0.6$}}%
      \csname LTb\endcsname%
      \put(814,2340){\makebox(0,0)[r]{\strut{}$0.8$}}%
      \csname LTb\endcsname%
      \put(814,2749){\makebox(0,0)[r]{\strut{}$1$}}%
      \csname LTb\endcsname%
      \put(1234,484){\makebox(0,0){\strut{}$10^{3}$}}%
      \csname LTb\endcsname%
      \put(3096,484){\makebox(0,0){\strut{}$10^{4}$}}%
      \csname LTb\endcsname%
      \put(4957,484){\makebox(0,0){\strut{}$10^{5}$}}%
    }%
    \gplgaddtomacro\gplfronttext{%
      \csname LTb\endcsname%
      \put(176,1828){\rotatebox{-270}{\makebox(0,0){\strut{}Avg. Local Clustering Coeff.}}}%
      \put(3675,154){\makebox(0,0){\strut{}Number $n$ of nodes}}%
      \put(3675,3283){\makebox(0,0){\strut{}Mixing: $\mu = 0.6$}}%
      \csname LTb\endcsname%
      \put(5418,2780){\makebox(0,0)[r]{\strut{}Orig}}%
      \csname LTb\endcsname%
      \put(5418,2560){\makebox(0,0)[r]{\strut{}NetworKit}}%
      \csname LTb\endcsname%
      \put(5418,2340){\makebox(0,0)[r]{\strut{}EM}}%
    }%
    \gplbacktext
    \put(0,0){\includegraphics{lfr_no_avgcc_1_6}}%
    \gplfronttext
  \end{picture}%
\endgroup
}\hfill\\ %

	\label{fig:lfr_no}
	Comparison of the original LFR implementation, the NetworKit implementation and our EM solution for values of 
		$10^3 \le n \le 10^6$, 
		$\mu{\in}\{0.2, 0.4, 0.6\}$, 
		$\gamma{=}2$, $\beta{=}-1$
		$d_\text{min}{=}10$, $d_\text{max}{=}n/20$,
		$s_\text{min}{=}10$, $s_\text{max}{=}n/20$.
	Clustering is performed using Infomap and Louvain and compared to the ground-truth emitted by the generator using AdjustedRandMeasure (AR) and Normalized Mutual Information (NMI); $S \ge 8$.
	Due to the computational costs, graphs with $n \ge 10^5$ have a reduced multiplicity.
	In case of the original implementation it may be based on a single run which accounts for the few outliers.
}

\clearpage

{
\noindent\scalebox{\threescale}{%
\begingroup
  \makeatletter
  \providecommand\color[2][]{%
    \GenericError{(gnuplot) \space\space\space\@spaces}{%
      Package color not loaded in conjunction with
      terminal option `colourtext'%
    }{See the gnuplot documentation for explanation.%
    }{Either use 'blacktext' in gnuplot or load the package
      color.sty in LaTeX.}%
    \renewcommand\color[2][]{}%
  }%
  \providecommand\includegraphics[2][]{%
    \GenericError{(gnuplot) \space\space\space\@spaces}{%
      Package graphicx or graphics not loaded%
    }{See the gnuplot documentation for explanation.%
    }{The gnuplot epslatex terminal needs graphicx.sty or graphics.sty.}%
    \renewcommand\includegraphics[2][]{}%
  }%
  \providecommand\rotatebox[2]{#2}%
  \@ifundefined{ifGPcolor}{%
    \newif\ifGPcolor
    \GPcolortrue
  }{}%
  \@ifundefined{ifGPblacktext}{%
    \newif\ifGPblacktext
    \GPblacktextfalse
  }{}%
  \let\gplgaddtomacro\g@addto@macro
  \gdef\gplbacktext{}%
  \gdef\gplfronttext{}%
  \makeatother
  \ifGPblacktext
    \def\colorrgb#1{}%
    \def\colorgray#1{}%
  \else
    \ifGPcolor
      \def\colorrgb#1{\color[rgb]{#1}}%
      \def\colorgray#1{\color[gray]{#1}}%
      \expandafter\def\csname LTw\endcsname{\color{white}}%
      \expandafter\def\csname LTb\endcsname{\color{black}}%
      \expandafter\def\csname LTa\endcsname{\color{black}}%
      \expandafter\def\csname LT0\endcsname{\color[rgb]{1,0,0}}%
      \expandafter\def\csname LT1\endcsname{\color[rgb]{0,1,0}}%
      \expandafter\def\csname LT2\endcsname{\color[rgb]{0,0,1}}%
      \expandafter\def\csname LT3\endcsname{\color[rgb]{1,0,1}}%
      \expandafter\def\csname LT4\endcsname{\color[rgb]{0,1,1}}%
      \expandafter\def\csname LT5\endcsname{\color[rgb]{1,1,0}}%
      \expandafter\def\csname LT6\endcsname{\color[rgb]{0,0,0}}%
      \expandafter\def\csname LT7\endcsname{\color[rgb]{1,0.3,0}}%
      \expandafter\def\csname LT8\endcsname{\color[rgb]{0.5,0.5,0.5}}%
    \else
      \def\colorrgb#1{\color{black}}%
      \def\colorgray#1{\color[gray]{#1}}%
      \expandafter\def\csname LTw\endcsname{\color{white}}%
      \expandafter\def\csname LTb\endcsname{\color{black}}%
      \expandafter\def\csname LTa\endcsname{\color{black}}%
      \expandafter\def\csname LT0\endcsname{\color{black}}%
      \expandafter\def\csname LT1\endcsname{\color{black}}%
      \expandafter\def\csname LT2\endcsname{\color{black}}%
      \expandafter\def\csname LT3\endcsname{\color{black}}%
      \expandafter\def\csname LT4\endcsname{\color{black}}%
      \expandafter\def\csname LT5\endcsname{\color{black}}%
      \expandafter\def\csname LT6\endcsname{\color{black}}%
      \expandafter\def\csname LT7\endcsname{\color{black}}%
      \expandafter\def\csname LT8\endcsname{\color{black}}%
    \fi
  \fi
    \setlength{\unitlength}{0.0500bp}%
    \ifx\gptboxheight\undefined%
      \newlength{\gptboxheight}%
      \newlength{\gptboxwidth}%
      \newsavebox{\gptboxtext}%
    \fi%
    \setlength{\fboxrule}{0.5pt}%
    \setlength{\fboxsep}{1pt}%
\begin{picture}(6802.00,3614.00)%
    \gplgaddtomacro\gplbacktext{%
      \csname LTb\endcsname%
      \put(814,704){\makebox(0,0)[r]{\strut{}$0$}}%
      \csname LTb\endcsname%
      \put(814,1113){\makebox(0,0)[r]{\strut{}$0.2$}}%
      \csname LTb\endcsname%
      \put(814,1522){\makebox(0,0)[r]{\strut{}$0.4$}}%
      \csname LTb\endcsname%
      \put(814,1931){\makebox(0,0)[r]{\strut{}$0.6$}}%
      \csname LTb\endcsname%
      \put(814,2340){\makebox(0,0)[r]{\strut{}$0.8$}}%
      \csname LTb\endcsname%
      \put(814,2749){\makebox(0,0)[r]{\strut{}$1$}}%
      \csname LTb\endcsname%
      \put(1578,484){\makebox(0,0){\strut{}$10^{3}$}}%
      \csname LTb\endcsname%
      \put(3676,484){\makebox(0,0){\strut{}$10^{4}$}}%
      \csname LTb\endcsname%
      \put(5773,484){\makebox(0,0){\strut{}$10^{5}$}}%
    }%
    \gplgaddtomacro\gplfronttext{%
      \csname LTb\endcsname%
      \put(176,1828){\rotatebox{-270}{\makebox(0,0){\strut{}NMI}}}%
      \put(3675,154){\makebox(0,0){\strut{}Number $n$ of nodes}}%
      \put(3675,3283){\makebox(0,0){\strut{}Mixing: $\mu = 0.2$, Cluster: OSLOM, Overlap: $\nu = 2$}}%
      \csname LTb\endcsname%
      \put(5418,2780){\makebox(0,0)[r]{\strut{}Orig}}%
      \csname LTb\endcsname%
      \put(5418,2560){\makebox(0,0)[r]{\strut{}EM}}%
    }%
    \gplbacktext
    \put(0,0){\includegraphics{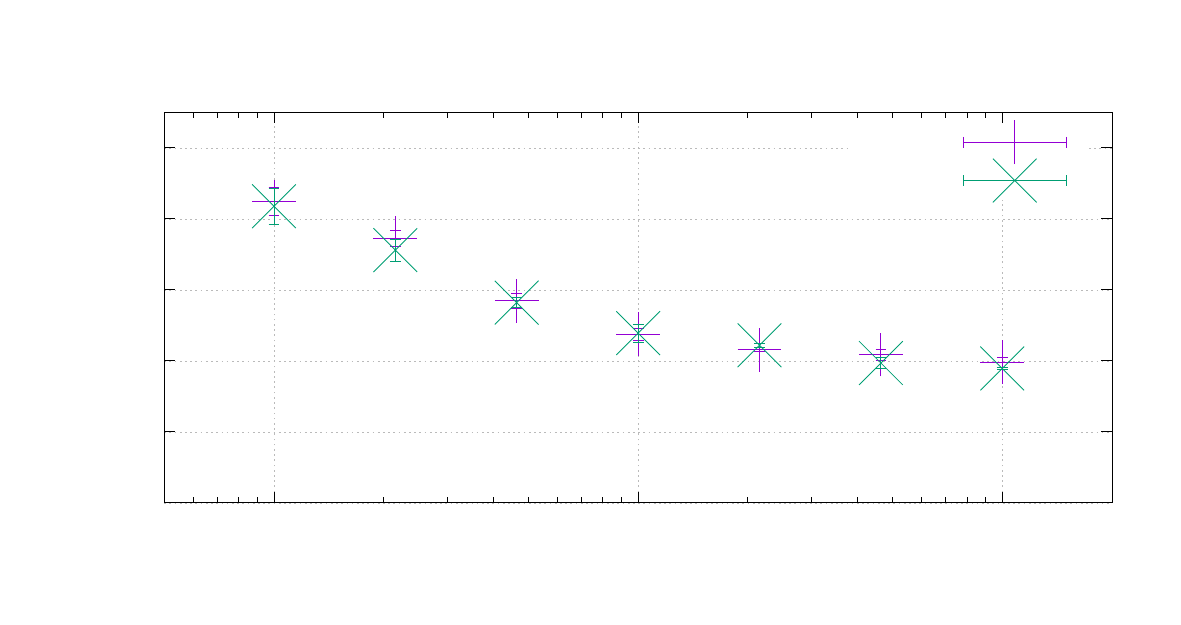}}%
    \gplfronttext
  \end{picture}%
\endgroup
}\hfill\scalebox{\threescale}{%
\begingroup
  \makeatletter
  \providecommand\color[2][]{%
    \GenericError{(gnuplot) \space\space\space\@spaces}{%
      Package color not loaded in conjunction with
      terminal option `colourtext'%
    }{See the gnuplot documentation for explanation.%
    }{Either use 'blacktext' in gnuplot or load the package
      color.sty in LaTeX.}%
    \renewcommand\color[2][]{}%
  }%
  \providecommand\includegraphics[2][]{%
    \GenericError{(gnuplot) \space\space\space\@spaces}{%
      Package graphicx or graphics not loaded%
    }{See the gnuplot documentation for explanation.%
    }{The gnuplot epslatex terminal needs graphicx.sty or graphics.sty.}%
    \renewcommand\includegraphics[2][]{}%
  }%
  \providecommand\rotatebox[2]{#2}%
  \@ifundefined{ifGPcolor}{%
    \newif\ifGPcolor
    \GPcolortrue
  }{}%
  \@ifundefined{ifGPblacktext}{%
    \newif\ifGPblacktext
    \GPblacktextfalse
  }{}%
  \let\gplgaddtomacro\g@addto@macro
  \gdef\gplbacktext{}%
  \gdef\gplfronttext{}%
  \makeatother
  \ifGPblacktext
    \def\colorrgb#1{}%
    \def\colorgray#1{}%
  \else
    \ifGPcolor
      \def\colorrgb#1{\color[rgb]{#1}}%
      \def\colorgray#1{\color[gray]{#1}}%
      \expandafter\def\csname LTw\endcsname{\color{white}}%
      \expandafter\def\csname LTb\endcsname{\color{black}}%
      \expandafter\def\csname LTa\endcsname{\color{black}}%
      \expandafter\def\csname LT0\endcsname{\color[rgb]{1,0,0}}%
      \expandafter\def\csname LT1\endcsname{\color[rgb]{0,1,0}}%
      \expandafter\def\csname LT2\endcsname{\color[rgb]{0,0,1}}%
      \expandafter\def\csname LT3\endcsname{\color[rgb]{1,0,1}}%
      \expandafter\def\csname LT4\endcsname{\color[rgb]{0,1,1}}%
      \expandafter\def\csname LT5\endcsname{\color[rgb]{1,1,0}}%
      \expandafter\def\csname LT6\endcsname{\color[rgb]{0,0,0}}%
      \expandafter\def\csname LT7\endcsname{\color[rgb]{1,0.3,0}}%
      \expandafter\def\csname LT8\endcsname{\color[rgb]{0.5,0.5,0.5}}%
    \else
      \def\colorrgb#1{\color{black}}%
      \def\colorgray#1{\color[gray]{#1}}%
      \expandafter\def\csname LTw\endcsname{\color{white}}%
      \expandafter\def\csname LTb\endcsname{\color{black}}%
      \expandafter\def\csname LTa\endcsname{\color{black}}%
      \expandafter\def\csname LT0\endcsname{\color{black}}%
      \expandafter\def\csname LT1\endcsname{\color{black}}%
      \expandafter\def\csname LT2\endcsname{\color{black}}%
      \expandafter\def\csname LT3\endcsname{\color{black}}%
      \expandafter\def\csname LT4\endcsname{\color{black}}%
      \expandafter\def\csname LT5\endcsname{\color{black}}%
      \expandafter\def\csname LT6\endcsname{\color{black}}%
      \expandafter\def\csname LT7\endcsname{\color{black}}%
      \expandafter\def\csname LT8\endcsname{\color{black}}%
    \fi
  \fi
    \setlength{\unitlength}{0.0500bp}%
    \ifx\gptboxheight\undefined%
      \newlength{\gptboxheight}%
      \newlength{\gptboxwidth}%
      \newsavebox{\gptboxtext}%
    \fi%
    \setlength{\fboxrule}{0.5pt}%
    \setlength{\fboxsep}{1pt}%
\begin{picture}(6802.00,3614.00)%
    \gplgaddtomacro\gplbacktext{%
      \csname LTb\endcsname%
      \put(814,704){\makebox(0,0)[r]{\strut{}$0$}}%
      \csname LTb\endcsname%
      \put(814,1113){\makebox(0,0)[r]{\strut{}$0.2$}}%
      \csname LTb\endcsname%
      \put(814,1522){\makebox(0,0)[r]{\strut{}$0.4$}}%
      \csname LTb\endcsname%
      \put(814,1931){\makebox(0,0)[r]{\strut{}$0.6$}}%
      \csname LTb\endcsname%
      \put(814,2340){\makebox(0,0)[r]{\strut{}$0.8$}}%
      \csname LTb\endcsname%
      \put(814,2749){\makebox(0,0)[r]{\strut{}$1$}}%
      \csname LTb\endcsname%
      \put(1578,484){\makebox(0,0){\strut{}$10^{3}$}}%
      \csname LTb\endcsname%
      \put(3676,484){\makebox(0,0){\strut{}$10^{4}$}}%
      \csname LTb\endcsname%
      \put(5773,484){\makebox(0,0){\strut{}$10^{5}$}}%
    }%
    \gplgaddtomacro\gplfronttext{%
      \csname LTb\endcsname%
      \put(176,1828){\rotatebox{-270}{\makebox(0,0){\strut{}NMI}}}%
      \put(3675,154){\makebox(0,0){\strut{}Number $n$ of nodes}}%
      \put(3675,3283){\makebox(0,0){\strut{}Mixing: $\mu = 0.4$, Cluster: OSLOM, Overlap: $\nu = 2$}}%
      \csname LTb\endcsname%
      \put(5418,2780){\makebox(0,0)[r]{\strut{}Orig}}%
      \csname LTb\endcsname%
      \put(5418,2560){\makebox(0,0)[r]{\strut{}EM}}%
    }%
    \gplbacktext
    \put(0,0){\includegraphics{lfr_nmi_2_4}}%
    \gplfronttext
  \end{picture}%
\endgroup
}\hfill\scalebox{\threescale}{%
\begingroup
  \makeatletter
  \providecommand\color[2][]{%
    \GenericError{(gnuplot) \space\space\space\@spaces}{%
      Package color not loaded in conjunction with
      terminal option `colourtext'%
    }{See the gnuplot documentation for explanation.%
    }{Either use 'blacktext' in gnuplot or load the package
      color.sty in LaTeX.}%
    \renewcommand\color[2][]{}%
  }%
  \providecommand\includegraphics[2][]{%
    \GenericError{(gnuplot) \space\space\space\@spaces}{%
      Package graphicx or graphics not loaded%
    }{See the gnuplot documentation for explanation.%
    }{The gnuplot epslatex terminal needs graphicx.sty or graphics.sty.}%
    \renewcommand\includegraphics[2][]{}%
  }%
  \providecommand\rotatebox[2]{#2}%
  \@ifundefined{ifGPcolor}{%
    \newif\ifGPcolor
    \GPcolortrue
  }{}%
  \@ifundefined{ifGPblacktext}{%
    \newif\ifGPblacktext
    \GPblacktextfalse
  }{}%
  \let\gplgaddtomacro\g@addto@macro
  \gdef\gplbacktext{}%
  \gdef\gplfronttext{}%
  \makeatother
  \ifGPblacktext
    \def\colorrgb#1{}%
    \def\colorgray#1{}%
  \else
    \ifGPcolor
      \def\colorrgb#1{\color[rgb]{#1}}%
      \def\colorgray#1{\color[gray]{#1}}%
      \expandafter\def\csname LTw\endcsname{\color{white}}%
      \expandafter\def\csname LTb\endcsname{\color{black}}%
      \expandafter\def\csname LTa\endcsname{\color{black}}%
      \expandafter\def\csname LT0\endcsname{\color[rgb]{1,0,0}}%
      \expandafter\def\csname LT1\endcsname{\color[rgb]{0,1,0}}%
      \expandafter\def\csname LT2\endcsname{\color[rgb]{0,0,1}}%
      \expandafter\def\csname LT3\endcsname{\color[rgb]{1,0,1}}%
      \expandafter\def\csname LT4\endcsname{\color[rgb]{0,1,1}}%
      \expandafter\def\csname LT5\endcsname{\color[rgb]{1,1,0}}%
      \expandafter\def\csname LT6\endcsname{\color[rgb]{0,0,0}}%
      \expandafter\def\csname LT7\endcsname{\color[rgb]{1,0.3,0}}%
      \expandafter\def\csname LT8\endcsname{\color[rgb]{0.5,0.5,0.5}}%
    \else
      \def\colorrgb#1{\color{black}}%
      \def\colorgray#1{\color[gray]{#1}}%
      \expandafter\def\csname LTw\endcsname{\color{white}}%
      \expandafter\def\csname LTb\endcsname{\color{black}}%
      \expandafter\def\csname LTa\endcsname{\color{black}}%
      \expandafter\def\csname LT0\endcsname{\color{black}}%
      \expandafter\def\csname LT1\endcsname{\color{black}}%
      \expandafter\def\csname LT2\endcsname{\color{black}}%
      \expandafter\def\csname LT3\endcsname{\color{black}}%
      \expandafter\def\csname LT4\endcsname{\color{black}}%
      \expandafter\def\csname LT5\endcsname{\color{black}}%
      \expandafter\def\csname LT6\endcsname{\color{black}}%
      \expandafter\def\csname LT7\endcsname{\color{black}}%
      \expandafter\def\csname LT8\endcsname{\color{black}}%
    \fi
  \fi
    \setlength{\unitlength}{0.0500bp}%
    \ifx\gptboxheight\undefined%
      \newlength{\gptboxheight}%
      \newlength{\gptboxwidth}%
      \newsavebox{\gptboxtext}%
    \fi%
    \setlength{\fboxrule}{0.5pt}%
    \setlength{\fboxsep}{1pt}%
\begin{picture}(6802.00,3614.00)%
    \gplgaddtomacro\gplbacktext{%
      \csname LTb\endcsname%
      \put(814,704){\makebox(0,0)[r]{\strut{}$0$}}%
      \csname LTb\endcsname%
      \put(814,1113){\makebox(0,0)[r]{\strut{}$0.2$}}%
      \csname LTb\endcsname%
      \put(814,1522){\makebox(0,0)[r]{\strut{}$0.4$}}%
      \csname LTb\endcsname%
      \put(814,1931){\makebox(0,0)[r]{\strut{}$0.6$}}%
      \csname LTb\endcsname%
      \put(814,2340){\makebox(0,0)[r]{\strut{}$0.8$}}%
      \csname LTb\endcsname%
      \put(814,2749){\makebox(0,0)[r]{\strut{}$1$}}%
      \csname LTb\endcsname%
      \put(1578,484){\makebox(0,0){\strut{}$10^{3}$}}%
      \csname LTb\endcsname%
      \put(3676,484){\makebox(0,0){\strut{}$10^{4}$}}%
      \csname LTb\endcsname%
      \put(5773,484){\makebox(0,0){\strut{}$10^{5}$}}%
    }%
    \gplgaddtomacro\gplfronttext{%
      \csname LTb\endcsname%
      \put(176,1828){\rotatebox{-270}{\makebox(0,0){\strut{}NMI}}}%
      \put(3675,154){\makebox(0,0){\strut{}Number $n$ of nodes}}%
      \put(3675,3283){\makebox(0,0){\strut{}Mixing: $\mu = 0.6$, Cluster: OSLOM, Overlap: $\nu = 2$}}%
      \csname LTb\endcsname%
      \put(5418,2780){\makebox(0,0)[r]{\strut{}Orig}}%
      \csname LTb\endcsname%
      \put(5418,2560){\makebox(0,0)[r]{\strut{}EM}}%
    }%
    \gplbacktext
    \put(0,0){\includegraphics{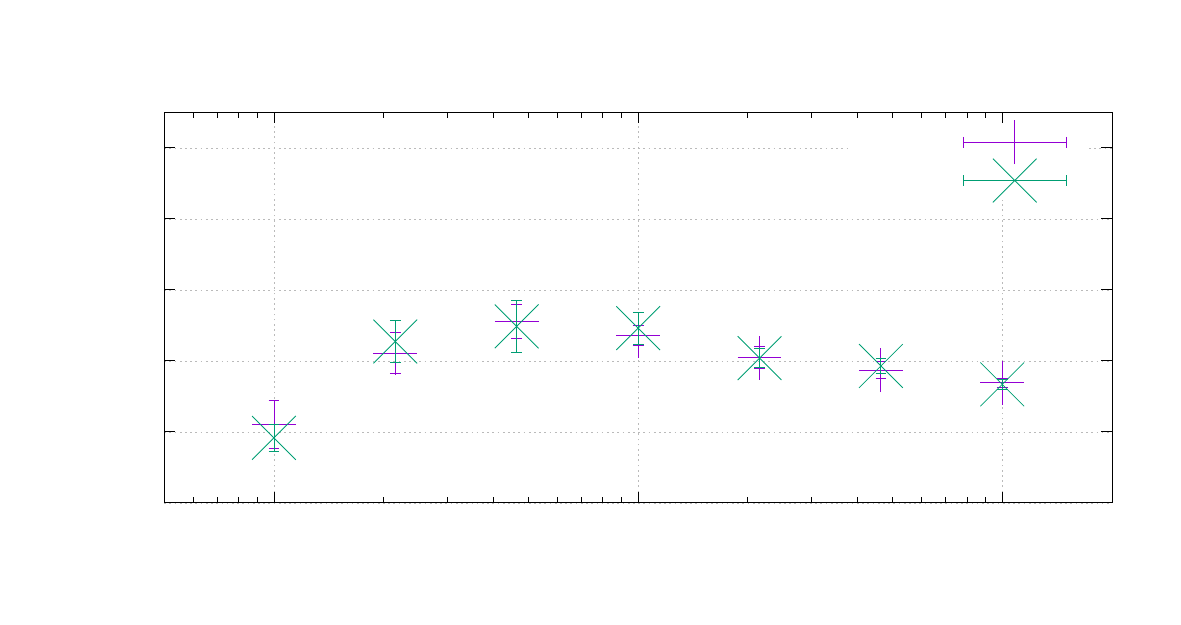}}%
    \gplfronttext
  \end{picture}%
\endgroup
}\hfill\\ %
\noindent\scalebox{\threescale}{%
\begingroup
  \makeatletter
  \providecommand\color[2][]{%
    \GenericError{(gnuplot) \space\space\space\@spaces}{%
      Package color not loaded in conjunction with
      terminal option `colourtext'%
    }{See the gnuplot documentation for explanation.%
    }{Either use 'blacktext' in gnuplot or load the package
      color.sty in LaTeX.}%
    \renewcommand\color[2][]{}%
  }%
  \providecommand\includegraphics[2][]{%
    \GenericError{(gnuplot) \space\space\space\@spaces}{%
      Package graphicx or graphics not loaded%
    }{See the gnuplot documentation for explanation.%
    }{The gnuplot epslatex terminal needs graphicx.sty or graphics.sty.}%
    \renewcommand\includegraphics[2][]{}%
  }%
  \providecommand\rotatebox[2]{#2}%
  \@ifundefined{ifGPcolor}{%
    \newif\ifGPcolor
    \GPcolortrue
  }{}%
  \@ifundefined{ifGPblacktext}{%
    \newif\ifGPblacktext
    \GPblacktextfalse
  }{}%
  \let\gplgaddtomacro\g@addto@macro
  \gdef\gplbacktext{}%
  \gdef\gplfronttext{}%
  \makeatother
  \ifGPblacktext
    \def\colorrgb#1{}%
    \def\colorgray#1{}%
  \else
    \ifGPcolor
      \def\colorrgb#1{\color[rgb]{#1}}%
      \def\colorgray#1{\color[gray]{#1}}%
      \expandafter\def\csname LTw\endcsname{\color{white}}%
      \expandafter\def\csname LTb\endcsname{\color{black}}%
      \expandafter\def\csname LTa\endcsname{\color{black}}%
      \expandafter\def\csname LT0\endcsname{\color[rgb]{1,0,0}}%
      \expandafter\def\csname LT1\endcsname{\color[rgb]{0,1,0}}%
      \expandafter\def\csname LT2\endcsname{\color[rgb]{0,0,1}}%
      \expandafter\def\csname LT3\endcsname{\color[rgb]{1,0,1}}%
      \expandafter\def\csname LT4\endcsname{\color[rgb]{0,1,1}}%
      \expandafter\def\csname LT5\endcsname{\color[rgb]{1,1,0}}%
      \expandafter\def\csname LT6\endcsname{\color[rgb]{0,0,0}}%
      \expandafter\def\csname LT7\endcsname{\color[rgb]{1,0.3,0}}%
      \expandafter\def\csname LT8\endcsname{\color[rgb]{0.5,0.5,0.5}}%
    \else
      \def\colorrgb#1{\color{black}}%
      \def\colorgray#1{\color[gray]{#1}}%
      \expandafter\def\csname LTw\endcsname{\color{white}}%
      \expandafter\def\csname LTb\endcsname{\color{black}}%
      \expandafter\def\csname LTa\endcsname{\color{black}}%
      \expandafter\def\csname LT0\endcsname{\color{black}}%
      \expandafter\def\csname LT1\endcsname{\color{black}}%
      \expandafter\def\csname LT2\endcsname{\color{black}}%
      \expandafter\def\csname LT3\endcsname{\color{black}}%
      \expandafter\def\csname LT4\endcsname{\color{black}}%
      \expandafter\def\csname LT5\endcsname{\color{black}}%
      \expandafter\def\csname LT6\endcsname{\color{black}}%
      \expandafter\def\csname LT7\endcsname{\color{black}}%
      \expandafter\def\csname LT8\endcsname{\color{black}}%
    \fi
  \fi
    \setlength{\unitlength}{0.0500bp}%
    \ifx\gptboxheight\undefined%
      \newlength{\gptboxheight}%
      \newlength{\gptboxwidth}%
      \newsavebox{\gptboxtext}%
    \fi%
    \setlength{\fboxrule}{0.5pt}%
    \setlength{\fboxsep}{1pt}%
\begin{picture}(6802.00,3614.00)%
    \gplgaddtomacro\gplbacktext{%
      \csname LTb\endcsname%
      \put(814,704){\makebox(0,0)[r]{\strut{}$0$}}%
      \csname LTb\endcsname%
      \put(814,1113){\makebox(0,0)[r]{\strut{}$0.2$}}%
      \csname LTb\endcsname%
      \put(814,1522){\makebox(0,0)[r]{\strut{}$0.4$}}%
      \csname LTb\endcsname%
      \put(814,1931){\makebox(0,0)[r]{\strut{}$0.6$}}%
      \csname LTb\endcsname%
      \put(814,2340){\makebox(0,0)[r]{\strut{}$0.8$}}%
      \csname LTb\endcsname%
      \put(814,2749){\makebox(0,0)[r]{\strut{}$1$}}%
      \csname LTb\endcsname%
      \put(1578,484){\makebox(0,0){\strut{}$10^{3}$}}%
      \csname LTb\endcsname%
      \put(3676,484){\makebox(0,0){\strut{}$10^{4}$}}%
      \csname LTb\endcsname%
      \put(5773,484){\makebox(0,0){\strut{}$10^{5}$}}%
    }%
    \gplgaddtomacro\gplfronttext{%
      \csname LTb\endcsname%
      \put(176,1828){\rotatebox{-270}{\makebox(0,0){\strut{}Avg. Local Clustering Coeff.}}}%
      \put(3675,154){\makebox(0,0){\strut{}Number $n$ of nodes}}%
      \put(3675,3283){\makebox(0,0){\strut{}Mixing: $\mu = 0.2$}}%
      \csname LTb\endcsname%
      \put(5418,2780){\makebox(0,0)[r]{\strut{}Orig}}%
      \csname LTb\endcsname%
      \put(5418,2560){\makebox(0,0)[r]{\strut{}EM}}%
    }%
    \gplbacktext
    \put(0,0){\includegraphics{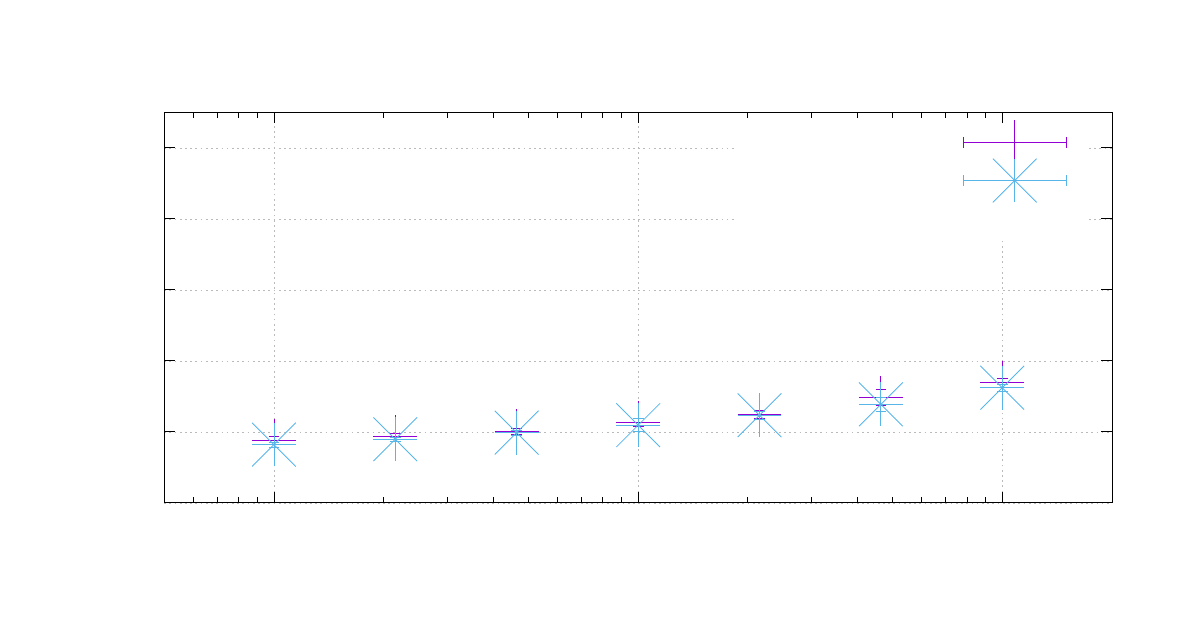}}%
    \gplfronttext
  \end{picture}%
\endgroup
}\hfill\scalebox{\threescale}{%
\begingroup
  \makeatletter
  \providecommand\color[2][]{%
    \GenericError{(gnuplot) \space\space\space\@spaces}{%
      Package color not loaded in conjunction with
      terminal option `colourtext'%
    }{See the gnuplot documentation for explanation.%
    }{Either use 'blacktext' in gnuplot or load the package
      color.sty in LaTeX.}%
    \renewcommand\color[2][]{}%
  }%
  \providecommand\includegraphics[2][]{%
    \GenericError{(gnuplot) \space\space\space\@spaces}{%
      Package graphicx or graphics not loaded%
    }{See the gnuplot documentation for explanation.%
    }{The gnuplot epslatex terminal needs graphicx.sty or graphics.sty.}%
    \renewcommand\includegraphics[2][]{}%
  }%
  \providecommand\rotatebox[2]{#2}%
  \@ifundefined{ifGPcolor}{%
    \newif\ifGPcolor
    \GPcolortrue
  }{}%
  \@ifundefined{ifGPblacktext}{%
    \newif\ifGPblacktext
    \GPblacktextfalse
  }{}%
  \let\gplgaddtomacro\g@addto@macro
  \gdef\gplbacktext{}%
  \gdef\gplfronttext{}%
  \makeatother
  \ifGPblacktext
    \def\colorrgb#1{}%
    \def\colorgray#1{}%
  \else
    \ifGPcolor
      \def\colorrgb#1{\color[rgb]{#1}}%
      \def\colorgray#1{\color[gray]{#1}}%
      \expandafter\def\csname LTw\endcsname{\color{white}}%
      \expandafter\def\csname LTb\endcsname{\color{black}}%
      \expandafter\def\csname LTa\endcsname{\color{black}}%
      \expandafter\def\csname LT0\endcsname{\color[rgb]{1,0,0}}%
      \expandafter\def\csname LT1\endcsname{\color[rgb]{0,1,0}}%
      \expandafter\def\csname LT2\endcsname{\color[rgb]{0,0,1}}%
      \expandafter\def\csname LT3\endcsname{\color[rgb]{1,0,1}}%
      \expandafter\def\csname LT4\endcsname{\color[rgb]{0,1,1}}%
      \expandafter\def\csname LT5\endcsname{\color[rgb]{1,1,0}}%
      \expandafter\def\csname LT6\endcsname{\color[rgb]{0,0,0}}%
      \expandafter\def\csname LT7\endcsname{\color[rgb]{1,0.3,0}}%
      \expandafter\def\csname LT8\endcsname{\color[rgb]{0.5,0.5,0.5}}%
    \else
      \def\colorrgb#1{\color{black}}%
      \def\colorgray#1{\color[gray]{#1}}%
      \expandafter\def\csname LTw\endcsname{\color{white}}%
      \expandafter\def\csname LTb\endcsname{\color{black}}%
      \expandafter\def\csname LTa\endcsname{\color{black}}%
      \expandafter\def\csname LT0\endcsname{\color{black}}%
      \expandafter\def\csname LT1\endcsname{\color{black}}%
      \expandafter\def\csname LT2\endcsname{\color{black}}%
      \expandafter\def\csname LT3\endcsname{\color{black}}%
      \expandafter\def\csname LT4\endcsname{\color{black}}%
      \expandafter\def\csname LT5\endcsname{\color{black}}%
      \expandafter\def\csname LT6\endcsname{\color{black}}%
      \expandafter\def\csname LT7\endcsname{\color{black}}%
      \expandafter\def\csname LT8\endcsname{\color{black}}%
    \fi
  \fi
    \setlength{\unitlength}{0.0500bp}%
    \ifx\gptboxheight\undefined%
      \newlength{\gptboxheight}%
      \newlength{\gptboxwidth}%
      \newsavebox{\gptboxtext}%
    \fi%
    \setlength{\fboxrule}{0.5pt}%
    \setlength{\fboxsep}{1pt}%
\begin{picture}(6802.00,3614.00)%
    \gplgaddtomacro\gplbacktext{%
      \csname LTb\endcsname%
      \put(814,704){\makebox(0,0)[r]{\strut{}$0$}}%
      \csname LTb\endcsname%
      \put(814,1113){\makebox(0,0)[r]{\strut{}$0.2$}}%
      \csname LTb\endcsname%
      \put(814,1522){\makebox(0,0)[r]{\strut{}$0.4$}}%
      \csname LTb\endcsname%
      \put(814,1931){\makebox(0,0)[r]{\strut{}$0.6$}}%
      \csname LTb\endcsname%
      \put(814,2340){\makebox(0,0)[r]{\strut{}$0.8$}}%
      \csname LTb\endcsname%
      \put(814,2749){\makebox(0,0)[r]{\strut{}$1$}}%
      \csname LTb\endcsname%
      \put(1578,484){\makebox(0,0){\strut{}$10^{3}$}}%
      \csname LTb\endcsname%
      \put(3676,484){\makebox(0,0){\strut{}$10^{4}$}}%
      \csname LTb\endcsname%
      \put(5773,484){\makebox(0,0){\strut{}$10^{5}$}}%
    }%
    \gplgaddtomacro\gplfronttext{%
      \csname LTb\endcsname%
      \put(176,1828){\rotatebox{-270}{\makebox(0,0){\strut{}Avg. Local Clustering Coeff.}}}%
      \put(3675,154){\makebox(0,0){\strut{}Number $n$ of nodes}}%
      \put(3675,3283){\makebox(0,0){\strut{}Mixing: $\mu = 0.4$}}%
      \csname LTb\endcsname%
      \put(5418,2780){\makebox(0,0)[r]{\strut{}Orig}}%
      \csname LTb\endcsname%
      \put(5418,2560){\makebox(0,0)[r]{\strut{}EM}}%
    }%
    \gplbacktext
    \put(0,0){\includegraphics{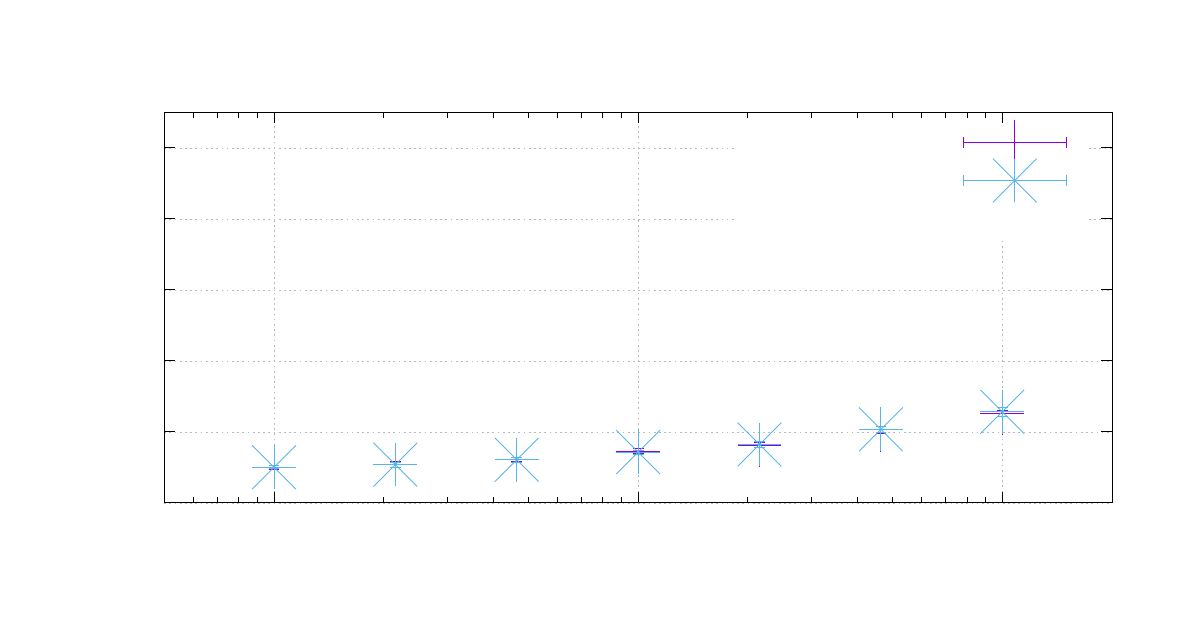}}%
    \gplfronttext
  \end{picture}%
\endgroup
}\hfill\scalebox{\threescale}{%
\begingroup
  \makeatletter
  \providecommand\color[2][]{%
    \GenericError{(gnuplot) \space\space\space\@spaces}{%
      Package color not loaded in conjunction with
      terminal option `colourtext'%
    }{See the gnuplot documentation for explanation.%
    }{Either use 'blacktext' in gnuplot or load the package
      color.sty in LaTeX.}%
    \renewcommand\color[2][]{}%
  }%
  \providecommand\includegraphics[2][]{%
    \GenericError{(gnuplot) \space\space\space\@spaces}{%
      Package graphicx or graphics not loaded%
    }{See the gnuplot documentation for explanation.%
    }{The gnuplot epslatex terminal needs graphicx.sty or graphics.sty.}%
    \renewcommand\includegraphics[2][]{}%
  }%
  \providecommand\rotatebox[2]{#2}%
  \@ifundefined{ifGPcolor}{%
    \newif\ifGPcolor
    \GPcolortrue
  }{}%
  \@ifundefined{ifGPblacktext}{%
    \newif\ifGPblacktext
    \GPblacktextfalse
  }{}%
  \let\gplgaddtomacro\g@addto@macro
  \gdef\gplbacktext{}%
  \gdef\gplfronttext{}%
  \makeatother
  \ifGPblacktext
    \def\colorrgb#1{}%
    \def\colorgray#1{}%
  \else
    \ifGPcolor
      \def\colorrgb#1{\color[rgb]{#1}}%
      \def\colorgray#1{\color[gray]{#1}}%
      \expandafter\def\csname LTw\endcsname{\color{white}}%
      \expandafter\def\csname LTb\endcsname{\color{black}}%
      \expandafter\def\csname LTa\endcsname{\color{black}}%
      \expandafter\def\csname LT0\endcsname{\color[rgb]{1,0,0}}%
      \expandafter\def\csname LT1\endcsname{\color[rgb]{0,1,0}}%
      \expandafter\def\csname LT2\endcsname{\color[rgb]{0,0,1}}%
      \expandafter\def\csname LT3\endcsname{\color[rgb]{1,0,1}}%
      \expandafter\def\csname LT4\endcsname{\color[rgb]{0,1,1}}%
      \expandafter\def\csname LT5\endcsname{\color[rgb]{1,1,0}}%
      \expandafter\def\csname LT6\endcsname{\color[rgb]{0,0,0}}%
      \expandafter\def\csname LT7\endcsname{\color[rgb]{1,0.3,0}}%
      \expandafter\def\csname LT8\endcsname{\color[rgb]{0.5,0.5,0.5}}%
    \else
      \def\colorrgb#1{\color{black}}%
      \def\colorgray#1{\color[gray]{#1}}%
      \expandafter\def\csname LTw\endcsname{\color{white}}%
      \expandafter\def\csname LTb\endcsname{\color{black}}%
      \expandafter\def\csname LTa\endcsname{\color{black}}%
      \expandafter\def\csname LT0\endcsname{\color{black}}%
      \expandafter\def\csname LT1\endcsname{\color{black}}%
      \expandafter\def\csname LT2\endcsname{\color{black}}%
      \expandafter\def\csname LT3\endcsname{\color{black}}%
      \expandafter\def\csname LT4\endcsname{\color{black}}%
      \expandafter\def\csname LT5\endcsname{\color{black}}%
      \expandafter\def\csname LT6\endcsname{\color{black}}%
      \expandafter\def\csname LT7\endcsname{\color{black}}%
      \expandafter\def\csname LT8\endcsname{\color{black}}%
    \fi
  \fi
    \setlength{\unitlength}{0.0500bp}%
    \ifx\gptboxheight\undefined%
      \newlength{\gptboxheight}%
      \newlength{\gptboxwidth}%
      \newsavebox{\gptboxtext}%
    \fi%
    \setlength{\fboxrule}{0.5pt}%
    \setlength{\fboxsep}{1pt}%
\begin{picture}(6802.00,3614.00)%
    \gplgaddtomacro\gplbacktext{%
      \csname LTb\endcsname%
      \put(814,704){\makebox(0,0)[r]{\strut{}$0$}}%
      \csname LTb\endcsname%
      \put(814,1113){\makebox(0,0)[r]{\strut{}$0.2$}}%
      \csname LTb\endcsname%
      \put(814,1522){\makebox(0,0)[r]{\strut{}$0.4$}}%
      \csname LTb\endcsname%
      \put(814,1931){\makebox(0,0)[r]{\strut{}$0.6$}}%
      \csname LTb\endcsname%
      \put(814,2340){\makebox(0,0)[r]{\strut{}$0.8$}}%
      \csname LTb\endcsname%
      \put(814,2749){\makebox(0,0)[r]{\strut{}$1$}}%
      \csname LTb\endcsname%
      \put(1578,484){\makebox(0,0){\strut{}$10^{3}$}}%
      \csname LTb\endcsname%
      \put(3676,484){\makebox(0,0){\strut{}$10^{4}$}}%
      \csname LTb\endcsname%
      \put(5773,484){\makebox(0,0){\strut{}$10^{5}$}}%
    }%
    \gplgaddtomacro\gplfronttext{%
      \csname LTb\endcsname%
      \put(176,1828){\rotatebox{-270}{\makebox(0,0){\strut{}Avg. Local Clustering Coeff.}}}%
      \put(3675,154){\makebox(0,0){\strut{}Number $n$ of nodes}}%
      \put(3675,3283){\makebox(0,0){\strut{}Mixing: $\mu = 0.6$}}%
      \csname LTb\endcsname%
      \put(5418,2780){\makebox(0,0)[r]{\strut{}Orig}}%
      \csname LTb\endcsname%
      \put(5418,2560){\makebox(0,0)[r]{\strut{}EM}}%
    }%
    \gplbacktext
    \put(0,0){\includegraphics{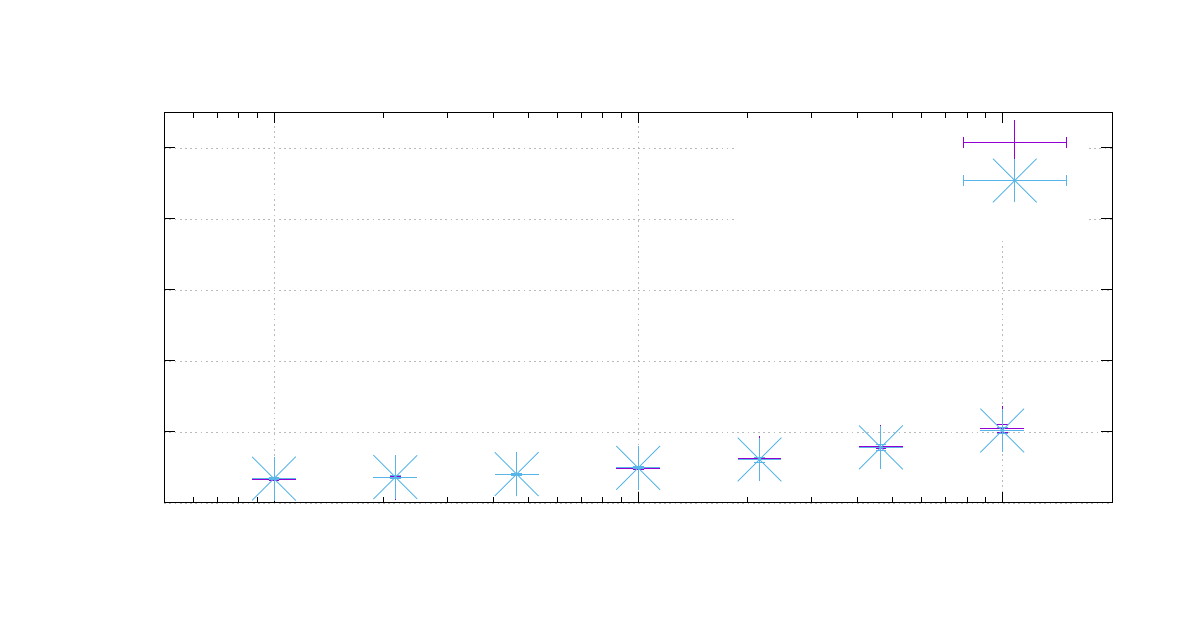}}%
    \gplfronttext
  \end{picture}%
\endgroup
}\hfill\\ %
\noindent\scalebox{\threescale}{%
\begingroup
  \makeatletter
  \providecommand\color[2][]{%
    \GenericError{(gnuplot) \space\space\space\@spaces}{%
      Package color not loaded in conjunction with
      terminal option `colourtext'%
    }{See the gnuplot documentation for explanation.%
    }{Either use 'blacktext' in gnuplot or load the package
      color.sty in LaTeX.}%
    \renewcommand\color[2][]{}%
  }%
  \providecommand\includegraphics[2][]{%
    \GenericError{(gnuplot) \space\space\space\@spaces}{%
      Package graphicx or graphics not loaded%
    }{See the gnuplot documentation for explanation.%
    }{The gnuplot epslatex terminal needs graphicx.sty or graphics.sty.}%
    \renewcommand\includegraphics[2][]{}%
  }%
  \providecommand\rotatebox[2]{#2}%
  \@ifundefined{ifGPcolor}{%
    \newif\ifGPcolor
    \GPcolortrue
  }{}%
  \@ifundefined{ifGPblacktext}{%
    \newif\ifGPblacktext
    \GPblacktextfalse
  }{}%
  \let\gplgaddtomacro\g@addto@macro
  \gdef\gplbacktext{}%
  \gdef\gplfronttext{}%
  \makeatother
  \ifGPblacktext
    \def\colorrgb#1{}%
    \def\colorgray#1{}%
  \else
    \ifGPcolor
      \def\colorrgb#1{\color[rgb]{#1}}%
      \def\colorgray#1{\color[gray]{#1}}%
      \expandafter\def\csname LTw\endcsname{\color{white}}%
      \expandafter\def\csname LTb\endcsname{\color{black}}%
      \expandafter\def\csname LTa\endcsname{\color{black}}%
      \expandafter\def\csname LT0\endcsname{\color[rgb]{1,0,0}}%
      \expandafter\def\csname LT1\endcsname{\color[rgb]{0,1,0}}%
      \expandafter\def\csname LT2\endcsname{\color[rgb]{0,0,1}}%
      \expandafter\def\csname LT3\endcsname{\color[rgb]{1,0,1}}%
      \expandafter\def\csname LT4\endcsname{\color[rgb]{0,1,1}}%
      \expandafter\def\csname LT5\endcsname{\color[rgb]{1,1,0}}%
      \expandafter\def\csname LT6\endcsname{\color[rgb]{0,0,0}}%
      \expandafter\def\csname LT7\endcsname{\color[rgb]{1,0.3,0}}%
      \expandafter\def\csname LT8\endcsname{\color[rgb]{0.5,0.5,0.5}}%
    \else
      \def\colorrgb#1{\color{black}}%
      \def\colorgray#1{\color[gray]{#1}}%
      \expandafter\def\csname LTw\endcsname{\color{white}}%
      \expandafter\def\csname LTb\endcsname{\color{black}}%
      \expandafter\def\csname LTa\endcsname{\color{black}}%
      \expandafter\def\csname LT0\endcsname{\color{black}}%
      \expandafter\def\csname LT1\endcsname{\color{black}}%
      \expandafter\def\csname LT2\endcsname{\color{black}}%
      \expandafter\def\csname LT3\endcsname{\color{black}}%
      \expandafter\def\csname LT4\endcsname{\color{black}}%
      \expandafter\def\csname LT5\endcsname{\color{black}}%
      \expandafter\def\csname LT6\endcsname{\color{black}}%
      \expandafter\def\csname LT7\endcsname{\color{black}}%
      \expandafter\def\csname LT8\endcsname{\color{black}}%
    \fi
  \fi
    \setlength{\unitlength}{0.0500bp}%
    \ifx\gptboxheight\undefined%
      \newlength{\gptboxheight}%
      \newlength{\gptboxwidth}%
      \newsavebox{\gptboxtext}%
    \fi%
    \setlength{\fboxrule}{0.5pt}%
    \setlength{\fboxsep}{1pt}%
\begin{picture}(6802.00,3614.00)%
    \gplgaddtomacro\gplbacktext{%
      \csname LTb\endcsname%
      \put(814,704){\makebox(0,0)[r]{\strut{}$0$}}%
      \csname LTb\endcsname%
      \put(814,1113){\makebox(0,0)[r]{\strut{}$0.2$}}%
      \csname LTb\endcsname%
      \put(814,1522){\makebox(0,0)[r]{\strut{}$0.4$}}%
      \csname LTb\endcsname%
      \put(814,1931){\makebox(0,0)[r]{\strut{}$0.6$}}%
      \csname LTb\endcsname%
      \put(814,2340){\makebox(0,0)[r]{\strut{}$0.8$}}%
      \csname LTb\endcsname%
      \put(814,2749){\makebox(0,0)[r]{\strut{}$1$}}%
      \csname LTb\endcsname%
      \put(1578,484){\makebox(0,0){\strut{}$10^{3}$}}%
      \csname LTb\endcsname%
      \put(3676,484){\makebox(0,0){\strut{}$10^{4}$}}%
      \csname LTb\endcsname%
      \put(5773,484){\makebox(0,0){\strut{}$10^{5}$}}%
    }%
    \gplgaddtomacro\gplfronttext{%
      \csname LTb\endcsname%
      \put(176,1828){\rotatebox{-270}{\makebox(0,0){\strut{}Degree Assortativity}}}%
      \put(3675,154){\makebox(0,0){\strut{}Number $n$ of nodes}}%
      \put(3675,3283){\makebox(0,0){\strut{}Mixing: $\mu = 0.2$, Degree Assortativity, Overlap: $\nu = 2$}}%
      \csname LTb\endcsname%
      \put(5418,2780){\makebox(0,0)[r]{\strut{}Orig}}%
      \csname LTb\endcsname%
      \put(5418,2560){\makebox(0,0)[r]{\strut{}EM}}%
    }%
    \gplbacktext
    \put(0,0){\includegraphics{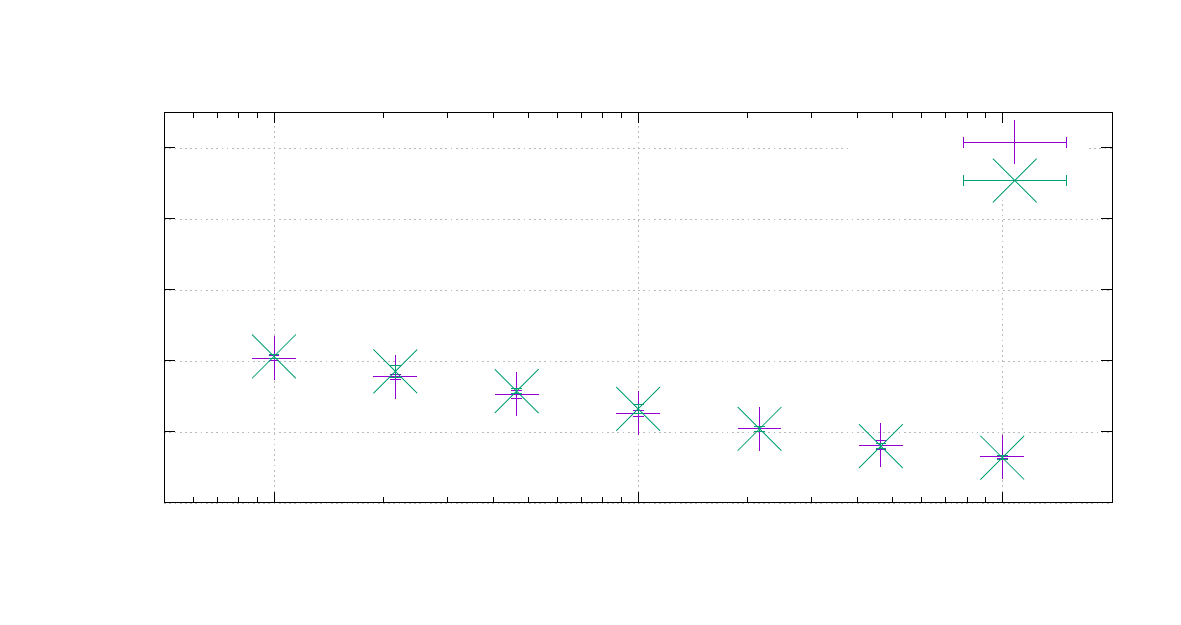}}%
    \gplfronttext
  \end{picture}%
\endgroup
}\hfill\scalebox{\threescale}{%
\begingroup
  \makeatletter
  \providecommand\color[2][]{%
    \GenericError{(gnuplot) \space\space\space\@spaces}{%
      Package color not loaded in conjunction with
      terminal option `colourtext'%
    }{See the gnuplot documentation for explanation.%
    }{Either use 'blacktext' in gnuplot or load the package
      color.sty in LaTeX.}%
    \renewcommand\color[2][]{}%
  }%
  \providecommand\includegraphics[2][]{%
    \GenericError{(gnuplot) \space\space\space\@spaces}{%
      Package graphicx or graphics not loaded%
    }{See the gnuplot documentation for explanation.%
    }{The gnuplot epslatex terminal needs graphicx.sty or graphics.sty.}%
    \renewcommand\includegraphics[2][]{}%
  }%
  \providecommand\rotatebox[2]{#2}%
  \@ifundefined{ifGPcolor}{%
    \newif\ifGPcolor
    \GPcolortrue
  }{}%
  \@ifundefined{ifGPblacktext}{%
    \newif\ifGPblacktext
    \GPblacktextfalse
  }{}%
  \let\gplgaddtomacro\g@addto@macro
  \gdef\gplbacktext{}%
  \gdef\gplfronttext{}%
  \makeatother
  \ifGPblacktext
    \def\colorrgb#1{}%
    \def\colorgray#1{}%
  \else
    \ifGPcolor
      \def\colorrgb#1{\color[rgb]{#1}}%
      \def\colorgray#1{\color[gray]{#1}}%
      \expandafter\def\csname LTw\endcsname{\color{white}}%
      \expandafter\def\csname LTb\endcsname{\color{black}}%
      \expandafter\def\csname LTa\endcsname{\color{black}}%
      \expandafter\def\csname LT0\endcsname{\color[rgb]{1,0,0}}%
      \expandafter\def\csname LT1\endcsname{\color[rgb]{0,1,0}}%
      \expandafter\def\csname LT2\endcsname{\color[rgb]{0,0,1}}%
      \expandafter\def\csname LT3\endcsname{\color[rgb]{1,0,1}}%
      \expandafter\def\csname LT4\endcsname{\color[rgb]{0,1,1}}%
      \expandafter\def\csname LT5\endcsname{\color[rgb]{1,1,0}}%
      \expandafter\def\csname LT6\endcsname{\color[rgb]{0,0,0}}%
      \expandafter\def\csname LT7\endcsname{\color[rgb]{1,0.3,0}}%
      \expandafter\def\csname LT8\endcsname{\color[rgb]{0.5,0.5,0.5}}%
    \else
      \def\colorrgb#1{\color{black}}%
      \def\colorgray#1{\color[gray]{#1}}%
      \expandafter\def\csname LTw\endcsname{\color{white}}%
      \expandafter\def\csname LTb\endcsname{\color{black}}%
      \expandafter\def\csname LTa\endcsname{\color{black}}%
      \expandafter\def\csname LT0\endcsname{\color{black}}%
      \expandafter\def\csname LT1\endcsname{\color{black}}%
      \expandafter\def\csname LT2\endcsname{\color{black}}%
      \expandafter\def\csname LT3\endcsname{\color{black}}%
      \expandafter\def\csname LT4\endcsname{\color{black}}%
      \expandafter\def\csname LT5\endcsname{\color{black}}%
      \expandafter\def\csname LT6\endcsname{\color{black}}%
      \expandafter\def\csname LT7\endcsname{\color{black}}%
      \expandafter\def\csname LT8\endcsname{\color{black}}%
    \fi
  \fi
    \setlength{\unitlength}{0.0500bp}%
    \ifx\gptboxheight\undefined%
      \newlength{\gptboxheight}%
      \newlength{\gptboxwidth}%
      \newsavebox{\gptboxtext}%
    \fi%
    \setlength{\fboxrule}{0.5pt}%
    \setlength{\fboxsep}{1pt}%
\begin{picture}(6802.00,3614.00)%
    \gplgaddtomacro\gplbacktext{%
      \csname LTb\endcsname%
      \put(814,704){\makebox(0,0)[r]{\strut{}$0$}}%
      \csname LTb\endcsname%
      \put(814,1113){\makebox(0,0)[r]{\strut{}$0.2$}}%
      \csname LTb\endcsname%
      \put(814,1522){\makebox(0,0)[r]{\strut{}$0.4$}}%
      \csname LTb\endcsname%
      \put(814,1931){\makebox(0,0)[r]{\strut{}$0.6$}}%
      \csname LTb\endcsname%
      \put(814,2340){\makebox(0,0)[r]{\strut{}$0.8$}}%
      \csname LTb\endcsname%
      \put(814,2749){\makebox(0,0)[r]{\strut{}$1$}}%
      \csname LTb\endcsname%
      \put(1578,484){\makebox(0,0){\strut{}$10^{3}$}}%
      \csname LTb\endcsname%
      \put(3676,484){\makebox(0,0){\strut{}$10^{4}$}}%
      \csname LTb\endcsname%
      \put(5773,484){\makebox(0,0){\strut{}$10^{5}$}}%
    }%
    \gplgaddtomacro\gplfronttext{%
      \csname LTb\endcsname%
      \put(176,1828){\rotatebox{-270}{\makebox(0,0){\strut{}Degree Assortativity}}}%
      \put(3675,154){\makebox(0,0){\strut{}Number $n$ of nodes}}%
      \put(3675,3283){\makebox(0,0){\strut{}Mixing: $\mu = 0.4$, Degree Assortativity, Overlap: $\nu = 2$}}%
      \csname LTb\endcsname%
      \put(5418,2780){\makebox(0,0)[r]{\strut{}Orig}}%
      \csname LTb\endcsname%
      \put(5418,2560){\makebox(0,0)[r]{\strut{}EM}}%
    }%
    \gplbacktext
    \put(0,0){\includegraphics{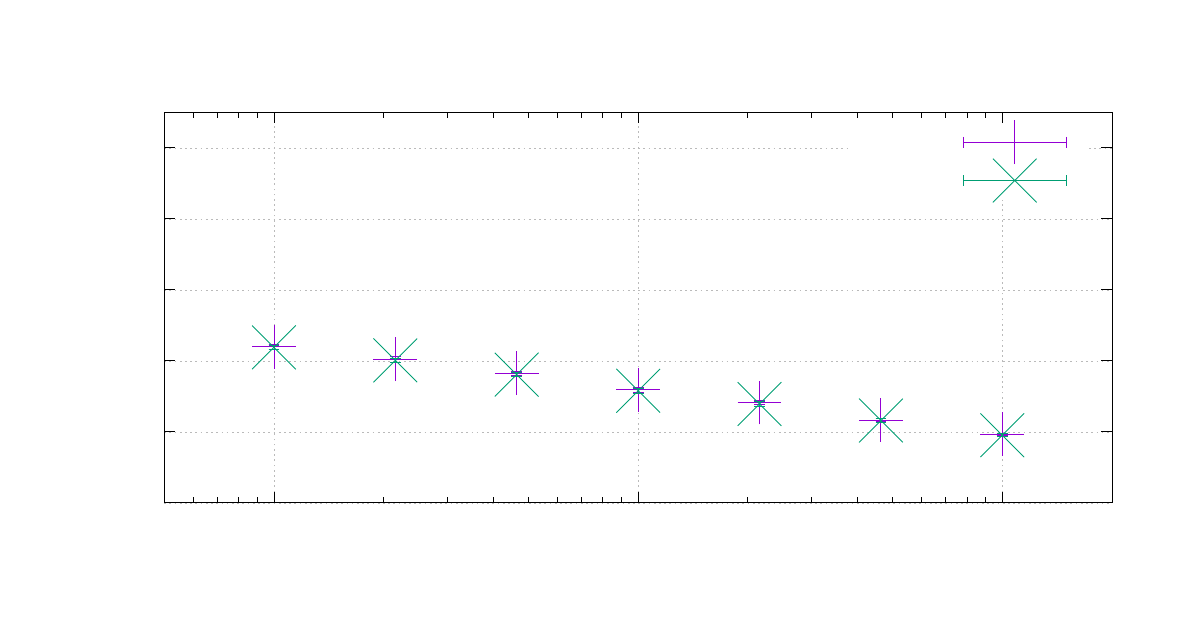}}%
    \gplfronttext
  \end{picture}%
\endgroup
}\hfill\scalebox{\threescale}{%
\begingroup
  \makeatletter
  \providecommand\color[2][]{%
    \GenericError{(gnuplot) \space\space\space\@spaces}{%
      Package color not loaded in conjunction with
      terminal option `colourtext'%
    }{See the gnuplot documentation for explanation.%
    }{Either use 'blacktext' in gnuplot or load the package
      color.sty in LaTeX.}%
    \renewcommand\color[2][]{}%
  }%
  \providecommand\includegraphics[2][]{%
    \GenericError{(gnuplot) \space\space\space\@spaces}{%
      Package graphicx or graphics not loaded%
    }{See the gnuplot documentation for explanation.%
    }{The gnuplot epslatex terminal needs graphicx.sty or graphics.sty.}%
    \renewcommand\includegraphics[2][]{}%
  }%
  \providecommand\rotatebox[2]{#2}%
  \@ifundefined{ifGPcolor}{%
    \newif\ifGPcolor
    \GPcolortrue
  }{}%
  \@ifundefined{ifGPblacktext}{%
    \newif\ifGPblacktext
    \GPblacktextfalse
  }{}%
  \let\gplgaddtomacro\g@addto@macro
  \gdef\gplbacktext{}%
  \gdef\gplfronttext{}%
  \makeatother
  \ifGPblacktext
    \def\colorrgb#1{}%
    \def\colorgray#1{}%
  \else
    \ifGPcolor
      \def\colorrgb#1{\color[rgb]{#1}}%
      \def\colorgray#1{\color[gray]{#1}}%
      \expandafter\def\csname LTw\endcsname{\color{white}}%
      \expandafter\def\csname LTb\endcsname{\color{black}}%
      \expandafter\def\csname LTa\endcsname{\color{black}}%
      \expandafter\def\csname LT0\endcsname{\color[rgb]{1,0,0}}%
      \expandafter\def\csname LT1\endcsname{\color[rgb]{0,1,0}}%
      \expandafter\def\csname LT2\endcsname{\color[rgb]{0,0,1}}%
      \expandafter\def\csname LT3\endcsname{\color[rgb]{1,0,1}}%
      \expandafter\def\csname LT4\endcsname{\color[rgb]{0,1,1}}%
      \expandafter\def\csname LT5\endcsname{\color[rgb]{1,1,0}}%
      \expandafter\def\csname LT6\endcsname{\color[rgb]{0,0,0}}%
      \expandafter\def\csname LT7\endcsname{\color[rgb]{1,0.3,0}}%
      \expandafter\def\csname LT8\endcsname{\color[rgb]{0.5,0.5,0.5}}%
    \else
      \def\colorrgb#1{\color{black}}%
      \def\colorgray#1{\color[gray]{#1}}%
      \expandafter\def\csname LTw\endcsname{\color{white}}%
      \expandafter\def\csname LTb\endcsname{\color{black}}%
      \expandafter\def\csname LTa\endcsname{\color{black}}%
      \expandafter\def\csname LT0\endcsname{\color{black}}%
      \expandafter\def\csname LT1\endcsname{\color{black}}%
      \expandafter\def\csname LT2\endcsname{\color{black}}%
      \expandafter\def\csname LT3\endcsname{\color{black}}%
      \expandafter\def\csname LT4\endcsname{\color{black}}%
      \expandafter\def\csname LT5\endcsname{\color{black}}%
      \expandafter\def\csname LT6\endcsname{\color{black}}%
      \expandafter\def\csname LT7\endcsname{\color{black}}%
      \expandafter\def\csname LT8\endcsname{\color{black}}%
    \fi
  \fi
    \setlength{\unitlength}{0.0500bp}%
    \ifx\gptboxheight\undefined%
      \newlength{\gptboxheight}%
      \newlength{\gptboxwidth}%
      \newsavebox{\gptboxtext}%
    \fi%
    \setlength{\fboxrule}{0.5pt}%
    \setlength{\fboxsep}{1pt}%
\begin{picture}(6802.00,3614.00)%
    \gplgaddtomacro\gplbacktext{%
      \csname LTb\endcsname%
      \put(814,704){\makebox(0,0)[r]{\strut{}$0$}}%
      \csname LTb\endcsname%
      \put(814,1113){\makebox(0,0)[r]{\strut{}$0.2$}}%
      \csname LTb\endcsname%
      \put(814,1522){\makebox(0,0)[r]{\strut{}$0.4$}}%
      \csname LTb\endcsname%
      \put(814,1931){\makebox(0,0)[r]{\strut{}$0.6$}}%
      \csname LTb\endcsname%
      \put(814,2340){\makebox(0,0)[r]{\strut{}$0.8$}}%
      \csname LTb\endcsname%
      \put(814,2749){\makebox(0,0)[r]{\strut{}$1$}}%
      \csname LTb\endcsname%
      \put(1578,484){\makebox(0,0){\strut{}$10^{3}$}}%
      \csname LTb\endcsname%
      \put(3676,484){\makebox(0,0){\strut{}$10^{4}$}}%
      \csname LTb\endcsname%
      \put(5773,484){\makebox(0,0){\strut{}$10^{5}$}}%
    }%
    \gplgaddtomacro\gplfronttext{%
      \csname LTb\endcsname%
      \put(176,1828){\rotatebox{-270}{\makebox(0,0){\strut{}Degree Assortativity}}}%
      \put(3675,154){\makebox(0,0){\strut{}Number $n$ of nodes}}%
      \put(3675,3283){\makebox(0,0){\strut{}Mixing: $\mu = 0.6$, Degree Assortativity, Overlap: $\nu = 2$}}%
      \csname LTb\endcsname%
      \put(5418,2780){\makebox(0,0)[r]{\strut{}Orig}}%
      \csname LTb\endcsname%
      \put(5418,2560){\makebox(0,0)[r]{\strut{}EM}}%
    }%
    \gplbacktext
    \put(0,0){\includegraphics{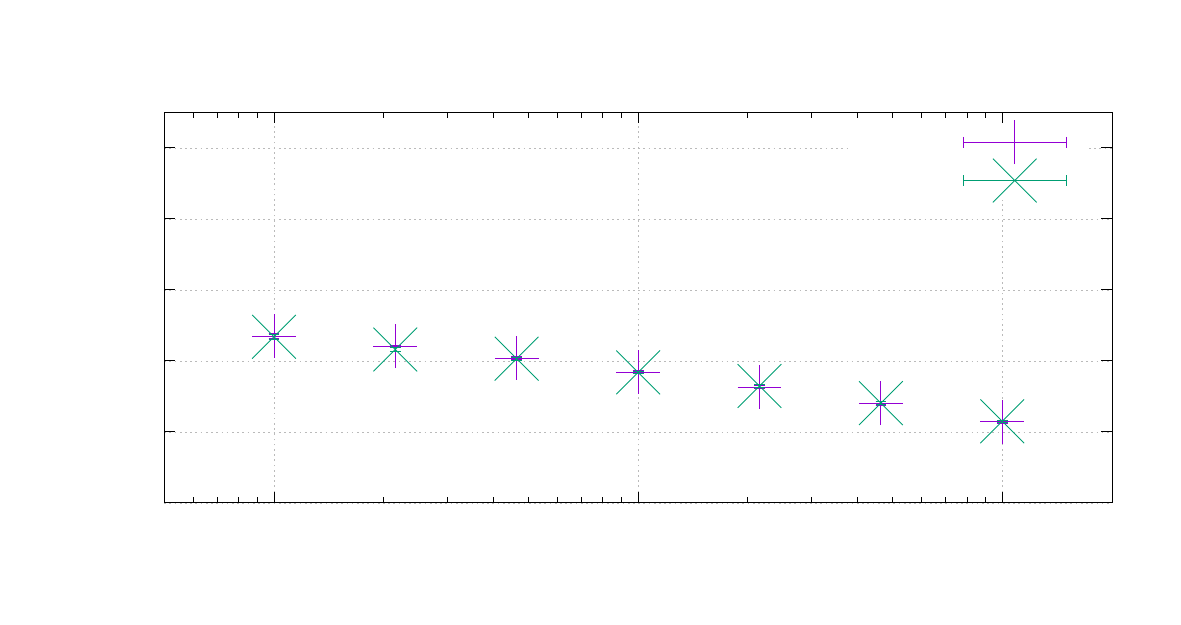}}%
    \gplfronttext
  \end{picture}%
\endgroup
}\hfill\\ %

\vspace{2em}	
\hrule	
\vspace{2em}	

\noindent\scalebox{\threescale}{%
\begingroup
  \makeatletter
  \providecommand\color[2][]{%
    \GenericError{(gnuplot) \space\space\space\@spaces}{%
      Package color not loaded in conjunction with
      terminal option `colourtext'%
    }{See the gnuplot documentation for explanation.%
    }{Either use 'blacktext' in gnuplot or load the package
      color.sty in LaTeX.}%
    \renewcommand\color[2][]{}%
  }%
  \providecommand\includegraphics[2][]{%
    \GenericError{(gnuplot) \space\space\space\@spaces}{%
      Package graphicx or graphics not loaded%
    }{See the gnuplot documentation for explanation.%
    }{The gnuplot epslatex terminal needs graphicx.sty or graphics.sty.}%
    \renewcommand\includegraphics[2][]{}%
  }%
  \providecommand\rotatebox[2]{#2}%
  \@ifundefined{ifGPcolor}{%
    \newif\ifGPcolor
    \GPcolortrue
  }{}%
  \@ifundefined{ifGPblacktext}{%
    \newif\ifGPblacktext
    \GPblacktextfalse
  }{}%
  \let\gplgaddtomacro\g@addto@macro
  \gdef\gplbacktext{}%
  \gdef\gplfronttext{}%
  \makeatother
  \ifGPblacktext
    \def\colorrgb#1{}%
    \def\colorgray#1{}%
  \else
    \ifGPcolor
      \def\colorrgb#1{\color[rgb]{#1}}%
      \def\colorgray#1{\color[gray]{#1}}%
      \expandafter\def\csname LTw\endcsname{\color{white}}%
      \expandafter\def\csname LTb\endcsname{\color{black}}%
      \expandafter\def\csname LTa\endcsname{\color{black}}%
      \expandafter\def\csname LT0\endcsname{\color[rgb]{1,0,0}}%
      \expandafter\def\csname LT1\endcsname{\color[rgb]{0,1,0}}%
      \expandafter\def\csname LT2\endcsname{\color[rgb]{0,0,1}}%
      \expandafter\def\csname LT3\endcsname{\color[rgb]{1,0,1}}%
      \expandafter\def\csname LT4\endcsname{\color[rgb]{0,1,1}}%
      \expandafter\def\csname LT5\endcsname{\color[rgb]{1,1,0}}%
      \expandafter\def\csname LT6\endcsname{\color[rgb]{0,0,0}}%
      \expandafter\def\csname LT7\endcsname{\color[rgb]{1,0.3,0}}%
      \expandafter\def\csname LT8\endcsname{\color[rgb]{0.5,0.5,0.5}}%
    \else
      \def\colorrgb#1{\color{black}}%
      \def\colorgray#1{\color[gray]{#1}}%
      \expandafter\def\csname LTw\endcsname{\color{white}}%
      \expandafter\def\csname LTb\endcsname{\color{black}}%
      \expandafter\def\csname LTa\endcsname{\color{black}}%
      \expandafter\def\csname LT0\endcsname{\color{black}}%
      \expandafter\def\csname LT1\endcsname{\color{black}}%
      \expandafter\def\csname LT2\endcsname{\color{black}}%
      \expandafter\def\csname LT3\endcsname{\color{black}}%
      \expandafter\def\csname LT4\endcsname{\color{black}}%
      \expandafter\def\csname LT5\endcsname{\color{black}}%
      \expandafter\def\csname LT6\endcsname{\color{black}}%
      \expandafter\def\csname LT7\endcsname{\color{black}}%
      \expandafter\def\csname LT8\endcsname{\color{black}}%
    \fi
  \fi
    \setlength{\unitlength}{0.0500bp}%
    \ifx\gptboxheight\undefined%
      \newlength{\gptboxheight}%
      \newlength{\gptboxwidth}%
      \newsavebox{\gptboxtext}%
    \fi%
    \setlength{\fboxrule}{0.5pt}%
    \setlength{\fboxsep}{1pt}%
\begin{picture}(6802.00,3614.00)%
    \gplgaddtomacro\gplbacktext{%
      \csname LTb\endcsname%
      \put(814,704){\makebox(0,0)[r]{\strut{}$0$}}%
      \csname LTb\endcsname%
      \put(814,1113){\makebox(0,0)[r]{\strut{}$0.2$}}%
      \csname LTb\endcsname%
      \put(814,1522){\makebox(0,0)[r]{\strut{}$0.4$}}%
      \csname LTb\endcsname%
      \put(814,1931){\makebox(0,0)[r]{\strut{}$0.6$}}%
      \csname LTb\endcsname%
      \put(814,2340){\makebox(0,0)[r]{\strut{}$0.8$}}%
      \csname LTb\endcsname%
      \put(814,2749){\makebox(0,0)[r]{\strut{}$1$}}%
      \csname LTb\endcsname%
      \put(1578,484){\makebox(0,0){\strut{}$10^{3}$}}%
      \csname LTb\endcsname%
      \put(3676,484){\makebox(0,0){\strut{}$10^{4}$}}%
      \csname LTb\endcsname%
      \put(5773,484){\makebox(0,0){\strut{}$10^{5}$}}%
    }%
    \gplgaddtomacro\gplfronttext{%
      \csname LTb\endcsname%
      \put(176,1828){\rotatebox{-270}{\makebox(0,0){\strut{}NMI}}}%
      \put(3675,154){\makebox(0,0){\strut{}Number $n$ of nodes}}%
      \put(3675,3283){\makebox(0,0){\strut{}Mixing: $\mu = 0.2$, Cluster: OSLOM, Overlap: $\nu = 3$}}%
      \csname LTb\endcsname%
      \put(5418,2780){\makebox(0,0)[r]{\strut{}Orig}}%
      \csname LTb\endcsname%
      \put(5418,2560){\makebox(0,0)[r]{\strut{}EM}}%
    }%
    \gplbacktext
    \put(0,0){\includegraphics{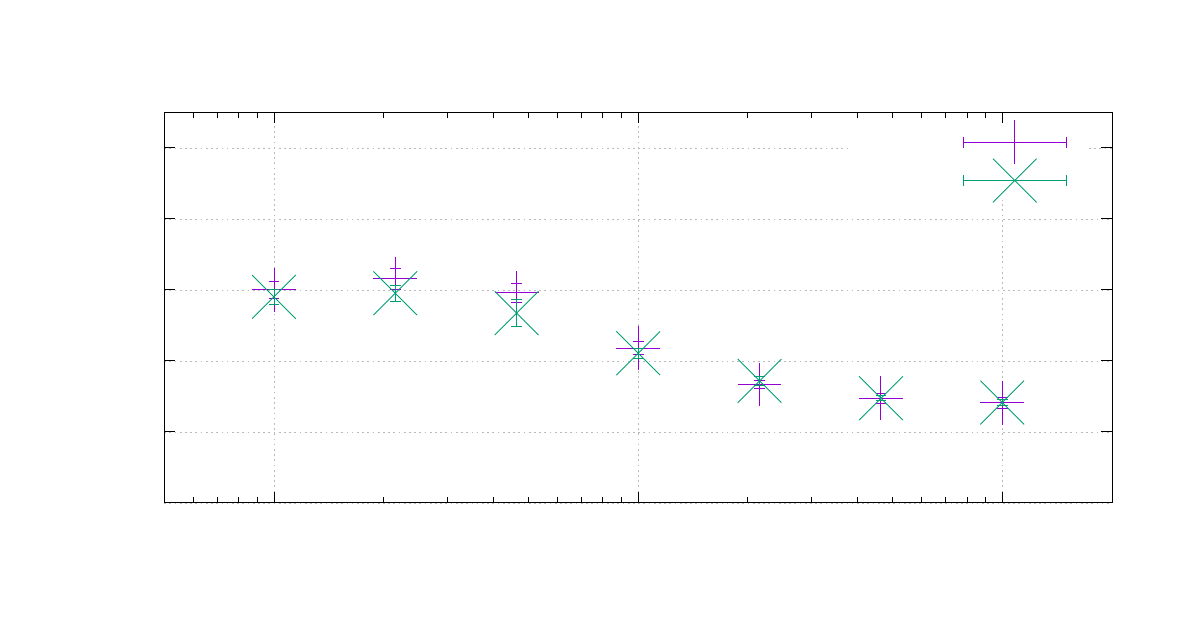}}%
    \gplfronttext
  \end{picture}%
\endgroup
}\hfill\scalebox{\threescale}{%
\begingroup
  \makeatletter
  \providecommand\color[2][]{%
    \GenericError{(gnuplot) \space\space\space\@spaces}{%
      Package color not loaded in conjunction with
      terminal option `colourtext'%
    }{See the gnuplot documentation for explanation.%
    }{Either use 'blacktext' in gnuplot or load the package
      color.sty in LaTeX.}%
    \renewcommand\color[2][]{}%
  }%
  \providecommand\includegraphics[2][]{%
    \GenericError{(gnuplot) \space\space\space\@spaces}{%
      Package graphicx or graphics not loaded%
    }{See the gnuplot documentation for explanation.%
    }{The gnuplot epslatex terminal needs graphicx.sty or graphics.sty.}%
    \renewcommand\includegraphics[2][]{}%
  }%
  \providecommand\rotatebox[2]{#2}%
  \@ifundefined{ifGPcolor}{%
    \newif\ifGPcolor
    \GPcolortrue
  }{}%
  \@ifundefined{ifGPblacktext}{%
    \newif\ifGPblacktext
    \GPblacktextfalse
  }{}%
  \let\gplgaddtomacro\g@addto@macro
  \gdef\gplbacktext{}%
  \gdef\gplfronttext{}%
  \makeatother
  \ifGPblacktext
    \def\colorrgb#1{}%
    \def\colorgray#1{}%
  \else
    \ifGPcolor
      \def\colorrgb#1{\color[rgb]{#1}}%
      \def\colorgray#1{\color[gray]{#1}}%
      \expandafter\def\csname LTw\endcsname{\color{white}}%
      \expandafter\def\csname LTb\endcsname{\color{black}}%
      \expandafter\def\csname LTa\endcsname{\color{black}}%
      \expandafter\def\csname LT0\endcsname{\color[rgb]{1,0,0}}%
      \expandafter\def\csname LT1\endcsname{\color[rgb]{0,1,0}}%
      \expandafter\def\csname LT2\endcsname{\color[rgb]{0,0,1}}%
      \expandafter\def\csname LT3\endcsname{\color[rgb]{1,0,1}}%
      \expandafter\def\csname LT4\endcsname{\color[rgb]{0,1,1}}%
      \expandafter\def\csname LT5\endcsname{\color[rgb]{1,1,0}}%
      \expandafter\def\csname LT6\endcsname{\color[rgb]{0,0,0}}%
      \expandafter\def\csname LT7\endcsname{\color[rgb]{1,0.3,0}}%
      \expandafter\def\csname LT8\endcsname{\color[rgb]{0.5,0.5,0.5}}%
    \else
      \def\colorrgb#1{\color{black}}%
      \def\colorgray#1{\color[gray]{#1}}%
      \expandafter\def\csname LTw\endcsname{\color{white}}%
      \expandafter\def\csname LTb\endcsname{\color{black}}%
      \expandafter\def\csname LTa\endcsname{\color{black}}%
      \expandafter\def\csname LT0\endcsname{\color{black}}%
      \expandafter\def\csname LT1\endcsname{\color{black}}%
      \expandafter\def\csname LT2\endcsname{\color{black}}%
      \expandafter\def\csname LT3\endcsname{\color{black}}%
      \expandafter\def\csname LT4\endcsname{\color{black}}%
      \expandafter\def\csname LT5\endcsname{\color{black}}%
      \expandafter\def\csname LT6\endcsname{\color{black}}%
      \expandafter\def\csname LT7\endcsname{\color{black}}%
      \expandafter\def\csname LT8\endcsname{\color{black}}%
    \fi
  \fi
    \setlength{\unitlength}{0.0500bp}%
    \ifx\gptboxheight\undefined%
      \newlength{\gptboxheight}%
      \newlength{\gptboxwidth}%
      \newsavebox{\gptboxtext}%
    \fi%
    \setlength{\fboxrule}{0.5pt}%
    \setlength{\fboxsep}{1pt}%
\begin{picture}(6802.00,3614.00)%
    \gplgaddtomacro\gplbacktext{%
      \csname LTb\endcsname%
      \put(814,704){\makebox(0,0)[r]{\strut{}$0$}}%
      \csname LTb\endcsname%
      \put(814,1113){\makebox(0,0)[r]{\strut{}$0.2$}}%
      \csname LTb\endcsname%
      \put(814,1522){\makebox(0,0)[r]{\strut{}$0.4$}}%
      \csname LTb\endcsname%
      \put(814,1931){\makebox(0,0)[r]{\strut{}$0.6$}}%
      \csname LTb\endcsname%
      \put(814,2340){\makebox(0,0)[r]{\strut{}$0.8$}}%
      \csname LTb\endcsname%
      \put(814,2749){\makebox(0,0)[r]{\strut{}$1$}}%
      \csname LTb\endcsname%
      \put(1578,484){\makebox(0,0){\strut{}$10^{3}$}}%
      \csname LTb\endcsname%
      \put(3676,484){\makebox(0,0){\strut{}$10^{4}$}}%
      \csname LTb\endcsname%
      \put(5773,484){\makebox(0,0){\strut{}$10^{5}$}}%
    }%
    \gplgaddtomacro\gplfronttext{%
      \csname LTb\endcsname%
      \put(176,1828){\rotatebox{-270}{\makebox(0,0){\strut{}NMI}}}%
      \put(3675,154){\makebox(0,0){\strut{}Number $n$ of nodes}}%
      \put(3675,3283){\makebox(0,0){\strut{}Mixing: $\mu = 0.4$, Cluster: OSLOM, Overlap: $\nu = 3$}}%
      \csname LTb\endcsname%
      \put(5418,2780){\makebox(0,0)[r]{\strut{}Orig}}%
      \csname LTb\endcsname%
      \put(5418,2560){\makebox(0,0)[r]{\strut{}EM}}%
    }%
    \gplbacktext
    \put(0,0){\includegraphics{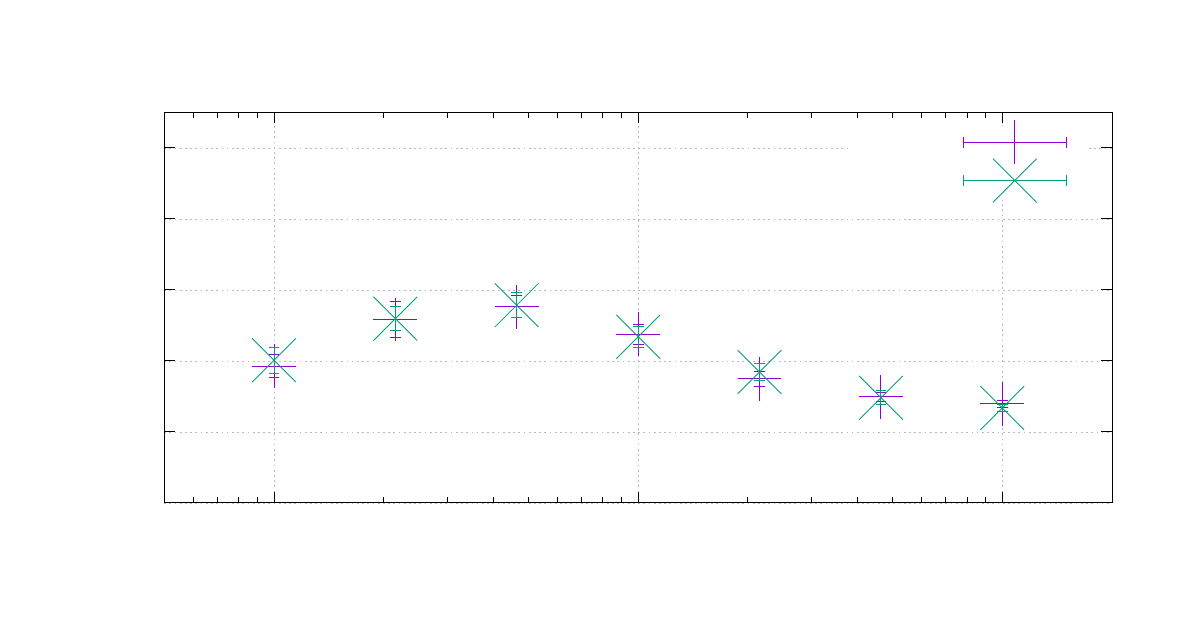}}%
    \gplfronttext
  \end{picture}%
\endgroup
}\hfill\scalebox{\threescale}{%
\begingroup
  \makeatletter
  \providecommand\color[2][]{%
    \GenericError{(gnuplot) \space\space\space\@spaces}{%
      Package color not loaded in conjunction with
      terminal option `colourtext'%
    }{See the gnuplot documentation for explanation.%
    }{Either use 'blacktext' in gnuplot or load the package
      color.sty in LaTeX.}%
    \renewcommand\color[2][]{}%
  }%
  \providecommand\includegraphics[2][]{%
    \GenericError{(gnuplot) \space\space\space\@spaces}{%
      Package graphicx or graphics not loaded%
    }{See the gnuplot documentation for explanation.%
    }{The gnuplot epslatex terminal needs graphicx.sty or graphics.sty.}%
    \renewcommand\includegraphics[2][]{}%
  }%
  \providecommand\rotatebox[2]{#2}%
  \@ifundefined{ifGPcolor}{%
    \newif\ifGPcolor
    \GPcolortrue
  }{}%
  \@ifundefined{ifGPblacktext}{%
    \newif\ifGPblacktext
    \GPblacktextfalse
  }{}%
  \let\gplgaddtomacro\g@addto@macro
  \gdef\gplbacktext{}%
  \gdef\gplfronttext{}%
  \makeatother
  \ifGPblacktext
    \def\colorrgb#1{}%
    \def\colorgray#1{}%
  \else
    \ifGPcolor
      \def\colorrgb#1{\color[rgb]{#1}}%
      \def\colorgray#1{\color[gray]{#1}}%
      \expandafter\def\csname LTw\endcsname{\color{white}}%
      \expandafter\def\csname LTb\endcsname{\color{black}}%
      \expandafter\def\csname LTa\endcsname{\color{black}}%
      \expandafter\def\csname LT0\endcsname{\color[rgb]{1,0,0}}%
      \expandafter\def\csname LT1\endcsname{\color[rgb]{0,1,0}}%
      \expandafter\def\csname LT2\endcsname{\color[rgb]{0,0,1}}%
      \expandafter\def\csname LT3\endcsname{\color[rgb]{1,0,1}}%
      \expandafter\def\csname LT4\endcsname{\color[rgb]{0,1,1}}%
      \expandafter\def\csname LT5\endcsname{\color[rgb]{1,1,0}}%
      \expandafter\def\csname LT6\endcsname{\color[rgb]{0,0,0}}%
      \expandafter\def\csname LT7\endcsname{\color[rgb]{1,0.3,0}}%
      \expandafter\def\csname LT8\endcsname{\color[rgb]{0.5,0.5,0.5}}%
    \else
      \def\colorrgb#1{\color{black}}%
      \def\colorgray#1{\color[gray]{#1}}%
      \expandafter\def\csname LTw\endcsname{\color{white}}%
      \expandafter\def\csname LTb\endcsname{\color{black}}%
      \expandafter\def\csname LTa\endcsname{\color{black}}%
      \expandafter\def\csname LT0\endcsname{\color{black}}%
      \expandafter\def\csname LT1\endcsname{\color{black}}%
      \expandafter\def\csname LT2\endcsname{\color{black}}%
      \expandafter\def\csname LT3\endcsname{\color{black}}%
      \expandafter\def\csname LT4\endcsname{\color{black}}%
      \expandafter\def\csname LT5\endcsname{\color{black}}%
      \expandafter\def\csname LT6\endcsname{\color{black}}%
      \expandafter\def\csname LT7\endcsname{\color{black}}%
      \expandafter\def\csname LT8\endcsname{\color{black}}%
    \fi
  \fi
    \setlength{\unitlength}{0.0500bp}%
    \ifx\gptboxheight\undefined%
      \newlength{\gptboxheight}%
      \newlength{\gptboxwidth}%
      \newsavebox{\gptboxtext}%
    \fi%
    \setlength{\fboxrule}{0.5pt}%
    \setlength{\fboxsep}{1pt}%
\begin{picture}(6802.00,3614.00)%
    \gplgaddtomacro\gplbacktext{%
      \csname LTb\endcsname%
      \put(814,704){\makebox(0,0)[r]{\strut{}$0$}}%
      \csname LTb\endcsname%
      \put(814,1113){\makebox(0,0)[r]{\strut{}$0.2$}}%
      \csname LTb\endcsname%
      \put(814,1522){\makebox(0,0)[r]{\strut{}$0.4$}}%
      \csname LTb\endcsname%
      \put(814,1931){\makebox(0,0)[r]{\strut{}$0.6$}}%
      \csname LTb\endcsname%
      \put(814,2340){\makebox(0,0)[r]{\strut{}$0.8$}}%
      \csname LTb\endcsname%
      \put(814,2749){\makebox(0,0)[r]{\strut{}$1$}}%
      \csname LTb\endcsname%
      \put(1578,484){\makebox(0,0){\strut{}$10^{3}$}}%
      \csname LTb\endcsname%
      \put(3676,484){\makebox(0,0){\strut{}$10^{4}$}}%
      \csname LTb\endcsname%
      \put(5773,484){\makebox(0,0){\strut{}$10^{5}$}}%
    }%
    \gplgaddtomacro\gplfronttext{%
      \csname LTb\endcsname%
      \put(176,1828){\rotatebox{-270}{\makebox(0,0){\strut{}NMI}}}%
      \put(3675,154){\makebox(0,0){\strut{}Number $n$ of nodes}}%
      \put(3675,3283){\makebox(0,0){\strut{}Mixing: $\mu = 0.6$, Cluster: OSLOM, Overlap: $\nu = 3$}}%
      \csname LTb\endcsname%
      \put(5418,2780){\makebox(0,0)[r]{\strut{}Orig}}%
      \csname LTb\endcsname%
      \put(5418,2560){\makebox(0,0)[r]{\strut{}EM}}%
    }%
    \gplbacktext
    \put(0,0){\includegraphics{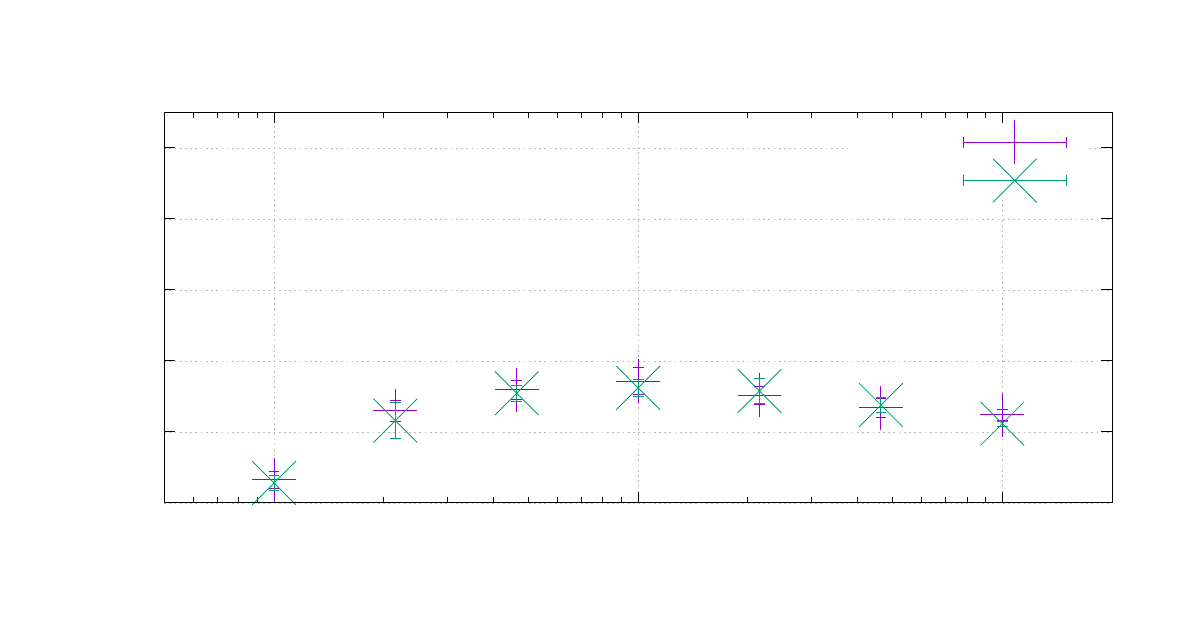}}%
    \gplfronttext
  \end{picture}%
\endgroup
}\hfill\\ %
\noindent\scalebox{\threescale}{%
\begingroup
  \makeatletter
  \providecommand\color[2][]{%
    \GenericError{(gnuplot) \space\space\space\@spaces}{%
      Package color not loaded in conjunction with
      terminal option `colourtext'%
    }{See the gnuplot documentation for explanation.%
    }{Either use 'blacktext' in gnuplot or load the package
      color.sty in LaTeX.}%
    \renewcommand\color[2][]{}%
  }%
  \providecommand\includegraphics[2][]{%
    \GenericError{(gnuplot) \space\space\space\@spaces}{%
      Package graphicx or graphics not loaded%
    }{See the gnuplot documentation for explanation.%
    }{The gnuplot epslatex terminal needs graphicx.sty or graphics.sty.}%
    \renewcommand\includegraphics[2][]{}%
  }%
  \providecommand\rotatebox[2]{#2}%
  \@ifundefined{ifGPcolor}{%
    \newif\ifGPcolor
    \GPcolortrue
  }{}%
  \@ifundefined{ifGPblacktext}{%
    \newif\ifGPblacktext
    \GPblacktextfalse
  }{}%
  \let\gplgaddtomacro\g@addto@macro
  \gdef\gplbacktext{}%
  \gdef\gplfronttext{}%
  \makeatother
  \ifGPblacktext
    \def\colorrgb#1{}%
    \def\colorgray#1{}%
  \else
    \ifGPcolor
      \def\colorrgb#1{\color[rgb]{#1}}%
      \def\colorgray#1{\color[gray]{#1}}%
      \expandafter\def\csname LTw\endcsname{\color{white}}%
      \expandafter\def\csname LTb\endcsname{\color{black}}%
      \expandafter\def\csname LTa\endcsname{\color{black}}%
      \expandafter\def\csname LT0\endcsname{\color[rgb]{1,0,0}}%
      \expandafter\def\csname LT1\endcsname{\color[rgb]{0,1,0}}%
      \expandafter\def\csname LT2\endcsname{\color[rgb]{0,0,1}}%
      \expandafter\def\csname LT3\endcsname{\color[rgb]{1,0,1}}%
      \expandafter\def\csname LT4\endcsname{\color[rgb]{0,1,1}}%
      \expandafter\def\csname LT5\endcsname{\color[rgb]{1,1,0}}%
      \expandafter\def\csname LT6\endcsname{\color[rgb]{0,0,0}}%
      \expandafter\def\csname LT7\endcsname{\color[rgb]{1,0.3,0}}%
      \expandafter\def\csname LT8\endcsname{\color[rgb]{0.5,0.5,0.5}}%
    \else
      \def\colorrgb#1{\color{black}}%
      \def\colorgray#1{\color[gray]{#1}}%
      \expandafter\def\csname LTw\endcsname{\color{white}}%
      \expandafter\def\csname LTb\endcsname{\color{black}}%
      \expandafter\def\csname LTa\endcsname{\color{black}}%
      \expandafter\def\csname LT0\endcsname{\color{black}}%
      \expandafter\def\csname LT1\endcsname{\color{black}}%
      \expandafter\def\csname LT2\endcsname{\color{black}}%
      \expandafter\def\csname LT3\endcsname{\color{black}}%
      \expandafter\def\csname LT4\endcsname{\color{black}}%
      \expandafter\def\csname LT5\endcsname{\color{black}}%
      \expandafter\def\csname LT6\endcsname{\color{black}}%
      \expandafter\def\csname LT7\endcsname{\color{black}}%
      \expandafter\def\csname LT8\endcsname{\color{black}}%
    \fi
  \fi
    \setlength{\unitlength}{0.0500bp}%
    \ifx\gptboxheight\undefined%
      \newlength{\gptboxheight}%
      \newlength{\gptboxwidth}%
      \newsavebox{\gptboxtext}%
    \fi%
    \setlength{\fboxrule}{0.5pt}%
    \setlength{\fboxsep}{1pt}%
\begin{picture}(6802.00,3614.00)%
    \gplgaddtomacro\gplbacktext{%
      \csname LTb\endcsname%
      \put(814,704){\makebox(0,0)[r]{\strut{}$0$}}%
      \csname LTb\endcsname%
      \put(814,1113){\makebox(0,0)[r]{\strut{}$0.2$}}%
      \csname LTb\endcsname%
      \put(814,1522){\makebox(0,0)[r]{\strut{}$0.4$}}%
      \csname LTb\endcsname%
      \put(814,1931){\makebox(0,0)[r]{\strut{}$0.6$}}%
      \csname LTb\endcsname%
      \put(814,2340){\makebox(0,0)[r]{\strut{}$0.8$}}%
      \csname LTb\endcsname%
      \put(814,2749){\makebox(0,0)[r]{\strut{}$1$}}%
      \csname LTb\endcsname%
      \put(1578,484){\makebox(0,0){\strut{}$10^{3}$}}%
      \csname LTb\endcsname%
      \put(3676,484){\makebox(0,0){\strut{}$10^{4}$}}%
      \csname LTb\endcsname%
      \put(5773,484){\makebox(0,0){\strut{}$10^{5}$}}%
    }%
    \gplgaddtomacro\gplfronttext{%
      \csname LTb\endcsname%
      \put(176,1828){\rotatebox{-270}{\makebox(0,0){\strut{}Avg. Local Clustering Coeff.}}}%
      \put(3675,154){\makebox(0,0){\strut{}Number $n$ of nodes}}%
      \put(3675,3283){\makebox(0,0){\strut{}Mixing: $\mu = 0.2$}}%
      \csname LTb\endcsname%
      \put(5418,2780){\makebox(0,0)[r]{\strut{}Orig}}%
      \csname LTb\endcsname%
      \put(5418,2560){\makebox(0,0)[r]{\strut{}EM}}%
    }%
    \gplbacktext
    \put(0,0){\includegraphics{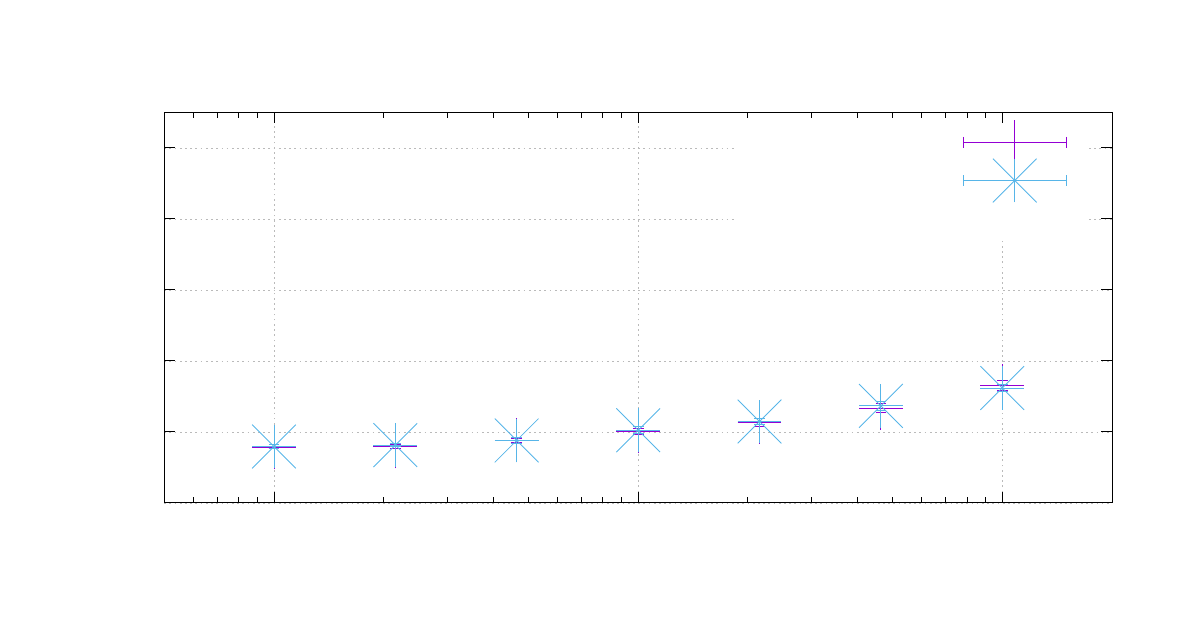}}%
    \gplfronttext
  \end{picture}%
\endgroup
}\hfill\scalebox{\threescale}{%
\begingroup
  \makeatletter
  \providecommand\color[2][]{%
    \GenericError{(gnuplot) \space\space\space\@spaces}{%
      Package color not loaded in conjunction with
      terminal option `colourtext'%
    }{See the gnuplot documentation for explanation.%
    }{Either use 'blacktext' in gnuplot or load the package
      color.sty in LaTeX.}%
    \renewcommand\color[2][]{}%
  }%
  \providecommand\includegraphics[2][]{%
    \GenericError{(gnuplot) \space\space\space\@spaces}{%
      Package graphicx or graphics not loaded%
    }{See the gnuplot documentation for explanation.%
    }{The gnuplot epslatex terminal needs graphicx.sty or graphics.sty.}%
    \renewcommand\includegraphics[2][]{}%
  }%
  \providecommand\rotatebox[2]{#2}%
  \@ifundefined{ifGPcolor}{%
    \newif\ifGPcolor
    \GPcolortrue
  }{}%
  \@ifundefined{ifGPblacktext}{%
    \newif\ifGPblacktext
    \GPblacktextfalse
  }{}%
  \let\gplgaddtomacro\g@addto@macro
  \gdef\gplbacktext{}%
  \gdef\gplfronttext{}%
  \makeatother
  \ifGPblacktext
    \def\colorrgb#1{}%
    \def\colorgray#1{}%
  \else
    \ifGPcolor
      \def\colorrgb#1{\color[rgb]{#1}}%
      \def\colorgray#1{\color[gray]{#1}}%
      \expandafter\def\csname LTw\endcsname{\color{white}}%
      \expandafter\def\csname LTb\endcsname{\color{black}}%
      \expandafter\def\csname LTa\endcsname{\color{black}}%
      \expandafter\def\csname LT0\endcsname{\color[rgb]{1,0,0}}%
      \expandafter\def\csname LT1\endcsname{\color[rgb]{0,1,0}}%
      \expandafter\def\csname LT2\endcsname{\color[rgb]{0,0,1}}%
      \expandafter\def\csname LT3\endcsname{\color[rgb]{1,0,1}}%
      \expandafter\def\csname LT4\endcsname{\color[rgb]{0,1,1}}%
      \expandafter\def\csname LT5\endcsname{\color[rgb]{1,1,0}}%
      \expandafter\def\csname LT6\endcsname{\color[rgb]{0,0,0}}%
      \expandafter\def\csname LT7\endcsname{\color[rgb]{1,0.3,0}}%
      \expandafter\def\csname LT8\endcsname{\color[rgb]{0.5,0.5,0.5}}%
    \else
      \def\colorrgb#1{\color{black}}%
      \def\colorgray#1{\color[gray]{#1}}%
      \expandafter\def\csname LTw\endcsname{\color{white}}%
      \expandafter\def\csname LTb\endcsname{\color{black}}%
      \expandafter\def\csname LTa\endcsname{\color{black}}%
      \expandafter\def\csname LT0\endcsname{\color{black}}%
      \expandafter\def\csname LT1\endcsname{\color{black}}%
      \expandafter\def\csname LT2\endcsname{\color{black}}%
      \expandafter\def\csname LT3\endcsname{\color{black}}%
      \expandafter\def\csname LT4\endcsname{\color{black}}%
      \expandafter\def\csname LT5\endcsname{\color{black}}%
      \expandafter\def\csname LT6\endcsname{\color{black}}%
      \expandafter\def\csname LT7\endcsname{\color{black}}%
      \expandafter\def\csname LT8\endcsname{\color{black}}%
    \fi
  \fi
    \setlength{\unitlength}{0.0500bp}%
    \ifx\gptboxheight\undefined%
      \newlength{\gptboxheight}%
      \newlength{\gptboxwidth}%
      \newsavebox{\gptboxtext}%
    \fi%
    \setlength{\fboxrule}{0.5pt}%
    \setlength{\fboxsep}{1pt}%
\begin{picture}(6802.00,3614.00)%
    \gplgaddtomacro\gplbacktext{%
      \csname LTb\endcsname%
      \put(814,704){\makebox(0,0)[r]{\strut{}$0$}}%
      \csname LTb\endcsname%
      \put(814,1113){\makebox(0,0)[r]{\strut{}$0.2$}}%
      \csname LTb\endcsname%
      \put(814,1522){\makebox(0,0)[r]{\strut{}$0.4$}}%
      \csname LTb\endcsname%
      \put(814,1931){\makebox(0,0)[r]{\strut{}$0.6$}}%
      \csname LTb\endcsname%
      \put(814,2340){\makebox(0,0)[r]{\strut{}$0.8$}}%
      \csname LTb\endcsname%
      \put(814,2749){\makebox(0,0)[r]{\strut{}$1$}}%
      \csname LTb\endcsname%
      \put(1578,484){\makebox(0,0){\strut{}$10^{3}$}}%
      \csname LTb\endcsname%
      \put(3676,484){\makebox(0,0){\strut{}$10^{4}$}}%
      \csname LTb\endcsname%
      \put(5773,484){\makebox(0,0){\strut{}$10^{5}$}}%
    }%
    \gplgaddtomacro\gplfronttext{%
      \csname LTb\endcsname%
      \put(176,1828){\rotatebox{-270}{\makebox(0,0){\strut{}Avg. Local Clustering Coeff.}}}%
      \put(3675,154){\makebox(0,0){\strut{}Number $n$ of nodes}}%
      \put(3675,3283){\makebox(0,0){\strut{}Mixing: $\mu = 0.4$}}%
      \csname LTb\endcsname%
      \put(5418,2780){\makebox(0,0)[r]{\strut{}Orig}}%
      \csname LTb\endcsname%
      \put(5418,2560){\makebox(0,0)[r]{\strut{}EM}}%
    }%
    \gplbacktext
    \put(0,0){\includegraphics{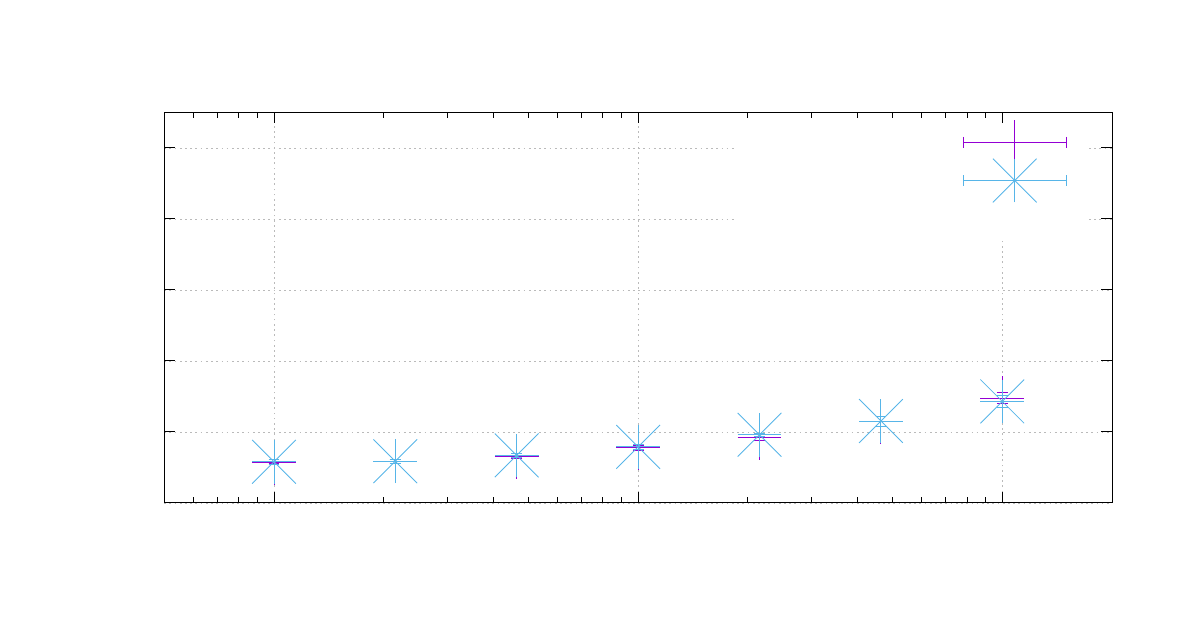}}%
    \gplfronttext
  \end{picture}%
\endgroup
}\hfill\scalebox{\threescale}{%
\begingroup
  \makeatletter
  \providecommand\color[2][]{%
    \GenericError{(gnuplot) \space\space\space\@spaces}{%
      Package color not loaded in conjunction with
      terminal option `colourtext'%
    }{See the gnuplot documentation for explanation.%
    }{Either use 'blacktext' in gnuplot or load the package
      color.sty in LaTeX.}%
    \renewcommand\color[2][]{}%
  }%
  \providecommand\includegraphics[2][]{%
    \GenericError{(gnuplot) \space\space\space\@spaces}{%
      Package graphicx or graphics not loaded%
    }{See the gnuplot documentation for explanation.%
    }{The gnuplot epslatex terminal needs graphicx.sty or graphics.sty.}%
    \renewcommand\includegraphics[2][]{}%
  }%
  \providecommand\rotatebox[2]{#2}%
  \@ifundefined{ifGPcolor}{%
    \newif\ifGPcolor
    \GPcolortrue
  }{}%
  \@ifundefined{ifGPblacktext}{%
    \newif\ifGPblacktext
    \GPblacktextfalse
  }{}%
  \let\gplgaddtomacro\g@addto@macro
  \gdef\gplbacktext{}%
  \gdef\gplfronttext{}%
  \makeatother
  \ifGPblacktext
    \def\colorrgb#1{}%
    \def\colorgray#1{}%
  \else
    \ifGPcolor
      \def\colorrgb#1{\color[rgb]{#1}}%
      \def\colorgray#1{\color[gray]{#1}}%
      \expandafter\def\csname LTw\endcsname{\color{white}}%
      \expandafter\def\csname LTb\endcsname{\color{black}}%
      \expandafter\def\csname LTa\endcsname{\color{black}}%
      \expandafter\def\csname LT0\endcsname{\color[rgb]{1,0,0}}%
      \expandafter\def\csname LT1\endcsname{\color[rgb]{0,1,0}}%
      \expandafter\def\csname LT2\endcsname{\color[rgb]{0,0,1}}%
      \expandafter\def\csname LT3\endcsname{\color[rgb]{1,0,1}}%
      \expandafter\def\csname LT4\endcsname{\color[rgb]{0,1,1}}%
      \expandafter\def\csname LT5\endcsname{\color[rgb]{1,1,0}}%
      \expandafter\def\csname LT6\endcsname{\color[rgb]{0,0,0}}%
      \expandafter\def\csname LT7\endcsname{\color[rgb]{1,0.3,0}}%
      \expandafter\def\csname LT8\endcsname{\color[rgb]{0.5,0.5,0.5}}%
    \else
      \def\colorrgb#1{\color{black}}%
      \def\colorgray#1{\color[gray]{#1}}%
      \expandafter\def\csname LTw\endcsname{\color{white}}%
      \expandafter\def\csname LTb\endcsname{\color{black}}%
      \expandafter\def\csname LTa\endcsname{\color{black}}%
      \expandafter\def\csname LT0\endcsname{\color{black}}%
      \expandafter\def\csname LT1\endcsname{\color{black}}%
      \expandafter\def\csname LT2\endcsname{\color{black}}%
      \expandafter\def\csname LT3\endcsname{\color{black}}%
      \expandafter\def\csname LT4\endcsname{\color{black}}%
      \expandafter\def\csname LT5\endcsname{\color{black}}%
      \expandafter\def\csname LT6\endcsname{\color{black}}%
      \expandafter\def\csname LT7\endcsname{\color{black}}%
      \expandafter\def\csname LT8\endcsname{\color{black}}%
    \fi
  \fi
    \setlength{\unitlength}{0.0500bp}%
    \ifx\gptboxheight\undefined%
      \newlength{\gptboxheight}%
      \newlength{\gptboxwidth}%
      \newsavebox{\gptboxtext}%
    \fi%
    \setlength{\fboxrule}{0.5pt}%
    \setlength{\fboxsep}{1pt}%
\begin{picture}(6802.00,3614.00)%
    \gplgaddtomacro\gplbacktext{%
      \csname LTb\endcsname%
      \put(814,704){\makebox(0,0)[r]{\strut{}$0$}}%
      \csname LTb\endcsname%
      \put(814,1113){\makebox(0,0)[r]{\strut{}$0.2$}}%
      \csname LTb\endcsname%
      \put(814,1522){\makebox(0,0)[r]{\strut{}$0.4$}}%
      \csname LTb\endcsname%
      \put(814,1931){\makebox(0,0)[r]{\strut{}$0.6$}}%
      \csname LTb\endcsname%
      \put(814,2340){\makebox(0,0)[r]{\strut{}$0.8$}}%
      \csname LTb\endcsname%
      \put(814,2749){\makebox(0,0)[r]{\strut{}$1$}}%
      \csname LTb\endcsname%
      \put(1578,484){\makebox(0,0){\strut{}$10^{3}$}}%
      \csname LTb\endcsname%
      \put(3676,484){\makebox(0,0){\strut{}$10^{4}$}}%
      \csname LTb\endcsname%
      \put(5773,484){\makebox(0,0){\strut{}$10^{5}$}}%
    }%
    \gplgaddtomacro\gplfronttext{%
      \csname LTb\endcsname%
      \put(176,1828){\rotatebox{-270}{\makebox(0,0){\strut{}Avg. Local Clustering Coeff.}}}%
      \put(3675,154){\makebox(0,0){\strut{}Number $n$ of nodes}}%
      \put(3675,3283){\makebox(0,0){\strut{}Mixing: $\mu = 0.6$}}%
      \csname LTb\endcsname%
      \put(5418,2780){\makebox(0,0)[r]{\strut{}Orig}}%
      \csname LTb\endcsname%
      \put(5418,2560){\makebox(0,0)[r]{\strut{}EM}}%
    }%
    \gplbacktext
    \put(0,0){\includegraphics{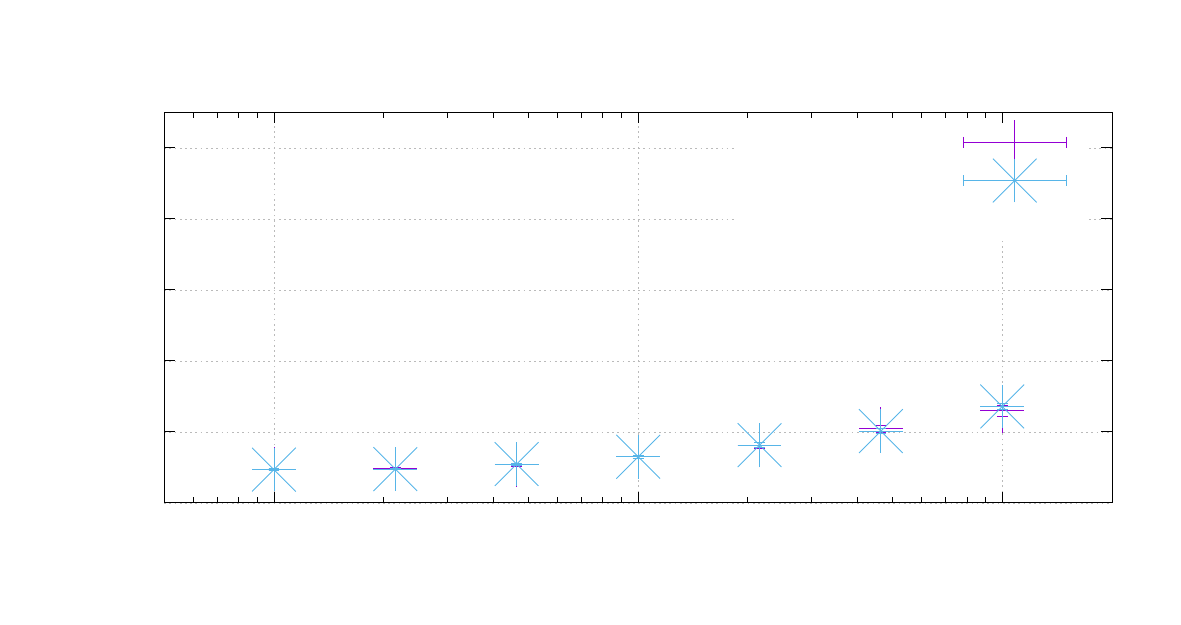}}%
    \gplfronttext
  \end{picture}%
\endgroup
}\hfill\\ %
\noindent\scalebox{\threescale}{%
\begingroup
  \makeatletter
  \providecommand\color[2][]{%
    \GenericError{(gnuplot) \space\space\space\@spaces}{%
      Package color not loaded in conjunction with
      terminal option `colourtext'%
    }{See the gnuplot documentation for explanation.%
    }{Either use 'blacktext' in gnuplot or load the package
      color.sty in LaTeX.}%
    \renewcommand\color[2][]{}%
  }%
  \providecommand\includegraphics[2][]{%
    \GenericError{(gnuplot) \space\space\space\@spaces}{%
      Package graphicx or graphics not loaded%
    }{See the gnuplot documentation for explanation.%
    }{The gnuplot epslatex terminal needs graphicx.sty or graphics.sty.}%
    \renewcommand\includegraphics[2][]{}%
  }%
  \providecommand\rotatebox[2]{#2}%
  \@ifundefined{ifGPcolor}{%
    \newif\ifGPcolor
    \GPcolortrue
  }{}%
  \@ifundefined{ifGPblacktext}{%
    \newif\ifGPblacktext
    \GPblacktextfalse
  }{}%
  \let\gplgaddtomacro\g@addto@macro
  \gdef\gplbacktext{}%
  \gdef\gplfronttext{}%
  \makeatother
  \ifGPblacktext
    \def\colorrgb#1{}%
    \def\colorgray#1{}%
  \else
    \ifGPcolor
      \def\colorrgb#1{\color[rgb]{#1}}%
      \def\colorgray#1{\color[gray]{#1}}%
      \expandafter\def\csname LTw\endcsname{\color{white}}%
      \expandafter\def\csname LTb\endcsname{\color{black}}%
      \expandafter\def\csname LTa\endcsname{\color{black}}%
      \expandafter\def\csname LT0\endcsname{\color[rgb]{1,0,0}}%
      \expandafter\def\csname LT1\endcsname{\color[rgb]{0,1,0}}%
      \expandafter\def\csname LT2\endcsname{\color[rgb]{0,0,1}}%
      \expandafter\def\csname LT3\endcsname{\color[rgb]{1,0,1}}%
      \expandafter\def\csname LT4\endcsname{\color[rgb]{0,1,1}}%
      \expandafter\def\csname LT5\endcsname{\color[rgb]{1,1,0}}%
      \expandafter\def\csname LT6\endcsname{\color[rgb]{0,0,0}}%
      \expandafter\def\csname LT7\endcsname{\color[rgb]{1,0.3,0}}%
      \expandafter\def\csname LT8\endcsname{\color[rgb]{0.5,0.5,0.5}}%
    \else
      \def\colorrgb#1{\color{black}}%
      \def\colorgray#1{\color[gray]{#1}}%
      \expandafter\def\csname LTw\endcsname{\color{white}}%
      \expandafter\def\csname LTb\endcsname{\color{black}}%
      \expandafter\def\csname LTa\endcsname{\color{black}}%
      \expandafter\def\csname LT0\endcsname{\color{black}}%
      \expandafter\def\csname LT1\endcsname{\color{black}}%
      \expandafter\def\csname LT2\endcsname{\color{black}}%
      \expandafter\def\csname LT3\endcsname{\color{black}}%
      \expandafter\def\csname LT4\endcsname{\color{black}}%
      \expandafter\def\csname LT5\endcsname{\color{black}}%
      \expandafter\def\csname LT6\endcsname{\color{black}}%
      \expandafter\def\csname LT7\endcsname{\color{black}}%
      \expandafter\def\csname LT8\endcsname{\color{black}}%
    \fi
  \fi
    \setlength{\unitlength}{0.0500bp}%
    \ifx\gptboxheight\undefined%
      \newlength{\gptboxheight}%
      \newlength{\gptboxwidth}%
      \newsavebox{\gptboxtext}%
    \fi%
    \setlength{\fboxrule}{0.5pt}%
    \setlength{\fboxsep}{1pt}%
\begin{picture}(6802.00,3614.00)%
    \gplgaddtomacro\gplbacktext{%
      \csname LTb\endcsname%
      \put(814,704){\makebox(0,0)[r]{\strut{}$0$}}%
      \csname LTb\endcsname%
      \put(814,1113){\makebox(0,0)[r]{\strut{}$0.2$}}%
      \csname LTb\endcsname%
      \put(814,1522){\makebox(0,0)[r]{\strut{}$0.4$}}%
      \csname LTb\endcsname%
      \put(814,1931){\makebox(0,0)[r]{\strut{}$0.6$}}%
      \csname LTb\endcsname%
      \put(814,2340){\makebox(0,0)[r]{\strut{}$0.8$}}%
      \csname LTb\endcsname%
      \put(814,2749){\makebox(0,0)[r]{\strut{}$1$}}%
      \csname LTb\endcsname%
      \put(1578,484){\makebox(0,0){\strut{}$10^{3}$}}%
      \csname LTb\endcsname%
      \put(3676,484){\makebox(0,0){\strut{}$10^{4}$}}%
      \csname LTb\endcsname%
      \put(5773,484){\makebox(0,0){\strut{}$10^{5}$}}%
    }%
    \gplgaddtomacro\gplfronttext{%
      \csname LTb\endcsname%
      \put(176,1828){\rotatebox{-270}{\makebox(0,0){\strut{}Degree Assortativity}}}%
      \put(3675,154){\makebox(0,0){\strut{}Number $n$ of nodes}}%
      \put(3675,3283){\makebox(0,0){\strut{}Mixing: $\mu = 0.2$, Degree Assortativity, Overlap: $\nu = 3$}}%
      \csname LTb\endcsname%
      \put(5418,2780){\makebox(0,0)[r]{\strut{}Orig}}%
      \csname LTb\endcsname%
      \put(5418,2560){\makebox(0,0)[r]{\strut{}EM}}%
    }%
    \gplbacktext
    \put(0,0){\includegraphics{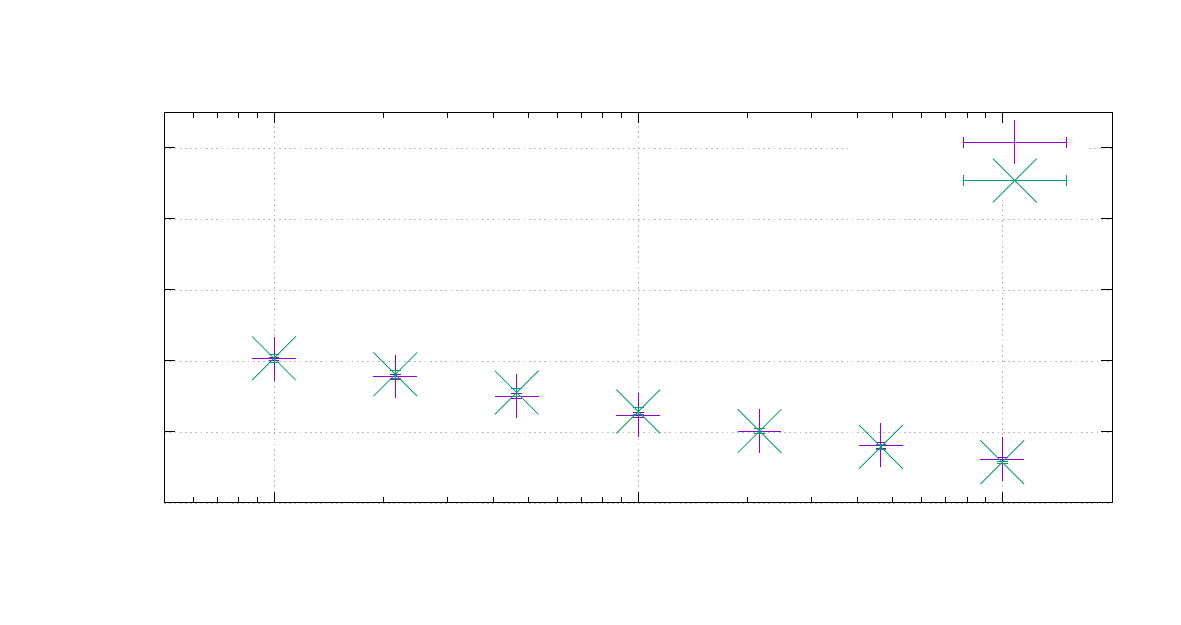}}%
    \gplfronttext
  \end{picture}%
\endgroup
}\hfill\scalebox{\threescale}{%
\begingroup
  \makeatletter
  \providecommand\color[2][]{%
    \GenericError{(gnuplot) \space\space\space\@spaces}{%
      Package color not loaded in conjunction with
      terminal option `colourtext'%
    }{See the gnuplot documentation for explanation.%
    }{Either use 'blacktext' in gnuplot or load the package
      color.sty in LaTeX.}%
    \renewcommand\color[2][]{}%
  }%
  \providecommand\includegraphics[2][]{%
    \GenericError{(gnuplot) \space\space\space\@spaces}{%
      Package graphicx or graphics not loaded%
    }{See the gnuplot documentation for explanation.%
    }{The gnuplot epslatex terminal needs graphicx.sty or graphics.sty.}%
    \renewcommand\includegraphics[2][]{}%
  }%
  \providecommand\rotatebox[2]{#2}%
  \@ifundefined{ifGPcolor}{%
    \newif\ifGPcolor
    \GPcolortrue
  }{}%
  \@ifundefined{ifGPblacktext}{%
    \newif\ifGPblacktext
    \GPblacktextfalse
  }{}%
  \let\gplgaddtomacro\g@addto@macro
  \gdef\gplbacktext{}%
  \gdef\gplfronttext{}%
  \makeatother
  \ifGPblacktext
    \def\colorrgb#1{}%
    \def\colorgray#1{}%
  \else
    \ifGPcolor
      \def\colorrgb#1{\color[rgb]{#1}}%
      \def\colorgray#1{\color[gray]{#1}}%
      \expandafter\def\csname LTw\endcsname{\color{white}}%
      \expandafter\def\csname LTb\endcsname{\color{black}}%
      \expandafter\def\csname LTa\endcsname{\color{black}}%
      \expandafter\def\csname LT0\endcsname{\color[rgb]{1,0,0}}%
      \expandafter\def\csname LT1\endcsname{\color[rgb]{0,1,0}}%
      \expandafter\def\csname LT2\endcsname{\color[rgb]{0,0,1}}%
      \expandafter\def\csname LT3\endcsname{\color[rgb]{1,0,1}}%
      \expandafter\def\csname LT4\endcsname{\color[rgb]{0,1,1}}%
      \expandafter\def\csname LT5\endcsname{\color[rgb]{1,1,0}}%
      \expandafter\def\csname LT6\endcsname{\color[rgb]{0,0,0}}%
      \expandafter\def\csname LT7\endcsname{\color[rgb]{1,0.3,0}}%
      \expandafter\def\csname LT8\endcsname{\color[rgb]{0.5,0.5,0.5}}%
    \else
      \def\colorrgb#1{\color{black}}%
      \def\colorgray#1{\color[gray]{#1}}%
      \expandafter\def\csname LTw\endcsname{\color{white}}%
      \expandafter\def\csname LTb\endcsname{\color{black}}%
      \expandafter\def\csname LTa\endcsname{\color{black}}%
      \expandafter\def\csname LT0\endcsname{\color{black}}%
      \expandafter\def\csname LT1\endcsname{\color{black}}%
      \expandafter\def\csname LT2\endcsname{\color{black}}%
      \expandafter\def\csname LT3\endcsname{\color{black}}%
      \expandafter\def\csname LT4\endcsname{\color{black}}%
      \expandafter\def\csname LT5\endcsname{\color{black}}%
      \expandafter\def\csname LT6\endcsname{\color{black}}%
      \expandafter\def\csname LT7\endcsname{\color{black}}%
      \expandafter\def\csname LT8\endcsname{\color{black}}%
    \fi
  \fi
    \setlength{\unitlength}{0.0500bp}%
    \ifx\gptboxheight\undefined%
      \newlength{\gptboxheight}%
      \newlength{\gptboxwidth}%
      \newsavebox{\gptboxtext}%
    \fi%
    \setlength{\fboxrule}{0.5pt}%
    \setlength{\fboxsep}{1pt}%
\begin{picture}(6802.00,3614.00)%
    \gplgaddtomacro\gplbacktext{%
      \csname LTb\endcsname%
      \put(814,704){\makebox(0,0)[r]{\strut{}$0$}}%
      \csname LTb\endcsname%
      \put(814,1113){\makebox(0,0)[r]{\strut{}$0.2$}}%
      \csname LTb\endcsname%
      \put(814,1522){\makebox(0,0)[r]{\strut{}$0.4$}}%
      \csname LTb\endcsname%
      \put(814,1931){\makebox(0,0)[r]{\strut{}$0.6$}}%
      \csname LTb\endcsname%
      \put(814,2340){\makebox(0,0)[r]{\strut{}$0.8$}}%
      \csname LTb\endcsname%
      \put(814,2749){\makebox(0,0)[r]{\strut{}$1$}}%
      \csname LTb\endcsname%
      \put(1578,484){\makebox(0,0){\strut{}$10^{3}$}}%
      \csname LTb\endcsname%
      \put(3676,484){\makebox(0,0){\strut{}$10^{4}$}}%
      \csname LTb\endcsname%
      \put(5773,484){\makebox(0,0){\strut{}$10^{5}$}}%
    }%
    \gplgaddtomacro\gplfronttext{%
      \csname LTb\endcsname%
      \put(176,1828){\rotatebox{-270}{\makebox(0,0){\strut{}Degree Assortativity}}}%
      \put(3675,154){\makebox(0,0){\strut{}Number $n$ of nodes}}%
      \put(3675,3283){\makebox(0,0){\strut{}Mixing: $\mu = 0.4$, Degree Assortativity, Overlap: $\nu = 3$}}%
      \csname LTb\endcsname%
      \put(5418,2780){\makebox(0,0)[r]{\strut{}Orig}}%
      \csname LTb\endcsname%
      \put(5418,2560){\makebox(0,0)[r]{\strut{}EM}}%
    }%
    \gplbacktext
    \put(0,0){\includegraphics{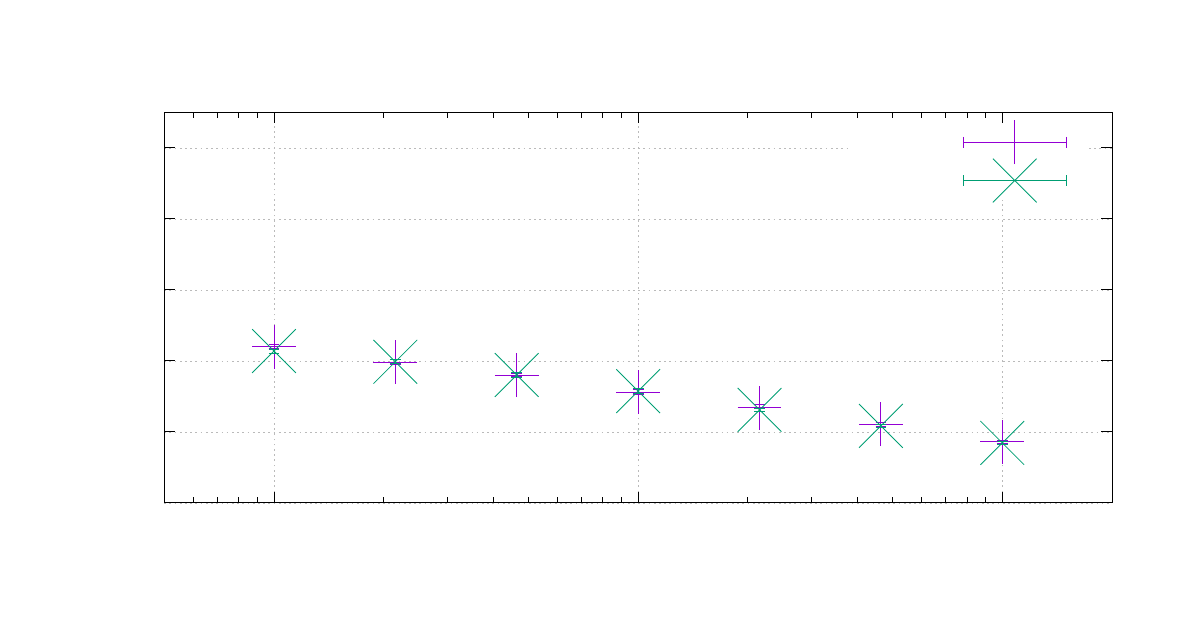}}%
    \gplfronttext
  \end{picture}%
\endgroup
}\hfill\scalebox{\threescale}{%
\begingroup
  \makeatletter
  \providecommand\color[2][]{%
    \GenericError{(gnuplot) \space\space\space\@spaces}{%
      Package color not loaded in conjunction with
      terminal option `colourtext'%
    }{See the gnuplot documentation for explanation.%
    }{Either use 'blacktext' in gnuplot or load the package
      color.sty in LaTeX.}%
    \renewcommand\color[2][]{}%
  }%
  \providecommand\includegraphics[2][]{%
    \GenericError{(gnuplot) \space\space\space\@spaces}{%
      Package graphicx or graphics not loaded%
    }{See the gnuplot documentation for explanation.%
    }{The gnuplot epslatex terminal needs graphicx.sty or graphics.sty.}%
    \renewcommand\includegraphics[2][]{}%
  }%
  \providecommand\rotatebox[2]{#2}%
  \@ifundefined{ifGPcolor}{%
    \newif\ifGPcolor
    \GPcolortrue
  }{}%
  \@ifundefined{ifGPblacktext}{%
    \newif\ifGPblacktext
    \GPblacktextfalse
  }{}%
  \let\gplgaddtomacro\g@addto@macro
  \gdef\gplbacktext{}%
  \gdef\gplfronttext{}%
  \makeatother
  \ifGPblacktext
    \def\colorrgb#1{}%
    \def\colorgray#1{}%
  \else
    \ifGPcolor
      \def\colorrgb#1{\color[rgb]{#1}}%
      \def\colorgray#1{\color[gray]{#1}}%
      \expandafter\def\csname LTw\endcsname{\color{white}}%
      \expandafter\def\csname LTb\endcsname{\color{black}}%
      \expandafter\def\csname LTa\endcsname{\color{black}}%
      \expandafter\def\csname LT0\endcsname{\color[rgb]{1,0,0}}%
      \expandafter\def\csname LT1\endcsname{\color[rgb]{0,1,0}}%
      \expandafter\def\csname LT2\endcsname{\color[rgb]{0,0,1}}%
      \expandafter\def\csname LT3\endcsname{\color[rgb]{1,0,1}}%
      \expandafter\def\csname LT4\endcsname{\color[rgb]{0,1,1}}%
      \expandafter\def\csname LT5\endcsname{\color[rgb]{1,1,0}}%
      \expandafter\def\csname LT6\endcsname{\color[rgb]{0,0,0}}%
      \expandafter\def\csname LT7\endcsname{\color[rgb]{1,0.3,0}}%
      \expandafter\def\csname LT8\endcsname{\color[rgb]{0.5,0.5,0.5}}%
    \else
      \def\colorrgb#1{\color{black}}%
      \def\colorgray#1{\color[gray]{#1}}%
      \expandafter\def\csname LTw\endcsname{\color{white}}%
      \expandafter\def\csname LTb\endcsname{\color{black}}%
      \expandafter\def\csname LTa\endcsname{\color{black}}%
      \expandafter\def\csname LT0\endcsname{\color{black}}%
      \expandafter\def\csname LT1\endcsname{\color{black}}%
      \expandafter\def\csname LT2\endcsname{\color{black}}%
      \expandafter\def\csname LT3\endcsname{\color{black}}%
      \expandafter\def\csname LT4\endcsname{\color{black}}%
      \expandafter\def\csname LT5\endcsname{\color{black}}%
      \expandafter\def\csname LT6\endcsname{\color{black}}%
      \expandafter\def\csname LT7\endcsname{\color{black}}%
      \expandafter\def\csname LT8\endcsname{\color{black}}%
    \fi
  \fi
    \setlength{\unitlength}{0.0500bp}%
    \ifx\gptboxheight\undefined%
      \newlength{\gptboxheight}%
      \newlength{\gptboxwidth}%
      \newsavebox{\gptboxtext}%
    \fi%
    \setlength{\fboxrule}{0.5pt}%
    \setlength{\fboxsep}{1pt}%
\begin{picture}(6802.00,3614.00)%
    \gplgaddtomacro\gplbacktext{%
      \csname LTb\endcsname%
      \put(814,704){\makebox(0,0)[r]{\strut{}$0$}}%
      \csname LTb\endcsname%
      \put(814,1113){\makebox(0,0)[r]{\strut{}$0.2$}}%
      \csname LTb\endcsname%
      \put(814,1522){\makebox(0,0)[r]{\strut{}$0.4$}}%
      \csname LTb\endcsname%
      \put(814,1931){\makebox(0,0)[r]{\strut{}$0.6$}}%
      \csname LTb\endcsname%
      \put(814,2340){\makebox(0,0)[r]{\strut{}$0.8$}}%
      \csname LTb\endcsname%
      \put(814,2749){\makebox(0,0)[r]{\strut{}$1$}}%
      \csname LTb\endcsname%
      \put(1578,484){\makebox(0,0){\strut{}$10^{3}$}}%
      \csname LTb\endcsname%
      \put(3676,484){\makebox(0,0){\strut{}$10^{4}$}}%
      \csname LTb\endcsname%
      \put(5773,484){\makebox(0,0){\strut{}$10^{5}$}}%
    }%
    \gplgaddtomacro\gplfronttext{%
      \csname LTb\endcsname%
      \put(176,1828){\rotatebox{-270}{\makebox(0,0){\strut{}Degree Assortativity}}}%
      \put(3675,154){\makebox(0,0){\strut{}Number $n$ of nodes}}%
      \put(3675,3283){\makebox(0,0){\strut{}Mixing: $\mu = 0.6$, Degree Assortativity, Overlap: $\nu = 3$}}%
      \csname LTb\endcsname%
      \put(5418,2780){\makebox(0,0)[r]{\strut{}Orig}}%
      \csname LTb\endcsname%
      \put(5418,2560){\makebox(0,0)[r]{\strut{}EM}}%
    }%
    \gplbacktext
    \put(0,0){\includegraphics{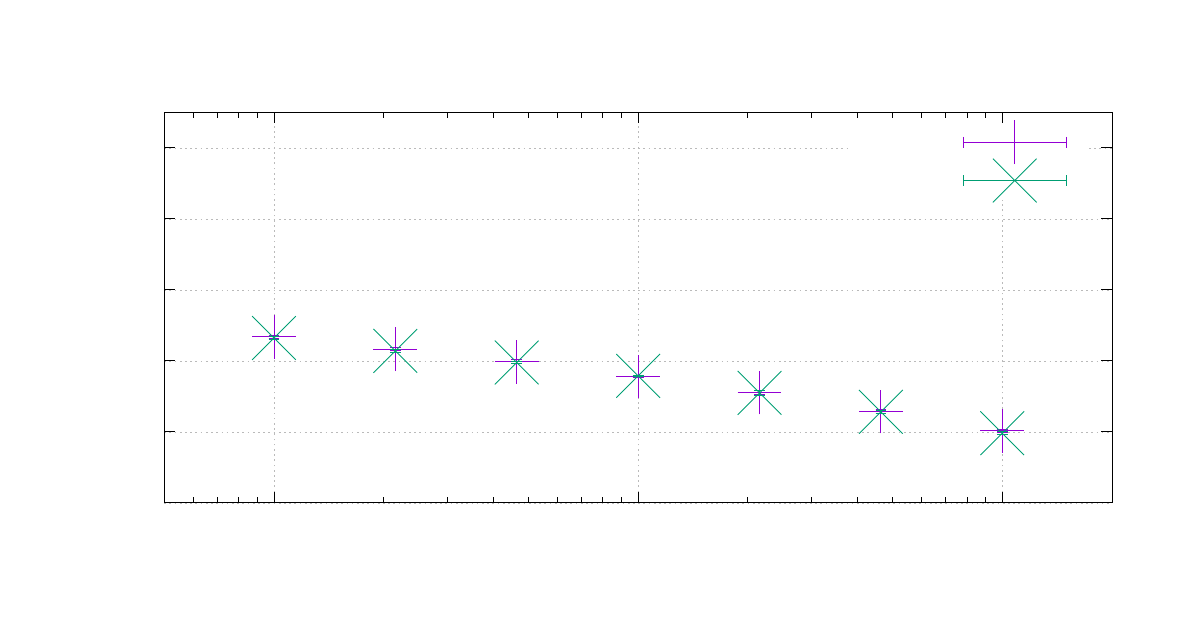}}%
    \gplfronttext
  \end{picture}%
\endgroup
}\hfill\\ %
	
\vspace{2em}	
\hrule	
\vspace{2em}	

\noindent\scalebox{\threescale}{%
\begingroup
  \makeatletter
  \providecommand\color[2][]{%
    \GenericError{(gnuplot) \space\space\space\@spaces}{%
      Package color not loaded in conjunction with
      terminal option `colourtext'%
    }{See the gnuplot documentation for explanation.%
    }{Either use 'blacktext' in gnuplot or load the package
      color.sty in LaTeX.}%
    \renewcommand\color[2][]{}%
  }%
  \providecommand\includegraphics[2][]{%
    \GenericError{(gnuplot) \space\space\space\@spaces}{%
      Package graphicx or graphics not loaded%
    }{See the gnuplot documentation for explanation.%
    }{The gnuplot epslatex terminal needs graphicx.sty or graphics.sty.}%
    \renewcommand\includegraphics[2][]{}%
  }%
  \providecommand\rotatebox[2]{#2}%
  \@ifundefined{ifGPcolor}{%
    \newif\ifGPcolor
    \GPcolortrue
  }{}%
  \@ifundefined{ifGPblacktext}{%
    \newif\ifGPblacktext
    \GPblacktextfalse
  }{}%
  \let\gplgaddtomacro\g@addto@macro
  \gdef\gplbacktext{}%
  \gdef\gplfronttext{}%
  \makeatother
  \ifGPblacktext
    \def\colorrgb#1{}%
    \def\colorgray#1{}%
  \else
    \ifGPcolor
      \def\colorrgb#1{\color[rgb]{#1}}%
      \def\colorgray#1{\color[gray]{#1}}%
      \expandafter\def\csname LTw\endcsname{\color{white}}%
      \expandafter\def\csname LTb\endcsname{\color{black}}%
      \expandafter\def\csname LTa\endcsname{\color{black}}%
      \expandafter\def\csname LT0\endcsname{\color[rgb]{1,0,0}}%
      \expandafter\def\csname LT1\endcsname{\color[rgb]{0,1,0}}%
      \expandafter\def\csname LT2\endcsname{\color[rgb]{0,0,1}}%
      \expandafter\def\csname LT3\endcsname{\color[rgb]{1,0,1}}%
      \expandafter\def\csname LT4\endcsname{\color[rgb]{0,1,1}}%
      \expandafter\def\csname LT5\endcsname{\color[rgb]{1,1,0}}%
      \expandafter\def\csname LT6\endcsname{\color[rgb]{0,0,0}}%
      \expandafter\def\csname LT7\endcsname{\color[rgb]{1,0.3,0}}%
      \expandafter\def\csname LT8\endcsname{\color[rgb]{0.5,0.5,0.5}}%
    \else
      \def\colorrgb#1{\color{black}}%
      \def\colorgray#1{\color[gray]{#1}}%
      \expandafter\def\csname LTw\endcsname{\color{white}}%
      \expandafter\def\csname LTb\endcsname{\color{black}}%
      \expandafter\def\csname LTa\endcsname{\color{black}}%
      \expandafter\def\csname LT0\endcsname{\color{black}}%
      \expandafter\def\csname LT1\endcsname{\color{black}}%
      \expandafter\def\csname LT2\endcsname{\color{black}}%
      \expandafter\def\csname LT3\endcsname{\color{black}}%
      \expandafter\def\csname LT4\endcsname{\color{black}}%
      \expandafter\def\csname LT5\endcsname{\color{black}}%
      \expandafter\def\csname LT6\endcsname{\color{black}}%
      \expandafter\def\csname LT7\endcsname{\color{black}}%
      \expandafter\def\csname LT8\endcsname{\color{black}}%
    \fi
  \fi
    \setlength{\unitlength}{0.0500bp}%
    \ifx\gptboxheight\undefined%
      \newlength{\gptboxheight}%
      \newlength{\gptboxwidth}%
      \newsavebox{\gptboxtext}%
    \fi%
    \setlength{\fboxrule}{0.5pt}%
    \setlength{\fboxsep}{1pt}%
\begin{picture}(6802.00,3614.00)%
    \gplgaddtomacro\gplbacktext{%
      \csname LTb\endcsname%
      \put(814,704){\makebox(0,0)[r]{\strut{}$0$}}%
      \csname LTb\endcsname%
      \put(814,1113){\makebox(0,0)[r]{\strut{}$0.2$}}%
      \csname LTb\endcsname%
      \put(814,1522){\makebox(0,0)[r]{\strut{}$0.4$}}%
      \csname LTb\endcsname%
      \put(814,1931){\makebox(0,0)[r]{\strut{}$0.6$}}%
      \csname LTb\endcsname%
      \put(814,2340){\makebox(0,0)[r]{\strut{}$0.8$}}%
      \csname LTb\endcsname%
      \put(814,2749){\makebox(0,0)[r]{\strut{}$1$}}%
      \csname LTb\endcsname%
      \put(1578,484){\makebox(0,0){\strut{}$10^{3}$}}%
      \csname LTb\endcsname%
      \put(3676,484){\makebox(0,0){\strut{}$10^{4}$}}%
      \csname LTb\endcsname%
      \put(5773,484){\makebox(0,0){\strut{}$10^{5}$}}%
    }%
    \gplgaddtomacro\gplfronttext{%
      \csname LTb\endcsname%
      \put(176,1828){\rotatebox{-270}{\makebox(0,0){\strut{}NMI}}}%
      \put(3675,154){\makebox(0,0){\strut{}Number $n$ of nodes}}%
      \put(3675,3283){\makebox(0,0){\strut{}Mixing: $\mu = 0.2$, Cluster: OSLOM, Overlap: $\nu = 4$}}%
      \csname LTb\endcsname%
      \put(5418,2780){\makebox(0,0)[r]{\strut{}Orig}}%
      \csname LTb\endcsname%
      \put(5418,2560){\makebox(0,0)[r]{\strut{}EM}}%
    }%
    \gplbacktext
    \put(0,0){\includegraphics{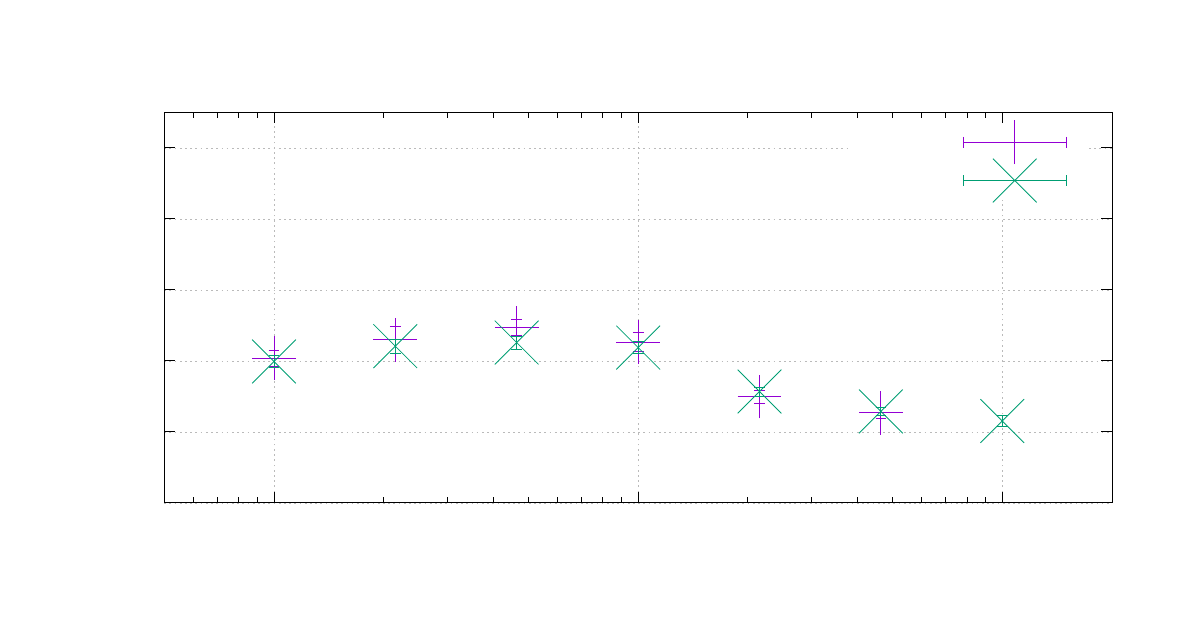}}%
    \gplfronttext
  \end{picture}%
\endgroup
}\hfill\scalebox{\threescale}{%
\begingroup
  \makeatletter
  \providecommand\color[2][]{%
    \GenericError{(gnuplot) \space\space\space\@spaces}{%
      Package color not loaded in conjunction with
      terminal option `colourtext'%
    }{See the gnuplot documentation for explanation.%
    }{Either use 'blacktext' in gnuplot or load the package
      color.sty in LaTeX.}%
    \renewcommand\color[2][]{}%
  }%
  \providecommand\includegraphics[2][]{%
    \GenericError{(gnuplot) \space\space\space\@spaces}{%
      Package graphicx or graphics not loaded%
    }{See the gnuplot documentation for explanation.%
    }{The gnuplot epslatex terminal needs graphicx.sty or graphics.sty.}%
    \renewcommand\includegraphics[2][]{}%
  }%
  \providecommand\rotatebox[2]{#2}%
  \@ifundefined{ifGPcolor}{%
    \newif\ifGPcolor
    \GPcolortrue
  }{}%
  \@ifundefined{ifGPblacktext}{%
    \newif\ifGPblacktext
    \GPblacktextfalse
  }{}%
  \let\gplgaddtomacro\g@addto@macro
  \gdef\gplbacktext{}%
  \gdef\gplfronttext{}%
  \makeatother
  \ifGPblacktext
    \def\colorrgb#1{}%
    \def\colorgray#1{}%
  \else
    \ifGPcolor
      \def\colorrgb#1{\color[rgb]{#1}}%
      \def\colorgray#1{\color[gray]{#1}}%
      \expandafter\def\csname LTw\endcsname{\color{white}}%
      \expandafter\def\csname LTb\endcsname{\color{black}}%
      \expandafter\def\csname LTa\endcsname{\color{black}}%
      \expandafter\def\csname LT0\endcsname{\color[rgb]{1,0,0}}%
      \expandafter\def\csname LT1\endcsname{\color[rgb]{0,1,0}}%
      \expandafter\def\csname LT2\endcsname{\color[rgb]{0,0,1}}%
      \expandafter\def\csname LT3\endcsname{\color[rgb]{1,0,1}}%
      \expandafter\def\csname LT4\endcsname{\color[rgb]{0,1,1}}%
      \expandafter\def\csname LT5\endcsname{\color[rgb]{1,1,0}}%
      \expandafter\def\csname LT6\endcsname{\color[rgb]{0,0,0}}%
      \expandafter\def\csname LT7\endcsname{\color[rgb]{1,0.3,0}}%
      \expandafter\def\csname LT8\endcsname{\color[rgb]{0.5,0.5,0.5}}%
    \else
      \def\colorrgb#1{\color{black}}%
      \def\colorgray#1{\color[gray]{#1}}%
      \expandafter\def\csname LTw\endcsname{\color{white}}%
      \expandafter\def\csname LTb\endcsname{\color{black}}%
      \expandafter\def\csname LTa\endcsname{\color{black}}%
      \expandafter\def\csname LT0\endcsname{\color{black}}%
      \expandafter\def\csname LT1\endcsname{\color{black}}%
      \expandafter\def\csname LT2\endcsname{\color{black}}%
      \expandafter\def\csname LT3\endcsname{\color{black}}%
      \expandafter\def\csname LT4\endcsname{\color{black}}%
      \expandafter\def\csname LT5\endcsname{\color{black}}%
      \expandafter\def\csname LT6\endcsname{\color{black}}%
      \expandafter\def\csname LT7\endcsname{\color{black}}%
      \expandafter\def\csname LT8\endcsname{\color{black}}%
    \fi
  \fi
    \setlength{\unitlength}{0.0500bp}%
    \ifx\gptboxheight\undefined%
      \newlength{\gptboxheight}%
      \newlength{\gptboxwidth}%
      \newsavebox{\gptboxtext}%
    \fi%
    \setlength{\fboxrule}{0.5pt}%
    \setlength{\fboxsep}{1pt}%
\begin{picture}(6802.00,3614.00)%
    \gplgaddtomacro\gplbacktext{%
      \csname LTb\endcsname%
      \put(814,704){\makebox(0,0)[r]{\strut{}$0$}}%
      \csname LTb\endcsname%
      \put(814,1113){\makebox(0,0)[r]{\strut{}$0.2$}}%
      \csname LTb\endcsname%
      \put(814,1522){\makebox(0,0)[r]{\strut{}$0.4$}}%
      \csname LTb\endcsname%
      \put(814,1931){\makebox(0,0)[r]{\strut{}$0.6$}}%
      \csname LTb\endcsname%
      \put(814,2340){\makebox(0,0)[r]{\strut{}$0.8$}}%
      \csname LTb\endcsname%
      \put(814,2749){\makebox(0,0)[r]{\strut{}$1$}}%
      \csname LTb\endcsname%
      \put(1578,484){\makebox(0,0){\strut{}$10^{3}$}}%
      \csname LTb\endcsname%
      \put(3676,484){\makebox(0,0){\strut{}$10^{4}$}}%
      \csname LTb\endcsname%
      \put(5773,484){\makebox(0,0){\strut{}$10^{5}$}}%
    }%
    \gplgaddtomacro\gplfronttext{%
      \csname LTb\endcsname%
      \put(176,1828){\rotatebox{-270}{\makebox(0,0){\strut{}NMI}}}%
      \put(3675,154){\makebox(0,0){\strut{}Number $n$ of nodes}}%
      \put(3675,3283){\makebox(0,0){\strut{}Mixing: $\mu = 0.4$, Cluster: OSLOM, Overlap: $\nu = 4$}}%
      \csname LTb\endcsname%
      \put(5418,2780){\makebox(0,0)[r]{\strut{}Orig}}%
      \csname LTb\endcsname%
      \put(5418,2560){\makebox(0,0)[r]{\strut{}EM}}%
    }%
    \gplbacktext
    \put(0,0){\includegraphics{lfr_nmi_4_4}}%
    \gplfronttext
  \end{picture}%
\endgroup
}\hfill\scalebox{\threescale}{%
\begingroup
  \makeatletter
  \providecommand\color[2][]{%
    \GenericError{(gnuplot) \space\space\space\@spaces}{%
      Package color not loaded in conjunction with
      terminal option `colourtext'%
    }{See the gnuplot documentation for explanation.%
    }{Either use 'blacktext' in gnuplot or load the package
      color.sty in LaTeX.}%
    \renewcommand\color[2][]{}%
  }%
  \providecommand\includegraphics[2][]{%
    \GenericError{(gnuplot) \space\space\space\@spaces}{%
      Package graphicx or graphics not loaded%
    }{See the gnuplot documentation for explanation.%
    }{The gnuplot epslatex terminal needs graphicx.sty or graphics.sty.}%
    \renewcommand\includegraphics[2][]{}%
  }%
  \providecommand\rotatebox[2]{#2}%
  \@ifundefined{ifGPcolor}{%
    \newif\ifGPcolor
    \GPcolortrue
  }{}%
  \@ifundefined{ifGPblacktext}{%
    \newif\ifGPblacktext
    \GPblacktextfalse
  }{}%
  \let\gplgaddtomacro\g@addto@macro
  \gdef\gplbacktext{}%
  \gdef\gplfronttext{}%
  \makeatother
  \ifGPblacktext
    \def\colorrgb#1{}%
    \def\colorgray#1{}%
  \else
    \ifGPcolor
      \def\colorrgb#1{\color[rgb]{#1}}%
      \def\colorgray#1{\color[gray]{#1}}%
      \expandafter\def\csname LTw\endcsname{\color{white}}%
      \expandafter\def\csname LTb\endcsname{\color{black}}%
      \expandafter\def\csname LTa\endcsname{\color{black}}%
      \expandafter\def\csname LT0\endcsname{\color[rgb]{1,0,0}}%
      \expandafter\def\csname LT1\endcsname{\color[rgb]{0,1,0}}%
      \expandafter\def\csname LT2\endcsname{\color[rgb]{0,0,1}}%
      \expandafter\def\csname LT3\endcsname{\color[rgb]{1,0,1}}%
      \expandafter\def\csname LT4\endcsname{\color[rgb]{0,1,1}}%
      \expandafter\def\csname LT5\endcsname{\color[rgb]{1,1,0}}%
      \expandafter\def\csname LT6\endcsname{\color[rgb]{0,0,0}}%
      \expandafter\def\csname LT7\endcsname{\color[rgb]{1,0.3,0}}%
      \expandafter\def\csname LT8\endcsname{\color[rgb]{0.5,0.5,0.5}}%
    \else
      \def\colorrgb#1{\color{black}}%
      \def\colorgray#1{\color[gray]{#1}}%
      \expandafter\def\csname LTw\endcsname{\color{white}}%
      \expandafter\def\csname LTb\endcsname{\color{black}}%
      \expandafter\def\csname LTa\endcsname{\color{black}}%
      \expandafter\def\csname LT0\endcsname{\color{black}}%
      \expandafter\def\csname LT1\endcsname{\color{black}}%
      \expandafter\def\csname LT2\endcsname{\color{black}}%
      \expandafter\def\csname LT3\endcsname{\color{black}}%
      \expandafter\def\csname LT4\endcsname{\color{black}}%
      \expandafter\def\csname LT5\endcsname{\color{black}}%
      \expandafter\def\csname LT6\endcsname{\color{black}}%
      \expandafter\def\csname LT7\endcsname{\color{black}}%
      \expandafter\def\csname LT8\endcsname{\color{black}}%
    \fi
  \fi
    \setlength{\unitlength}{0.0500bp}%
    \ifx\gptboxheight\undefined%
      \newlength{\gptboxheight}%
      \newlength{\gptboxwidth}%
      \newsavebox{\gptboxtext}%
    \fi%
    \setlength{\fboxrule}{0.5pt}%
    \setlength{\fboxsep}{1pt}%
\begin{picture}(6802.00,3614.00)%
    \gplgaddtomacro\gplbacktext{%
      \csname LTb\endcsname%
      \put(814,704){\makebox(0,0)[r]{\strut{}$0$}}%
      \csname LTb\endcsname%
      \put(814,1113){\makebox(0,0)[r]{\strut{}$0.2$}}%
      \csname LTb\endcsname%
      \put(814,1522){\makebox(0,0)[r]{\strut{}$0.4$}}%
      \csname LTb\endcsname%
      \put(814,1931){\makebox(0,0)[r]{\strut{}$0.6$}}%
      \csname LTb\endcsname%
      \put(814,2340){\makebox(0,0)[r]{\strut{}$0.8$}}%
      \csname LTb\endcsname%
      \put(814,2749){\makebox(0,0)[r]{\strut{}$1$}}%
      \csname LTb\endcsname%
      \put(1578,484){\makebox(0,0){\strut{}$10^{3}$}}%
      \csname LTb\endcsname%
      \put(3676,484){\makebox(0,0){\strut{}$10^{4}$}}%
      \csname LTb\endcsname%
      \put(5773,484){\makebox(0,0){\strut{}$10^{5}$}}%
    }%
    \gplgaddtomacro\gplfronttext{%
      \csname LTb\endcsname%
      \put(176,1828){\rotatebox{-270}{\makebox(0,0){\strut{}NMI}}}%
      \put(3675,154){\makebox(0,0){\strut{}Number $n$ of nodes}}%
      \put(3675,3283){\makebox(0,0){\strut{}Mixing: $\mu = 0.6$, Cluster: OSLOM, Overlap: $\nu = 4$}}%
      \csname LTb\endcsname%
      \put(5418,2780){\makebox(0,0)[r]{\strut{}Orig}}%
      \csname LTb\endcsname%
      \put(5418,2560){\makebox(0,0)[r]{\strut{}EM}}%
    }%
    \gplbacktext
    \put(0,0){\includegraphics{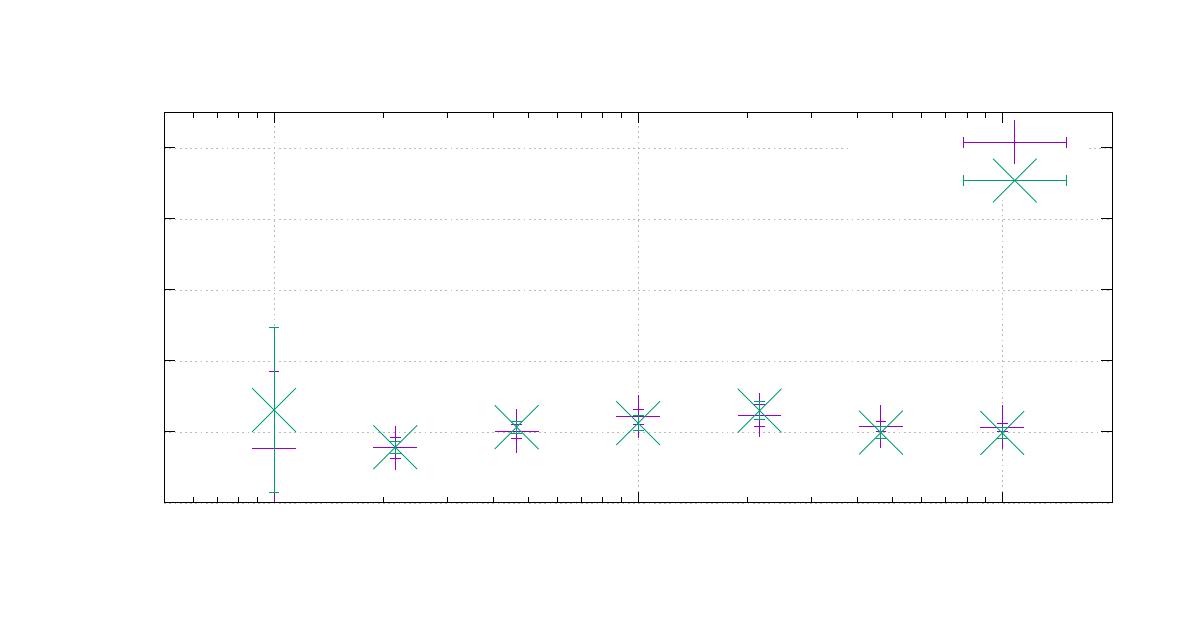}}%
    \gplfronttext
  \end{picture}%
\endgroup
}\hfill\\ %
\noindent\scalebox{\threescale}{%
\begingroup
  \makeatletter
  \providecommand\color[2][]{%
    \GenericError{(gnuplot) \space\space\space\@spaces}{%
      Package color not loaded in conjunction with
      terminal option `colourtext'%
    }{See the gnuplot documentation for explanation.%
    }{Either use 'blacktext' in gnuplot or load the package
      color.sty in LaTeX.}%
    \renewcommand\color[2][]{}%
  }%
  \providecommand\includegraphics[2][]{%
    \GenericError{(gnuplot) \space\space\space\@spaces}{%
      Package graphicx or graphics not loaded%
    }{See the gnuplot documentation for explanation.%
    }{The gnuplot epslatex terminal needs graphicx.sty or graphics.sty.}%
    \renewcommand\includegraphics[2][]{}%
  }%
  \providecommand\rotatebox[2]{#2}%
  \@ifundefined{ifGPcolor}{%
    \newif\ifGPcolor
    \GPcolortrue
  }{}%
  \@ifundefined{ifGPblacktext}{%
    \newif\ifGPblacktext
    \GPblacktextfalse
  }{}%
  \let\gplgaddtomacro\g@addto@macro
  \gdef\gplbacktext{}%
  \gdef\gplfronttext{}%
  \makeatother
  \ifGPblacktext
    \def\colorrgb#1{}%
    \def\colorgray#1{}%
  \else
    \ifGPcolor
      \def\colorrgb#1{\color[rgb]{#1}}%
      \def\colorgray#1{\color[gray]{#1}}%
      \expandafter\def\csname LTw\endcsname{\color{white}}%
      \expandafter\def\csname LTb\endcsname{\color{black}}%
      \expandafter\def\csname LTa\endcsname{\color{black}}%
      \expandafter\def\csname LT0\endcsname{\color[rgb]{1,0,0}}%
      \expandafter\def\csname LT1\endcsname{\color[rgb]{0,1,0}}%
      \expandafter\def\csname LT2\endcsname{\color[rgb]{0,0,1}}%
      \expandafter\def\csname LT3\endcsname{\color[rgb]{1,0,1}}%
      \expandafter\def\csname LT4\endcsname{\color[rgb]{0,1,1}}%
      \expandafter\def\csname LT5\endcsname{\color[rgb]{1,1,0}}%
      \expandafter\def\csname LT6\endcsname{\color[rgb]{0,0,0}}%
      \expandafter\def\csname LT7\endcsname{\color[rgb]{1,0.3,0}}%
      \expandafter\def\csname LT8\endcsname{\color[rgb]{0.5,0.5,0.5}}%
    \else
      \def\colorrgb#1{\color{black}}%
      \def\colorgray#1{\color[gray]{#1}}%
      \expandafter\def\csname LTw\endcsname{\color{white}}%
      \expandafter\def\csname LTb\endcsname{\color{black}}%
      \expandafter\def\csname LTa\endcsname{\color{black}}%
      \expandafter\def\csname LT0\endcsname{\color{black}}%
      \expandafter\def\csname LT1\endcsname{\color{black}}%
      \expandafter\def\csname LT2\endcsname{\color{black}}%
      \expandafter\def\csname LT3\endcsname{\color{black}}%
      \expandafter\def\csname LT4\endcsname{\color{black}}%
      \expandafter\def\csname LT5\endcsname{\color{black}}%
      \expandafter\def\csname LT6\endcsname{\color{black}}%
      \expandafter\def\csname LT7\endcsname{\color{black}}%
      \expandafter\def\csname LT8\endcsname{\color{black}}%
    \fi
  \fi
    \setlength{\unitlength}{0.0500bp}%
    \ifx\gptboxheight\undefined%
      \newlength{\gptboxheight}%
      \newlength{\gptboxwidth}%
      \newsavebox{\gptboxtext}%
    \fi%
    \setlength{\fboxrule}{0.5pt}%
    \setlength{\fboxsep}{1pt}%
\begin{picture}(6802.00,3614.00)%
    \gplgaddtomacro\gplbacktext{%
      \csname LTb\endcsname%
      \put(814,704){\makebox(0,0)[r]{\strut{}$0$}}%
      \csname LTb\endcsname%
      \put(814,1113){\makebox(0,0)[r]{\strut{}$0.2$}}%
      \csname LTb\endcsname%
      \put(814,1522){\makebox(0,0)[r]{\strut{}$0.4$}}%
      \csname LTb\endcsname%
      \put(814,1931){\makebox(0,0)[r]{\strut{}$0.6$}}%
      \csname LTb\endcsname%
      \put(814,2340){\makebox(0,0)[r]{\strut{}$0.8$}}%
      \csname LTb\endcsname%
      \put(814,2749){\makebox(0,0)[r]{\strut{}$1$}}%
      \csname LTb\endcsname%
      \put(1578,484){\makebox(0,0){\strut{}$10^{3}$}}%
      \csname LTb\endcsname%
      \put(3676,484){\makebox(0,0){\strut{}$10^{4}$}}%
      \csname LTb\endcsname%
      \put(5773,484){\makebox(0,0){\strut{}$10^{5}$}}%
    }%
    \gplgaddtomacro\gplfronttext{%
      \csname LTb\endcsname%
      \put(176,1828){\rotatebox{-270}{\makebox(0,0){\strut{}Avg. Local Clustering Coeff.}}}%
      \put(3675,154){\makebox(0,0){\strut{}Number $n$ of nodes}}%
      \put(3675,3283){\makebox(0,0){\strut{}Mixing: $\mu = 0.2$}}%
      \csname LTb\endcsname%
      \put(5418,2780){\makebox(0,0)[r]{\strut{}Orig}}%
      \csname LTb\endcsname%
      \put(5418,2560){\makebox(0,0)[r]{\strut{}EM}}%
    }%
    \gplbacktext
    \put(0,0){\includegraphics{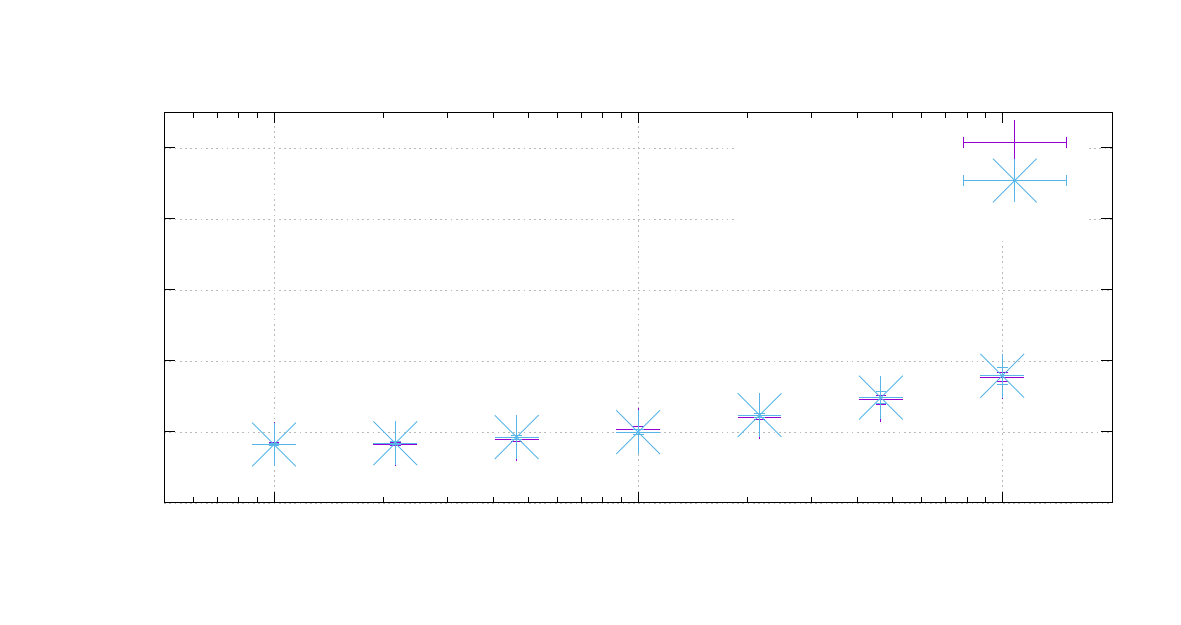}}%
    \gplfronttext
  \end{picture}%
\endgroup
}\hfill\scalebox{\threescale}{%
\begingroup
  \makeatletter
  \providecommand\color[2][]{%
    \GenericError{(gnuplot) \space\space\space\@spaces}{%
      Package color not loaded in conjunction with
      terminal option `colourtext'%
    }{See the gnuplot documentation for explanation.%
    }{Either use 'blacktext' in gnuplot or load the package
      color.sty in LaTeX.}%
    \renewcommand\color[2][]{}%
  }%
  \providecommand\includegraphics[2][]{%
    \GenericError{(gnuplot) \space\space\space\@spaces}{%
      Package graphicx or graphics not loaded%
    }{See the gnuplot documentation for explanation.%
    }{The gnuplot epslatex terminal needs graphicx.sty or graphics.sty.}%
    \renewcommand\includegraphics[2][]{}%
  }%
  \providecommand\rotatebox[2]{#2}%
  \@ifundefined{ifGPcolor}{%
    \newif\ifGPcolor
    \GPcolortrue
  }{}%
  \@ifundefined{ifGPblacktext}{%
    \newif\ifGPblacktext
    \GPblacktextfalse
  }{}%
  \let\gplgaddtomacro\g@addto@macro
  \gdef\gplbacktext{}%
  \gdef\gplfronttext{}%
  \makeatother
  \ifGPblacktext
    \def\colorrgb#1{}%
    \def\colorgray#1{}%
  \else
    \ifGPcolor
      \def\colorrgb#1{\color[rgb]{#1}}%
      \def\colorgray#1{\color[gray]{#1}}%
      \expandafter\def\csname LTw\endcsname{\color{white}}%
      \expandafter\def\csname LTb\endcsname{\color{black}}%
      \expandafter\def\csname LTa\endcsname{\color{black}}%
      \expandafter\def\csname LT0\endcsname{\color[rgb]{1,0,0}}%
      \expandafter\def\csname LT1\endcsname{\color[rgb]{0,1,0}}%
      \expandafter\def\csname LT2\endcsname{\color[rgb]{0,0,1}}%
      \expandafter\def\csname LT3\endcsname{\color[rgb]{1,0,1}}%
      \expandafter\def\csname LT4\endcsname{\color[rgb]{0,1,1}}%
      \expandafter\def\csname LT5\endcsname{\color[rgb]{1,1,0}}%
      \expandafter\def\csname LT6\endcsname{\color[rgb]{0,0,0}}%
      \expandafter\def\csname LT7\endcsname{\color[rgb]{1,0.3,0}}%
      \expandafter\def\csname LT8\endcsname{\color[rgb]{0.5,0.5,0.5}}%
    \else
      \def\colorrgb#1{\color{black}}%
      \def\colorgray#1{\color[gray]{#1}}%
      \expandafter\def\csname LTw\endcsname{\color{white}}%
      \expandafter\def\csname LTb\endcsname{\color{black}}%
      \expandafter\def\csname LTa\endcsname{\color{black}}%
      \expandafter\def\csname LT0\endcsname{\color{black}}%
      \expandafter\def\csname LT1\endcsname{\color{black}}%
      \expandafter\def\csname LT2\endcsname{\color{black}}%
      \expandafter\def\csname LT3\endcsname{\color{black}}%
      \expandafter\def\csname LT4\endcsname{\color{black}}%
      \expandafter\def\csname LT5\endcsname{\color{black}}%
      \expandafter\def\csname LT6\endcsname{\color{black}}%
      \expandafter\def\csname LT7\endcsname{\color{black}}%
      \expandafter\def\csname LT8\endcsname{\color{black}}%
    \fi
  \fi
    \setlength{\unitlength}{0.0500bp}%
    \ifx\gptboxheight\undefined%
      \newlength{\gptboxheight}%
      \newlength{\gptboxwidth}%
      \newsavebox{\gptboxtext}%
    \fi%
    \setlength{\fboxrule}{0.5pt}%
    \setlength{\fboxsep}{1pt}%
\begin{picture}(6802.00,3614.00)%
    \gplgaddtomacro\gplbacktext{%
      \csname LTb\endcsname%
      \put(814,704){\makebox(0,0)[r]{\strut{}$0$}}%
      \csname LTb\endcsname%
      \put(814,1113){\makebox(0,0)[r]{\strut{}$0.2$}}%
      \csname LTb\endcsname%
      \put(814,1522){\makebox(0,0)[r]{\strut{}$0.4$}}%
      \csname LTb\endcsname%
      \put(814,1931){\makebox(0,0)[r]{\strut{}$0.6$}}%
      \csname LTb\endcsname%
      \put(814,2340){\makebox(0,0)[r]{\strut{}$0.8$}}%
      \csname LTb\endcsname%
      \put(814,2749){\makebox(0,0)[r]{\strut{}$1$}}%
      \csname LTb\endcsname%
      \put(1578,484){\makebox(0,0){\strut{}$10^{3}$}}%
      \csname LTb\endcsname%
      \put(3676,484){\makebox(0,0){\strut{}$10^{4}$}}%
      \csname LTb\endcsname%
      \put(5773,484){\makebox(0,0){\strut{}$10^{5}$}}%
    }%
    \gplgaddtomacro\gplfronttext{%
      \csname LTb\endcsname%
      \put(176,1828){\rotatebox{-270}{\makebox(0,0){\strut{}Avg. Local Clustering Coeff.}}}%
      \put(3675,154){\makebox(0,0){\strut{}Number $n$ of nodes}}%
      \put(3675,3283){\makebox(0,0){\strut{}Mixing: $\mu = 0.4$}}%
      \csname LTb\endcsname%
      \put(5418,2780){\makebox(0,0)[r]{\strut{}Orig}}%
      \csname LTb\endcsname%
      \put(5418,2560){\makebox(0,0)[r]{\strut{}EM}}%
    }%
    \gplbacktext
    \put(0,0){\includegraphics{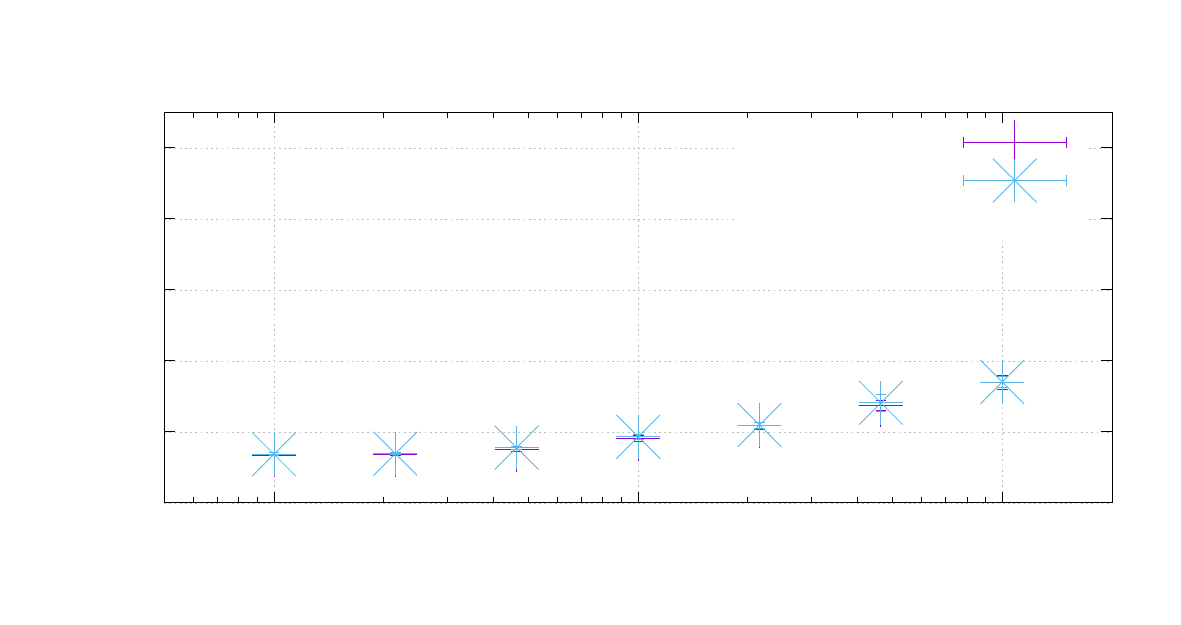}}%
    \gplfronttext
  \end{picture}%
\endgroup
}\hfill\scalebox{\threescale}{%
\begingroup
  \makeatletter
  \providecommand\color[2][]{%
    \GenericError{(gnuplot) \space\space\space\@spaces}{%
      Package color not loaded in conjunction with
      terminal option `colourtext'%
    }{See the gnuplot documentation for explanation.%
    }{Either use 'blacktext' in gnuplot or load the package
      color.sty in LaTeX.}%
    \renewcommand\color[2][]{}%
  }%
  \providecommand\includegraphics[2][]{%
    \GenericError{(gnuplot) \space\space\space\@spaces}{%
      Package graphicx or graphics not loaded%
    }{See the gnuplot documentation for explanation.%
    }{The gnuplot epslatex terminal needs graphicx.sty or graphics.sty.}%
    \renewcommand\includegraphics[2][]{}%
  }%
  \providecommand\rotatebox[2]{#2}%
  \@ifundefined{ifGPcolor}{%
    \newif\ifGPcolor
    \GPcolortrue
  }{}%
  \@ifundefined{ifGPblacktext}{%
    \newif\ifGPblacktext
    \GPblacktextfalse
  }{}%
  \let\gplgaddtomacro\g@addto@macro
  \gdef\gplbacktext{}%
  \gdef\gplfronttext{}%
  \makeatother
  \ifGPblacktext
    \def\colorrgb#1{}%
    \def\colorgray#1{}%
  \else
    \ifGPcolor
      \def\colorrgb#1{\color[rgb]{#1}}%
      \def\colorgray#1{\color[gray]{#1}}%
      \expandafter\def\csname LTw\endcsname{\color{white}}%
      \expandafter\def\csname LTb\endcsname{\color{black}}%
      \expandafter\def\csname LTa\endcsname{\color{black}}%
      \expandafter\def\csname LT0\endcsname{\color[rgb]{1,0,0}}%
      \expandafter\def\csname LT1\endcsname{\color[rgb]{0,1,0}}%
      \expandafter\def\csname LT2\endcsname{\color[rgb]{0,0,1}}%
      \expandafter\def\csname LT3\endcsname{\color[rgb]{1,0,1}}%
      \expandafter\def\csname LT4\endcsname{\color[rgb]{0,1,1}}%
      \expandafter\def\csname LT5\endcsname{\color[rgb]{1,1,0}}%
      \expandafter\def\csname LT6\endcsname{\color[rgb]{0,0,0}}%
      \expandafter\def\csname LT7\endcsname{\color[rgb]{1,0.3,0}}%
      \expandafter\def\csname LT8\endcsname{\color[rgb]{0.5,0.5,0.5}}%
    \else
      \def\colorrgb#1{\color{black}}%
      \def\colorgray#1{\color[gray]{#1}}%
      \expandafter\def\csname LTw\endcsname{\color{white}}%
      \expandafter\def\csname LTb\endcsname{\color{black}}%
      \expandafter\def\csname LTa\endcsname{\color{black}}%
      \expandafter\def\csname LT0\endcsname{\color{black}}%
      \expandafter\def\csname LT1\endcsname{\color{black}}%
      \expandafter\def\csname LT2\endcsname{\color{black}}%
      \expandafter\def\csname LT3\endcsname{\color{black}}%
      \expandafter\def\csname LT4\endcsname{\color{black}}%
      \expandafter\def\csname LT5\endcsname{\color{black}}%
      \expandafter\def\csname LT6\endcsname{\color{black}}%
      \expandafter\def\csname LT7\endcsname{\color{black}}%
      \expandafter\def\csname LT8\endcsname{\color{black}}%
    \fi
  \fi
    \setlength{\unitlength}{0.0500bp}%
    \ifx\gptboxheight\undefined%
      \newlength{\gptboxheight}%
      \newlength{\gptboxwidth}%
      \newsavebox{\gptboxtext}%
    \fi%
    \setlength{\fboxrule}{0.5pt}%
    \setlength{\fboxsep}{1pt}%
\begin{picture}(6802.00,3614.00)%
    \gplgaddtomacro\gplbacktext{%
      \csname LTb\endcsname%
      \put(814,704){\makebox(0,0)[r]{\strut{}$0$}}%
      \csname LTb\endcsname%
      \put(814,1113){\makebox(0,0)[r]{\strut{}$0.2$}}%
      \csname LTb\endcsname%
      \put(814,1522){\makebox(0,0)[r]{\strut{}$0.4$}}%
      \csname LTb\endcsname%
      \put(814,1931){\makebox(0,0)[r]{\strut{}$0.6$}}%
      \csname LTb\endcsname%
      \put(814,2340){\makebox(0,0)[r]{\strut{}$0.8$}}%
      \csname LTb\endcsname%
      \put(814,2749){\makebox(0,0)[r]{\strut{}$1$}}%
      \csname LTb\endcsname%
      \put(1578,484){\makebox(0,0){\strut{}$10^{3}$}}%
      \csname LTb\endcsname%
      \put(3676,484){\makebox(0,0){\strut{}$10^{4}$}}%
      \csname LTb\endcsname%
      \put(5773,484){\makebox(0,0){\strut{}$10^{5}$}}%
    }%
    \gplgaddtomacro\gplfronttext{%
      \csname LTb\endcsname%
      \put(176,1828){\rotatebox{-270}{\makebox(0,0){\strut{}Avg. Local Clustering Coeff.}}}%
      \put(3675,154){\makebox(0,0){\strut{}Number $n$ of nodes}}%
      \put(3675,3283){\makebox(0,0){\strut{}Mixing: $\mu = 0.6$}}%
      \csname LTb\endcsname%
      \put(5418,2780){\makebox(0,0)[r]{\strut{}Orig}}%
      \csname LTb\endcsname%
      \put(5418,2560){\makebox(0,0)[r]{\strut{}EM}}%
    }%
    \gplbacktext
    \put(0,0){\includegraphics{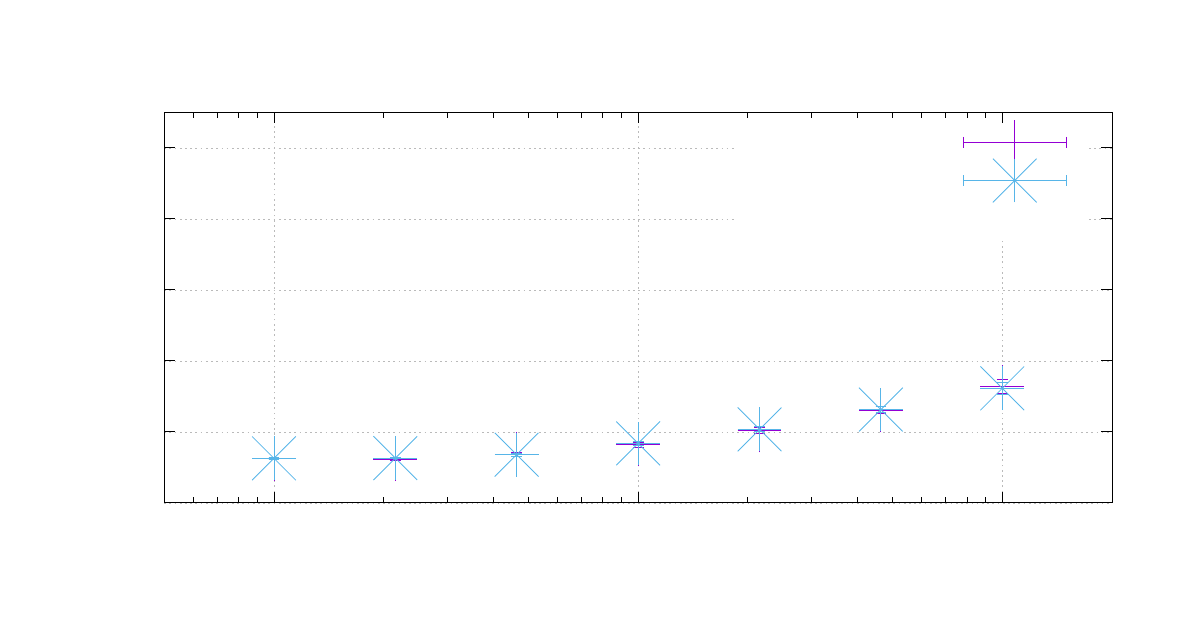}}%
    \gplfronttext
  \end{picture}%
\endgroup
}\hfill\\ %
\noindent\scalebox{\threescale}{%
\begingroup
  \makeatletter
  \providecommand\color[2][]{%
    \GenericError{(gnuplot) \space\space\space\@spaces}{%
      Package color not loaded in conjunction with
      terminal option `colourtext'%
    }{See the gnuplot documentation for explanation.%
    }{Either use 'blacktext' in gnuplot or load the package
      color.sty in LaTeX.}%
    \renewcommand\color[2][]{}%
  }%
  \providecommand\includegraphics[2][]{%
    \GenericError{(gnuplot) \space\space\space\@spaces}{%
      Package graphicx or graphics not loaded%
    }{See the gnuplot documentation for explanation.%
    }{The gnuplot epslatex terminal needs graphicx.sty or graphics.sty.}%
    \renewcommand\includegraphics[2][]{}%
  }%
  \providecommand\rotatebox[2]{#2}%
  \@ifundefined{ifGPcolor}{%
    \newif\ifGPcolor
    \GPcolortrue
  }{}%
  \@ifundefined{ifGPblacktext}{%
    \newif\ifGPblacktext
    \GPblacktextfalse
  }{}%
  \let\gplgaddtomacro\g@addto@macro
  \gdef\gplbacktext{}%
  \gdef\gplfronttext{}%
  \makeatother
  \ifGPblacktext
    \def\colorrgb#1{}%
    \def\colorgray#1{}%
  \else
    \ifGPcolor
      \def\colorrgb#1{\color[rgb]{#1}}%
      \def\colorgray#1{\color[gray]{#1}}%
      \expandafter\def\csname LTw\endcsname{\color{white}}%
      \expandafter\def\csname LTb\endcsname{\color{black}}%
      \expandafter\def\csname LTa\endcsname{\color{black}}%
      \expandafter\def\csname LT0\endcsname{\color[rgb]{1,0,0}}%
      \expandafter\def\csname LT1\endcsname{\color[rgb]{0,1,0}}%
      \expandafter\def\csname LT2\endcsname{\color[rgb]{0,0,1}}%
      \expandafter\def\csname LT3\endcsname{\color[rgb]{1,0,1}}%
      \expandafter\def\csname LT4\endcsname{\color[rgb]{0,1,1}}%
      \expandafter\def\csname LT5\endcsname{\color[rgb]{1,1,0}}%
      \expandafter\def\csname LT6\endcsname{\color[rgb]{0,0,0}}%
      \expandafter\def\csname LT7\endcsname{\color[rgb]{1,0.3,0}}%
      \expandafter\def\csname LT8\endcsname{\color[rgb]{0.5,0.5,0.5}}%
    \else
      \def\colorrgb#1{\color{black}}%
      \def\colorgray#1{\color[gray]{#1}}%
      \expandafter\def\csname LTw\endcsname{\color{white}}%
      \expandafter\def\csname LTb\endcsname{\color{black}}%
      \expandafter\def\csname LTa\endcsname{\color{black}}%
      \expandafter\def\csname LT0\endcsname{\color{black}}%
      \expandafter\def\csname LT1\endcsname{\color{black}}%
      \expandafter\def\csname LT2\endcsname{\color{black}}%
      \expandafter\def\csname LT3\endcsname{\color{black}}%
      \expandafter\def\csname LT4\endcsname{\color{black}}%
      \expandafter\def\csname LT5\endcsname{\color{black}}%
      \expandafter\def\csname LT6\endcsname{\color{black}}%
      \expandafter\def\csname LT7\endcsname{\color{black}}%
      \expandafter\def\csname LT8\endcsname{\color{black}}%
    \fi
  \fi
    \setlength{\unitlength}{0.0500bp}%
    \ifx\gptboxheight\undefined%
      \newlength{\gptboxheight}%
      \newlength{\gptboxwidth}%
      \newsavebox{\gptboxtext}%
    \fi%
    \setlength{\fboxrule}{0.5pt}%
    \setlength{\fboxsep}{1pt}%
\begin{picture}(6802.00,3614.00)%
    \gplgaddtomacro\gplbacktext{%
      \csname LTb\endcsname%
      \put(814,704){\makebox(0,0)[r]{\strut{}$0$}}%
      \csname LTb\endcsname%
      \put(814,1113){\makebox(0,0)[r]{\strut{}$0.2$}}%
      \csname LTb\endcsname%
      \put(814,1522){\makebox(0,0)[r]{\strut{}$0.4$}}%
      \csname LTb\endcsname%
      \put(814,1931){\makebox(0,0)[r]{\strut{}$0.6$}}%
      \csname LTb\endcsname%
      \put(814,2340){\makebox(0,0)[r]{\strut{}$0.8$}}%
      \csname LTb\endcsname%
      \put(814,2749){\makebox(0,0)[r]{\strut{}$1$}}%
      \csname LTb\endcsname%
      \put(1578,484){\makebox(0,0){\strut{}$10^{3}$}}%
      \csname LTb\endcsname%
      \put(3676,484){\makebox(0,0){\strut{}$10^{4}$}}%
      \csname LTb\endcsname%
      \put(5773,484){\makebox(0,0){\strut{}$10^{5}$}}%
    }%
    \gplgaddtomacro\gplfronttext{%
      \csname LTb\endcsname%
      \put(176,1828){\rotatebox{-270}{\makebox(0,0){\strut{}Degree Assortativity}}}%
      \put(3675,154){\makebox(0,0){\strut{}Number $n$ of nodes}}%
      \put(3675,3283){\makebox(0,0){\strut{}Mixing: $\mu = 0.2$, Degree Assortativity, Overlap: $\nu = 4$}}%
      \csname LTb\endcsname%
      \put(5418,2780){\makebox(0,0)[r]{\strut{}Orig}}%
      \csname LTb\endcsname%
      \put(5418,2560){\makebox(0,0)[r]{\strut{}EM}}%
    }%
    \gplbacktext
    \put(0,0){\includegraphics{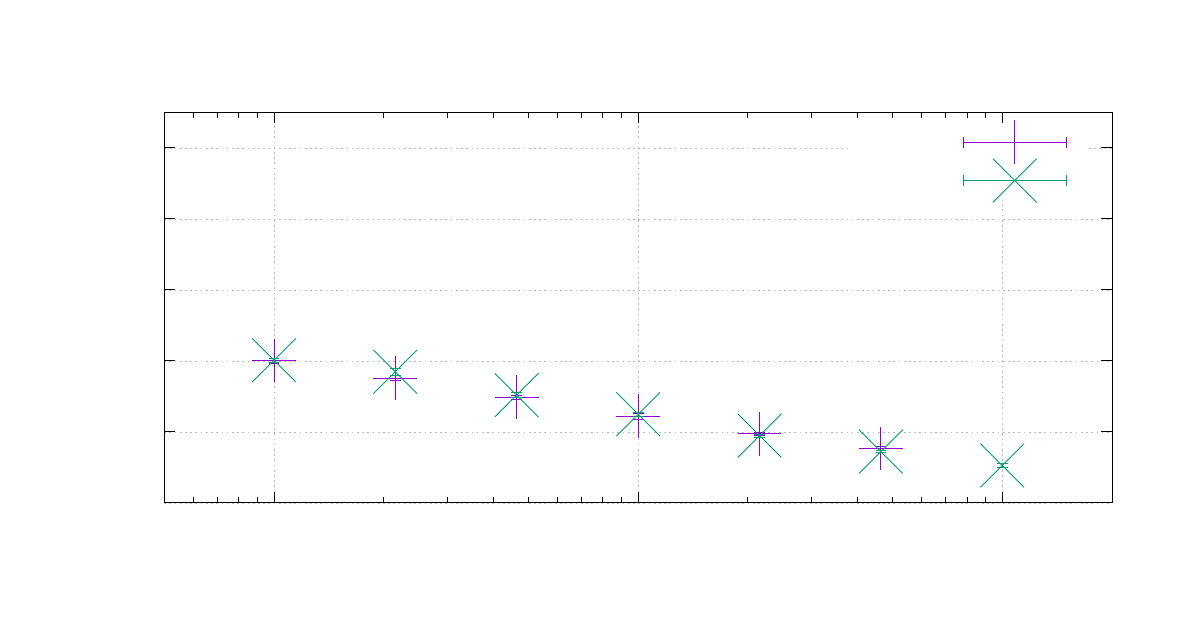}}%
    \gplfronttext
  \end{picture}%
\endgroup
}\hfill\scalebox{\threescale}{%
\begingroup
  \makeatletter
  \providecommand\color[2][]{%
    \GenericError{(gnuplot) \space\space\space\@spaces}{%
      Package color not loaded in conjunction with
      terminal option `colourtext'%
    }{See the gnuplot documentation for explanation.%
    }{Either use 'blacktext' in gnuplot or load the package
      color.sty in LaTeX.}%
    \renewcommand\color[2][]{}%
  }%
  \providecommand\includegraphics[2][]{%
    \GenericError{(gnuplot) \space\space\space\@spaces}{%
      Package graphicx or graphics not loaded%
    }{See the gnuplot documentation for explanation.%
    }{The gnuplot epslatex terminal needs graphicx.sty or graphics.sty.}%
    \renewcommand\includegraphics[2][]{}%
  }%
  \providecommand\rotatebox[2]{#2}%
  \@ifundefined{ifGPcolor}{%
    \newif\ifGPcolor
    \GPcolortrue
  }{}%
  \@ifundefined{ifGPblacktext}{%
    \newif\ifGPblacktext
    \GPblacktextfalse
  }{}%
  \let\gplgaddtomacro\g@addto@macro
  \gdef\gplbacktext{}%
  \gdef\gplfronttext{}%
  \makeatother
  \ifGPblacktext
    \def\colorrgb#1{}%
    \def\colorgray#1{}%
  \else
    \ifGPcolor
      \def\colorrgb#1{\color[rgb]{#1}}%
      \def\colorgray#1{\color[gray]{#1}}%
      \expandafter\def\csname LTw\endcsname{\color{white}}%
      \expandafter\def\csname LTb\endcsname{\color{black}}%
      \expandafter\def\csname LTa\endcsname{\color{black}}%
      \expandafter\def\csname LT0\endcsname{\color[rgb]{1,0,0}}%
      \expandafter\def\csname LT1\endcsname{\color[rgb]{0,1,0}}%
      \expandafter\def\csname LT2\endcsname{\color[rgb]{0,0,1}}%
      \expandafter\def\csname LT3\endcsname{\color[rgb]{1,0,1}}%
      \expandafter\def\csname LT4\endcsname{\color[rgb]{0,1,1}}%
      \expandafter\def\csname LT5\endcsname{\color[rgb]{1,1,0}}%
      \expandafter\def\csname LT6\endcsname{\color[rgb]{0,0,0}}%
      \expandafter\def\csname LT7\endcsname{\color[rgb]{1,0.3,0}}%
      \expandafter\def\csname LT8\endcsname{\color[rgb]{0.5,0.5,0.5}}%
    \else
      \def\colorrgb#1{\color{black}}%
      \def\colorgray#1{\color[gray]{#1}}%
      \expandafter\def\csname LTw\endcsname{\color{white}}%
      \expandafter\def\csname LTb\endcsname{\color{black}}%
      \expandafter\def\csname LTa\endcsname{\color{black}}%
      \expandafter\def\csname LT0\endcsname{\color{black}}%
      \expandafter\def\csname LT1\endcsname{\color{black}}%
      \expandafter\def\csname LT2\endcsname{\color{black}}%
      \expandafter\def\csname LT3\endcsname{\color{black}}%
      \expandafter\def\csname LT4\endcsname{\color{black}}%
      \expandafter\def\csname LT5\endcsname{\color{black}}%
      \expandafter\def\csname LT6\endcsname{\color{black}}%
      \expandafter\def\csname LT7\endcsname{\color{black}}%
      \expandafter\def\csname LT8\endcsname{\color{black}}%
    \fi
  \fi
    \setlength{\unitlength}{0.0500bp}%
    \ifx\gptboxheight\undefined%
      \newlength{\gptboxheight}%
      \newlength{\gptboxwidth}%
      \newsavebox{\gptboxtext}%
    \fi%
    \setlength{\fboxrule}{0.5pt}%
    \setlength{\fboxsep}{1pt}%
\begin{picture}(6802.00,3614.00)%
    \gplgaddtomacro\gplbacktext{%
      \csname LTb\endcsname%
      \put(814,704){\makebox(0,0)[r]{\strut{}$0$}}%
      \csname LTb\endcsname%
      \put(814,1113){\makebox(0,0)[r]{\strut{}$0.2$}}%
      \csname LTb\endcsname%
      \put(814,1522){\makebox(0,0)[r]{\strut{}$0.4$}}%
      \csname LTb\endcsname%
      \put(814,1931){\makebox(0,0)[r]{\strut{}$0.6$}}%
      \csname LTb\endcsname%
      \put(814,2340){\makebox(0,0)[r]{\strut{}$0.8$}}%
      \csname LTb\endcsname%
      \put(814,2749){\makebox(0,0)[r]{\strut{}$1$}}%
      \csname LTb\endcsname%
      \put(1578,484){\makebox(0,0){\strut{}$10^{3}$}}%
      \csname LTb\endcsname%
      \put(3676,484){\makebox(0,0){\strut{}$10^{4}$}}%
      \csname LTb\endcsname%
      \put(5773,484){\makebox(0,0){\strut{}$10^{5}$}}%
    }%
    \gplgaddtomacro\gplfronttext{%
      \csname LTb\endcsname%
      \put(176,1828){\rotatebox{-270}{\makebox(0,0){\strut{}Degree Assortativity}}}%
      \put(3675,154){\makebox(0,0){\strut{}Number $n$ of nodes}}%
      \put(3675,3283){\makebox(0,0){\strut{}Mixing: $\mu = 0.4$, Degree Assortativity, Overlap: $\nu = 4$}}%
      \csname LTb\endcsname%
      \put(5418,2780){\makebox(0,0)[r]{\strut{}Orig}}%
      \csname LTb\endcsname%
      \put(5418,2560){\makebox(0,0)[r]{\strut{}EM}}%
    }%
    \gplbacktext
    \put(0,0){\includegraphics{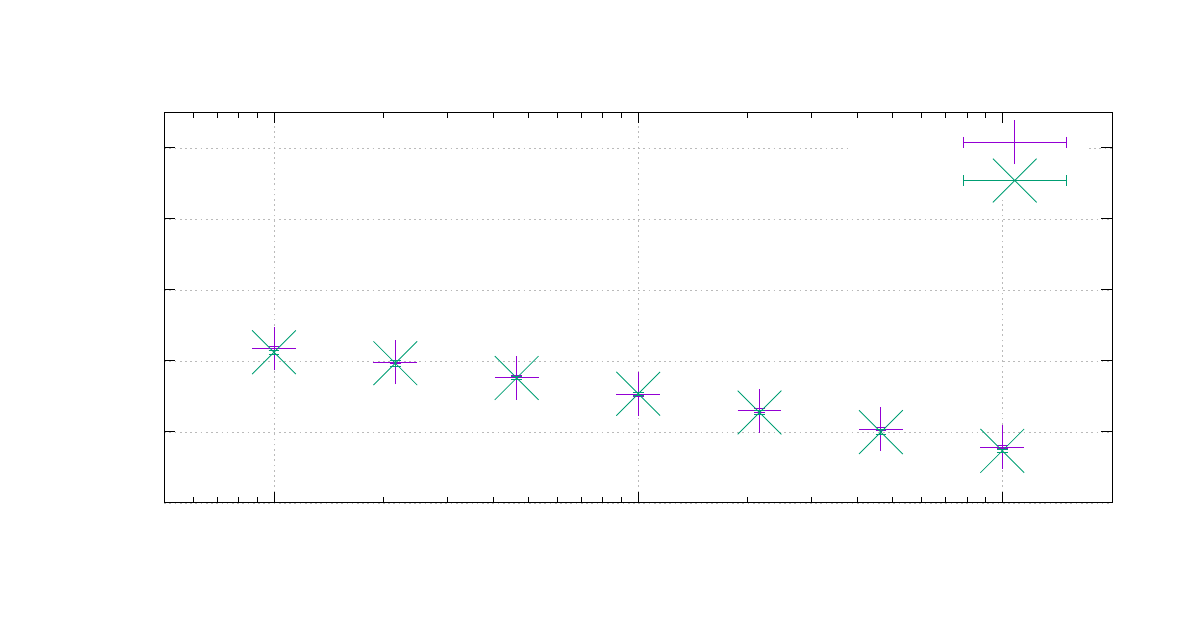}}%
    \gplfronttext
  \end{picture}%
\endgroup
}\hfill\scalebox{\threescale}{%
\begingroup
  \makeatletter
  \providecommand\color[2][]{%
    \GenericError{(gnuplot) \space\space\space\@spaces}{%
      Package color not loaded in conjunction with
      terminal option `colourtext'%
    }{See the gnuplot documentation for explanation.%
    }{Either use 'blacktext' in gnuplot or load the package
      color.sty in LaTeX.}%
    \renewcommand\color[2][]{}%
  }%
  \providecommand\includegraphics[2][]{%
    \GenericError{(gnuplot) \space\space\space\@spaces}{%
      Package graphicx or graphics not loaded%
    }{See the gnuplot documentation for explanation.%
    }{The gnuplot epslatex terminal needs graphicx.sty or graphics.sty.}%
    \renewcommand\includegraphics[2][]{}%
  }%
  \providecommand\rotatebox[2]{#2}%
  \@ifundefined{ifGPcolor}{%
    \newif\ifGPcolor
    \GPcolortrue
  }{}%
  \@ifundefined{ifGPblacktext}{%
    \newif\ifGPblacktext
    \GPblacktextfalse
  }{}%
  \let\gplgaddtomacro\g@addto@macro
  \gdef\gplbacktext{}%
  \gdef\gplfronttext{}%
  \makeatother
  \ifGPblacktext
    \def\colorrgb#1{}%
    \def\colorgray#1{}%
  \else
    \ifGPcolor
      \def\colorrgb#1{\color[rgb]{#1}}%
      \def\colorgray#1{\color[gray]{#1}}%
      \expandafter\def\csname LTw\endcsname{\color{white}}%
      \expandafter\def\csname LTb\endcsname{\color{black}}%
      \expandafter\def\csname LTa\endcsname{\color{black}}%
      \expandafter\def\csname LT0\endcsname{\color[rgb]{1,0,0}}%
      \expandafter\def\csname LT1\endcsname{\color[rgb]{0,1,0}}%
      \expandafter\def\csname LT2\endcsname{\color[rgb]{0,0,1}}%
      \expandafter\def\csname LT3\endcsname{\color[rgb]{1,0,1}}%
      \expandafter\def\csname LT4\endcsname{\color[rgb]{0,1,1}}%
      \expandafter\def\csname LT5\endcsname{\color[rgb]{1,1,0}}%
      \expandafter\def\csname LT6\endcsname{\color[rgb]{0,0,0}}%
      \expandafter\def\csname LT7\endcsname{\color[rgb]{1,0.3,0}}%
      \expandafter\def\csname LT8\endcsname{\color[rgb]{0.5,0.5,0.5}}%
    \else
      \def\colorrgb#1{\color{black}}%
      \def\colorgray#1{\color[gray]{#1}}%
      \expandafter\def\csname LTw\endcsname{\color{white}}%
      \expandafter\def\csname LTb\endcsname{\color{black}}%
      \expandafter\def\csname LTa\endcsname{\color{black}}%
      \expandafter\def\csname LT0\endcsname{\color{black}}%
      \expandafter\def\csname LT1\endcsname{\color{black}}%
      \expandafter\def\csname LT2\endcsname{\color{black}}%
      \expandafter\def\csname LT3\endcsname{\color{black}}%
      \expandafter\def\csname LT4\endcsname{\color{black}}%
      \expandafter\def\csname LT5\endcsname{\color{black}}%
      \expandafter\def\csname LT6\endcsname{\color{black}}%
      \expandafter\def\csname LT7\endcsname{\color{black}}%
      \expandafter\def\csname LT8\endcsname{\color{black}}%
    \fi
  \fi
    \setlength{\unitlength}{0.0500bp}%
    \ifx\gptboxheight\undefined%
      \newlength{\gptboxheight}%
      \newlength{\gptboxwidth}%
      \newsavebox{\gptboxtext}%
    \fi%
    \setlength{\fboxrule}{0.5pt}%
    \setlength{\fboxsep}{1pt}%
\begin{picture}(6802.00,3614.00)%
    \gplgaddtomacro\gplbacktext{%
      \csname LTb\endcsname%
      \put(814,704){\makebox(0,0)[r]{\strut{}$0$}}%
      \csname LTb\endcsname%
      \put(814,1113){\makebox(0,0)[r]{\strut{}$0.2$}}%
      \csname LTb\endcsname%
      \put(814,1522){\makebox(0,0)[r]{\strut{}$0.4$}}%
      \csname LTb\endcsname%
      \put(814,1931){\makebox(0,0)[r]{\strut{}$0.6$}}%
      \csname LTb\endcsname%
      \put(814,2340){\makebox(0,0)[r]{\strut{}$0.8$}}%
      \csname LTb\endcsname%
      \put(814,2749){\makebox(0,0)[r]{\strut{}$1$}}%
      \csname LTb\endcsname%
      \put(1578,484){\makebox(0,0){\strut{}$10^{3}$}}%
      \csname LTb\endcsname%
      \put(3676,484){\makebox(0,0){\strut{}$10^{4}$}}%
      \csname LTb\endcsname%
      \put(5773,484){\makebox(0,0){\strut{}$10^{5}$}}%
    }%
    \gplgaddtomacro\gplfronttext{%
      \csname LTb\endcsname%
      \put(176,1828){\rotatebox{-270}{\makebox(0,0){\strut{}Degree Assortativity}}}%
      \put(3675,154){\makebox(0,0){\strut{}Number $n$ of nodes}}%
      \put(3675,3283){\makebox(0,0){\strut{}Mixing: $\mu = 0.6$, Degree Assortativity, Overlap: $\nu = 4$}}%
      \csname LTb\endcsname%
      \put(5418,2780){\makebox(0,0)[r]{\strut{}Orig}}%
      \csname LTb\endcsname%
      \put(5418,2560){\makebox(0,0)[r]{\strut{}EM}}%
    }%
    \gplbacktext
    \put(0,0){\includegraphics{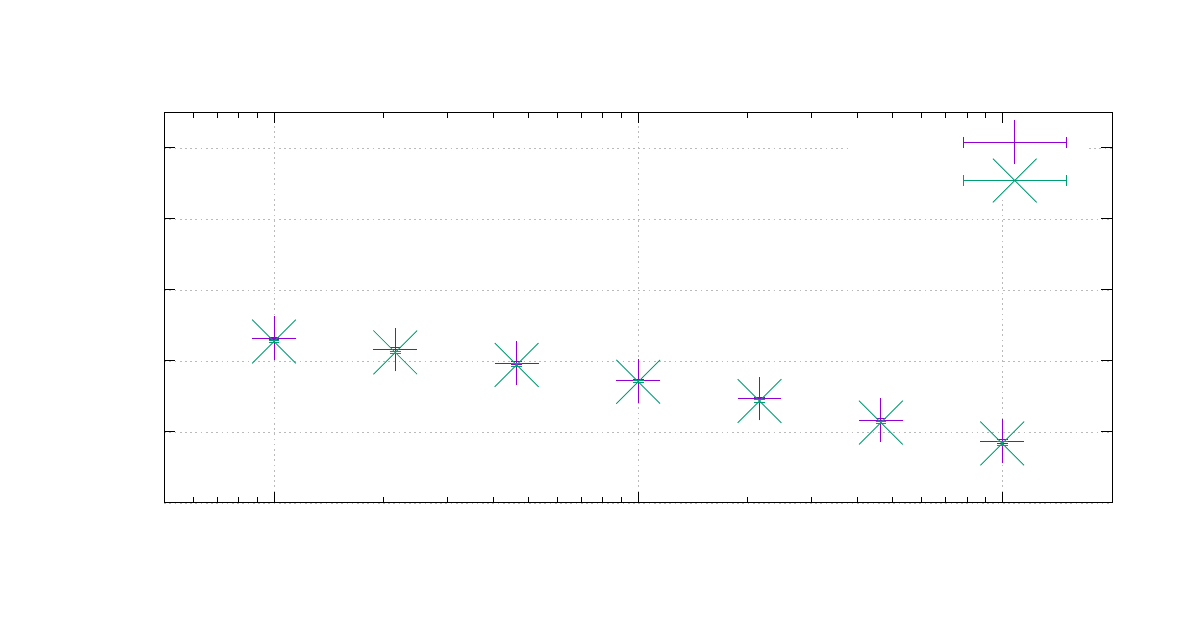}}%
    \gplfronttext
  \end{picture}%
\endgroup
}\hfill\\ %

Comparison of the original LFR implementation and our EM solution for values values of 
$10^3 \le n \le 10^6$, 
$\mu{\in}\{0.2, 0.4, 0.6\}$, 
$\nu{\in}\{2,3,4\}$,
$O=n$,
$\gamma{=}2$, $\beta{=}-1$
$d_\text{min}{=}10$, $d_\text{max}{=}n/20$,
$s_\text{min}{=}10\nu$, $s_\text{max}{=}\nu\cdot n/20$.
Clustering is performed using OSLOM and compared to the ground-truth emitted by the generator using a generalized Normalized Mutual Information (NMI); $S \ge 5$.
}
 \clearpage
\section{Comparing \emes{} and \emcmes{}}\label{sec:appendix-emcmes}
\noindent
\includegraphics[width=0.45\textwidth]{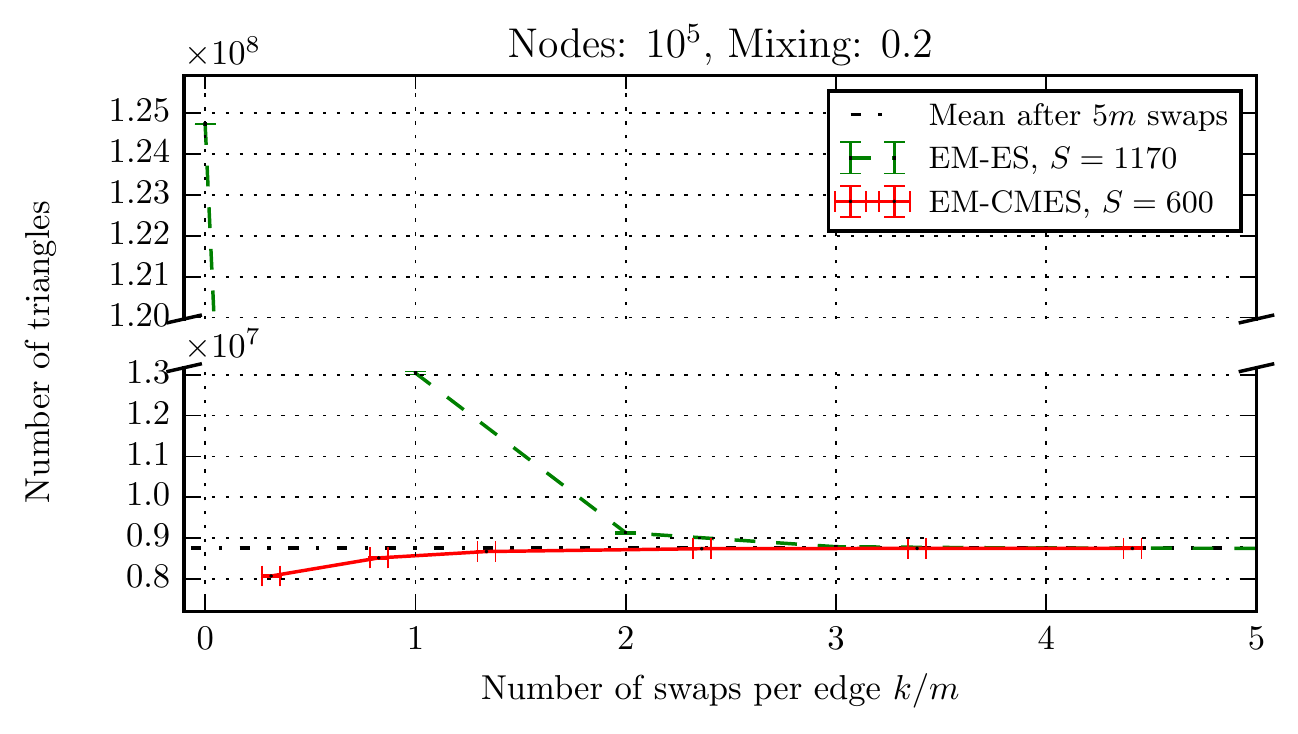}\hfill
\includegraphics[width=0.45\textwidth]{{edgeswap_bench_n100K_mu1.0_TRI}.pdf}\\

\noindent
\includegraphics[width=0.45\textwidth]{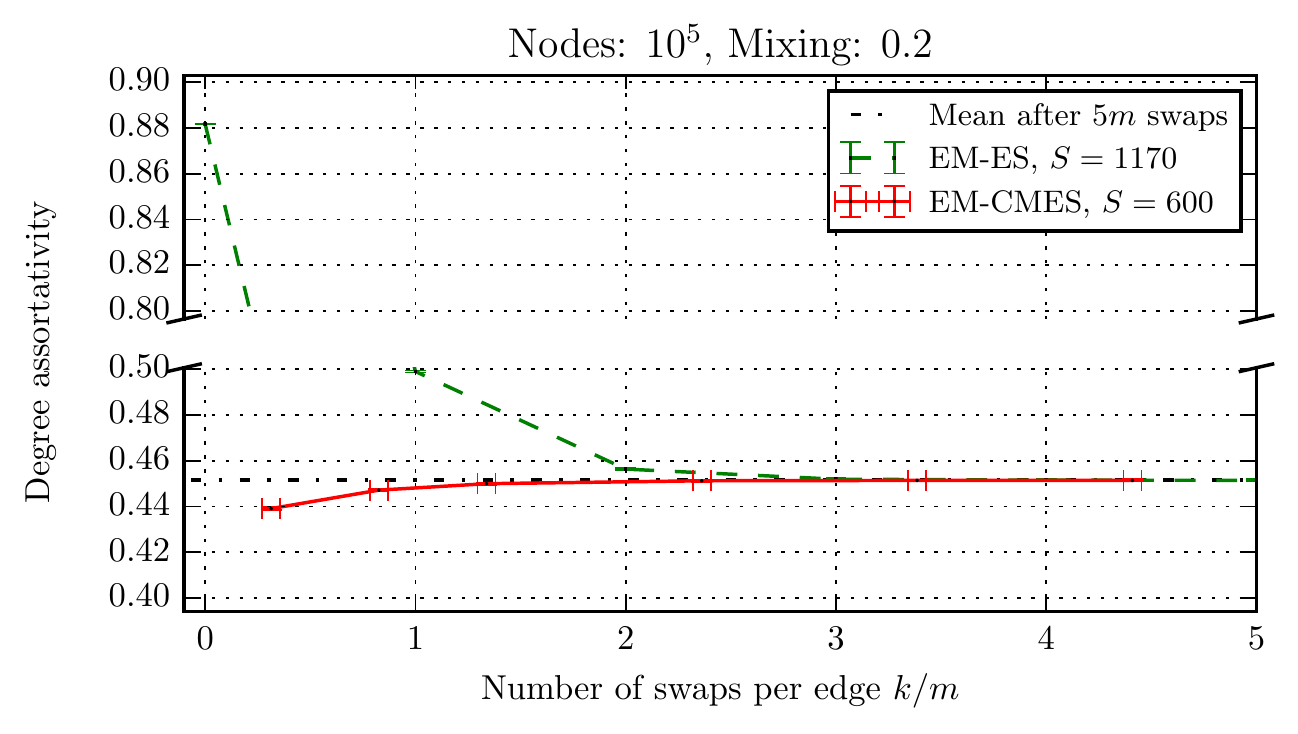}\hfill
\includegraphics[width=0.45\textwidth]{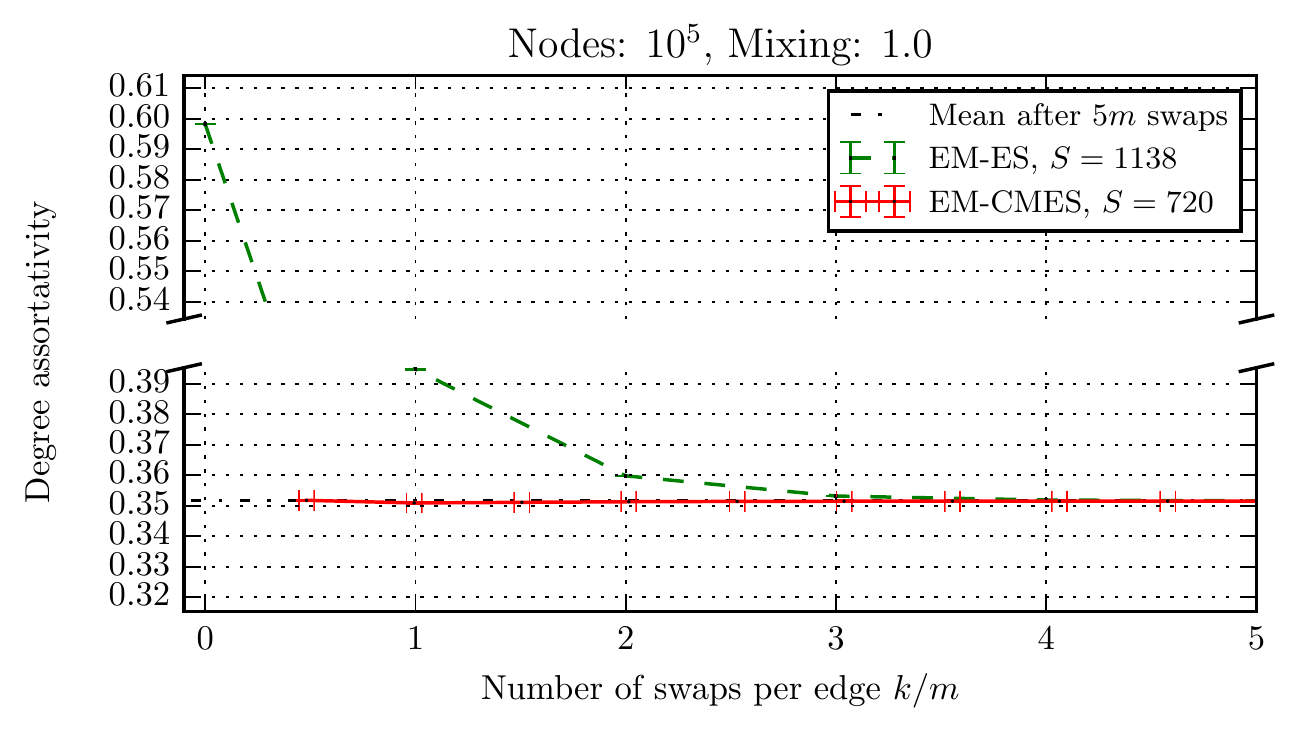}\\

\noindent
\includegraphics[width=0.45\textwidth]{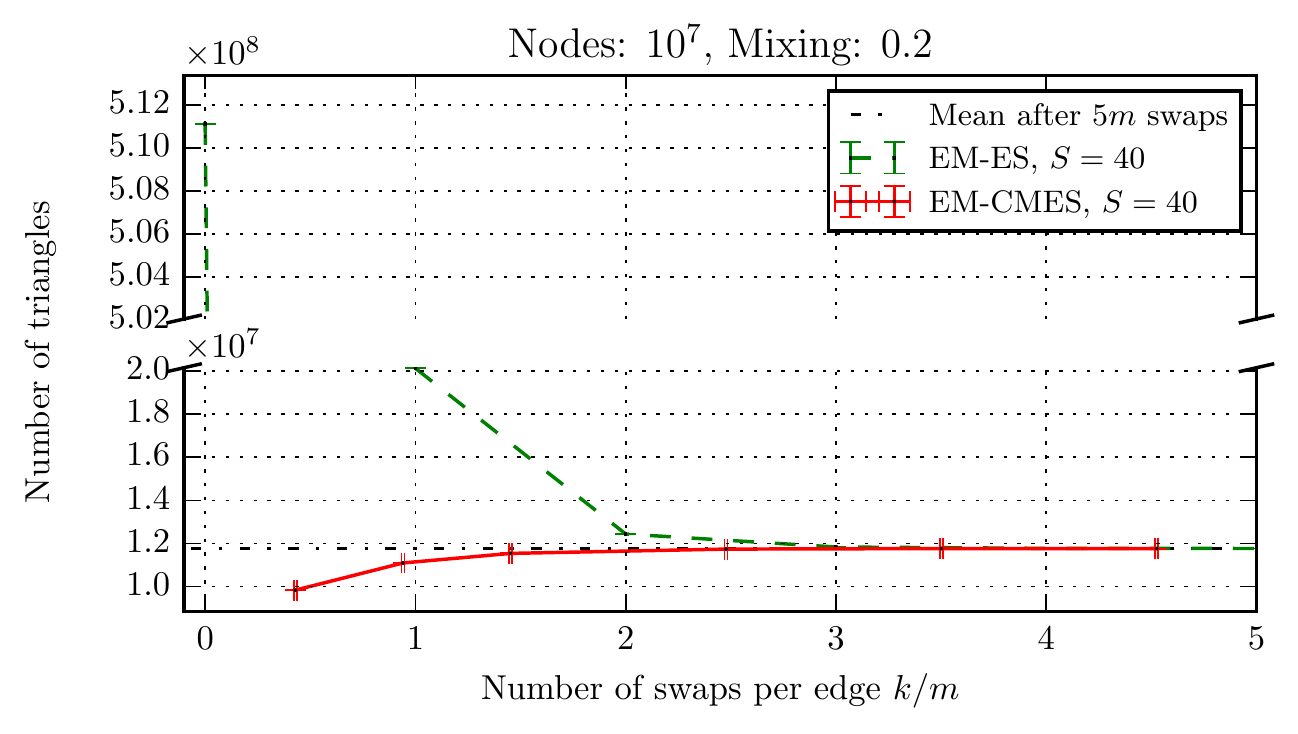}\hfill
\includegraphics[width=0.45\textwidth]{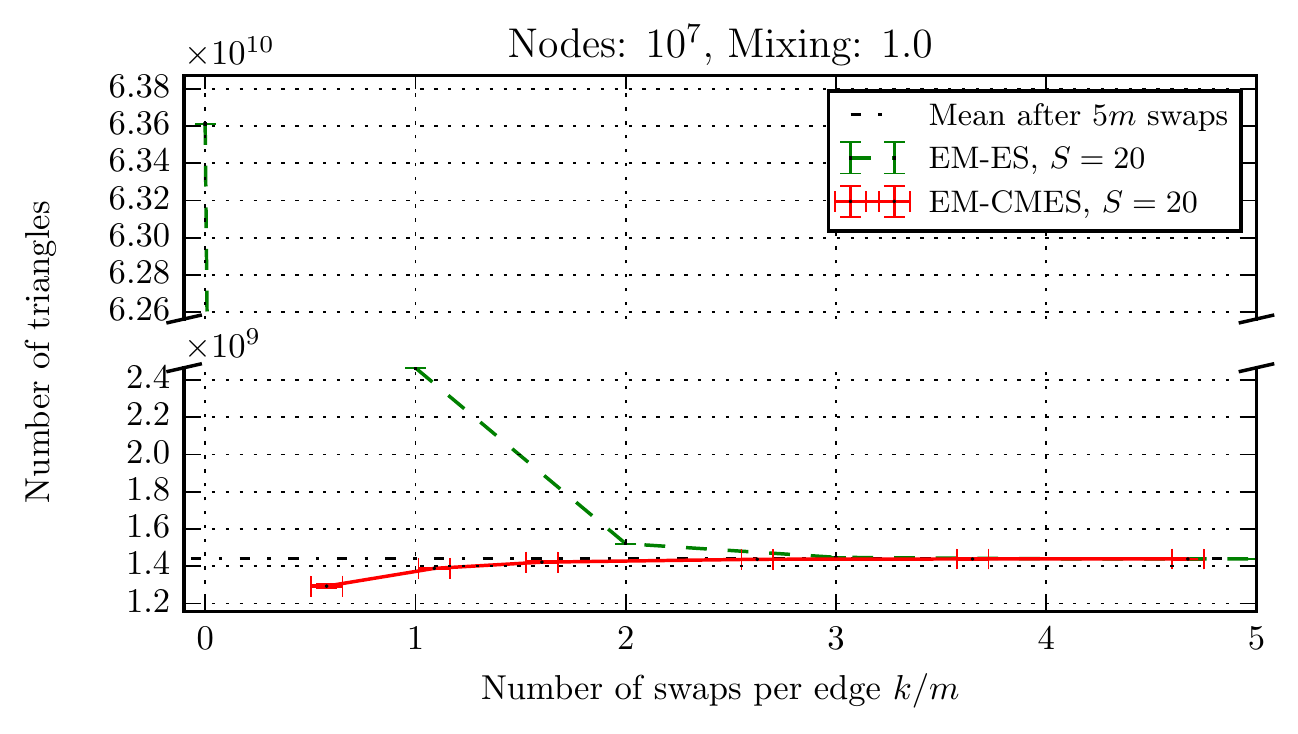}

\noindent
\includegraphics[width=0.45\textwidth]{{edgeswap_bench_n10M_mu0.2_DEGASS}.pdf}\hfill
\includegraphics[width=0.45\textwidth]{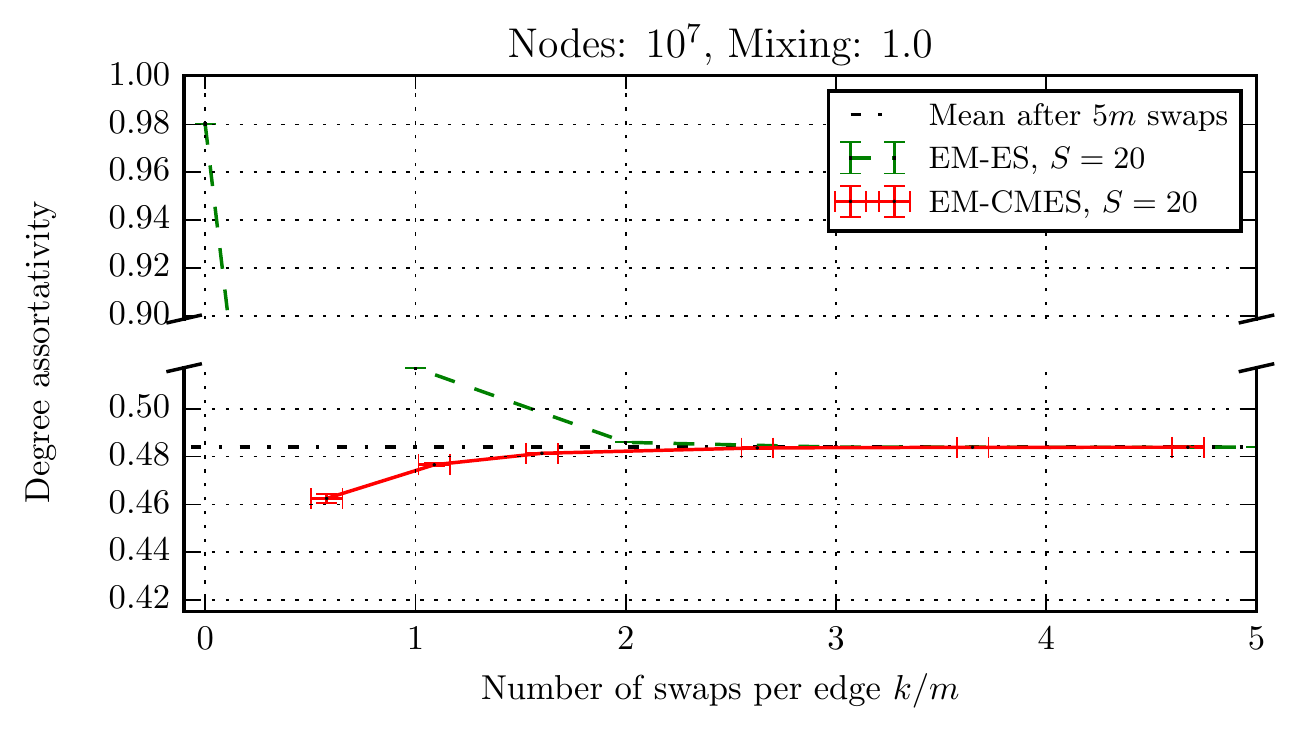}

Triangle count and degree assortativity of a graph ensemble obtained by applying random swaps/the Configuration Model to a common seed graph.
Refer to section \ref{subsec:emcmes-exp} for experimental details. \end{document}